\documentclass[10pt, twoside]{IEEEtran} \onecolumn \linespread{1.3}
\usepackage{epsfig,algorithm,amsmath,amssymb,color,algorithmic}
\usepackage{epstopdf}
\usepackage{caption}
\usepackage{subcaption}
\usepackage[hidelinks]{hyperref}
\setlength{\belowdisplayskip}{5pt} \setlength{\belowdisplayshortskip}{5pt}
\setlength{\abovedisplayskip}{5pt} \setlength{\abovedisplayshortskip}{5pt}
\hypersetup{draft}
\begin{document}

\setlength{\arraycolsep}{0.03cm}
\newcommand{\xhat}{\hat{x}}
\newcommand{\xpred}{\hat{x}_{t|t-1}}
\newcommand{\xupd}{\hat{x}_{t|t}}
\newcommand{\Ppred}{P_{t|t-1}}
\newcommand{\ty}{\tilde{y}_t}
\newcommand{\tty}{\tilde{y}_{t,\text{res}}}
\newcommand{\tw}{\tilde{w}_t}
\newcommand{\ttw}{\tilde{w}_{t,f}}
\newcommand{\betahat}{\hat{\beta}}

\newcommand{\range}{\operatorname{range}}
\newcommand{\rank}{\operatorname{rank}}
\newcommand{\supp}{\operatorname{supp}}
\newcommand{\R}{\mathbb{R}}
\newcommand{\calb}{\mathcal{B}}
\newcommand{\ecalb}{\mathcal{B}^e}
\newcommand{\calc}{\mathcal{C}}
\newcommand{\ecalc}{\mathcal{C}^e}
\newcommand{\cald}{\mathcal{D}}
\newcommand{\ecald}{\mathcal{D}^e}

\newcommand{\what}{\widehat}

\newtheorem{theorem}{Theorem}[section]
\newtheorem{lem}[theorem]{Lemma}
\newtheorem{sigmodel}[theorem]{Signal Model}
\newtheorem{corollary}[theorem]{Corollary}
\newtheorem{defin}[theorem]{Definition}
\newtheorem{definition}[theorem]{Definition}
\newtheorem{remark}[theorem]{Remark}
\newtheorem{example}[theorem]{Example}
\newtheorem{ass}[theorem]{Assumption}
\newtheorem{proposition}[theorem]{Proposition}
\newtheorem{fact}[theorem]{Fact}
\newtheorem{heuristic}[theorem]{Heuristic}
\newtheorem{alg}[theorem]{Algorithm}

\newcommand{\ypast}{y_{1:t-1}}
\newcommand{\sone}{S_{*}}
\newcommand{\sinf}{{S_{**}}}
\newcommand{\smax}{S_{\max}}
\newcommand{\smin}{S_{\min}}
\newcommand{\samax}{S_{a,\max}}
\newcommand{\Nhat}{{\hat{N}}}

\newcommand{\Dnum}{D_{num}}
\newcommand{\pss}{p^{**,i}}
\newcommand{\fr}{f_{r}^i}

\newcommand{\A}{{\cal A}}
\newcommand{\Z}{{\cal Z}}
\newcommand{\B}{{\cal B}}

\newcommand{\reg}{{\cal G}}
\newcommand{\const}{\mbox{const}}

\newcommand{\trace}{\mbox{tr}}

\newcommand{\hsim}{{\hspace{0.0cm} \sim  \hspace{0.0cm}}}
\newcommand{\he}{{\hspace{0.0cm} =  \hspace{0.0cm}}}

\newcommand{\vect}[2]{\left[\begin{array}{cccccc}
     #1 \\
     #2
   \end{array}
  \right]
  }

\newcommand{\matr}[2]{ \left[\begin{array}{cc}
     #1 \\
     #2
   \end{array}
  \right]
  }
\newcommand{\vc}[2]{\left[\begin{array}{c}
     #1 \\
     #2
   \end{array}
  \right]
  }

\newcommand{\gdot}{\dot{g}}
\newcommand{\Cdot}{\dot{C}}
\newcommand{\re}{\mathbb{R}}
\newcommand{\n}{{\cal N}}  
\newcommand{\N}{{\overrightarrow{\bf N}}}  
\newcommand{\chat}{\tilde{C}_t}
\newcommand{\chati}{\chat^i}

\newcommand{\cmin}{C^*_{min}}
\newcommand{\twi}{\tilde{w}_t^{(i)}}
\newcommand{\twj}{\tilde{w}_t^{(j)}}
\newcommand{\wi}{{w}_t^{(i)}}
\newcommand{\twio}{\tilde{w}_{t-1}^{(i)}}

\newcommand{\tWi}{\tilde{W}_n^{(m)}}
\newcommand{\tWj}{\tilde{W}_n^{(k)}}
\newcommand{\Wi}{{W}_n^{(m)}}
\newcommand{\tWio}{\tilde{W}_{n-1}^{(m)}}

\newcommand{\ds}{\displaystyle}

\newcommand{\SAR}{S$\!$A$\!$R }
\newcommand{\MAR}{MAR}
\newcommand{\MMRF}{MMRF}
\newcommand{\AR}{A$\!$R }
\newcommand{\GMRF}{G$\!$M$\!$R$\!$F }
\newcommand{\DTM}{D$\!$T$\!$M }
\newcommand{\MSE}{M$\!$S$\!$E }
\newcommand{\RCS}{R$\!$C$\!$S }
\newcommand{\uomega}{\underline{\omega}}
\newcommand{\lu}{\mu}
\newcommand{\g}{g}
\newcommand{\s}{{\bf s}}
\newcommand{\bft}{{\bf t}}
\newcommand{\refmap}{{\cal R}}
\newcommand{\totrefl}{{\cal E}}
\newcommand{\beq}{\begin{equation}}
\newcommand{\eeq}{\end{equation}}
\newcommand{\bdm}{\begin{displaymath}}
\newcommand{\edm}{\end{displaymath}}
\newcommand{\hatz}{\hat{z}}
\newcommand{\hatu}{\hat{u}}
\newcommand{\tilz}{\tilde{z}}
\newcommand{\tilu}{\tilde{u}}
\newcommand{\hhatz}{\hat{\hat{z}}}
\newcommand{\hhatu}{\hat{\hat{u}}}
\newcommand{\tilc}{\tilde{C}}
\newcommand{\hatc}{\hat{C}}
\newcommand{\tim}{n}

\newcommand{\ssp}{\renewcommand{\baselinestretch}{1.0}}
\newcommand{\defd}{\mbox{$\stackrel{\mbox{$\triangle$}}{=}$}}
\newcommand{\goes}{\rightarrow}
\newcommand{\tends}{\rightarrow}
\newcommand{\se}{&=&}
\newcommand{\sdefn}{& :=  &}
\newcommand{\sle}{& \le &}
\newcommand{\sge}{& \ge &}
\newcommand{\plusminus}{\stackrel{+}{-}}
\newcommand{\Ey}{E_{Y_{1:t}}}
\newcommand{\ey}{E_{Y_{1:t}}}

\newcommand{\equivto}{\mbox{~~~which is equivalent to~~~}}
\newcommand{\nonzero}{i:\pi^n(x^{(i)})>0}
\newcommand{\nonzeroc}{i:c(x^{(i)})>0}

\newcommand{\supn}{\sup_{\phi:||\phi||_\infty \le 1}}

\newcommand{\eps}{\epsilon}
\newcommand{\bd}{\begin{definition}}
\newcommand{\ed}{\end{definition}}
\newcommand{\udq}{\underline{D_Q}}
\newcommand{\td}{\tilde{D}}
\newcommand{\epsinv}{\epsilon_{inv}}
\newcommand{\al}{\mathcal{A}}

\newcommand{\bfx} {\bf X}
\newcommand{\bfy} {\bf Y}
\newcommand{\bfz} {\bf Z}
\newcommand{\ddas}{\mbox{${d_1}^2({\bf X})$}}
\newcommand{\ddbs}{\mbox{${d_2}^2({\bfx})$}}
\newcommand{\dda}{\mbox{$d_1(\bfx)$}}
\newcommand{\ddb}{\mbox{$d_2(\bfx)$}}
\newcommand{\xinc}{{\bfx} \in \mbox{$C_1$}}
\newcommand{\eqa}{\stackrel{(a)}{=}}
\newcommand{\eqb}{\stackrel{(b)}{=}}
\newcommand{\eqe}{\stackrel{(e)}{=}}
\newcommand{\leqc}{\stackrel{(c)}{\le}}
\newcommand{\leqd}{\stackrel{(d)}{\le}}

\newcommand{\leqa}{\stackrel{(a)}{\le}}
\newcommand{\leqb}{\stackrel{(b)}{\le}}
\newcommand{\leqe}{\stackrel{(e)}{\le}}
\newcommand{\leqf}{\stackrel{(f)}{\le}}
\newcommand{\leqg}{\stackrel{(g)}{\le}}
\newcommand{\leqh}{\stackrel{(h)}{\le}}
\newcommand{\leqi}{\stackrel{(i)}{\le}}
\newcommand{\leqj}{\stackrel{(j)}{\le}}

\newcommand{\w}{{W^{LDA}}}
\newcommand{\halpha}{\hat{\alpha}}
\newcommand{\hsigma}{\hat{\sigma}}
\newcommand{\slmax}{\sqrt{\lambda_{max}}}
\newcommand{\slmin}{\sqrt{\lambda_{min}}}
\newcommand{\lmax}{\lambda_{max}}
\newcommand{\lmin}{\lambda_{min}}

\newcommand{\da} {\frac{\alpha}{\sigma}}
\newcommand{\chka} {\frac{\check{\alpha}}{\check{\sigma}}}
\newcommand{\sumo}{\sum _{\underline{\omega} \in \Omega}}
\newcommand{\distance}{d\{(\hatz _x, \hatz _y),(\tilz _x, \tilz _y)\}}
\newcommand{\col}{{\rm col}}
\newcommand{\rcs}{\sigma_0}
\newcommand{\CalR}{{\cal R}}
\newcommand{\df}{{\delta p}}
\newcommand{\dq}{{\delta q}}
\newcommand{\dZ}{{\delta Z}}
\newcommand{\pprime}{{\prime\prime}}

\newcommand{\vn}{N}

\newcommand{\bv}{\begin{vugraph}}
\newcommand{\ev}{\end{vugraph}}
\newcommand{\bi}{\begin{itemize}}
\newcommand{\ei}{\end{itemize}}
\newcommand{\ben}{\begin{enumerate}}
\newcommand{\een}{\end{enumerate}}
\newcommand{\be}{\protect\[}
\newcommand{\ee}{\protect\]}
\newcommand{\bean}{\begin{eqnarray*} }
\newcommand{\eean}{\end{eqnarray*} }
\newcommand{\bea}{\begin{eqnarray} }
\newcommand{\eea}{\end{eqnarray} }
\newcommand{\nn}{\nonumber}
\newcommand{\ba}{\begin{array} }
\newcommand{\ea}{\end{array} }
\newcommand{\ep}{\mbox{\boldmath $\epsilon$}}
\newcommand{\epp}{\mbox{\boldmath $\epsilon '$}}
\newcommand{\Lep}{\mbox{\LARGE $\epsilon_2$}}
\newcommand{\und}{\underline}
\newcommand{\pdif}[2]{\frac{\partial #1}{\partial #2}}
\newcommand{\odif}[2]{\frac{d #1}{d #2}}
\newcommand{\dt}[1]{\pdif{#1}{t}}
\newcommand{\urho}{\underline{\rho}}

\newcommand{\spc}{{\cal S}}
\newcommand{\tspc}{{\cal TS}}

\newcommand{\uv}{\underline{v}}
\newcommand{\us}{\underline{s}}
\newcommand{\uc}{\underline{c}}
\newcommand{\utheta}{\underline{\theta}^*}
\newcommand{\ualpha}{\underline{\alpha^*}}

\newcommand{\uxy}{\underline{x}^*}
\newcommand{\uxyj}{[x^{*}_j,y^{*}_j]}
\newcommand{\arcl}[1]{arclen(#1)}
\newcommand{\one}{{\mathbf{1}}}

\newcommand{\uxyjt}{\uxy_{j,t}}
\newcommand{\E}{\mathbf{E}}

\newcommand{\rhomat}{\left[\begin{array}{c}
                        \rho_3 \ \rho_4 \\
                        \rho_5 \ \rho_6
                        \end{array}
                   \right]}
\newcommand{\deltat}{\tau} 
\newcommand{\deltatt}{\Delta t_1}
\newcommand{\ceil}[1]{\ulcorner #1 \urcorner}

\newcommand{\xxi}{x^{(i)}}
\newcommand{\txi}{\tilde{x}^{(i)}}
\newcommand{\txj}{\tilde{x}^{(j)}}

\newcommand{\mi}[1]{{#1}^{(m,i)}}

\newcommand{\ttrain}{t_{\text{train}}}
\newcommand{\new}{\text{new}}
\newcommand{\cs}{\text{cs}}
\newcommand{\bigO}{\mathcal{O}}
\newcommand{\newset}{\text{new-set}}
\newcommand{\old}{\text{old}}
\newcommand{\Shat}{\hat{S}}
\newcommand{\Lhat}{\hat{L}}
\newcommand{\Ltilde}{\tilde{L}}
\newcommand{\Phat}{\hat{P}}
\newcommand{\Qhat}{\hat{Q}}
\newcommand{\Span}{\text{span}}
\newcommand{\del}{\text{del}}
\newcommand{\diag}{\text{diag}}
\newcommand{\add}{\text{add}}

\newcommand{\train}{\text{train}}
\newcommand{\thresh}{\hat{\lambda}^-}
\newcommand{\betathresh}{\eta} 
\newcommand{\That}{\hat{T}}
\newcommand{\zetasp}{{\zeta_*^+}}
\newcommand{\SE}{\text{SE}}
\newcommand{\egam}{\Gamma^e}

\newcommand{\Pb}{\mathbf{P}}
\newcommand{\Hc}{\mathcal{H}}
\newcommand{\Ic}{\mathcal{I}}
\newcommand{\tb}{\tilde{b}}
\newcommand{\tf}{\tilde{f}}
\allowdisplaybreaks

\title{Performance Guarantees for ReProCS -- Correlated Low-Rank Matrix Entries Case}


\author{
Jinchun Zhan, Namrata Vaswani and Chenlu Qiu
\thanks{
J. Zhan, N. Vaswani are with the ECE dept at Iowa State University and C. Qiu is with Traffic Management Research Institute of the Ministry of Public Security, China. Email: \{jzhan,namrata\}@iastate.edu, echotosusan@gmail.com.
}
}
\maketitle

\begin{abstract}
Online or recursive robust PCA can be posed as a problem of recovering a sparse vector, $S_t$, and a dense vector, $L_t$, which lies in a slowly changing low-dimensional subspace, from $M_t:= S_t + L_t$ on-the-fly as new data comes in. For initialization, it is assumed that an accurate knowledge of the subspace in which $L_0$ lies is available. In recent works, Qiu et al proposed and analyzed a novel solution to this problem called recursive projected compressed sensing or ReProCS. In this work, we relax one limiting assumption of Qiu et al's result. Their work required that the $L_t$'s be mutually independent over time. However this is not a practical assumption, e.g., in the video application, $L_t$ is the background image sequence and one would expect it to be correlated over time. In this work we relax this and allow the $L_t$'s to follow an autoregressive model. We are able to show that under mild assumptions and under a denseness assumption on the unestimated part of the changed subspace, with high probability (w.h.p.), ReProCS can exactly recover the support set of $S_t$ at all times; the reconstruction errors of both $S_t$ and $L_t$ are upper bounded by a time invariant and small value; and the subspace recovery error decays to a small value within a finite delay of a subspace change time. Because the last assumption depends on an algorithm estimate, this result cannot be interpreted as a correctness result but only a useful step towards it.
\end{abstract}

\section{Introduction}
Principal Components Analysis (PCA) is a widely used dimension reduction technique that finds a small number of orthogonal basis vectors, called principal components (PCs), along which most of the variability of the dataset lies. Often, for time series data, the PCs space changes gradually over time. Updating it on-the-fly (recursively) in the presence of outliers, as more data comes in is referred to as online or recursive robust PCA \cite{ipca_weightedand,Li03anintegrated}. As noted in earlier work, an outlier is well modeled as a sparse vector.
With this, as will be evident, this problem can also be interpreted as one of recursive sparse recovery in large but structured (low-dimensional) noise.

A key application where the robust PCA problem occurs is in video analysis where the goal is to separate a slowly changing background from moving foreground objects \cite{Torre03aframework,rpca}. 
If we stack each frame as a column vector, the background is well modeled as being dense and lying in a low dimensional subspace that may gradually change over time, while the moving foreground objects constitute the sparse outliers \cite{rpca}. Other applications include detection of brain activation patterns from functional MRI sequences  or detection of anomalous behavior in dynamic networks \cite{mateos_anomaly}.
There has been a large amount of earlier work on robust PCA, e.g. see  \cite{Torre03aframework}. In recent works \cite{rpca,rpca2}, the batch robust PCA problem has been posed as one of separating a low rank matrix, ${\cal L}_t := [L_0, L_1,\dots,L_t]$, from a sparse matrix, ${\cal S}_t := [S_0, S_1,\dots,S_t]$, using the measurement matrix, ${\cal M}_t := [M_0, M_1,\dots,M_t] = {\cal L}_t+ {\cal S}_t$. A novel convex optimization solution called principal components' pursuit (PCP) has been proposed and shown to achieve exact recovery under mild assumptions. Since then, the batch robust PCA problem, or what is now also often called the sparse+low-rank recovery problem, has been studied extensively. We do not discuss all the works here due to limited space.


Online or recursive robust PCA can be posed as a problem of recovering a sparse vector, $S_t$, and a dense vector, $L_t$, which lies in a slowly changing low-dimensional subspace, from $M_t:= S_t + L_t$ on-the-fly as new data comes in. For initialization, it is assumed that an accurate knowledge of the subspace in which $L_0$ lies is available. In recent works, Qiu et al proposed and analyzed a novel solution to this problem called ReProCS \cite{rrpcp_allerton,rrpcp_perf,rrpcp_allerton11}. 

{\em Contribution: }
In this work, we relax one limiting assumption of Qiu et al's result. Their work required that the $L_t$'s be mutually independent over time. However this is not a practical assumption, e.g., in the video application, $L_t$ is the background image sequence and one would expect it to be correlated over time. In this work we relax this and allow the $L_t$'s to follow an autoregressive (AR) model. We are able to show that, under mild assumptions and a denseness assumption on the currently unestimated subspace, with high probability (w.h.p.), ReProCS (with subspace change model known) can exactly recover the support set of $S_t$ at all times; the reconstruction errors of both $S_t$ and $L_t$ are upper bounded by a time invariant and small value; and the subspace recovery error decays to a small value within a finite delay of a subspace change time. The last assumption depends on an algorithm estimate and hence this result also cannot be interpreted as a correctness result but only a useful step towards it.

%
To the best of our knowledge, the result of Qiu et al and this follow-up work are among the first results  for any recursive (online) robust PCA approach, and also for recursive sparse recovery in large but structured (low-dimensional) noise.


Other very recent work on algorithms for recursive / online robust PCA includes \cite{grass_undersampled,rpcal0sur,frpca,xu_nips2013_1,xu_nips2013_2,mateos_anomaly}. In  \cite{xu_nips2013_1,xu_nips2013_2}, two online algorithms for robust PCA (that do not model the outlier as a sparse vector but only as a vector that is ``far" from the data subspace) have been shown to approximate the batch solution and do so only asymptotically. 

\section{Notation and Background}
\label{bgnd}
\subsection{Notation}

For a set $T \subset \{1,2,\dots, n\}$, we use $|T|$ to denote its cardinality, i.e., the number of elements in $T$. We use $T^c$ to denote its complement w.r.t. $\{1,2,\dots n\}$, i.e. $T^c:= \{i \in \{1,2,\dots n\}: i \notin T \}$.

We use the interval notation, $[t_1, t_2]$, to denote the set of all integers between and including $t_1$ to $t_2$, i.e. $[t_1, t_2]:=\{t_1, t_1+1, \dots, t_2\}$. For a vector $v$, $v_i$ denotes the $i$th entry of $v$ and $v_T$ denotes a vector consisting of the entries of $v$ indexed by $T$. We use $\|v\|_p$ to denote the $\ell_p$ norm of $v$. The support of $v$, $\text{supp}(v)$, is the set of indices at which $v$ is nonzero, $\text{supp}(v) := \{i : v_i\neq 0\}$. We say that $v$ is s-sparse if $|\text{supp}(v)| \leq  s$.

For a matrix $B$, $B'$ denotes its transpose, and $B^{\dag}$ its pseudo-inverse. For a matrix with linearly independent columns, $B^{\dag} = (B'B)^{-1}B'$.
We use $\|B\|_2:= \max_{x \neq 0} \|Bx\|_2/\|x\|_2$ to denote the induced 2-norm of the matrix. Also, $\|B\|_*$ is the nuclear norm (sum of singular values) and $\|B\|_{\max}$ denotes the maximum over the absolute values of all its entries.
We let $\sigma_i(B)$ denotes the $i$th largest singular value of $B$. For a Hermitian matrix, $B$, we use the notation $B \overset{EVD}{=} U \Lambda U'$ to denote the eigenvalue decomposition of $B$. Here $U$ is an orthonormal matrix and $\Lambda$ is a diagonal matrix with entries arranged in decreasing order. Also, we use $\lambda_i(B)$ to denote the $i$th largest eigenvalue of a Hermitian matrix $B$ and we use $\lambda_{\max}(B)$ and $\lambda_{\min}(B)$ denote its maximum and minimum eigenvalues. If $B$ is Hermitian positive semi-definite (p.s.d.), then $\lambda_i(B) = \sigma_i(B)$. For Hermitian matrices $B_1$ and $B_2$, the notation $B_1 \preceq B_2$ means that $B_2-B_1$ is p.s.d. Similarly, $B_1 \succeq B_2$ means that $B_1-B_2$ is p.s.d.

For a Hermitian matrix $B$, $\|B\|_2 = \sqrt{\max( \lambda_{\max}^2(B), \lambda_{\min}^2(B) )}$ and thus, $\|B\|_2 \le b$ implies that $-b \le \lambda_{\min}(B) \le \lambda_{\max}(B) \le b$.


We use $I$ to denote an identity matrix of appropriate size. For an index set $T$ and a matrix $B$, $B_T$ is the sub-matrix of $B$ containing columns with indices in the set $T$. Notice that $B_T = B I_T$. Given a matrix $B$ of size $m \times n$ and $B_2$ of size $m \times n_2$, $[B \ B_2]$ constructs a new matrix by concatenating matrices $B$ and $B_2$ in the horizontal direction. Let $B_{\text{rem}}$ be a matrix containing some columns of $B$. Then $B \setminus B_{\text{rem}}$ is the matrix $B$ with columns in $B_{\text{rem}}$ removed.

For a tall matrix $P$, $\Span(P)$ denotes the subspace spanned by the column vectors of $P$.

The notation $[.]$ denotes an empty matrix.

\begin{definition}
We refer to a tall matrix $P$ as a {\em basis matrix} if it satisfies $P'P=I$.
\end{definition}

\begin{definition}
We use the notation $Q = \mathrm{basis}(B)$ to mean that $Q$ is a basis matrix and $\Span(Q) = \Span(B)$.  In other words, the columns of $Q$ form an orthonormal basis for the range of $B$.
\end{definition}


\begin{definition}\label{defn_delta}
The {\em $s$-restricted isometry constant (RIC)} \cite{decodinglp}, $\delta_s$, for an $n \times m$ matrix $\Psi$ is the smallest real number satisfying $(1-\delta_s) \|x\|_2^2 \leq \|\Psi_T x\|_2^2 \leq (1+\delta_s) \|x\|_2^2$
for all sets $T \subseteq \{1,2,\dots n \}$ with $|T| \leq s$ and all real vectors $x$ of length $|T|$.%
\end{definition} 

It is easy to see that $\max_{T:|T| \le s} \|({\Psi_T}'\Psi_T)^{-1}\|_2 \le \frac{1}{1-\delta_s(\Psi)}$ \cite{decodinglp}.


\begin{definition}
We give some notation for random variables in this definition.
\ben
\item We let $\E[Z]$ denote the expectation of a random variable (r.v.) $Z$ and $\E[Z|X]$ denote its conditional expectation given another r.v. $X$.
\item Let $\calb$ be a set of values that a r.v. $Z$ can take. We use $\ecalb$ to denote the {\em event} $Z \in \calb$, i.e. $\ecalb:= \{Z \in \calb\}$.
\item The probability of any event $\ecalb$ can be expressed as \cite{grimmett},
$$\mathbf{P}(\ecalb) := \E[\mathbb{I}_\calb(Z)].$$
where
\bea
\mathbb{I}_\calb(Z): = \left\{ \begin{array}{cc}
                                 1 \ & \ \text{if} \ Z \in \calb \nn \\
                                 0  \ & \ \text{otherwise}
                                 \end{array} \right. \nn
\eea
is the indicator function on the set $\calb$.

\item For two events $\ecalb, \tilde\ecalb$,  $\mathbf{P}(\ecalb|\tilde\ecalb)$ refers to the conditional probability of $\ecalb$ given $\tilde\ecalb$, i.e. $\mathbf{P}(\ecalb|\tilde\ecalb) := \mathbf{P}(\ecalb, \tilde\ecalb)/\mathbf{P}(\tilde\ecalb)$.

\item For a r.v. $X$, and a set $\calb$ of values that the r.v. $Z$ can take, the notation $\mathbf{P}(\ecalb|X)$ is defined as 
$$\mathbf{P}(\ecalb|X) := \E[\mathbb{I}_{\calb}(Z)|X].$$
Notice that $\mathbf{P}(\ecalb|X)$ is a r.v. (it is a function of the r.v. $X$) that always lies between zero and one.
%
\een
\label{probdefs}
\end{definition}

Finally, RHS refers to the right hand side of an equation or inequality; w.p. means ``with probability"; and w.h.p. means ``with high probability".

\subsection{Compressive Sensing result}
The error bound for noisy compressive sensing (CS) based on the RIC is as follows \cite{candes_rip}.
%
\begin{theorem}[\cite{candes_rip}]
\label{candes_csbound}
Suppose we observe
\beq
y := \Psi x + z \nn
\eeq
where $z$ is the noise. Let $\hat{x}$ be the solution to following problem
\beq
\min_{x} \|x\|_1 \ \text{subject to} \  \|y - \Psi x\|_2 \leq \xi  \label{*}
\eeq
Assume that $x$ is $s$-sparse, $\|z\|_2 \leq \xi$, and $\delta_{2s}(\Psi) < b (\sqrt{2}-1)$ for some $0 \le b < 1$. Then the solution of (\ref{*}) obeys
$$\|\hat{x} - x\|_2 \leq C_1 \xi$$
with $\ds C_1 = \frac{4\sqrt{1+\delta_{2s}(\Psi)}}{1-(\sqrt{2}+1)\delta_{2s}(\Psi)} \le \frac{4\sqrt{1+b (\sqrt{2}-1)}}{1-(\sqrt{2}+1)b (\sqrt{2}-1)}$.
\end{theorem}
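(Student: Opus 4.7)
The plan is to follow the standard RIP-based argument for noisy $\ell_1$ recovery. First I would set $h := \hat{x} - x$ and denote $T_0 := \supp(x)$, so that $|T_0| \le s$. Since $\|z\|_2 \le \xi$, the true $x$ is feasible for the program in \eqref{*}, so $\ell_1$-optimality of $\hat{x}$ gives $\|\hat{x}\|_1 \le \|x\|_1$. Expanding $\|\hat{x}\|_1 = \|x_{T_0} + h_{T_0}\|_1 + \|h_{T_0^c}\|_1$ and applying the reverse triangle inequality on $T_0$ yields the standard cone bound $\|h_{T_0^c}\|_1 \le \|h_{T_0}\|_1$.

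Next I would partition $T_0^c$ into disjoint blocks $T_1, T_2, \dots$ each of size $s$, ordered so that the magnitudes of the entries of $h$ on $T_j$ dominate those on $T_{j+1}$. The classical averaging argument then yields $\|h_{T_j}\|_2 \le s^{-1/2}\|h_{T_{j-1}}\|_1$ for every $j \ge 2$, which telescopes to $\sum_{j \ge 2}\|h_{T_j}\|_2 \le s^{-1/2}\|h_{T_0^c}\|_1 \le s^{-1/2}\|h_{T_0}\|_1 \le \|h_{T_0}\|_2 \le \|h_{T_0\cup T_1}\|_2$. Hence the $\ell_2$ mass of $h$ outside $T_0 \cup T_1$ is controlled by that on $T_0 \cup T_1$.

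Then I would invoke the RIP. Feasibility of both $x$ and $\hat{x}$ together with the triangle inequality gives $\|\Psi h\|_2 \le 2\xi$. Since $|T_0 \cup T_1| \le 2s$, the RIP of order $2s$ gives $\|\Psi h_{T_0 \cup T_1}\|_2^2 \ge (1-\delta_{2s})\|h_{T_0 \cup T_1}\|_2^2$, while the standard near-orthogonality lemma, $|\langle \Psi u, \Psi v\rangle| \le \delta_{2s}\|u\|_2\|v\|_2$ whenever $u,v$ are disjointly supported with combined support of size at most $2s$, bounds each cross term $\langle \Psi h_{T_0 \cup T_1}, \Psi h_{T_j}\rangle$. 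Expanding $\langle \Psi h_{T_0 \cup T_1}, \Psi h\rangle$ on one side, applying Cauchy--Schwarz with $\|\Psi h\|_2 \le 2\xi$ on the other, using $\|h_{T_0 \cup T_1}\|_2 \le \sqrt{2}\max(\|h_{T_0}\|_2,\|h_{T_1}\|_2)$ to absorb the $\sqrt{2}$ factor inside the cross-term sum, and finally inserting the telescoped tail bound leads, after rearrangement, to $\|h_{T_0 \cup T_1}\|_2 \le \tfrac{2\sqrt{1+\delta_{2s}}}{1 - (\sqrt{2}+1)\delta_{2s}}\,\xi$. Combining with $\|h_{(T_0\cup T_1)^c}\|_2 \le \|h_{T_0\cup T_1}\|_2$ then doubles this bound and yields the claimed $\|\hat{x}-x\|_2 \le C_1 \xi$.

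The main obstacle, and the step where any careless computation costs a constant factor, is the third paragraph: the bookkeeping of the $\sqrt{2}$ arising from $|T_0 \cup T_1| \le 2s$ in the near-orthogonality estimate, and verifying that the denominator $1 - (\sqrt{2}+1)\delta_{2s}$ is strictly positive. The hypothesis $\delta_{2s} < b(\sqrt{2}-1)$ with $0 \le b < 1$ is exactly what guarantees this positivity; substituting this upper bound into the monotone expression for $C_1$ then produces the second form of the constant stated in the theorem.
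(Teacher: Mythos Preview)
Your proposal correctly outlines the standard RIP-based argument from the cited reference. Note, however, that the paper does not give its own proof of this theorem: it is simply quoted as a known result from \cite{candes_rip}, so there is nothing to compare against beyond observing that your sketch matches the original Cand\`es proof.
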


\begin{remark}
Notice that if $b$ is small enough, $C_1$ is a small constant but $C_1 >1$. For example, if $\delta_{2s}(\Psi) \le 0.15$, then $C_1 \le 7$.
If $C_1 \xi  > \|x\|_2$, the normalized reconstruction error bound would be greater than $1$, making the result useless. Hence, (\ref{*}) gives a small reconstruction error bound only for the small noise case, i.e., the case where $\|z\|_2 \leq \xi \ll \|x\|_2$. 
\end{remark}

\subsection{Results from linear algebra}

Davis and Kahan's $\sin \theta$ theorem \cite{davis_kahan} studies the rotation of eigenvectors by perturbation.

\begin{theorem}[$\sin \theta$ theorem \cite{davis_kahan}] \label{sin_theta}
Given two Hermitian matrices $\mathcal{A}$ and $\mathcal{H}$ satisfying
\beq \label{sindecomp}
\mathcal{A} = \left[ \begin{array}{cc} E & E_{\perp} \\ \end{array} \right]
\left[\begin{array}{cc} A\ & 0\ \\ 0 \ & A_{\perp} \\ \end{array} \right]
\left[ \begin{array}{c} E' \\ {E_{\perp}}' \\ \end{array} \right], \
\mathcal{H} = \left[ \begin{array}{cc} E & E_{\perp} \\ \end{array} \right]
\left[\begin{array}{cc} H \ & B'\ \\ B \ & H_{\perp} \\ \end{array} \right]
\left[ \begin{array}{c} E' \\ {E_{\perp}}' \\ \end{array} \right]
\eeq
where $[E \ E_{\perp}]$ is an orthonormal matrix. The two ways of representing $\mathcal{A}+\mathcal{H}$ are
\beq
\mathcal{A} + \mathcal{H}  = \left[ \begin{array}{cc} E & E_{\perp} \\ \end{array} \right]
\left[\begin{array}{cc} A + H \ & B'\ \\ B \ & A_{\perp} + H_{\perp} \\ \end{array} \right]
\left[ \begin{array}{c} E' \\ {E_{\perp}}' \\ \end{array} \right]
= \left[ \begin{array}{cc} F & F_{\perp} \\ \end{array} \right]
\left[\begin{array}{cc} \Lambda\ & 0\ \\ 0 \ & \Lambda_{\perp} \\ \end{array} \right]
\left[ \begin{array}{c} F' \\ {F_{\perp}}' \\ \end{array} \right] \nn
\eeq
where $[F\ F_{\perp}]$ is another orthonormal matrix. Let $\mathcal{R} := (\mathcal{A}+\mathcal{H}) E - \mathcal{A}E = \mathcal{H} E $. If $ \lambda_{\min}(A) >\lambda_{\max}(\Lambda_{\perp})$, then
\beq
\|(I-F F')E \|_2 \leq \frac{\|\mathcal{R}\|_2}{\lambda_{\min}(A) - \lambda_{\max}(\Lambda_{\perp})} \nn
\eeq

\end{theorem}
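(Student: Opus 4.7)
The plan is to reduce the sine bound to a norm estimate on the single block $X := F_\perp' E$, derive a Sylvester equation for $X$, and invert that equation using the spectral gap between $A$ and $\Lambda_\perp$.

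First I would use $I - FF' = F_\perp F_\perp'$ (from the orthonormality of $[F\ F_\perp]$) to write $(I-FF')E = F_\perp X$. Since $F_\perp$ has orthonormal columns, $\|(I-FF')E\|_2 = \|X\|_2$, so it suffices to bound $\|X\|_2$.

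Second, I would derive the Sylvester identity $\Lambda_\perp X - X A = F_\perp'\mathcal{R}$. The ingredients are: (i) $\mathcal{A}E = EA$, which follows from the block-diagonal form of $\mathcal{A}$ in the $[E\ E_\perp]$ basis; and (ii) $F_\perp'(\mathcal{A}+\mathcal{H}) = \Lambda_\perp F_\perp'$, obtained by taking the adjoint of the invariance relation $(\mathcal{A}+\mathcal{H})F_\perp = F_\perp \Lambda_\perp$ (using that $\mathcal{A}+\mathcal{H}$ and $\Lambda_\perp$ are Hermitian). Writing $\mathcal{R} = \mathcal{H}E = (\mathcal{A}+\mathcal{H})E - \mathcal{A}E$ and premultiplying by $F_\perp'$ then yields
$$F_\perp'\mathcal{R} = \Lambda_\perp F_\perp' E - F_\perp' E A = \Lambda_\perp X - X A,$$
which is the claimed equation. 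The third step would then invoke the operator-norm Sylvester bound: for Hermitian $M, N$ with $\gamma := \lambda_{\min}(N) - \lambda_{\max}(M) > 0$, every solution of $MX - XN = Y$ obeys $\|X\|_2 \le \|Y\|_2/\gamma$. Specializing to $M=\Lambda_\perp$, $N=A$, $Y = F_\perp'\mathcal{R}$, and using $\|F_\perp'\mathcal{R}\|_2 \le \|F_\perp'\|_2\|\mathcal{R}\|_2 = \|\mathcal{R}\|_2$, this delivers the theorem.

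The main obstacle, and the only genuinely technical step, is establishing the Sylvester bound in the operator norm (the Frobenius-norm version is immediate after simultaneously diagonalizing $M$ and $N$ and reading off the equation entrywise, but the spectral-norm version needs more). My plan is an integral-representation argument: after subtracting a common multiple of the identity to arrange $M \preceq -(\gamma/2)I$ and $N \succeq (\gamma/2)I$ (the shifts cancel in $MX - XN$), define $Z := -\int_0^\infty e^{tM} Y e^{-tN}\, dt$. Differentiating the integrand in $t$ and observing that the boundary terms vanish at infinity (which they do because of the sign of the shifted spectra) shows $MZ - ZN = Y$, so $Z = X$ by uniqueness. The bounds $\|e^{tM}\|_2, \|e^{-tN}\|_2 \le e^{-t\gamma/2}$ then give $\|X\|_2 \le \|Y\|_2 \int_0^\infty e^{-t\gamma}\, dt = \|Y\|_2/\gamma$, which is exactly what the gap condition $\lambda_{\min}(A) > \lambda_{\max}(\Lambda_\perp)$ is tailored to provide.
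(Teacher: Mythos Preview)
Your argument is correct and is essentially the classical proof of the Davis--Kahan $\sin\theta$ theorem: reduce to bounding $\|F_\perp'E\|_2$, derive the Sylvester equation $\Lambda_\perp X - XA = F_\perp'\mathcal{R}$, and invert it via the spectral gap. The integral-representation step for the operator-norm Sylvester bound is also standard and correctly executed (the shift by a common scalar, the decay estimates $\|e^{t\tilde M}\|_2,\|e^{-t\tilde N}\|_2\le e^{-t\gamma/2}$, and the evaluation of the boundary terms are all fine; note that the fundamental-theorem-of-calculus computation already gives $Z=X$ directly, so the appeal to uniqueness is not actually needed).

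There is nothing to compare against in the paper itself: the paper states this result with a citation to Davis and Kahan and does not supply its own proof. Your write-up would serve as a self-contained proof if one were desired.
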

The above result bounds the amount by which the two subspaces $\Span(E)$ and $\Span(F)$ differ as a function of the norm of the perturbation $\|\mathcal{R}\|_2$ and of the gap between the minimum eigenvalue of $A$ and the maximum eigenvalue of $\Lambda_{\perp}$. Next, we state Weyl's theorem which bounds the eigenvalues of a perturbed Hermitian matrix, followed by Ostrowski's theorem.

\begin{theorem}[Weyl \cite{hornjohnson}]\label{weyl}
Let $\mathcal{A}$ and $\mathcal{H}$ be two  $n \times n$ Hermitian matrices. For each $i = 1,2,\dots,n$ we have
$$\lambda_i(\mathcal{A}) + \lambda_{\min}(\mathcal{H}) \leq \lambda_i(\mathcal{A}+\mathcal{H}) \leq \lambda_i(\mathcal{A}) + \lambda_{\max}(\mathcal{H})$$
\end{theorem}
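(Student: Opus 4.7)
The plan is to derive both inequalities from the Courant--Fischer min-max characterization of eigenvalues of a Hermitian matrix, namely that for any $n \times n$ Hermitian $M$ with eigenvalues ordered as $\lambda_1(M) \geq \lambda_2(M) \geq \cdots \geq \lambda_n(M)$,
\[
\lambda_i(M) \ = \ \max_{V:\,\dim V = i} \ \min_{x \in V,\, \|x\|_2 = 1} \ x' M x .
\]
I would take this representation as given; it follows from a standard dimension-counting argument, since any $i$-dimensional subspace $V$ must intersect the $(n-i+1)$-dimensional span of the bottom $n-i+1$ eigenvectors of $M$ nontrivially, with the maximum attained on the span of the top $i$ eigenvectors.

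For the lower bound, fix an $i$-dimensional subspace $V$ and any unit vector $x \in V$. Splitting the quadratic form and using the Rayleigh bound $x' \mathcal{H} x \geq \lambda_{\min}(\mathcal{H})$, I get
\[
x'(\mathcal{A}+\mathcal{H})x \ = \ x'\mathcal{A}x + x'\mathcal{H}x \ \geq \ x'\mathcal{A}x + \lambda_{\min}(\mathcal{H}) .
\]
Taking the minimum over unit $x \in V$ on both sides preserves the inequality (since the additive constant is independent of $x$), and then taking the maximum over $i$-dimensional subspaces $V$ turns the right-hand side into $\lambda_i(\mathcal{A}) + \lambda_{\min}(\mathcal{H})$ by Courant--Fischer applied to $\mathcal{A}$, while the left-hand side becomes $\lambda_i(\mathcal{A}+\mathcal{H})$ by Courant--Fischer applied to $\mathcal{A}+\mathcal{H}$. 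This yields $\lambda_i(\mathcal{A}+\mathcal{H}) \geq \lambda_i(\mathcal{A}) + \lambda_{\min}(\mathcal{H})$.

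For the upper bound, the cleanest route is to apply the lower bound already proved with the roles swapped. Setting $\tilde{\mathcal{A}} := \mathcal{A}+\mathcal{H}$ and $\tilde{\mathcal{H}} := -\mathcal{H}$, so that $\tilde{\mathcal{A}}+\tilde{\mathcal{H}} = \mathcal{A}$, and noting that $\lambda_{\min}(-\mathcal{H}) = -\lambda_{\max}(\mathcal{H})$ (which uses the $i$-th-largest ordering convention), the lower bound gives
\[
\lambda_i(\mathcal{A}) \ \geq \ \lambda_i(\mathcal{A}+\mathcal{H}) - \lambda_{\max}(\mathcal{H}),
\]
which rearranges to the desired $\lambda_i(\mathcal{A}+\mathcal{H}) \leq \lambda_i(\mathcal{A}) + \lambda_{\max}(\mathcal{H})$. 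Once Courant--Fischer is in hand the argument is essentially one line per inequality, so there is no real obstacle; the only care needed is to keep the ordering convention consistent so that the sign flip in the swap step comes out correctly.
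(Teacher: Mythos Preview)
Your proof via the Courant--Fischer min--max principle is correct and is the standard argument for Weyl's inequality. However, the paper does not supply its own proof of this statement: Theorem~\ref{weyl} is simply quoted from Horn and Johnson \cite{hornjohnson} as background material, so there is no in-paper proof to compare against.
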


\begin{theorem}[Ostrowski \cite{hornjohnson}]\label{ost}
Let $H$ and $W$ be $n \times n$ matrices, with $H$ Hermitian and $W$ nonsingular. For each $i=1,2 \dots n$, there exists a positive real number $\theta_i$ such that $\lambda_{\min} (WW') \leq \theta_i \leq \lambda_{\max}(W{W}')$ and $\lambda_i(W H {W}') = \theta_i \lambda_i(H)$. Therefore,
$$\lambda_{\min}(W H {W}') \geq \lambda_{\min} (W{W}') \lambda_{\min} (H)$$
\end{theorem}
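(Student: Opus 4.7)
The plan is to prove Ostrowski's theorem via the Courant--Fischer min-max principle together with the change of variables $y = W'x$, which converts the Rayleigh quotient of $WHW'$ into a product of Rayleigh quotients of $H$ and $WW'$. For any nonzero $x$ and $y := W'x$ one has the identity
\[
\frac{x' WHW' x}{x' x} \;=\; \frac{y'Hy}{y'y}\cdot\frac{x' WW' x}{x'x},
\]
so the ratio factors as $\rho_H(W'x)\cdot\rho_W(x)$, where $\rho_H(y):=y'Hy/y'y$ and $\rho_W(x):=x'WW'x/x'x$. The factor $\rho_W(x)$ always lies in $[\lambda_{\min}(WW'),\lambda_{\max}(WW')]$, and because $W$ is nonsingular the map $x\mapsto W'x$ is a linear bijection of $\R^n$ onto itself, so $k$-dimensional subspaces are carried to $k$-dimensional subspaces.

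Next I would apply Courant--Fischer, $\lambda_i(WHW')=\max_{\dim S=i}\min_{0\neq x\in S}\rho(x)$, together with its dual min-max form. To obtain a bound in terms of $\lambda_i(H)$, I would choose $S$ so that $W'(S)=T_i$, where $T_i$ is the span of the top $i$ eigenvectors of $H$; then $\rho_H(W'x)\ge \lambda_i(H)$ for every $0\neq x\in S$, and combining with the range of $\rho_W$ produces a one-sided bound of the form $\theta_i\lambda_i(H)$ with $\theta_i\in[\lambda_{\min}(WW'),\lambda_{\max}(WW')]$. Running the dual $(n-i+1)$-dimensional argument on the orthogonal complement inside $T_i$ yields the matching reverse bound. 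Together these sandwich the ratio $\lambda_i(WHW')/\lambda_i(H)$ (when $\lambda_i(H)\neq 0$) inside the required interval, which supplies the desired $\theta_i$. The degenerate case $\lambda_i(H)=0$ is handled separately via Sylvester's law of inertia: since $WHW'$ is congruent to $H$ through the nonsingular $W$, the two matrices share the same inertia, so $\lambda_i(WHW')=0$ as well and any $\theta_i$ in the interval works.

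The main obstacle will be tracking signs. When $\lambda_i(H)<0$, multiplying by the positive quantity $\rho_W(x)$ reverses the direction of inequalities, so the roles of $\lambda_{\min}(WW')$ and $\lambda_{\max}(WW')$ in producing an upper versus lower bound on $\lambda_i(WHW')$ swap. The cleanest way to manage this is to state the goal in terms of the ratio $\theta_i=\lambda_i(WHW')/\lambda_i(H)$ and verify the membership $\theta_i\in[\lambda_{\min}(WW'),\lambda_{\max}(WW')]$ separately in each of the three sign cases for $\lambda_i(H)$. The concluding inequality $\lambda_{\min}(WHW')\ge \lambda_{\min}(WW')\lambda_{\min}(H)$ is then an immediate specialization to $i=n$, using that $\lambda_{\min}(WW')>0$ because $W$ is nonsingular.
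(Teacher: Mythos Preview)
Your proposal is correct and follows the standard Courant--Fischer argument (essentially the proof given in Horn and Johnson). Note, however, that the paper does not supply its own proof of this theorem; it simply cites \cite{hornjohnson} and states the result for later use, so there is no in-paper proof to compare against.
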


The following lemma proves some simple linear algebra facts.
\begin{lem} \label{lemma0}\label{hatswitch}
Suppose that $P$, $\Phat$ and $Q$ are three basis matrices. Also, $P$ and $\Phat$ are of the same size, ${Q}'P = 0$ and $\|(I-\Phat{\Phat}')P\|_2 = \zeta_*$. Then,
\ben
  \item $\|(I-\Phat{\Phat}')PP'\|_2 =\|(I - P{P}')\Phat{\Phat}'\|_2 =  \|(I - P P')\Phat\|_2 = \|(I - \Phat \Phat')P\|_2 =  \zeta_*$
  \item $\|P{P}' - \Phat {\Phat}'\|_2 \leq 2 \|(I-\Phat{\Phat}')P\|_2 = 2 \zeta_*$
  \item $\|{\Phat}' Q\|_2 \leq \zeta_*$ \label{lem_cross}
  \item $ \sqrt{1-\zeta_*^2} \leq \sigma_i((I-\Phat \Phat')Q)\leq 1 $
\een
\end{lem}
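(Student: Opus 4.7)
The plan is to handle the four claims in order, reducing everything to short linear-algebraic manipulations built around the projectors $PP'$ and $\Phat\Phat'$ together with the orthogonality $Q'P = 0$.

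For part~1, the key auxiliary observation is that for any basis matrix $P$ (so that $P'P=I$) one has $\|AP\|_2 = \|APP'\|_2$ for every conformable $A$: the inequality $\|APP'\|_2 \le \|AP\|_2\|P'\|_2 = \|AP\|_2$ is immediate, and for the reverse I would write $APx = APP'(Px)$ and use $\|Px\|_2=\|x\|_2$. Applying this with $A = I-\Phat\Phat'$ gives $\|(I-\Phat\Phat')P\|_2 = \|(I-\Phat\Phat')PP'\|_2$; the symmetric statement with the roles of $P$ and $\Phat$ swapped gives $\|(I-PP')\Phat\|_2 = \|(I-PP')\Phat\Phat'\|_2$. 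To match $\|(I-\Phat\Phat')P\|_2$ with $\|(I-PP')\Phat\|_2$ I would compute
\[ \|(I-\Phat\Phat')P\|_2^2 = \lambda_{\max}\bigl(P'(I-\Phat\Phat')P\bigr) = 1-\sigma_{\min}^2(\Phat'P), \]
\[ \|(I-PP')\Phat\|_2^2 = \lambda_{\max}\bigl(\Phat'(I-PP')\Phat\bigr) = 1-\sigma_{\min}^2(P'\Phat), \]
and then note that $\Phat'P$ and $P'\Phat$ are transposes and so have the same singular values. This collapses all four expressions to the common value $\zeta_*$.

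Parts~2--4 then follow quickly. For part~2 I would start from the algebraic identity
\[ PP' - \Phat\Phat' = (I-\Phat\Phat')PP' - \Phat\Phat'(I-PP'), \]
take norms, and use that $\Phat\Phat'(I-PP')$ is the transpose of $(I-PP')\Phat\Phat'$; part~1 then bounds each of the two summands by $\zeta_*$. For part~3, since $Q'P = 0$ we have $Q = (I-PP')Q$, hence $\Phat'Q = \Phat'(I-PP')Q$, and submultiplicativity combined with part~1 and $\|Q\|_2 = 1$ yields $\|\Phat'Q\|_2 \le \zeta_*$. For part~4, I would expand
\[ \bigl((I-\Phat\Phat')Q\bigr)'(I-\Phat\Phat')Q = I - (\Phat'Q)'(\Phat'Q), \]
whose eigenvalues are $1-\sigma_j^2(\Phat'Q)$; the upper bound $\sigma_i\le 1$ follows from $\|I-\Phat\Phat'\|_2\le 1$ and $\|Q\|_2 = 1$, while the lower bound uses part~3 to write $\sigma_j^2(\Phat'Q)\le \zeta_*^2$.

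There is no deep obstacle here, but the step that deserves care is part~1: the four listed norms look visually quite different (matrix--matrix versus projector--projector products, with and without transposes), and getting them to collapse to a single number really does require both the auxiliary identity $\|AP\|_2 = \|APP'\|_2$ and the observation that $\Phat'P$ and $P'\Phat$ share their singular values. Once part~1 is clean, the remaining three items are essentially one-line consequences of it combined with $Q'P=0$.
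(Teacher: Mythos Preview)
Your proposal is correct and follows essentially the same route as the paper's proof: part~1 via the idempotence of the projectors together with the shared singular values of $P'\Phat$ and $\Phat'P$ (the paper writes out an explicit SVD $P'\Phat = U\Sigma V'$, you phrase it as $1-\sigma_{\min}^2$, but it is the same computation), part~2 via the same add-and-subtract of $\Phat\Phat'PP'$, and parts~3--4 exactly as you outline. There is nothing to add.
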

The proof is in the Appendix.

\subsection{High probability tail bounds for sums of random matrices}

The following lemma follows easily using Definition \ref{probdefs}. We will use this at various places in the paper.
\begin{lem}
Suppose that $\calb$ is the set of values that the r.v.s $X,Y$ can take. Suppose that $\calc$ is a set of values that the r.v. $X$ can take.
For a $0 \le p \le 1$, if $\mathbf{P}(\ecalb|X) \ge p$ for all $X \in \calc$,  then $\mathbf{P}(\ecalb|\ecalc) \ge p$ as long as $\mathbf{P}(\ecalc)> 0$.
\label{rem_prob}
\end{lem}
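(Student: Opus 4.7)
The plan is to unwind the conditional-probability definitions in Definition \ref{probdefs} and then apply the tower property of conditional expectation. First I would write the joint probability of the two events as an expectation of a product of indicators:
\beq
\mathbf{P}(\ecalb \cap \ecalc) = \E\bigl[\mathbb{I}_\calb(X,Y)\,\mathbb{I}_\calc(X)\bigr]. \nn
\eeq
Since $\mathbb{I}_\calc(X)$ is measurable with respect to $X$, the tower property gives
\beq
\E\bigl[\mathbb{I}_\calb(X,Y)\,\mathbb{I}_\calc(X)\bigr]
= \E\!\left[\mathbb{I}_\calc(X)\,\E[\mathbb{I}_\calb(X,Y)\mid X]\right]
= \E\!\left[\mathbb{I}_\calc(X)\,\mathbf{P}(\ecalb\mid X)\right], \nn
\eeq
where the last equality is just the notational identity from Definition \ref{probdefs}.

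Next I would use the hypothesis pointwise. For any realization $X \in \calc$ we are given $\mathbf{P}(\ecalb\mid X) \ge p$, and when $X \notin \calc$ the indicator $\mathbb{I}_\calc(X)$ kills the term. Hence, pointwise in $X$,
\beq
\mathbb{I}_\calc(X)\,\mathbf{P}(\ecalb\mid X) \;\ge\; p\,\mathbb{I}_\calc(X). \nn
\eeq
Taking expectations preserves this inequality, and $\E[\mathbb{I}_\calc(X)] = \mathbf{P}(\ecalc)$, so
\beq
\mathbf{P}(\ecalb \cap \ecalc) \;\ge\; p\,\mathbf{P}(\ecalc). \nn
\eeq
Finally, since $\mathbf{P}(\ecalc)>0$, dividing both sides by $\mathbf{P}(\ecalc)$ and using $\mathbf{P}(\ecalb\mid\ecalc) := \mathbf{P}(\ecalb \cap \ecalc)/\mathbf{P}(\ecalc)$ yields the claim $\mathbf{P}(\ecalb\mid\ecalc) \ge p$.

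There is really no main obstacle here; the only thing to be careful about is the slightly nonstandard convention in Definition \ref{probdefs} whereby $\calb$ denotes a set of values of the pair $(X,Y)$ while $\calc$ denotes a set of values of $X$ alone, so that $\mathbb{I}_\calc(X)$ is $\sigma(X)$-measurable and can legitimately be pulled inside the conditional expectation given $X$. Once that is recognized, the argument is a three-line application of the tower property and monotonicity of expectation.
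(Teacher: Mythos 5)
Your proposal is correct and matches the paper's own proof essentially step for step: both write $\mathbf{P}(\ecalb,\ecalc)=\E[\mathbb{I}_\calb(X,Y)\mathbb{I}_\calc(X)]$, apply the tower property to condition on $X$, use the pointwise bound $\mathbb{I}_\calc(X)\mathbf{P}(\ecalb|X)\ge p\,\mathbb{I}_\calc(X)$, and divide by $\mathbf{P}(\ecalc)>0$. No gaps.
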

The proof is in the Appendix.

The following lemma is an easy consequence of the chain rule of probability applied to a contracting sequence of events.
\begin{lem} \label{subset_lem}
For a sequence of events $E_0^e, E_1^e, \dots E_m^e$ that satisfy $E_0^e \supseteq E_1^e  \supseteq E_2^e \dots  \supseteq E_m^e$, the following holds
$$\mathbf{P}(E_m^e|E_0^e) = \prod_{k=1}^{m} \mathbf{P}(E_k^e | E_{k-1}^e).$$
\end{lem}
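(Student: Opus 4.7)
The statement is essentially a telescoping identity, so the plan is almost entirely mechanical rather than conceptual. The key observation I would use is that nesting $E_k^e \subseteq E_{k-1}^e$ implies $E_k^e \cap E_{k-1}^e = E_k^e$, which simplifies every conditional probability appearing in the product to a ratio of unconditional probabilities.

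Concretely, I would first note that, by Definition \ref{probdefs} and the assumption $E_k^e \subseteq E_{k-1}^e$,
\[
\mathbf{P}(E_k^e \mid E_{k-1}^e) \;=\; \frac{\mathbf{P}(E_k^e \cap E_{k-1}^e)}{\mathbf{P}(E_{k-1}^e)} \;=\; \frac{\mathbf{P}(E_k^e)}{\mathbf{P}(E_{k-1}^e)},
\]
which is well-defined because $\mathbf{P}(E_{k-1}^e) \ge \mathbf{P}(E_0^e) > 0$ for each $k = 1,\dots,m$ (if $\mathbf{P}(E_0^e) = 0$ then $\mathbf{P}(E_m^e \mid E_0^e)$ is not defined, so we implicitly assume $\mathbf{P}(E_0^e) > 0$). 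Then I would multiply these equalities together to get a telescoping product:
\[
\prod_{k=1}^{m} \mathbf{P}(E_k^e \mid E_{k-1}^e) \;=\; \prod_{k=1}^{m} \frac{\mathbf{P}(E_k^e)}{\mathbf{P}(E_{k-1}^e)} \;=\; \frac{\mathbf{P}(E_m^e)}{\mathbf{P}(E_0^e)} \;=\; \mathbf{P}(E_m^e \mid E_0^e),
\]
where the last equality again uses $E_m^e \subseteq E_0^e$ and Definition \ref{probdefs}. This completes the argument.

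An alternative but equivalent route would be to apply the standard chain rule of probability to the joint event $E_m^e \cap E_{m-1}^e \cap \cdots \cap E_1^e$ conditioned on $E_0^e$, and then use the nesting to identify this joint event with $E_m^e$ itself and to replace each conditioning on $E_{k-1}^e \cap \cdots \cap E_0^e$ by conditioning on the smallest (innermost) set $E_{k-1}^e$. There is no genuine obstacle to surmount here; the only thing worth stating explicitly is the nonvanishing of $\mathbf{P}(E_0^e)$, which ensures that all conditional probabilities in the telescoping product are well-defined.
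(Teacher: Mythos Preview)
Your proof is correct. The paper's one-line proof takes precisely your ``alternative route'': it writes $\mathbf{P}(E_m^e \mid E_0^e) = \mathbf{P}(E_m^e, E_{m-1}^e, \dots, E_0^e \mid E_0^e)$, applies the chain rule, and then uses the nesting to reduce each conditioning set $E_{k-1}^e \cap \cdots \cap E_0^e$ to $E_{k-1}^e$; your primary telescoping-ratio argument is an equally valid and essentially equivalent way to reach the same identity.
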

\begin{proof}
$\mathbf{P}(E_m^e|E_0^e) = \mathbf{P}(E_m^e, E_{m-1}^e, \dots E_0^e | E_0^e) = \prod_{k=1}^{m} \mathbf{P}(E_k^e | E_{k-1}^e, E_{k-2}^e, \dots E_0^e) = \prod_{k=1}^{m} \mathbf{P}(E_k^e | E_{k-1}^e)$.
\end{proof}

Next, we state the matrix Azuma inequality which gives tail bounds for sums of random matrices.

\begin{theorem}
(Matrix Azuma) Consider a finite adapted sequence $\{Z_t\}_{t\geq1}$ of Hermitian matrices in dimension $n$, and a fixed sequence $A_t$ of self-adjoint matrices that satisfy
$$\E_{t-1}(Z_t)=\E(Z_t|Z_1,\cdots,Z_{t-1})=0, \text{ and }Z_t^2\preceq A_t^2 \text{ almost surely.}$$
Then, for all $\epsilon> 0$,  $$\Pb\left(\lambda_{\max}\left(\sum_t Z_t\right)\leq \epsilon\right)\geq 1- n \exp\left(-\frac{\epsilon^2}{8\sigma^2}\right), \text{ where } \sigma^2=\left\|\sum_t{A_t^2}\right\|.$$
\label{azuma}
\end{theorem}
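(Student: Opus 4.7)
The plan is to follow the standard Laplace-transform (Chernoff-method) proof for matrix concentration, specialized to a martingale-difference sequence via Lieb's concavity theorem. Write $Y:=\sum_t Z_t$. I would first apply Markov's inequality to the monotone map $s\mapsto e^{\theta s}$ acting on $\lambda_{\max}$, together with the elementary bound $\lambda_{\max}(\exp(\theta Y))\le \trace\exp(\theta Y)$, to obtain the matrix Chernoff inequality
$$\Pb(\lambda_{\max}(Y)\ge \epsilon)\le e^{-\theta\epsilon}\,\E\,\trace\exp(\theta Y),\qquad \theta>0.$$

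Next I would peel off the terms inside the exponential one at a time using Lieb's concavity theorem, which asserts that for every fixed Hermitian matrix $H$, the map $X\mapsto \trace\exp(H+\log X)$ is concave on the positive-definite cone. Conditioning on $Z_1,\dots,Z_{t-1}$ and applying Jensen's inequality to the random positive-definite matrix $e^{\theta Z_t}$ gives
$$\E_{t-1}\,\trace\exp(H+\theta Z_t)\le \trace\exp\!\left(H+\log \E_{t-1}\, e^{\theta Z_t}\right).$$
Using the tower property and iterating this bound inward from the last time step (each time absorbing already-conditioned terms into the deterministic Hermitian $H$) yields
$$\E\,\trace\exp(\theta Y)\le \trace\exp\!\left(\textstyle\sum_t\log \E_{t-1}\, e^{\theta Z_t}\right).$$

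The key technical ingredient is an operator-valued Hoeffding MGF bound: under $\E_{t-1} Z_t=0$ and $Z_t^2\preceq A_t^2$ almost surely,
$$\log \E_{t-1}\, e^{\theta Z_t}\preceq 2\theta^2 A_t^2\qquad\text{for every }\theta\in\R.$$
This is proved by constructing a deterministic quadratic majorant $e^{\theta z}\le f(\theta,a)+g(\theta,a)\,z+h(\theta,a)\,z^2$ on $[-a,a]$, invoking the transfer principle (a scalar polynomial inequality valid on $[-a,a]$ passes to a semidefinite inequality when evaluated at a Hermitian $Z$ with $Z^2\preceq a^2 I$), and taking conditional expectation to eliminate the linear term. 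Combining the previous displays with $\trace\exp(B)\le n\, e^{\lambda_{\max}(B)}$ and $\lambda_{\max}(\sum_t 2\theta^2 A_t^2)\le 2\theta^2\sigma^2$, I obtain
$$\Pb(\lambda_{\max}(Y)\ge \epsilon)\le n\exp(2\theta^2\sigma^2-\theta\epsilon).$$
Minimizing the exponent over $\theta>0$ at $\theta=\epsilon/(4\sigma^2)$ gives value $-\epsilon^2/(8\sigma^2)$, and passing to complements produces the claimed bound.

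The main obstacle is the operator-valued Hoeffding MGF bound in Step 3. Non-commutativity of matrix exponentials forbids a direct copy of the scalar Hoeffding argument, which crucially uses $e^{\alpha X_1}e^{\alpha X_2}=e^{\alpha(X_1+X_2)}$, and it also forbids a term-by-term Taylor-series treatment because the odd conditional moments of $Z_t$ do not vanish at the matrix level even when $\E_{t-1} Z_t=0$. The clean resolution is the transfer principle applied to a carefully chosen quadratic envelope of $z\mapsto e^{\theta z}$; this quadratic envelope is precisely what produces the constant $2$ in front of $\theta^2 A_t^2$ and, after optimization in $\theta$, the denominator $8$ in the final exponent.
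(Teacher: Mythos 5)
Your overall strategy matches the paper's: Laplace transform plus conditional application of Lieb's concavity to peel off one summand at a time, then a matrix cgf bound and Chernoff optimization. You have also correctly identified the conditional mgf bound as the crux. The gap is precisely there. You assert
\[
\log \E_{t-1}\, e^{\theta Z_t}\preceq 2\theta^2 A_t^2\quad\text{for every }\theta\in\R,
\]
to be proved via a quadratic envelope $e^{\theta z}\le f+gz+hz^2$ on the spectral interval together with the transfer principle. But after the conditional expectation kills the linear term you are left needing $fI+hA_t^2\preceq e^{2\theta^2 A_t^2}$; tested against a small eigenvalue of $A_t^2$ this forces $f\le 1$ and $h\le 2\theta^2$, while the majorant evaluated at $z=\pm\|A_t\|$ forces $\cosh(\theta\|A_t\|)\le f+h\|A_t\|^2$. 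Together these require $\cosh(u)\le 1+2u^2$ with $u=\theta\|A_t\|$, which fails once $u\gtrsim 4.8$. So the quadratic envelope you want simply does not exist when $|\theta|\,\|A_t\|$ is moderately large, and the Chernoff optimization is not confined to the small-$\theta$ regime, so this cannot be waved away.

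What rescues the argument in the paper is a \emph{symmetrization} step that your outline omits. Before invoking Lieb's concavity, the paper replaces the conditionally centered summand $\theta X_n$ by $2\epsilon\theta X_n$ for an independent Rademacher sign $\epsilon$ (this uses $\E_{n-1}X_n=0$ together with convexity of the trace exponential). The conditional mgf to be controlled then becomes $\E_\epsilon\,e^{2\epsilon\theta X_n}=\cosh(2\theta X_n)$, and now the relevant scalar inequality $\cosh(v)\le e^{v^2/2}$ holds for \emph{all} $v\in\R$, so the transfer principle gives $\log\E_\epsilon\,e^{2\epsilon\theta X_n}\preceq 2\theta^2 X_n^2\preceq 2\theta^2 A_n^2$ with no restriction on $\theta$. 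The constant $2=2^2\cdot\tfrac12$ is the combined price of the symmetrization factor and the $\cosh$ bound; it is not obtainable from a direct quadratic envelope of $e^{\theta z}$. If you insert the symmetrization step before Lieb's concavity, the remainder of your argument is correct and coincides with the paper's proof.
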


The proof is in the appendix.

\begin{corollary}[Matrix Azuma conditioned on another random variable for any zero mean matrix]\label{azuma_rec}
Consider an $\alpha$-length adapted sequence $\{Z_t\}$ of random matrices of size $n_1\times n_2$ given a random variable $X$. Assume that, for all $X\in\mathcal{C}$, (i) $\mathbf{P}(\|Z_t\|_2 \le b_1|X) = 1$ and (ii) $\E_{t-1}( Z_t|X)=0$. Then, for all $\epsilon >0$,
$$\mathbf{P} \left(\left\|\frac{1}{\alpha}\sum_{t=1}^{\alpha} Z_t \right\|\leq \epsilon|X\right) \geq 1-(n_1+n_2) \exp\left(-\frac{\alpha \epsilon^2}{32 b_1^2}\right)$$
\end{corollary}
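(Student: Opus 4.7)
The plan is to reduce the rectangular case to the Hermitian case handled by Theorem~\ref{azuma}, via the standard Hermitian dilation, and then to apply that theorem conditional on $X$. First, I would fix an arbitrary $X_0 \in \mathcal{C}$ and work under the conditional distribution given $X=X_0$; hypotheses (i) and (ii) of the corollary then say that, under this conditional law, $\{Z_t\}$ is an adapted sequence of $n_1 \times n_2$ matrices with $\E_{t-1}(Z_t) = 0$ and $\|Z_t\|_2 \le b_1$ almost surely.

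Next, I would dilate each $Z_t$ to
\[
  \tilde{Z}_t := \left[\begin{array}{cc} 0 & Z_t \\ Z_t' & 0 \end{array}\right] \in \R^{(n_1+n_2)\times(n_1+n_2)}.
\]
Because the dilation is linear, $\{\tilde{Z}_t\}$ remains adapted with $\E_{t-1}(\tilde{Z}_t \mid X=X_0) = 0$, and a direct computation yields $\tilde{Z}_t^2 = \diag(Z_t Z_t',\ Z_t' Z_t)$, so $\tilde{Z}_t^2 \preceq \|Z_t\|_2^2\,I \preceq b_1^2\, I$ almost surely. This lets me choose $A_t := b_1 I_{n_1+n_2}$ in Theorem~\ref{azuma}, giving $\sigma^2 = \|\sum_t A_t^2\|_2 = \alpha b_1^2$. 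The key additional observation is that $\sum_t \tilde{Z}_t$ is itself the Hermitian dilation of $\sum_t Z_t$, so $\lambda_{\max}\bigl(\sum_t \tilde{Z}_t\bigr) = \|\sum_t Z_t\|_2$.

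I would then apply Theorem~\ref{azuma} to $\{\tilde{Z}_t\}$ conditional on $X=X_0$ with threshold $\alpha\epsilon$ in place of $\epsilon$, obtaining
\[
  \mathbf{P}\!\left(\bigg\|\frac{1}{\alpha}\sum_{t=1}^{\alpha} Z_t\bigg\|_2 \le \epsilon \ \bigg|\ X=X_0\right) \ \ge\ 1-(n_1+n_2)\exp\!\left(-\frac{\alpha\, \epsilon^2}{8 b_1^2}\right),
\]
which implies the claimed bound (the $32 b_1^2$ in the denominator of the statement is conservative relative to what the dilation argument directly delivers). Since $X_0 \in \mathcal{C}$ was arbitrary, the bound holds for every value of $X$ in $\mathcal{C}$.

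The hard part, such as it is, is purely bookkeeping: one must verify that adaptedness, the zero-mean property, and the almost sure bound all transfer cleanly from the conditional law of $\{Z_t\}$ given $X$ to the conditional law of $\{\tilde{Z}_t\}$ given $X$. By linearity of the dilation and by the definition of $\mathbf{P}(\cdot \mid X) = \E[\mathbb{I}(\cdot)\mid X]$ (Definition~\ref{probdefs}), each of these transfers immediately, so there is no real conceptual difficulty beyond packaging Theorem~\ref{azuma} inside the conditioning argument.
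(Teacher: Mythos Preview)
Your proposal is correct and follows essentially the same route as the paper: form the Hermitian dilation of each $Z_t$, verify that adaptedness, the zero conditional mean, and the almost-sure norm bound carry over (so that $\tilde Z_t^2 \preceq b_1^2 I$), and then invoke Theorem~\ref{azuma} conditionally on $X$ together with $\lambda_{\max}\bigl(\sum_t \tilde Z_t\bigr)=\bigl\|\sum_t Z_t\bigr\|_2$. Your observation that the dilation argument actually delivers $8b_1^2$ in the denominator (so the stated $32b_1^2$ is slack) is also correct.
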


The proof is in the appendix.

\begin{corollary}[Matrix Azuma conditioned on another random variable for a nonzero mean Hermitian matrix]\label{azuma_nonzero}
Consider an $\alpha$-length sequence $\{Z_t\}_{1\leq t\leq \alpha}$ of random Hermitian matrices of size $n\times n$ given a random variable $X$. Assume that, for all $X\in\mathcal{C}$, (i) $\mathbf{P}(b_1 I \preceq Z_t \preceq b_2 I|X) = 1, 1\leq t\leq \alpha$ and (ii) $b_3 I \preceq \frac{1}{\alpha}\sum_{t=1}^{\alpha} \E_{t-1}(Z_t|X) \preceq b_4 I $. Then for all $\epsilon > 0$,
\bea
&&\mathbf{P} \left(\lambda_{\max}\left(\frac{1}{\alpha}\sum_{t=1}^{\alpha} Z_t|X\right) \leq b_4 + \epsilon\right) \geq 1- n \exp\left(-\frac{\alpha \epsilon^2}{8(b_2-b_1)^2}\right)
\nn \\
&&\mathbf{P} \left(\lambda_{\min}\left(\frac{1}{\alpha}\sum_{t=1}^{\alpha} Z_t|X\right) \geq b_3 -\epsilon\right) \geq  1- n \exp\left(-\frac{\alpha \epsilon^2}{8(b_2-b_1)^2}\right) \nn
\eea
\end{corollary}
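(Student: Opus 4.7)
The plan is to reduce the nonzero-mean case to the zero-mean matrix Azuma statement already proved (Theorem \ref{azuma}, applied conditionally as in Corollary \ref{azuma_rec}). Introduce the centered sequence $Y_t := Z_t - \E_{t-1}(Z_t\mid X)$. By construction $\{Y_t\}$ is adapted and $\E_{t-1}(Y_t\mid X) = 0$. The remaining task is to produce an almost-sure quadratic bound of the form $Y_t^2 \preceq A_t^2$ so that Azuma is applicable.

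For $X \in \mathcal{C}$, monotonicity of the conditional expectation together with assumption (i) gives $b_1 I \preceq \E_{t-1}(Z_t\mid X) \preceq b_2 I$ almost surely, and subtracting from $b_1 I \preceq Z_t \preceq b_2 I$ yields
\begin{equation*}
-(b_2-b_1) I \;\preceq\; Y_t \;\preceq\; (b_2-b_1) I,
\end{equation*}
so $Y_t^2 \preceq (b_2-b_1)^2 I =: A_t^2$ almost surely given $X$. Hence $\bigl\|\sum_{t=1}^{\alpha} A_t^2\bigr\| = \alpha(b_2-b_1)^2$, and applying Theorem \ref{azuma} to the conditional law of $\{Y_t\}$ given $X$ (exactly as in the proof of Corollary \ref{azuma_rec}, with the scalar $\epsilon$ of Theorem \ref{azuma} taken to be $\alpha\epsilon$) produces, for every $X \in \mathcal{C}$,
\begin{equation*}
\mathbf{P}\!\left(\lambda_{\max}\!\left(\tfrac{1}{\alpha}\textstyle\sum_{t=1}^{\alpha} Y_t\right) \le \epsilon \;\Big|\; X\right) \;\ge\; 1 - n\exp\!\left(-\frac{\alpha\epsilon^2}{8(b_2-b_1)^2}\right).
\end{equation*}

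Now decompose $\tfrac{1}{\alpha}\sum_t Z_t = \tfrac{1}{\alpha}\sum_t Y_t + \tfrac{1}{\alpha}\sum_t \E_{t-1}(Z_t\mid X)$. Weyl's theorem (Theorem \ref{weyl}) gives $\lambda_{\max}(A+H) \le \lambda_{\max}(A) + \lambda_{\max}(H)$ for Hermitian $A, H$, so on the high-probability event above, assumption (ii) implies
\begin{equation*}
\lambda_{\max}\!\left(\tfrac{1}{\alpha}\textstyle\sum_t Z_t\right) \;\le\; \epsilon + \lambda_{\max}\!\left(\tfrac{1}{\alpha}\textstyle\sum_t \E_{t-1}(Z_t\mid X)\right) \;\le\; \epsilon + b_4,
\end{equation*}
yielding the first claim. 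For the second claim, apply the same argument to $-Y_t$: it satisfies the identical almost-sure bound $(-Y_t)^2 = Y_t^2 \preceq (b_2-b_1)^2 I$ and has zero conditional mean, so Azuma gives $\lambda_{\max}(-\tfrac{1}{\alpha}\sum_t Y_t) \le \epsilon$ with the same probability; since $\lambda_{\max}(-B) = -\lambda_{\min}(B)$ and Weyl also yields $\lambda_{\min}(A+H) \ge \lambda_{\min}(A) + \lambda_{\min}(H)$, we conclude $\lambda_{\min}\!\left(\tfrac{1}{\alpha}\sum_t Z_t\right) \ge b_3 - \epsilon$ on that event.

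The only non-mechanical step is the conditional invocation of the matrix Azuma bound, but this is precisely the template already established for Corollary \ref{azuma_rec}: the almost-sure and martingale-difference hypotheses hold under the conditional law for every $X \in \mathcal{C}$, so applying the unconditional Theorem \ref{azuma} to that conditional law gives the stated conditional tail bound. Everything else is bookkeeping via linearity of conditional expectation, Weyl's inequality, and the substitution $Z_t \mapsto -Y_t$ for the lower-tail direction.
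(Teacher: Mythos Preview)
Your proof is correct and follows essentially the same approach as the paper's: center to $Y_t = Z_t - \E_{t-1}(Z_t\mid X)$, use the two-sided bound on $Z_t$ (and hence on its conditional mean) to get $Y_t^2 \preceq (b_2-b_1)^2 I$, apply the matrix Azuma theorem conditionally, and then recover the bounds on $\sum Z_t$ via Weyl's inequality. The only cosmetic differences are that the paper phrases the Weyl step as a lower bound on $\lambda_{\max}(\sum Y_t)$ rather than an upper bound on $\lambda_{\max}(\sum Z_t)$, and for the $\lambda_{\min}$ claim it redefines $Y_t := \E_{t-1}(Z_t\mid X) - Z_t$ where you equivalently apply the argument to $-Y_t$.
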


The proof is in the appendix.

\begin{definition}
Let function $\digamma(\alpha,\epsilon,b_2-b_1):=\exp(-\frac{\alpha \epsilon^2}{8(b_2-b_1)^2})$.
\end{definition}

\begin{lem}
If random variable $X$ and $Y$ are independent, $h(\cdot)$, $g(\cdot)$ are some functions, then $\E(Xh(Y)|g(Y))=\E(X)\E(h(Y)|g(Y))$.
\label{ind_expec}
\end{lem}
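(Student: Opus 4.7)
The plan is to use the definition of conditional expectation together with independence. Since $h$ and $g$ are deterministic (measurable) functions of $Y$, the pair $(h(Y),g(Y))$ is $\sigma(Y)$-measurable. By hypothesis $X \perp Y$, so $X$ is independent of $\sigma(Y)$, and hence jointly independent of the pair $(h(Y),g(Y))$. This joint independence is the only real input we need; the rest is bookkeeping.

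First I would verify the claim by testing against an arbitrary bounded measurable $\phi(g(Y))$. Concretely, for any bounded Borel function $\phi$, joint independence of $X$ and $(h(Y),g(Y))$ gives
\beq
\E\bigl[\,\phi(g(Y))\, X\, h(Y)\,\bigr] \;=\; \E[X]\cdot \E\bigl[\,\phi(g(Y))\, h(Y)\,\bigr]. \nn
\eeq
Next I would pull the outer factor inside using the defining property of $\E[h(Y)\mid g(Y)]$:
\beq
\E[X]\cdot \E\bigl[\,\phi(g(Y))\, h(Y)\,\bigr] \;=\; \E[X]\cdot \E\bigl[\,\phi(g(Y))\, \E[h(Y)\mid g(Y)]\,\bigr] \;=\; \E\bigl[\,\phi(g(Y))\cdot \E[X]\,\E[h(Y)\mid g(Y)]\,\bigr]. \nn
\eeq
Comparing these two equalities, the random variable $\E[X]\,\E[h(Y)\mid g(Y)]$ is $\sigma(g(Y))$-measurable and satisfies the integral characterization of the conditional expectation $\E[Xh(Y)\mid g(Y)]$ against every $\sigma(g(Y))$-measurable test function. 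By uniqueness (a.s.) of conditional expectation, this identifies it as $\E[Xh(Y)\mid g(Y)]$, which is the desired identity.

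The only step that requires any care is the passage from independence of $X$ and $Y$ to independence of $X$ and the pair $(h(Y),g(Y))$; this is immediate because measurable functions of independent random variables are independent, and it is what allows the factorization in the first displayed equation. Implicit integrability assumptions ($X$ and $h(Y)$ integrable, $Xh(Y)$ integrable) are standard and will be assumed to hold; they are automatically satisfied in every subsequent use of this lemma in the paper, where all the relevant random matrices are uniformly bounded.
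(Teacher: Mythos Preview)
Your proof is correct. It differs from the paper's argument, which instead applies the tower property: the paper first conditions on the larger $\sigma$-algebra $\sigma(Y,g(Y))=\sigma(Y)$, pulls out the $\sigma(Y)$-measurable factor $h(Y)$, replaces $\E[X\mid Y]$ by $\E[X]$ using independence, and then takes the outer conditional expectation given $g(Y)$. Your route goes straight to the defining integral characterization of conditional expectation and checks it against arbitrary $\sigma(g(Y))$-measurable test functions, invoking uniqueness at the end. The paper's approach is shorter and more algorithmic once one accepts the iterated-expectation formalism; your approach is slightly more foundational and makes the role of independence and measurability explicit, which is a cleaner justification if one is being careful about the measure-theoretic underpinnings. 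Either way the content is the same elementary fact.
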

The proof is in the appendix.

\section{Problem Definition}
The measurement vector at time $t$, $M_t$, satisfies
\bea
M_t:=L_t+S_t
\eea
where $S_t$ is a sparse vector and $L_t$ is a dense vector that satisfies the model given below. Denote by $P_0$ a basis matrix for ${\cal L}_{t_{\train}}=[L_0, L_1, \cdots, L_{t_{\train}}]$, i.e., $\Span(P_0)=\Span({\cal L}_{t_{\train}})$. We are given an accurate enough estimate $\hat{P}_0$ for $P_0$, i.e., $\|(I-\hat{P}_0\hat{P}_0')P_0\|_2$ is small.
The goal is
\ben
\item to estimate both $S_t$ and $L_t$ at each time $t > t_\train$, and
\item to estimate $\Span({\cal L}_t)$ every so often. 
\een

\subsection{Signal Model}
\label{sig_mod}

\begin{sigmodel}[Model on $L_t$] \label{model_lt} \

\ben
\item We assume that $L_t = P_{(t)} a_t$ with $P_{(t)} = P_j$ for all $t_j \leq t <t_{j+1}$, $j=0,1,2 \cdots J$, where $P_j$ is an $n \times r_j$ basis matrix with $r_j  \ll \min(n, (t_{j+1} - t_j))$. At the change times, $t_j$, $P_j$ changes as $$P_j = [P_{j-1} \ P_{j,\new}].$$ Here, $P_{j,\new}$ is an $n \times c_{j,\new}$ basis matrix with $P_{j,\new}'P_{j-1} = 0$. Thus $r_j = r_{j-1} + c_{j,\new}$.   We let $t_0=0$ and $t_{J+1}$ equal the sequence length or $t_{J+1}=\infty$.



\item The vector of coefficients, $a_t:={P_{(t)}}'L_t$, satisfies the following autoregressive model
$$a_t = ba_{t-1}+\nu_t$$
where $b < 1$ is a scalar, $\E[\nu_t]=0$, $\nu_t$'s are mutually independent over time $t$ and the entries of any $\nu_t$ are pairwise uncorrelated, i.e. $\E[(\nu_t)_i (\nu_t)_j]=0$ when $i \neq j$.
\een
\end{sigmodel}

\begin{definition}
Define the covariance matrices of $\nu_t$ and $a_t$ to be the diagonal matrices $\Lambda_{\nu,t}:=\text{Cov}[\nu_t]=\E(\nu_t\nu_t')$ and $\Lambda_{a,t}:=\text{Cov}[a_t]=\E(a_ta_t').$
Then clearly,
\bea \label{armod}
\Lambda_{a,t} = b^2 \Lambda_{a,t-1} + \Lambda_{\nu,t}
\eea
Also, for $t_j \le t < t_{j+1}$, $a_t$ is an $r_j$ length vector which can be split as
  $$a_t ={P_j}'L_t = \vect{a_{t,*}}{a_{t,\new}}$$
where $a_{t,*}: = {P_{j-1}}'L_t$ is an $r_{j-1}$ length vector and $a_{t,\new}: = {P_{j,\new}}'L_t$ is a $c_{j,\new}$ length vector.
Thus, for this interval, $L_t$ can be rewritten as
\[
L_t = \left[ P_{j-1} \ P_{j,\new}\right] \vect{a_{t,*}}{a_{t,\new}} = P_{j-1} a_{t,*} + P_{j,\new} a_{t,\new}
\]
Also, $\Lambda_{a,t}$ can be split as
$$\Lambda_{a,t} = \left[ \begin{array}{cc}
(\Lambda_{a,t})_* &  0 \nn \\
0  & (\Lambda_{a,t})_\new  \nn \\
\end{array}
\right]
$$
where $(\Lambda_{a,t})_* = \text{Cov}[a_{t,*}] $ and $(\Lambda_{a,t})_\new = \text{Cov}[a_{t,\new}]$ are diagonal matrices.

\end{definition}

\begin{ass}
Assume that $L_t$ satisfies Signal Model \ref{model_lt} with
\ben
\item $0\leq c_{j,\new}\leq c$ for all $j$ (thus $r_j\leq r_{\max}:=r_0+Jc$) 

\item $\|\nu_{t}\|_{\infty} \le (1-b)\gamma_* $ 

\item
    \bea
    \max_j \max_{t \in \Ic_{j,k}} \|\nu_{t,\new}\|_{\infty} \le (1-b)\gamma_{\new,k},
    \label{ass_ainf}
    \eea
where $\gamma_{\new,k}=\min(v^{k-1} \gamma_\new,\gamma_*)$ with a $v>1$

\item
$$(1-b^2)\lambda^- \le \lambda_{\min}( \Lambda_{\nu,t} ) \le \lambda_{\max}(\Lambda_{\nu,t} ) \le (1-b^2)\lambda^+, $$
and
$${\lambda^-} \le  \lambda_{\min}(\Lambda_{a,0}) \le  \lambda_{\max}(\Lambda_{a,0}) \le (1-b^2){\lambda^+},$$
where $0<\lambda^-<\lambda^+<\infty$

\item
$(1-b^2)\lambda_{\new}^- \le  \lambda_{\min}( (\Lambda_{\nu,t})_\new ) \le \lambda_{\max}( (\Lambda_{\nu,t})_\new ) \le (1-b^2)\lambda_{\new}^+$
and \\ 
${\lambda_{\new}^-} \le  \lambda_{\min}( (\Lambda_{a,t_j})_\new ) \le \lambda_{\max}( \Lambda_{a,t_j})_\new ) \le (1-b^2){\lambda_{\new}^+} $

\een
\end{ass}

With the above assumptions, clearly, the eigenvalues of $\Lambda_{a,t}$ lie between $\lambda^-$ and  $\lambda^+$ and those of $\Lambda_{a,t,\new}$ lie between
$\lambda_\new^-$ and  $\lambda_\new^+$. 
Thus the condition numbers of any $\Lambda_{a,t}$ and of $\Lambda_{a,t,\new}$ are bounded by
$$f: = \frac{\lambda^+}{\lambda^-} \ \ \text{and} \ \ g:= \frac{\lambda_{\new}^+}{\lambda_{\new}^-}$$
respectively.
Define the following quantities for $S_t$.
\begin{definition}
Let $T_t :=\{i: \  (S_t)_i \neq 0 \}$ denote the support of $S_t$. Define
\[
S_{\min}: = \min_{t> t_\train} \min_{i \in T_t} |(S_t)_i |, \ \text{and} \ s: = \max_t |T_t|
\]
\end{definition}

\subsection{Measuring denseness of a matrix and its relation with RIC}
\label{denseness}
Before we can state the denseness assumption, we need to define the denseness coefficient.
\begin{definition}[denseness coefficient]\label{subspace_kappa}
For a matrix or a vector $B$, define
\beq 
\kappa_s(B)=\kappa_s(\Span(B)) : = \max_{|T| \le s} \|{I_T}' \mathrm{basis}(B)\|_2
\eeq
where $\|.\|_2$ is the vector or matrix $\ell_2$-norm.
\end{definition}

Clearly, $\kappa_s(B) \le 1$. First consider an $n$-length vector $B$. Then $\kappa_s$ measures the denseness (non-compressibility) of $B$. A small value indicates that the entries in $B$ are spread out, i.e. it is a dense vector. A large value indicates that it is compressible (approximately or exactly sparse). The worst case (largest possible value) is $\kappa_s(B)=1$ which indicates that $B$ is an $s$-sparse vector. The best case is $\kappa_s(B) = \sqrt{s/n}$ and this will occur if each entry of $B$ has the same magnitude. 
Similarly, for an $n \times r$ matrix $B$, a small $\kappa_s$ means that most (or all) of its columns are dense vectors.
\begin{remark}\label{kapparemark}
The following facts should be noted about $\kappa_s(.)$.
\ben
\item For a matrix $B$ of rank $r$, $\kappa_s(B)$ is an non-decreasing function of $s$ and of $r$

\item A loose bound on $\kappa_s(B)$ obtained using triangle inequality is $\kappa_s(B) \le s \kappa_1(B)$. 

\een
\label{dense_remark}
\end{remark}
%

The lemma below relates the denseness coefficient of a basis matrix $P$ to the RIC of $I-PP'$. The proof is in the Appendix. 
\begin{lem}\label{delta_kappa}
For an $n \times r$ basis matrix $P$  (i.e $P$ satisfying $P'P=I$), 
$$\delta_s(I-PP') = \kappa_s^2 (P).$$
\end{lem}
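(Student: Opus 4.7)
The plan is to exploit the idempotence of the orthogonal projection $\Psi := I - PP'$ to reduce the computation of $\Psi_T{}'\Psi_T$ to a clean identity, and then read off both sides of the RIC inequality directly.

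First, since $P'P = I$, one checks that $\Psi^2 = (I-PP')^2 = I - 2PP' + PP'PP' = I - PP' = \Psi$. Then for any index set $T$ with $|T|\le s$, writing $\Psi_T = \Psi I_T$, I compute
\[
\Psi_T{}' \Psi_T \;=\; I_T{}' \Psi^2 I_T \;=\; I_T{}'(I - PP')I_T \;=\; I_{|T|} \;-\; (I_T{}'P)(I_T{}'P)'.
\]
Let $M := I_T{}'P$; this is a $|T|\times r$ matrix whose spectral norm is, by definition, at most $\kappa_s(P)$ (with equality attained at the worst $T$).

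Next, I read off the extreme eigenvalues of $\Psi_T{}'\Psi_T = I - MM'$: they are $1 - \sigma_i^2(M)$, so
\[
\lambda_{\min}(\Psi_T{}'\Psi_T) \;=\; 1 - \|M\|_2^2, \qquad \lambda_{\max}(\Psi_T{}'\Psi_T) \;=\; 1 - \sigma_{\min}^2(M) \;\le\; 1.
\]
By Definition~\ref{defn_delta}, $\delta_s(\Psi)$ is the smallest nonnegative number making $1-\delta_s \le \lambda_{\min}(\Psi_T{}'\Psi_T)$ and $\lambda_{\max}(\Psi_T{}'\Psi_T) \le 1+\delta_s$ hold for every $|T|\le s$. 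The upper inequality is automatic since $\lambda_{\max}(\Psi_T{}'\Psi_T) \le 1 \le 1+\delta_s$. The lower inequality is equivalent to $\delta_s \ge \|M\|_2^2 = \|I_T{}'P\|_2^2$.

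Finally, taking the maximum over all $T$ with $|T|\le s$ and invoking Definition~\ref{subspace_kappa}, I obtain
\[
\delta_s(I-PP') \;=\; \max_{|T|\le s} \|I_T{}'P\|_2^2 \;=\; \kappa_s^2(P),
\]
which is the desired identity. There is no real obstacle here; the only substantive step is recognizing the idempotence of $I-PP'$, which collapses $\Psi_T{}'\Psi_T$ into the tractable form $I - (I_T{}'P)(I_T{}'P)'$. Everything else is immediate from the definitions of the RIC and of $\kappa_s$.
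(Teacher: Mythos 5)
Your proposal is correct and takes essentially the same approach as the paper: both hinge on the idempotence of $I-PP'$ (the paper uses it implicitly when writing $A_T{}'A_T = I - I_T{}'PP'I_T$, while you state $\Psi^2 = \Psi$ explicitly), observe that the upper RIC inequality is automatic since $\lambda_{\max}(\Psi_T{}'\Psi_T)\le 1$, and reduce the lower RIC inequality to $\max_{|T|\le s}\|I_T{}'P\|_2^2 = \kappa_s^2(P)$. The only cosmetic difference is that you read off the eigenvalues of $I - MM'$ directly, where the paper invokes Weyl's theorem for the same conclusion.
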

In other words, if $P$ is dense enough (small $\kappa_s$), then the RIC of $I-PP'$ is small.


In this work, we assume an upper bound on $\kappa_{2s}(P_{j})$ for all $j$, and a tighter upper bound on $\kappa_{2s}(P_{j,\new})$, i.e., there exist $\kappa_{2s,*}^+<1$ and a $\kappa_{2s,\new}^+ < \kappa_{2s,*}^+$ such that
\begin{align}
\max_{j} \kappa_{2s}(P_{j-1}) \leq \kappa_{2s,*}^+ \label{kappa plus}\\
\max_{j} \kappa_{2s}(P_{j,\new}) \leq \kappa_{2s,\new}^+ \label{kappa new plus}
\end{align}
Additionally, we also assume denseness of another matrix, $D_{j,\new,k}$, whose columns span the currently unestimated part of $\Span(P_{j,\new})$ (see Theorem \ref{thm1}).

The denseness coefficient $\kappa_s(B)$ is related to the denseness assumption required by PCP \cite{rpca}. That work uses $\kappa_1(B)$ to quantify denseness.

\section{The ReProCS algorithm and its performance guarantee}

\subsection{Recursive Projected CS}

We summarize the ReProCS algorithm in Algorithm \ref{reprocs} \cite{rrpcp_perf}. This calls the projection PCA algorithm in the subspace update step.
Given a data matrix $\mathcal{D}$, a basis matrix $P$ and an integer $r$, projection-PCA (proj-PCA) applies PCA on $\mathcal{D}_{\text{proj}}:=(I-PP')\mathcal{D}$, i.e., it computes the top $r$ eigenvectors (the eigenvectors with the largest $r$ eigenvalues) of $\frac{1}{\alpha_{\mathcal{D}}} \mathcal{D}_{\text{proj}} {\mathcal{D}_{\text{proj}}}'$. Here $\alpha_{\mathcal{D}}$ is the number of column vectors in $\mathcal{D}$. This is summarized in Algorithm \ref{algo_pPCA}.

If $P =[.]$, then projection-PCA reduces to standard PCA, i.e. it computes the top $r$ eigenvectors of $\frac{1}{\alpha_{\mathcal{D}}} \mathcal{D} {\mathcal{D}}'$.


\begin{algorithm}
\caption{projection-PCA: $Q \leftarrow \text{proj-PCA}(\mathcal{D},P,r)$}\label{algo_pPCA}
\ben
\item Projection: compute $\mathcal{D}_{\text{proj}} \leftarrow (I - P P') \mathcal{D}$
\item PCA: compute $\frac{1}{\alpha_{\mathcal{D}}}  \mathcal{D}_{\text{proj}}{\mathcal{D}_{\text{proj}}}' \overset{EVD}{=}
\left[ \begin{array}{cc}Q & Q_{\perp} \\\end{array}\right]
\left[ \begin{array}{cc} \Lambda & 0 \\0 & \Lambda_{\perp} \\\end{array}\right]
\left[ \begin{array}{c} Q' \\ {Q_{\perp}}'\\\end{array}\right]$
where $Q$ is an $n \times r$ basis matrix and  $\alpha_{\mathcal{D}}$ is the number of columns in $\mathcal{D}$.
\een
\end{algorithm}

\begin{figure*} 
\centerline{
\includegraphics[width=0.65\textwidth]{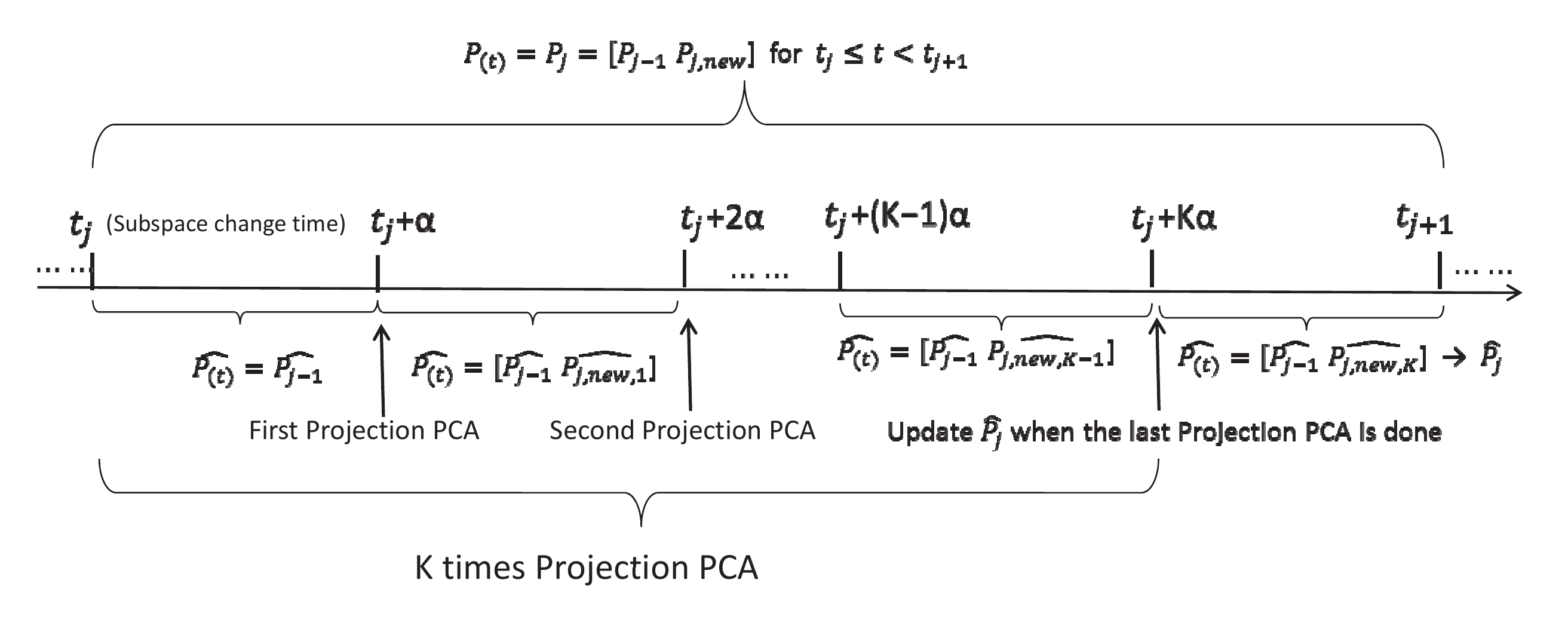}
}
\vspace{-0.2in}
\caption{\small{The K Projection PCA steps.}}
\label{proj_fig}
\vspace{-0.1in}
\end{figure*}



\begin{algorithm}[ht]
\caption{Recursive Projected CS (ReProCS)}\label{reprocs}
{\em Parameters: } algorithm parameters: $\xi$, $\omega$, $\alpha$, $K$, model parameters: $t_j$, $r_0$, $c$
\\ (set as in Theorem \ref{thm1} or as in \cite[Section IX-B]{rrpcp_perf} when the model is not known)
\\
{\em Input: } $M_t$, {\em Output: } $\Shat_t$, $\Lhat_t$, $\Phat_{(t)}$ 
\\
Initialization: Compute $\Phat_0 \leftarrow$ proj-PCA$\left( [L_{1},L_{2},\cdots,L_{t_{\train}}], [.], r_0 \right)$ and set $\Phat_{(t)} \leftarrow \Phat_0$.
\\
Let $j \leftarrow 1$, $k\leftarrow 1$.

For $t > t_{\train}$, do the following:
\ben
\item Estimate $T_t$ and $S_t$ via Projected CS:
\ben
\item \label{othoproj} Nullify most of $L_t$: compute $\Phi_{(t)} \leftarrow I-\Phat_{(t-1)} {\Phat_{(t-1)}}'$, compute $y_t \leftarrow \Phi_{(t)} M_t$
\item \label{Shatcs} Sparse Recovery: compute $\hat{S}_{t,\cs}$ as the solution of $\min_{x} \|x\|_1 \ s.t. \ \|y_t - \Phi_{(t)} x\|_2 \leq \xi$
\item \label{That} Support Estimate: compute $\hat{T}_t = \{i: \ |(\hat{S}_{t,\cs})_i| > \omega\}$
\item \label{LS} LS Estimate of $S_t$: compute $(\hat{S}_t)_{\hat{T}_t}= ((\Phi_t)_{\hat{T}_t})^{\dag} y_t, \ (\hat{S}_t)_{\hat{T}_t^{c}} = 0$
\een
\item Estimate $L_t$: $\hat{L}_t = M_t - \hat{S}_t$.
\item \label{PCA} 
Update $\Phat_{(t)}$: K Projection PCA steps (Figure \ref{proj_fig}).
\ben
\item If $t = t_j + k\alpha-1$,
\ben

\item $\Phat_{j,new,k} \leftarrow$ proj-PCA$\big(\left[\hat{L}_{t_j+(k-1)\alpha}, \dots, \hat{L}_{t_j+k\alpha-1}\right],\Phat_{j-1},c\big)$.

\item set $\Phat_{(t)} \leftarrow [\Phat_{j-1} \ \Phat_{j,\new,k}]$; increment $k \leftarrow k+1$.
\een
Else
\ben
\item set $\Phat_{(t)} \leftarrow \Phat_{(t-1)}$.
\een
\item If $t = t_j + K \alpha - 1$, then set $\Phat_{j} \leftarrow [\Phat_{j-1} \ \Phat_{j,\new,K}]$. Increment $j \leftarrow j + 1$. Reset $k \leftarrow 1$.
\een
\item Increment $t \leftarrow t + 1$ and go to step 1.
\een
\end{algorithm}
The key idea of ReProCS is as follows.
First, consider a time $t$ when the current basis matrix $P_{(t)}=P_{(t-1)}$ and this has been accurately predicted using past estimates of $L_t$, i.e. we have $\Phat_{(t-1)}$ with $\|(I -  \Phat_{(t-1)} \Phat_{(t-1)}') P_{(t)}\|_2$ small. We project the measurement vector, $M_t$, into the space perpendicular to $\Phat_{(t-1)}$ to get the projected measurement vector $y_t:= \Phi_{(t)} M_t$ where $\Phi_{(t)} = I -\Phat_{(t-1)} \Phat_{(t-1)}'$ (step 1a). Since the $n \times n$ projection matrix, $\Phi_{(t)}$ has rank $n- r_*$ where $r_*= \rank(\Phat_{(t-1)})$, therefore $y_t$ has only $n-r_*$ ``effective" measurements\footnote{i.e. some $r_*$ entries of $y_t$ are linear combinations of the other $n-r_*$ entries}, even though its length is $n$. Notice that $y_t$ can be rewritten as $y_t = \Phi_{(t)} S_t + \beta_t$ where $\beta_t: = \Phi_{(t)} L_t$. Since 
$\|(I -  \Phat_{(t-1)} \Phat_{(t-1)}') P_{(t-1)}\|_2$ is small, the projection nullifies most of the contribution of $L_t$ and so the projected noise $\beta_t$ is small.
Recovering the $n$ dimensional sparse vector $S_t$ from $y_t$ now becomes a traditional sparse recovery or CS problem in small noise \cite{feng_bresler,gorod_rao,bpdn,decodinglp,candes,donoho}. We use $\ell_1$ minimization to recover it (step 1b). If the current basis matrix $P_{(t)}$, and hence its estimate, $\Phat_{(t-1)}$, is dense enough, then, by Lemma \ref{delta_kappa}, the RIC of $\Phi_{(t)}$ is small enough. Using Theorem \ref{candes_csbound}, this  ensures that $S_t$ can be accurately recovered from $y_t$.
%
By thresholding on the recovered $S_t$, one gets an estimate of its support (step 1c). By computing a least squares (LS) estimate of $S_t$ on the estimated support and setting it to zero everywhere else (step 1d), we can get a more accurate final estimate, $\Shat_t$, as first suggested in \cite{dantzig}. This $\Shat_t$ is used to estimate $L_t$ as $\Lhat_t = M_t-\Shat_t$.  As we explain in the proof of Lemma \ref{cslem}, if $S_{\min}$ is large enough and the support estimation threshold, $\omega$, is chosen appropriately, we can get exact support recovery, i.e. $\That_t = T_t$. In this case, the error $e_t: = \Shat_t - S_t = L_t - \Lhat_t$ has the following simple expression:
\beq
e_t = I_{T_t} {(\Phi_{(t)})_{T_t}}^{\dag} \beta_t = I_{T_t} [ (\Phi_{(t)})_{T_t}'(\Phi_{(t)})_{T_t}]^{-1}  {I_{T_t}}' \Phi_{(t)} L_t
\label{etdef0}
\eeq
The second equality follows because ${(\Phi_{(t)})_T}' \Phi_{(t)} ={(\Phi_{(t)} I_T)}' \Phi_{(t)} = {I_T}' \Phi_{(t)}$ for any set $T$.

Now consider a time $t$ when $P_{(t)} = P_j = [P_{j-1}, P_{j,\new}]$ and $P_{j-1}$ has been accurately estimated but $P_\new$ has not been estimated, i.e. consider a $t \in \mathcal{I}_{j,1}$.  At this time, $\Phat_{(t-1)} = \Phat_{j-1}$ and so $\Phi_{(t)} = \Phi_{j,0}:=I - \Phat_{j-1} \Phat_{j-1}'$.  Let $r:=r_0 + (j-1)c$.
Assume that the delay between change times is large enough so that by $t=t_j$, $\Phat_{j-1}$ is an accurate enough estimate of $P_{j-1}$, i.e. $\|\Phi_{j,0} P_{j-1}\|_2 \le r \zeta \ll 1$. 
It is easy to see using Lemma \ref{hatswitch} that $\kappa_s(\Phi_{0} P_\new) \le  \kappa_s(P_\new) + r \zeta $, i.e. $\Phi_{0} P_\new$ is dense because $P_\new$ is dense and because $\Phat_{j-1}$ is an accurate estimate of $P_{j-1}$ (which is perpendicular to $P_\new$).
Moreover, using Lemma \ref{delta_kappa}, it can be shown that $\phi_0: = \max_{|T| \le s} \|[ (\Phi_{0})_{T}'(\Phi_{0})_{T}]^{-1}\|_2 \le \frac{1}{1-\delta_s(\Phi_0)} \le  \frac{1}{1- (\kappa_s(P_{j-1}) + r \zeta)^2}$. The error $e_t$ still satisfies (\ref{etdef0}) although its magnitude is not as small.
Using the above facts in (\ref{etdef0}), we get that
$$\|e_t\|_2 \le \frac{1}{1- (\kappa_s(P_{j-1}) + r \zeta)^2} [\kappa_{s}(P_\new) \sqrt{c} \gamma_\new +  r \zeta(\sqrt{r} \gamma_*  + \sqrt{c} \gamma_\new)]$$
If $\sqrt\zeta < 1/\gamma_*$, all terms containing $\zeta$ can be ignored and we get that the above is approximately upper bounded by $\frac{\kappa_{s}(P_\new)}{1- \kappa_s^2(P_{j-1})}  \sqrt{c} \gamma_\new$. Using the denseness assumption, this quantity is a small constant times $\sqrt{c} \gamma_\new$, e.g. with the numbers assumed in Theorem \ref{thm1} we get a bound of $0.18 \sqrt{c} \gamma_\new$.
Since $\gamma_\new \ll S_{\min}$ and $c$ is assumed to be small, thus, $\|e_t\|_2 =  \|S_t - \Shat_t\|_2$ is small compared with $\|S_t\|_2$, i.e. $S_t$ is recovered accurately. With each projection PCA step, as we explain below, the error $e_t$ becomes even smaller.

Since $\Lhat_t = M_t - \Shat_t$ (step 2), $e_t$ also satisfies $e_t = L_t  - \Lhat_t$. Thus, a small $e_t$ means that $L_t$ is also recovered accurately. The estimated $\Lhat_t$'s are used to obtain new estimates of $P_{j,\new}$ every $\alpha$ frames for a total of $K \alpha$ frames via a modification of the standard PCA procedure, which we call projection PCA (step 3). 
In the first projection PCA step, we get the first estimate of $P_{j,\new}$, $\Phat_{j,\new,1}$. For the next $\alpha$ frame interval, $\Phat_{(t-1)} = [\Phat_{j-1}, \Phat_{j,\new,1}]$  and so $\Phi_{(t)} = \Phi_{j,1} = I - \Phat_{j-1} \Phat_{j-1}' - \Phat_{\new,1} \Phat_{\new,1}'$. Using this in the projected CS step reduces the projection noise, $\beta_t$, and hence the reconstruction error, $e_t$, for this interval, as long as $\gamma_{\new,k}$ increases slowly enough.
Smaller $e_t$ makes the perturbation seen by the second projection PCA step even smaller, thus resulting in an improved second estimate $\Phat_{j,\new,2}$. Within $K$ updates ($K$ chosen as given in Theorem \ref{thm1}), it can be shown that both $\|e_t\|_2$ and the subspace error drop down to a constant times $\sqrt{\zeta}$. At this time, we update $\Phat_{j}$ as $\Phat_j = [\Phat_{j-1}, \Phat_{j,\new,K}]$.

\subsection{Main Result}
\label{result}
The following definition is needed for Theorem \ref{thm1}.
\begin{definition}\label{defn_alpha}
\ben
\item Let $r:= r_0 + (J-1)c$.
\item Define $\eta=\max\{\frac{c\gamma_*^2}{\lambda^+},\frac{c\gamma_{\new,k}^2}{\lambda_{\new}^+},k=1,2,\cdots,K\}$.
\item  Define $K(\zeta) := \left\lceil\frac{\log(0.85c\zeta)}{\log {0.6}} \right\rceil$
\item Define $\xi_0(\zeta)  :=  \sqrt{c} \gamma_{\new} + \sqrt{\zeta}(\sqrt{r}  +  \sqrt{c})$
\item With $K = K(\zeta)$, define
\bea
 \alpha_\add  := &&\Big\lceil (\log 61 K J + 11 \log n) \frac{8 \cdot 192^2\min(1.2^{4K} \gamma_{\new}^4, \gamma_*^4)} {\zeta^2 (\lambda^-)^2}\nn
\Big\rceil
\eea
\een
\end{definition}

\begin{theorem} \label{thm1} 
Consider Algorithm \ref{reprocs}. 
Assume that  Assumption \ref{model_lt} holds with $b\leq 0.4$.  Assume also that the initial subspace estimate is accurate enough, i.e. $\|(I - \Phat_0 \Phat_0') P_0\| \le r_0 \zeta$, with 
\[
\zeta  \leq  \min\left(\frac{10^{-4}}{r^2},\frac{1.5 \times 10^{-4}}{r^2 f},\frac{1}{r^{3}\gamma_*^2}\right) \ \text{where} \ f := \frac{\lambda^+}{\lambda^-}
\]
If the following conditions hold:
\ben
\item the algorithm parameters are set as
$\xi = \xi_0(\zeta),   7 \xi \leq \omega \leq S_{\min} - 7 \xi,  \ K = K(\zeta), \ \alpha \ge \alpha_{\text{add}}(\zeta)\geq 100$,

\item slow subspace change holds: $t_{j+1}-t_j\geq K\alpha$; (\ref{ass_ainf}) holds with $v=1.2$; and $14 \xi \le S_{\min}$,

\item denseness holds: $\max_j \kappa_{2s}(P_{j-1})\leq \kappa_{2s,*}^+=0.3$ and $\max_j \kappa_{2s}(P_{j,\new})\leq \kappa_{2s,\new}^+=0.15$
where
$$\kappa_s(B):=\max_{|T| \le s} \|{I_T}'\text{basis}(B)\|_2$$
is the denseness coefficient introduced in \cite{rrpcp_perf},

\item matrices $D_{j,\new,k}:= (I - \Phat_{j-1} \Phat_{j-1}'-\Phat_{j,\new,k} \Phat_{j,\new,k}')P_{j,\new}$ and $Q_{j,\new,k}: = (I-P_{j,\new}{P_{j,\new}}')\Phat_{j,\new,k}$ satisfy
\begin{align*}
\kappa_{s}(D_{j,\new,k}) & \le \kappa_{s}^+ := 0.15, \\
\kappa_{2s}(Q_{j,\new,k}) & \leq \tilde{\kappa}_{2s}^+ := 0.15,
\end{align*}

\item the condition number of the covariance matrix of $a_{t,\new}$ is bounded, i.e., $g \le g^+= \sqrt{2}$,

\item and $\eta \leq 1.7$,

\een

then, with probability at least $(1 -  n^{-10})$, 
\ben
\item at all times, $t$, $$\That_t = T_t \ \ \text{and}$$
\bea
\|e_t\|_2 &&= \|L_t - \hat{L}_t\|_2 = \|\hat{S}_t - S_t\|_2  \le  0.18 \sqrt{c} 0.72^{k-1} \gamma_{\new} + 1.2\sqrt{\zeta}(\sqrt{r} + 0.023 \sqrt{c}) \nn
\eea

\item  the subspace error $\SE_{(t)} := \|(I - \Phat_{(t)} \Phat_{(t)}') P_{(t)} \|_2$ satisfies
\bea
\SE_{(t)}  & \le &   \left\{  \begin{array}{ll}
(r_0 + (j-1)c) \zeta + 0.15 c \zeta  + 0.6^{k-1}   &\text{ if}  \  t \in \mathcal{I}_{j,k}, k\leq K \nn \\  
(r_0 + jc) \zeta  \ &\text{ if} \ \   t \in \mathcal{I}_{j,K+1}  
\end{array} \right. \nn \\
 & \le &  \left\{  \begin{array}{ll}
 10^{-2} \sqrt{\zeta} +  0.6^{k-1} & \ \ \text{if}  \ \ t \in \mathcal{I}_{j,k}, \ k\leq K \nn \\  
10^{-2} \sqrt{\zeta}   & \ \text{if} \ \    t \in \mathcal{I}_{j,K+1}  
\end{array} \right.
\eea

\item the error $e_t = \hat{S}_t - S_t = L_t - \hat{L}_t$ satisfies the following at various times
\bea
\|e_t\|_2  & \le &  \left\{  \begin{array}{ll}
0.18 \sqrt{c}0.72^{k-1}\gamma_{\new} + 1.2 (\sqrt{r} + 0.023 \sqrt{c})  (r_0+(j-1)c)\zeta  \gamma_*     & \ \ \text{if} \ \ t \in \mathcal{I}_{j,k}, \ k=1,2 \dots K  \nn \\ 
1.2(r_0+ j c) \zeta \sqrt{r} \gamma_*  &  \ \ \text{if} \ \ t \in \mathcal{I}_{j,K+1} 
\end{array} \right. \nn \\
 & \le &  \left\{  \begin{array}{ll}
0.18 \sqrt{c}0.72^{k-1}\gamma_{\new} + 1.2(\sqrt{r} + 0.023 \sqrt{c}) \sqrt{\zeta}  & \ \ \text{if} \ \ t \in \mathcal{I}_{j,k}, \ k=1,2 \dots K \nn \\ 
1.2 \sqrt{r} \sqrt{\zeta}  & \ \ \text{if} \ \  t \in \mathcal{I}_{j,K+1} 
\end{array} \right.
\eea


\een
\end{theorem}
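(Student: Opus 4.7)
The plan is to prove Theorem \ref{thm1} by induction on the projection-PCA update indices $(j,k)$, simultaneously establishing all three claims (exact support recovery, the $\|e_t\|_2$ bound, and the subspace-error bound). Define the good event $\Gamma_{j,k}^e$ to be the conjunction of the asserted deterministic bounds on $\SE_{(t)}$, $\hat{T}_t = T_t$, and $\|e_t\|_2$ for all $t \le t_j + k\alpha - 1$. The base event $\Gamma_{0,K}^e$ holds by the initial assumption $\|(I-\hat{P}_0\hat{P}_0')P_0\| \le r_0\zeta$. The bulk of the work is showing that for $1 \le k \le K$,
\[
\mathbf{P}(\Gamma_{j,k}^e \mid \Gamma_{j,k-1}^e) \ge 1 - p_{j,k}, \qquad \sum_{j,k} p_{j,k} \le n^{-10};
\]
the final $1-n^{-10}$ conclusion then follows from Lemma \ref{subset_lem} after a union bound over $(j,k)$.

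Conditioned on $\Gamma_{j,k-1}^e$, the projected-CS analysis on $\mathcal{I}_{j,k}$ proceeds as follows. Using Lemma \ref{delta_kappa} together with the denseness assumptions on $P_{j-1}$, $P_{j,\new}$, and $Q_{j,\new,k-1}$, and using $\|\hat{P}_{j-1}\hat{P}_{j-1}' - P_{j-1}P_{j-1}'\|_2 \le 2 r \zeta$ (Lemma \ref{hatswitch}), I bound $\delta_{2s}(\Phi_{(t)}) < (\sqrt{2}-1)$. Telescoping $a_t = b a_{t-1} + \nu_t$ with $\|\nu_t\|_\infty \le (1-b)\gamma_*$ yields the deterministic bound $\|a_t\|_\infty \le \gamma_*$, and combining this with the inductive subspace bound gives $\|\beta_t\|_2 = \|\Phi_{(t)} L_t\|_2 \le \xi_0(\zeta) = \xi$. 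Theorem \ref{candes_csbound} then gives $\|\hat{S}_{t,\cs}-S_t\|_\infty \le \|\hat{S}_{t,\cs}-S_t\|_2 \le 7\xi < \omega$, so the choice $7\xi \le \omega \le S_{\min}-7\xi$ yields $\hat{T}_t = T_t$. On this support the LS step produces the closed form (\ref{etdef0}) for $e_t$, and bounding $\phi_0 \le 1/(1-\delta_s(\Phi_{(t)}))$ together with the denseness of $D_{j,\new,k-1}$ and of $P_{j-1}$ gives the claimed $\|e_t\|_2$ bound with the geometric factor $0.72^{k-1}$.

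The heart of the proof is the projection-PCA step at $t = t_j + k\alpha - 1$, which extracts the top $c$ eigenvectors of
\[
\mathcal{M}_{j,k} := \frac{1}{\alpha} \sum_{t\in\mathcal{I}_{j,k}} (I-\hat{P}_*\hat{P}_*') \hat{L}_t \hat{L}_t' (I-\hat{P}_*\hat{P}_*'), \qquad \hat{P}_* := [\hat{P}_{j-1}, \hat{P}_{j,\new,k-1}].
\]
Writing $\hat{L}_t = L_t - e_t$ with $L_t = P_{j-1}a_{t,*} + P_{j,\new}a_{t,\new}$, I decompose $\mathcal{M}_{j,k}$ in the basis $[D_{j,\new,k-1},\, D_{j,\new,k-1,\perp}]$ into a signal block $\mathcal{A}_{j,k}$ (involving only the $a_{t,\new}$-terms projected through $D_{j,\new,k-1}$) plus a perturbation $\mathcal{H}_{j,k}$ collecting cross-terms with $a_{t,*}$ and all terms involving $e_t$. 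The $\sin\theta$ theorem (Theorem \ref{sin_theta}) yields
\[
\|(I-\hat{P}_{j,\new,k}\hat{P}_{j,\new,k}') D_{j,\new,k-1}\|_2 \le \frac{\|\mathcal{H}_{j,k}\|_2}{\lambda_{\min}(\mathcal{A}_{j,k}) - \lambda_{\max}(\text{noise block}) - \|\mathcal{H}_{j,k}\|_2},
\]
with the signal gap lower-bounded by combining Weyl (Theorem \ref{weyl}) with Ostrowski (Theorem \ref{ost}); Ostrowski contributes $\sigma_{\min}^2(D_{j,\new,k-1}) \ge 1 - \tilde\kappa_{2s}^{+\,2}$ times the $\lambda_\new^-$ coming from the conditional-mean bound on $\frac{1}{\alpha}\sum_t a_{t,\new}a_{t,\new}'$. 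Converting this into a bound on $\SE_{(t)}$ via Lemma \ref{hatswitch} gives the geometric factor $0.6$ per step.

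The main obstacle, and the principal novelty relative to Qiu et al.'s independent-$L_t$ analysis, is controlling sums like $\frac{1}{\alpha}\sum a_{t,\new}a_{t,\new}'$ and $\frac{1}{\alpha}\sum a_{t,*}a_{t,\new}'$ when the summands are temporally correlated. The key observation is that $\nu_t$ are mutually independent and $\E[a_t a_t' \mid a_{0:t-1}] = b^2 a_{t-1}a_{t-1}' + \Lambda_{\nu,t}$ is $a_{t-1}$-measurable. I therefore apply the matrix Azuma inequalities (Corollaries \ref{azuma_rec} and \ref{azuma_nonzero}) to the martingale-difference sequences $Z_t := a_t a_t' - \E[a_t a_t' \mid a_{0:t-1}]$ and analogous cross-term MDSs, using $\|a_t\|_\infty \le \gamma_*$ (and the sharper $\|a_{t,\new}\|_\infty \le \gamma_{\new,k}$ on $\mathcal{I}_{j,k}$) to control the almost-sure bounds $b_1,b_2$. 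To pin the time-averaged conditional mean to the constants $\lambda^\pm$ and $\lambda_\new^\pm$ in the Assumption, I iterate $\Lambda_{a,t} = b^2\Lambda_{a,t-1}+\Lambda_{\nu,t}$; this is exactly why the $(1-b^2)$ scaling appears in the hypothesis bounds on $\Lambda_{\nu,t}$ and $\Lambda_{a,0}$, producing a stationary range of eigenvalues between $\lambda^-$ and $\lambda^+$. Lemma \ref{ind_expec} is invoked to separate independent $\nu_t$-factors from functions of the conditioning variable when analyzing cross-terms. Choosing $\alpha \ge \alpha_{\text{add}}(\zeta)$ drives each per-step Azuma failure probability below $(KJ\cdot 61 n^{11})^{-1}$, so the union bound over all $KJ$ projection-PCA steps closes the induction and yields the $1-n^{-10}$ probability stated.
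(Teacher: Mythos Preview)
Your high-level induction and projected-CS analysis track the paper closely, but there are two genuine gaps in the projection-PCA step.

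\textbf{Wrong projection matrix and decomposition basis.} Algorithm~\ref{reprocs} computes $\hat{P}_{j,\new,k}$ by proj-PCA$([\hat{L}_t]_{t\in\mathcal{I}_{j,k}},\hat{P}_{j-1},c)$, so the matrix whose top-$c$ eigenvectors are extracted is
\[
\mathcal{A}_{j,k}+\mathcal{H}_{j,k}=\frac{1}{\alpha}\sum_{t\in\mathcal{I}_{j,k}}\Phi_{j,0}\hat{L}_t\hat{L}_t'\Phi_{j,0},\qquad \Phi_{j,0}=I-\hat{P}_{j-1}\hat{P}_{j-1}',
\]
not the one built from $\hat{P}_*=[\hat{P}_{j-1},\hat{P}_{j,\new,k-1}]$ as you wrote. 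Consequently the correct $\sin\theta$ decomposition (Definition~\ref{defHk}) is in the \emph{fixed} basis $[E_{j,\new},E_{j,\new,\perp}]$ obtained from the QR of $D_{j,\new}=D_{j,\new,0}=\Phi_{j,0}P_{j,\new}$, not in $[D_{j,\new,k-1},D_{j,\new,k-1,\perp}]$. This matters: with the correct basis, the signal block $A_k$ satisfies $\lambda_{\min}(A_k)\gtrsim \lambda_{\min}(R_\new R_\new')\,\lambda_\new^-\ge(1-(\zeta_*^+)^2)\lambda_\new^-$ by Lemma~\ref{lemma0}(4), whereas your claimed bound $\sigma_{\min}^2(D_{j,\new,k-1})\ge 1-\tilde\kappa_{2s}^{+\,2}$ is not what $\tilde\kappa_{2s}^+$ controls and would not hold (indeed $\|D_{j,\new,k-1}\|_2=\zeta_{k-1}$ shrinks with $k$). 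The geometric $0.6$ factor in the paper comes from the interplay between the fixed $E_{j,\new}$-basis and the improving $e_t$ (which depends on $\Phi_{j,k-1}$ through the CS step), not from a moving basis.

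\textbf{The martingale centering is random, not deterministic.} Your proposed MDS $Z_t=a_ta_t'-\E[a_ta_t'\mid a_{0:t-1}]$ has conditional mean $b^2 a_{t-1}a_{t-1}'+\Lambda_{\nu,t}$, so $\tfrac{1}{\alpha}\sum_t\E_{t-1}[a_ta_t']=\tfrac{b^2}{\alpha}\sum_t a_{t-1}a_{t-1}'+\tfrac{1}{\alpha}\sum_t\Lambda_{\nu,t}$ is itself the (random) quantity you are trying to control; iterating the \emph{unconditional} recursion $\Lambda_{a,t}=b^2\Lambda_{a,t-1}+\Lambda_{\nu,t}$ does not pin this down. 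Corollary~\ref{azuma_nonzero} requires deterministic $b_3,b_4$, and with your construction the only deterministic lower bound available is $(1-b^2)\lambda^-$ while the upper bound picks up an extra $b^2 r\gamma_*^2$ term; this does not reproduce the paper's constants and, more importantly, the argument as written is circular. The paper avoids this by fully expanding $a_t=b^{t+1}a_{-1}+\sum_{i=0}^t b^{t-i}\nu_i$ and rewriting $\sum_t a_ta_t'$ as a sum of four adapted sequences $Z_{1,i},\dots,Z_{4,i}$ in the \emph{independent} innovations $\nu_i$ (equation~(\ref{d_t_new})): $Z_{1,i}$ has deterministic conditional mean bounded between $(1-O(1/\alpha))\lambda^-$ and $\lambda^+$, while $Z_{2,i},Z_{3,i},Z_{4,i}$ have exactly zero conditional mean by Lemma~\ref{ind_expec}. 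This is the step you are missing.
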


{\em Proof: } We give a brief proof outline in Sec \ref{outline}. The full proof is given in Sec \ref{proofthm1}.

\subsection{Discussion}
The above result says the following. Consider Algorithm \ref{reprocs}. Assume that the initial subspace error is small enough. If the algorithm parameters are appropriately set, if slow subspace change holds, if the subspaces are dense, if the condition number of $\text{Cov}[a_{t,\new}]$ is small enough, and if the currently unestimated part of the newly added subspace is dense enough (this is an assumption on the algorithm estimates), then, w.h.p., we will get exact support recovery at all times. Moreover, the sparse recovery error will always be bounded by $0.18\sqrt{c} \gamma_\new$ plus a constant times $\sqrt{\zeta}$. Since $\zeta$ is very small, $\gamma_\new \ll S_{\min}$, and $c$ is also small, the normalized reconstruction error for recovering $S_t$ will be small at all times.
In the second conclusion, we bound the subspace estimation error, $\SE_{(t)}$.
When a subspace change occurs, this error is initially bounded by one. The above result shows that, w.h.p., with each projection PCA step, this error decays exponentially and falls below $0.01\sqrt{\zeta}$ within $K$ projection PCA steps. The third conclusion shows that, with each projection PCA step, w.h.p., the sparse recovery error as well as the error in recovering $L_t$ also decay in a similar fashion.

The above result allows the $a_t$'s, and hence the $L_t$'s, to be correlated over time; it models the correlation using an AR model which is a frequently used practical model. Even with this more general model as long as the AR parameter, $b \le 0.4$, we are able to get almost exactly the same result as that of Qiu et al \cite[Theorem 4.1]{rrpcp_perf}. The $\alpha$  needed is a little larger. Also, the only extra assumption needed is a small enough upper bound on $\eta$ which is the ratio of the maximum magnitude entry of any $\nu_t$ to the maximum variance. This is true for many types of probability distributions. For example if the $i^{th}$ entry of $\nu_t$ is $\pm q_i$ with equal probability independent of all others then $\eta = 1$. If each entry is zero mean uniform distributed (with different spreads) then $\eta = 3$.


Like \cite{rrpcp_perf}, we still need a denseness assumption on $D_{\new,k}$ and $Q_{\new,k}$ both of which are functions of algorithm estimates $\Phat_{j-1}$ and $\Phat_{j,\new,k}$. Because of this, our result cannot be interpreted as a correctness result but only a useful step towards it. As explained in \cite{rrpcp_perf}, from simulation experiments, this assumption does hold whenever there is {\em some} support changes every few frames. In future work, we hope to be able to replace it with an assumption on the support change of $S_t$'s.

Also, like \cite{rrpcp_perf},  the above result analyzes an algorithm that assumes knowledge of the model parameters $c$, $\gamma_\new$ and the subspace change times $t_j$. Requiring kmowledge of $c$ and $\gamma_\new$ is not very restrictive. However it also needs to know the subspace change times $t_j$ and this is somewhat restrictive. One approach to try to remove this requirement is explained in \cite{rrpcp_perf}. 

As explained in \cite{rrpcp_perf}, under slow subspace change, it is quite valid to assume that the condition number of the new directions, $g$, is bounded, in fact if at most one new direction could get added, i.e. if $c=1$, then we would always have $g=1$. On the other hand, notice that we do not need any bound on $f$. This is important and needed to allow $\E[\|L_t\|_2^2]$ to be large ($L_t$ is the large but structured noise in case $S_t$ is the signal of interest) while still allowing slow subspace change (needs small $\gamma_\new$ and hence small $\lambda^- \le \gamma_\new$). Notice that $\E[\|L_t\|_2^2] \le r \lambda^+$.

\subsection{Proof Outline} \label{outline}

The first step in the proof is to analyze the projected sparse recovery step and show exact support recovery conditioned on the fact that the subspace has been accurately recovered in the previous projection-PCA interval. Exact support recovery along with the LS step allow us to get an exact expression for the recovery error in estimating $S_t$ and hence also for that of $L_t$. This exact expression is the key to being able to analyze the subspace recovery.

For subspace recovery, the first step involves bounding the subspace recovery error in terms of sub-matrices of the true matrix, $\sum_t \Phi_{(t)}\Lhat_t \Lhat_t' \Phi_{(t)}'$, and the perturbation in it, $\sum_t \Phi_{(t)}(\Lhat_t \Lhat_t' - L_t L_t') \Phi_{(t)}'$, using the famous $\sin \theta$ theorem \cite{davis_kahan}. This result bounds the error in the eigenvectors of a matrix perturbed by a Hermitian perturbation. The second step involves obtaining high probability bounds on each of the terms in this bound using the matrix Azuma inequality  \cite{tropp2012user}. The third step involves using the assumptions of the theorem to show that this bound decays roughly exponentially with $k$ and finally falls below $c \zeta$ within $K$ proj-PCA steps.%

The most important difference w.r.t. the result of \cite{rrpcp_perf} is the following. Define the random variable $X_{j,k}:=[\nu_0, \nu_1, \cdots, \nu_{t_j+k\alpha-1}]$.
In the second step, we need to bound the minimum or maximum singular values of sub-matrices of terms of the form $\sum_t f_1(X_{j,k-1})a_{t}a_{t}'f_2(X_{j,k-1})$ where $f_1(.),f_2(.)$ are functions of the random variable $X_{j,k-1}$. In Qiu et al \cite{rrpcp_perf}, one could use a simple corollary of the matrix Hoeffding inequality \cite{tropp2012user} to do this because there, the terms of this summation were conditionally independent given $X_{j,k-1}$. However, here they are not. We instead need to first use the AR model to rewrite things in terms of sub-matrices of $\sum_t f_1(X_{j,k-1}) \nu_{t} f_3(\nu_{0}, \nu_{1}, \cdots, \nu_{t-1})' f_2(X_{j,k-1})$. Notice that even now, the terms of this summation are not conditionally independent given $X_{j,k-1}$. However, conditioned on  $X_{j,k-1}$, this term is now in a form for which the matrix Azuma inequality \cite{tropp2012user} can be used.

Notice that the ReProCS algorithm does not need knowledge of $b$. If $b$ were known, one could modify the algorithm to do proj-PCA on $(\Lhat_t - b \Lhat_{t-1})$'s. With this one could use the exact same proof strategy as in \cite{rrpcp_perf}.

\section{Simulation}

In this section, we compare ReProCS with PCP using simulated data that satisfies the assumed signal model. The data was generated as explained in \cite[Section X-C]{rrpcp_perf} except that here we generate correlated $a_t$'s using $a_t=ba_{t-1}+\nu_t$ with $b=0.5$ and with $\nu_{t,*}$ being uniformly distributed between $[-\gamma_*, \gamma_*]$ and $\nu_{t,\new}$ uniformly distributed between $[-\gamma_{\new,k}, \gamma_{\new,k}]$. Also we set $t_{\text{train}}=40, S_{\min}=2, S_{\max}=3, s=7, r_0=12$, $n=200$, $J=2$,  $v=1.1$ and the support $T_t$ was constant for every set of $50$ frames and then changed by 1 index. Other parameters were the same as those in \cite[Section X-C]{rrpcp_perf}.
By running 100 Monte Carlo simulations, we got the result shown in Figure \ref{simu_fig}. 

\begin{figure}
\centering
\includegraphics[width=0.5\textwidth,height=0.3\textwidth]{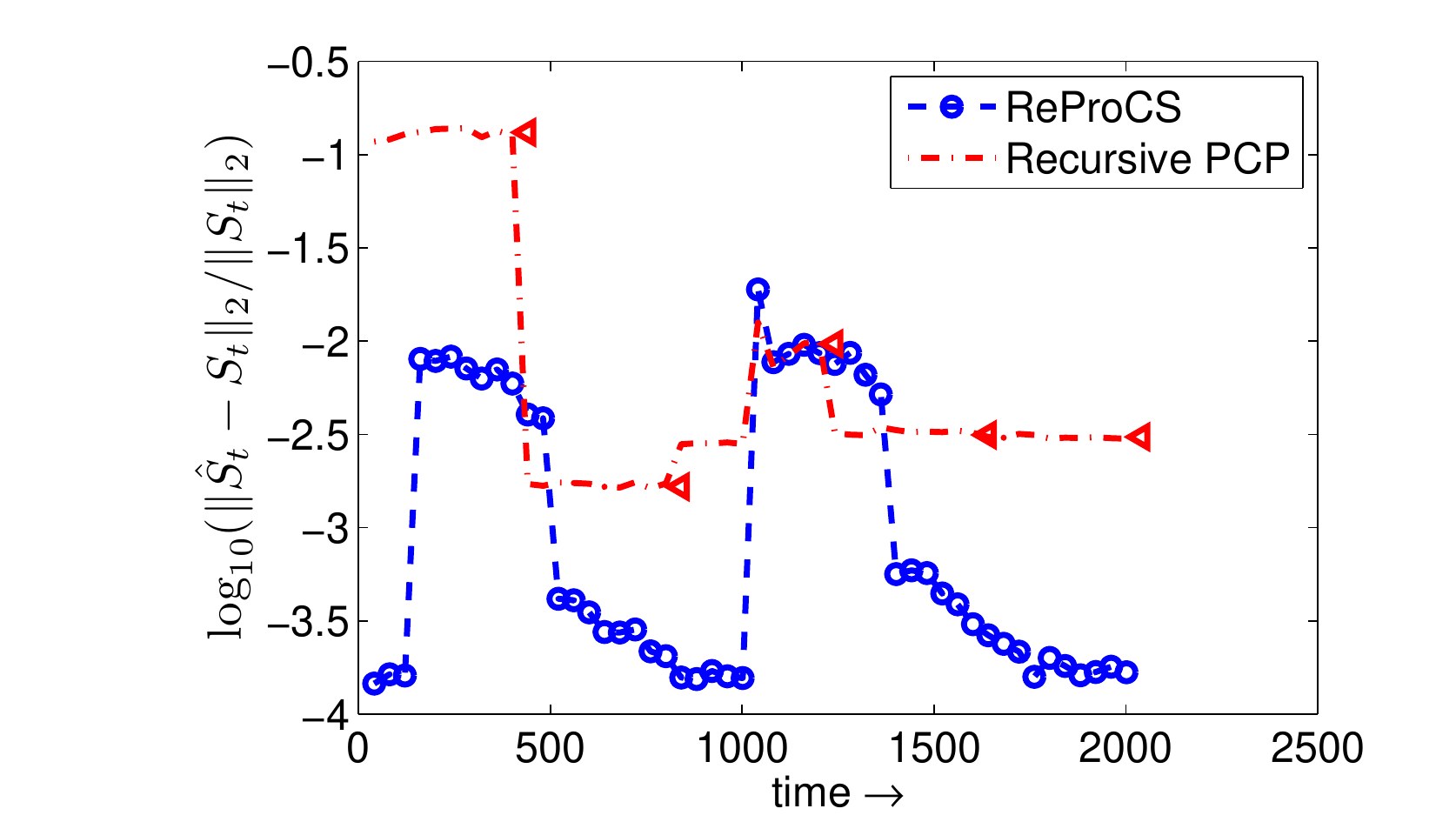}
\vspace{-0.2in}
\caption{Comparing recovery error of PCP implemented at the time instants shown by the triangles and of ReProCS.}
\vspace{-0.2in}
\label{simu_fig}
\end{figure}

\section{Proof of Theorem \ref{thm1}}\label{proofthm1}
\vspace{-0.25cm}
Here we first list some definitions used in the proof.
\begin{definition}
We define the noise seen by the sparse recovery step at time $t$ as
$$\beta_t: = \|(I - \Phat_{(t-1)} \Phat_{(t-1)}') L_t\|_2.$$
\end{definition}

\begin{definition}
\label{def_SEt}
We define the subspace estimation errors as follows. Recall that $\Phat_{j,\new,0}=[.]$ (empty matrix).
\bea  
&& \SE_{(t)} := \|(I - \Phat_{(t)} \Phat_{(t)}') P_{(t)} \|_2, \nn \\ 
&& \zeta_{j,*} := \|(I - \Phat_{j-1} \Phat_{j-1}') P_{j-1}\|_2 \nn \\
&& \zeta_{j,k} := \|(I - \Phat_{j-1} \Phat_{j-1}' - \Phat_{j,\new,k} \Phat_{j,\new,k}') P_{j,\new}\|_2 \nn 
\eea
\end{definition}

\begin{remark}\label{zetastar}
Recall from the model given in Sec \ref{sig_mod} and from Algorithm \ref{reprocs} that
\begin{enumerate}
\item $\Phat_{j,\new,k}$ is orthogonal to $\Phat_{j-1}$, i.e. $\Phat_{j,\new,k}'\Phat_{j-1}=0$
\item $\Phat_{j-1} := [\Phat_{0}, \Phat_{1,\new,K}, \dots \Phat_{j-1,\new,K}]$ and $P_{j-1}: = [P_0, P_{1,\new}, \dots P_{j-1,\new}]$
\item  for $t \in \mathcal{I}_{j,k+1}$, $\Phat_{(t)} = [\Phat_{j-1}, \Phat_{j,\new,k}]$ and $P_{(t)} = P_j = [P_{j-1}, P_{j,\new}]$. 
\item  $\Phi_{(t)} := I - \Phat_{(t-1)} \Phat_{(t-1)}'$
\end{enumerate}

From Definition \ref{def_SEt} and the above, it is easy to see that
\begin{enumerate}
\item $\zeta_{j,*} \le \zeta_{1,*} + \sum_{j'=1}^{j-1} \zeta_{j',K}$
\item $\SE_{(t)}  \le \zeta_{j,*} + \zeta_{j,k} \le \zeta_{1,*} + \sum_{j'=1}^{j-1} \zeta_{j',K} + \zeta_{j,k}$ \; for \ $t \in \mathcal{I}_{j,k+1}$.
\end{enumerate}
\end{remark}

\begin{definition}\label{defn_Phi}
Define the following
\ben
\item $\Phi_{j,k}$, $\Phi_{j,0}$ and $\phi_k$ 
\ben
\item $\Phi_{j,k} := I-\Phat_{j-1} {\Phat_{j-1}}' - \Phat_{j,\new,k} {\Phat_{j,\new,k}}'$ is the CS matrix for $t \in \mathcal{I}_{j,k+1}$, i.e. $\Phi_{(t)} = \Phi_{j,k}$ for this duration.

\item $\Phi_{j,0} := I-\Phat_{j-1} {\Phat_{j-1}}'$ is the CS matrix for $t \in \mathcal{I}_{j,1}$, i.e. $\Phi_{(t)} = \Phi_{j,0}$ for this duration. $\Phi_{j,0}$ is also the matrix used in all of the projection PCA steps for $t \in [t_j, t_{j+1}-1]$.

\item $\phi_k := \max_j \max_{T:|T|\leq s}\|({(\Phi_{j,k})_T}'(\Phi_{j,k})_T)^{-1}\|_2$. It is easy to see that $\phi_k \le \frac{1}{1-\max_j \delta_s(\Phi_{j,k})}$ \cite{decodinglp}.

\een
\item $D_{j,\new,k}$, $D_{j,\new}$ and $D_{j,*}$
\ben
\item $D_{j,\new,k} := \Phi_{j,k} P_{j,\new}$. $\Span(D_{j,\new,k})$ is the unestimated part of the newly added subspace for any $t \in \mathcal{I}_{j,k+1}$. 
\item $D_{j,\new} := D_{j,\new,0} = \Phi_{j,0} P_{j,\new}$. $\Span(D_{j,\new})$ is interpreted similarly for any $t \in \mathcal{I}_{j,1}$.
\item $D_{j,*,k} := \Phi_{j,k} P_{j-1}$. $\Span(D_{j,*,k})$ is the unestimated part of the existing subspace for any $t \in \mathcal{I}_{j,k}$
\item $D_{j,*} := D_{j,*,0} = \Phi_{j,0} P_{j-1}$. $\Span(D_{j,*,k})$ is interpreted similarly for any $t \in \mathcal{I}_{j,1}$
\item Notice that $\zeta_{j,0} = \|D_{j,\new}\|_2$, $\zeta_{j,k} = \|D_{j,\new,k}\|_2$, $\zeta_{j,*} = \|D_{j,*}\|_2$. Also, clearly, $\|D_{j,*,k}\|_2 \le \zeta_{j,*}$. 
\een
\een
\end{definition}

\begin{definition}
\label{defHk}\
\ben
\item Let $D_{j,\new} \overset{QR}{=} E_{j,\new} R_{j,\new}$ denote its QR decomposition. Here $E_{j,\new}$ is a basis matrix and $R_{j,\new}$ is upper triangular.

\item Let $E_{j,\new,\perp}$ be a basis matrix for the orthogonal complement of $\Span(E_{j,\new})=\Span(D_{j,\new})$. To be precise, $E_{j,\new,\perp}$ is a $n \times (n-c_{j,\new})$ basis matrix that satisfies $E_{j,\new,\perp}'E_{j,\new}=0$. 

\item Using $E_{j,\new}$ and $E_{j,\new,\perp}$, define $A_{j,k}$, $A_{j,k,\perp}$, $H_{j,k}$, $H_{j,k,\perp}$ and $B_{j,k}$ as
\bea
A_{j,k} &:=& \frac{1}{\alpha} \sum_{t \in \mathcal{I}_{j,k}} {E_{j,\new}}' \Phi_{j,0} L_t {L_t}' \Phi_{j,0} E_{j,\new} \nn \\
A_{j,k,\perp} &:=& \frac{1}{\alpha} \sum_{t \in \mathcal{I}_{j,k}} {E_{j,\new,\perp}}' \Phi_{j,0} L_t {L_t}' \Phi_{j,0} E_{j,\new,\perp} \nn \\
H_{j,k} &:=& \frac{1}{\alpha}\sum_{t \in \mathcal{I}_{j,k}} {E_{j,\new}}' \Phi_{j,0} (e_t {e_t}' -L_t {e_t}' - e_t {L_t}') \Phi_{j,0} E_{j,\new} \nn\\
H_{j,k,\perp} &:=& \frac{1}{\alpha} \sum_{t \in \mathcal{I}_{j,k}} {E_{j,\new,\perp}}'\Phi_{j,0} (e_t {e_t}' - L_t {e_t}' - e_t {L_t}') \Phi_{j,0} E_{j,\new,\perp} \nn \\
B_{j,k} &:=& \frac{1}{\alpha}\sum_{t \in \mathcal{I}_{j,k}} {E_{j,\new,\perp}}'\Phi_{j,0} \Lhat_t \Lhat_t' \Phi_{j,0} E_{j,\new}= \frac{1}{\alpha}\sum_{t \in \mathcal{I}_{j,k}} {E_{j,\new,\perp}}'\Phi_{j,0} (L_t-e_t)({L_t}'-{e_t}')\Phi_{j,0} E_{j,\new} \nn
\eea

\item Define
\bea
&&\mathcal{A}_{j,k} := \left[ \begin{array}{cc} E_{j,\new} & E_{j,\new,\perp} \\ \end{array} \right]
\left[\begin{array}{cc} A_{j,k} \ & 0 \ \\ 0 \ & A_{j,k,\perp}  \\ \end{array} \right]
\left[ \begin{array}{c} {E_{j,\new}}' \\ {E_{j,\new,\perp}}' \\ \end{array} \right]\nn\\
&&\mathcal{H}_{j,k} := \left[ \begin{array}{cc} E_{j,\new} & E_{j,\new,\perp} \\ \end{array} \right]
\left[\begin{array}{cc} H_{j,k} \ & {B_{j,k}}' \ \\ B_{j,k} \ &  H_{j,k,\perp} \\ \end{array} \right]
\left[ \begin{array}{c} {E_{j,\new}}' \\ {E_{j,\new,\perp}}' \\ \end{array} \right] \nn
\eea

\item From the above, it is easy to see that $$\mathcal{A}_{j,k} + \mathcal{H}_{j,k} =\frac{1}{\alpha} \sum_{t \in \mathcal{I}_{j,k}} \Phi_{j,0} \hat{L}_t {\hat{L}_t}' \Phi_{j,0}.$$

\item Recall from Algorithm \ref{reprocs} that $\mathcal{A}_{j,k} + \mathcal{H}_{j,k} \overset{EVD}{=} \left[ \begin{array}{cc} \Phat_{j,\new,k} & \Phat_{j,\new,k,\perp} \\ \end{array} \right]
\left[\begin{array}{cc} \Lambda_k \ & 0 \ \\ 0 \ & \ \Lambda_{k,\perp} \\ \end{array} \right]
\left[ \begin{array}{c} \Phat_{j,\new,k}' \\ \Phat_{j,\new,k,\perp}' \\ \end{array} \right] $ is the EVD of $\mathcal{A}_{j,k} + \mathcal{H}_{j,k}$. Here $\Phat_{j,\new,k}$ is a $n \times c_{j,\new}$ basis matrix.

\item Using the above, $\mathcal{A}_{j,k} + \mathcal{H}_{j,k}$ can be decomposed in two ways as follows:
\begin{align*}
\mathcal{A}_{j,k} + \mathcal{H}_{j,k}
&= \left[ \begin{array}{cc} \Phat_{j,\new,k} & \Phat_{j,\new,k,\perp} \\ \end{array} \right]
\left[\begin{array}{cc} \Lambda_k \ & 0 \ \\ 0 \ & \ \Lambda_{k,\perp} \\ \end{array} \right] \left[ \begin{array}{c} \Phat_{j,\new,k}' \\ \Phat_{j,\new,k,\perp}' \\ \end{array} \right]
\\
&= \left[ \begin{array}{cc} E_{j,\new} & E_{j,\new,\perp} \\ \end{array} \right]
\left[\begin{array}{cc} A_{j,k} + H_{j,k} \ & B_{j,k}' \ \\ B_{j,k} \ & A_{j,k,\perp} + H_{j,k,\perp}  \\ \end{array} \right]
\left[ \begin{array}{c} {E_{j,\new}}' \\ {E_{j,\new,\perp}}' \\ \end{array} \right]
\end{align*}
\een

\end{definition}

\begin{remark}
Thus, from the above definition,
$\mathcal{H}_{j,k}  = \frac{1}{\alpha} [\Phi_0 \sum_t (-L_t e_t' - e_t L_t' +  e_t e_t') \Phi_0 + F + F']$ where $F:=E_{\new,\perp} E_{\new,\perp}'\Phi_0 \sum_t L_t L_t' \Phi_0 E_\new E_\new' =E_{\new,\perp} E_{\new,\perp}' (D_{*,k-1} a_{t,*})(D_{*,k-1} a_{t,*} + D_{\new,k-1} a_{t,\new})' E_\new E_\new'$. Since $\E[a_{t,*} a_{t,\new}']=0$, $\|\frac{1}{\alpha}F\|_2 \lesssim r^2 \zeta^2 \lambda^+$ w.h.p. Recall $\lesssim$ means (in an informal sense) that the RHS contains the dominant terms in the bound.
\end{remark}

\begin{definition}
\label{kappaplus}
In the sequel, we let
\ben
\item $\kappa_{s,*} := \max_j \kappa_s(P_{j-1})$, $\kappa_{s,\new} := \max_j \kappa_s(P_{j,\new})$, $\kappa_{s,k}:= \max_j \kappa_s (D_{j,\new,k})$, $\tilde{\kappa}_{s,k} := \max_j \kappa_s((I-P_{j,\new}{P_{j,\new}}') \Phat_{j,\new,k})$,
\item $\kappa_{2s,*}^+ := 0.3$, $\kappa_{2s,\new}^+ := 0.15$, ${\kappa}_{s}^+ := 0.15$, $\tilde{\kappa}_{2s}^+ := 0.15$ and $g^+ := \sqrt{2}$ are the upper bounds assumed in Theorem \ref{thm1} on $\max_j \kappa_{2s}(P_j)$, $\max_j \kappa_{2s}(P_{j,\new})$, $\max_j \max_k \kappa_{s}(D_{j,\new,k})$, $\max_j \kappa_{2s}(Q_{j,\new,k})$ and $g$ respectively.
\item $\phi^+ := 1.1735$ We see later that this is an upperbound on $\phi_k$ under the assumptions of Theorem \ref{thm1}.
\item $\gamma_{\new,k} := \min (1.2^{k-1} \gamma_{\new}, \gamma_*)$
\\ (recall that the theorem assumes  $\max_{j}\max_{t\in\mathcal{I}_{j,k}} \|a_{t,\new}\|_{\infty} \leq \gamma_{\new,k}$)
\item $P_{j,*}: = P_{j-1}$ and $\Phat_{j,*}:= \Phat_{j-1}$ (see Remark \ref{remove_j}).
\een
\end{definition}

\begin{remark}
Notice that the subscript $j$ always appears as the first subscript, while $k$ is the last one. At many places in this paper, we remove the subscript $j$ for simplicity. Whenever there is only one subscript, it refers to the value of $k$, e.g., $\Phi_0$ refers to $\Phi_{j,0}$, $\Phat_{\new,k}$ refers to  $\Phat_{j,\new,k}$.  Also, $P_*: = P_{j-1}$ and $\Phat_*: = \Phat_{j-1}$.
\label{remove_j}
\end{remark}

\begin{definition}
\label{zetakplus}
Define the following:
\begin{enumerate}
\item  $\zeta_{*}^+ := r \zeta$ \ (We note that $\zeta_*^+ = (r_0 + (j-1)c)\zeta$ will also work.)

\item Define the sequence $\{{\zeta_{k}}^+\}_{k=0,1,2,\cdots K}$ recursively as follows
\begin{align}
\zeta_0^+ & := 1 \nn  \\
 \zeta_k^+ & :=\frac{b_{\Hc} + 0.125 c \zeta\lambda^-}{b_{A_k} - b_{A_{k,\perp}} - b_{\Hc} -0.25 c \zeta\lambda^-} \; \text{ for} \ k \geq 1,
\end{align}
\end{enumerate}
where $b_{A_k}, b_{A_{k,\perp}}, b_{\Hc} $ are defined in (\ref{A_k_bound}), (\ref{A_k_perp_bound}) and (\ref{H_k_bound}) respectively.
\end{definition}

As we will see, $\zeta_{*}^+$ and $\zeta_k^+$ are the high probability upper bounds on $\zeta_{j,*}$ and $\zeta_{j,k}$ (defined in Definition \ref{def_SEt}) under the assumptions of Theorem \ref{thm1}.

\begin{definition}
Define the random variable $X_{j,k} := \{\nu_1,\nu_2,\cdots,\nu_{t_j+k\alpha-1}\}$.
\end{definition}
Recall that the $\nu_t$'s are mutually independent over $t$.

\begin{definition}

Define the set $\check{\Gamma}_{j,k}$ as follows:
\begin{align*}
\check{\Gamma}_{j,k} &:= \{ X_{j,k} : \zeta_{j,k} \leq \zeta_{k}^+ \text{ and } \hat{T}_t = T_t  \text{ and } e_t \text{ satisfies } (\ref{etdef0}) \text{ for all } t \in \mathcal{I}_{j,k} \} \\
\check{\Gamma}_{j,K+1} &:= \{ X_{j+1,0} :  \hat{T}_t = T_t  \text{ and } e_t \text{ satisfies } (\ref{etdef0}) \text{ for all } t \in \mathcal{I}_{j,K+1} \}
\end{align*}

\end{definition}

\begin{definition}
Recursively define the sets $\Gamma_{j,k}$ as follows:
\begin{align*}
\Gamma_{1,0} &:= \{ X_{1,0} : \zeta_{1,*} \leq r \zeta \text{ and } \hat{T}_t = T_t \text{ and } e_t \text{ satisfies } ( \ref{etdef0} ) \text{ for all } t \in [ t_{\mathrm{train} +1} : t_1 -1 ] \} \\
\Gamma_{j,k} &:= \Gamma_{j,k-1} \cap \check{\Gamma}_{j,k} \quad k = 1, 2, \dots , K, j = 1, 2, \dots, J \\
\Gamma_{j+1, 0} &:= \Gamma_{j,K} \cap \check{\Gamma}_{j,K+1} \quad j = 1, 2, \dots, J
\end{align*}
\end{definition}

\subsection{Main Steps for Theorem \ref{thm1}}

The proof of Theorem \ref{thm1} essentially follows from two main lemmas, \ref{expzeta} and \ref{mainlem}.  Lemma \ref{expzeta} gives an exponentially decaying upper bound on $\zeta_k^+$ defined in Definition \ref{zetakplus}. $\zeta_k^+$ will be shown to be a high probability upper bound for $\zeta_k$ under the assumptions of the Theorem.  Lemma \ref{mainlem} says that conditioned on $X_{j,k-1}\in\Gamma_{j,k-1}$, $X_{j,k}$ will be in $\Gamma_{j,k}$ w.h.p..  In words this says that if, during the time interval $\mathcal{I}_{j,k-1}$, the algorithm has worked well (recovered the support of $S_t$ exactly and recovered the background subspace with subspace recovery error below $\zeta_{k-1}^+ + \zeta_*^+$), then it will also work well in $\mathcal{I}_{j,k}$ w.h.p.. The proof of Lemma \ref{mainlem} requires two lemmas: one for the projected CS step and one for the projection PCA step of the algorithm.  These are lemmas \ref{cslem} and \ref{zetak}.  The proof Lemma \ref{cslem}
follows using Lemmas \ref{expzeta}, \ref{delta_kappa}, \ref{hatswitch}, the CS error bound (Theorem \ref{candes_csbound}), and some straightforward steps.
The proof of Lemma \ref{zetak} is longer and uses a lemma based on the $\sin\theta$ and Weyl theorems (Theorems \ref{sin_theta} and \ref{weyl}) to get a bound on $\zeta_k$.  From here we use the matrix Azuma inequalities (Corollaries \ref{azuma_nonzero} and \ref{azuma_rec}) to bound each of the terms in the bound on $\zeta_k$ to finally show that, conditioned on $\Gamma_{j,k-1}^e$ $\zeta_k \le \zeta_k^+$ w.h.p.. These are Lemmas \ref{zetakbnd} and \ref{termbnds}.

\section{Main Lemmas and Proof of Theorem \ref{thm1}} \label{mainlemmas}

Recall that when there is only one subscript, it refers to the value of $k$ (i.e. $\zeta_k = \zeta_{j,k}$).

\begin{lem}[Exponential decay of $\zeta_{k}^+$]  
\label{expzeta}
Pick $\zeta$ as given in Theorem \ref{thm1}. Assume that the six conditions of Theorem \ref{thm1} hold. Define the series ${\zeta_{k}}^+$ as in Definition \ref{zetakplus}.
Then,
\ben
\item $\zeta_0^+=1, \zeta_1^+=0.5688, \zeta_2^+=0.3568$,  $\zeta_k^+ \le \zeta_{k-1}^+ \le 0.3568$ for all $k \ge 3$. 
\item $\zeta_k^+ \le 0.6^{k} + 0.15c\zeta$ for all $k \ge 0$ 
\item $b_{A_k}  - b_{A_{k,\perp}} - b_{\Hc} - 0.25c\zeta\lambda^-> 0$ for all $k \ge 1$.
\een
\end{lem}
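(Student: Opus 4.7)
The plan is to reduce the lemma to a concrete scalar recursion in the quantities $\zeta_k^+$ and then verify that the recursion decays geometrically (with ratio at most $0.6$) down to the floor $0.15 c \zeta$. All three claims will follow from controlling the same ratio, so I would organize the work around carefully expanding $b_{A_k}$, $b_{A_{k,\perp}}$, $b_{\Hc}$ (defined in equations \eqref{A_k_bound}, \eqref{A_k_perp_bound}, \eqref{H_k_bound} later in the paper) into explicit polynomial expressions in $\zeta_{k-1}^+$, $\zeta_*^+ = r\zeta$, $c$, $\lambda^\pm$, $\lambda_\new^\pm$, $\gamma_*$, $\gamma_{\new,k}$, $b$, $\phi^+$, and the denseness constants $\kappa_{2s,*}^+, \kappa_{2s,\new}^+, \kappa_s^+, \tilde\kappa_{2s}^+, g^+$.

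First I would isolate the leading-order behaviour. The sin-$\theta$/Weyl structure behind Definition \ref{zetakplus} has $b_{A_k}$ scaling like $\lambda_\new^-$ (the useful signal in the new directions), $b_{A_{k,\perp}}$ scaling like $(\zeta_*^+)^2 \lambda^+$ (leakage from the already-estimated subspace), and $b_{\Hc}$ scaling like $\zeta_{k-1}^+ \lambda_\new^+$ plus lower-order cross-terms in $\zeta_*^+, \gamma_*, \gamma_\new, \phi^+$ (the perturbation from the $e_t L_t'$ and $e_t e_t'$ pieces of $\mathcal{H}_{j,k}$). Under the assumptions of Theorem \ref{thm1} — denseness $\kappa_{2s,*}^+ = 0.3, \kappa_{2s,\new}^+ = 0.15, \kappa_s^+ = 0.15$, condition number $g^+ = \sqrt 2$, and the small-$\zeta$ constraint $\zeta \le \min(10^{-4}/r^2, \, 1.5\cdot 10^{-4}/(r^2 f), \, 1/(r^3 \gamma_*^2))$ — all terms involving $\zeta_*^+$ and all terms involving $\zeta_{k-1}^+ \cdot \zeta_*^+$ or $(\zeta_*^+)^2$ are at most a small multiple of $\sqrt\zeta$, and one checks that each such contribution can be absorbed into the $0.125 c\zeta\lambda^-$ and $0.25 c\zeta\lambda^-$ fudge factors built into the numerator and denominator of $\zeta_k^+$. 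This will immediately yield claim 3 (positivity of the denominator) because after absorbing, the denominator is bounded below by a constant fraction of $\lambda_\new^-$.

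Next I would establish the recursive bound $\zeta_k^+ \le 0.6 \, \zeta_{k-1}^+ + 0.06 c\zeta$ (or a similar clean inequality with ratio strictly less than $0.6$ for $k \ge 2$), from which claim 2, $\zeta_k^+ \le 0.6^k + 0.15 c \zeta$, follows by a one-line induction using $\zeta_0^+ = 1$ and summing a geometric series. The idea is to write
\[
\zeta_k^+ = \frac{b_{\Hc} + 0.125 c \zeta \lambda^-}{b_{A_k} - b_{A_{k,\perp}} - b_{\Hc} - 0.25 c \zeta \lambda^-}
\]
and upper-bound the numerator by $C_1 \zeta_{k-1}^+ \lambda_\new^- + \tfrac18 c\zeta\lambda^- + (\text{small})$ and lower-bound the denominator by $C_2 \lambda_\new^-$ with $C_1/C_2 \le 0.6$; the denseness bounds make $C_1 \approx \kappa_s^+ \cdot (\text{constants}) \le 0.15 \cdot (\text{small})$ which, together with the condition number $g^+ = \sqrt 2$ and $\phi^+ = 1.1735$, is what forces the $0.6$ ratio. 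Claim 1 then becomes a direct numerical substitution: starting from $\zeta_0^+ = 1$, plug into the explicit expressions to obtain $\zeta_1^+ = 0.5688$ and $\zeta_2^+ = 0.3568$, and use the recursive bound for $k \ge 3$ to conclude monotonic decay.

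The main obstacle will be bookkeeping rather than conceptual: the numerator $b_{\Hc}$ contains many cross-terms from the $E_{\new,\perp}E_{\new,\perp}' \Phi_0(\sum L_t L_t')\Phi_0 E_\new E_\new'$ piece and from $e_t L_t'$, $e_t e_t'$ contributions, each of which must be bounded using the RIC-denseness equivalence Lemma \ref{delta_kappa}, the norm switching identities of Lemma \ref{hatswitch}, and the factor $\phi^+ \le 1/(1-(\kappa_{2s,*}^+ + r\zeta)^2)$. The delicate point is to verify that, with the specific numerical choices $\kappa_{2s,*}^+ = 0.3$, $\kappa_{2s,\new}^+ = \kappa_s^+ = \tilde\kappa_{2s}^+ = 0.15$, $g^+ = \sqrt 2$, $b \le 0.4$, and $\eta \le 1.7$, the ratio governing the decay is indeed $\le 0.6$ (and in fact $\le 0.5688$ for $k=1$), since any looseness here would cause the induction to fail. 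I would therefore keep the numerical constants exact through the algebra and only substitute the assumed values at the very end.
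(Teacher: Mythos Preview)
Your approach is essentially the same as the paper's: both reduce Definition \ref{zetakplus} to a scalar recursion, substitute the worst-case parameter values allowed by Theorem \ref{thm1}, and verify numerically that the resulting map contracts with ratio below $0.6$, then close with an induction. Two technical points the paper handles that you should be aware of: (i) rather than ``absorbing'' the $\zeta_*^+$-terms into the $c\zeta\lambda^-$ slack (which does not quite work, since some of those terms are $O(\sqrt\zeta)$, not $O(\zeta)$), the paper packages the whole ratio as a function $f_{inc}(\zeta_{k-1}^+;\,\zeta_*^+,\,c\zeta,\,\zeta_*^+ r f,\,\kappa_s^+,\,\phi^+,\,b,\,\alpha,\,\eta)$, observes it is increasing in every argument, and substitutes the numerical upper bounds from Fact \ref{constants} all at once; (ii) the step $k=1$ is treated separately using the tighter bound $\phi_0 \le 1.1111$ (rather than $\phi^+ = 1.1735$), which is what produces the specific value $\zeta_1^+ = 0.5688$.
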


We will prove this lemma in Section \ref{pfoflem1}.

\begin{lem}\label{mainlem}
Assume that all the conditions of Theorem \ref{thm1} hold.  Also assume that $\mathbf{P}(\Gamma^e_{j,k-1})>0.$
Then
\[
\mathbf{P}(\Gamma^e_{j,k}|\Gamma^e_{j,k-1}) \geq p_k(\alpha,\zeta) \geq p_K(\alpha,\zeta) \enspace \text{ for all } k = 1, 2, \ldots, K,
\]
where $p_k(\alpha,\zeta)$ is defined in equation \eqref{pk}.
\end{lem}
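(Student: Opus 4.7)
The plan is to prove Lemma \ref{mainlem} by reducing the event $\check\Gamma_{j,k}$ to its two natural components and bounding each with a dedicated sub-lemma. Since $\Gamma_{j,k} = \Gamma_{j,k-1} \cap \check\Gamma_{j,k}$, the target probability equals $\mathbf{P}(\check\Gamma_{j,k}^e \mid \Gamma_{j,k-1}^e)$. The event $\check\Gamma_{j,k}^e$ asks for two things on the interval $\mathcal{I}_{j,k}$: (a) exact support recovery $\hat T_t = T_t$ together with the LS form of $e_t$ in (\ref{etdef0}) for every $t \in \mathcal{I}_{j,k}$, and (b) the subspace-update inequality $\zeta_{j,k} \le \zeta_k^+$ produced by the $k$th projection PCA step. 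I would invoke Lemma \ref{cslem} for (a) and Lemma \ref{zetak} for (b), and combine the two bounds using Lemma \ref{rem_prob} together with the chain rule of conditional probability.

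For part (a), conditioning on $X_{j,k-1} \in \Gamma_{j,k-1}$ delivers the inductive bounds $\zeta_{j,*} \le \zeta_*^+$ and $\zeta_{j,k-1} \le \zeta_{k-1}^+$, and the exponential decay from Lemma \ref{expzeta}. Combined with the denseness hypotheses on $P_{j-1}$, $P_{j,\new}$ and $D_{j,\new,k-1}$ (item~3 of Theorem \ref{thm1}) and Lemma \ref{delta_kappa}, this controls $\delta_{2s}(\Phi_{j,k-1})$ below the $b(\sqrt{2}-1)$ threshold required by Theorem \ref{candes_csbound}. The assumptions $\|\nu_t\|_\infty \le (1-b)\gamma_*$ and $\|\nu_{t,\new}\|_\infty \le (1-b)\gamma_{\new,k-1}$ bound $\|\beta_t\|_2 \le \xi$. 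The CS error bound then yields $\|\hat S_{t,\cs} - S_t\|_\infty < \omega/2$, and the threshold condition $7\xi \le \omega \le S_{\min} - 7\xi$ forces $\hat T_t = T_t$ deterministically; the LS step then produces the form (\ref{etdef0}). Thus part (a) consumes no probability beyond the conditioning on $\Gamma_{j,k-1}^e$.

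Part (b) is the hard step. Lemma \ref{zetak} starts from a $\sin\theta$/Weyl inequality that bounds $\zeta_{j,k}$ by a ratio of $\|\mathcal H_{j,k}\|_2$ plus a small correction to the spectral gap $\lambda_{\min}(A_{j,k}) - \lambda_{\max}(A_{j,k,\perp}) - \|\mathcal H_{j,k}\|_2$. I would then bound each of $\lambda_{\min}(A_{j,k})$, $\lambda_{\max}(A_{j,k,\perp})$, $\|H_{j,k}\|_2$, $\|H_{j,k,\perp}\|_2$ and $\|B_{j,k}\|_2$ via Corollaries \ref{azuma_nonzero} and \ref{azuma_rec}, matching each to its expected value computed from the quantities in the signal model. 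The obstacle, as flagged in Section \ref{outline}, is that the summands $a_t a_t'$ appearing inside these matrices are \emph{not} conditionally independent given $X_{j,k-1}$, because of the AR recursion. My strategy is to unfold $a_t = b^{t-\tau_{k-1}} a_{\tau_{k-1}} + \sum_{s=\tau_{k-1}+1}^{t} b^{t-s}\nu_s$ with $\tau_{k-1} := t_j + (k-1)\alpha - 1$, splitting each term into a piece that is deterministic given $X_{j,k-1}$ plus a piece built from the fresh innovations $\{\nu_s\}_{s\in\mathcal{I}_{j,k}}$. After conditioning on $X_{j,k-1}$, the innovation piece is an adapted, zero-mean, bounded matrix sequence to which matrix Azuma applies; the geometric factor $b \le 0.4$ keeps the $\ell_\infty$-bound on each term and the effective variance proxy $b_2 - b_1$ appearing in $\digamma(\alpha,\epsilon,b_2-b_1)$ comparable to the independent case, which is exactly why only a constant-factor inflation of $\alpha$ over \cite{rrpcp_perf} is needed. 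The hypothesis $\eta \le 1.7$ and the bounds on $\lambda_{\max}(\Lambda_{\nu,t})$ convert the $\|\cdot\|_\infty$ control on $\nu_t$ into the additive perturbations controlled by $\lambda^+$, $\lambda_{\new}^+$ in Definition \ref{zetakplus}.

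To close the argument, both conditional bounds above hold uniformly over $X_{j,k-1} \in \Gamma_{j,k-1}$, so Lemma \ref{rem_prob} promotes the pointwise conditioning to conditioning on the event $\Gamma_{j,k-1}^e$, and multiplying the two probabilities yields the advertised $p_k(\alpha,\zeta)$. The monotonicity $p_k(\alpha,\zeta) \ge p_K(\alpha,\zeta)$ will follow because $p_k$ is a product of terms of the form $1 - (\text{const})\,\digamma(\alpha,\epsilon_k,b_2-b_1)$ whose arguments degrade monotonically in $k$ (the effective spread $b_2-b_1$ grows with $\gamma_{\new,k} = \min(v^{k-1}\gamma_\new,\gamma_*)$), so the worst case across $k \in \{1,\dots,K\}$ is attained at $k = K$.
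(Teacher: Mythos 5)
Your proposal matches the paper's proof exactly: the paper also observes that $\mathbf{P}(\Gamma_{j,k}^e|\Gamma_{j,k-1}^e) = \mathbf{P}(\check\Gamma_{j,k}^e|\Gamma_{j,k-1}^e)$ and then combines item~2 of Lemma~\ref{cslem} (exact support recovery with probability one given $\Gamma_{j,k-1}^e$) with Lemma~\ref{zetak} (the subspace-recovery bound $\mathbf{P}(\zeta_k \le \zeta_k^+|\Gamma_{j,k-1}^e)\ge p_k(\alpha,\zeta)$), promoted from pointwise to event conditioning via Lemma~\ref{rem_prob}. The additional detail you give about unfolding the AR recursion belongs to the proofs of those two sub-lemmas rather than to Lemma~\ref{mainlem} itself, but it accurately reflects how the paper establishes them.
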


\begin{remark}\label{Gamma_rem}
Under the assumptions of Theorem \ref{thm1}, it is easy to see that the following holds.
 For any $k=1,2 \dots K$, $\Gamma_{j,k}^e$ implies that  $\zeta_{j,*} \le \zeta_{*}^+$
\end{remark}

From the definition of $\Gamma_{j,k}^e$, $\zeta_{j',K}\leq \zeta_{K}^+$ for all $j'\leq j-1$. By Lemma \ref{expzeta} and the definition of $K$ in Definition \ref{defn_alpha}, $\zeta_{K}^+\leq 0.6^K + 0.15c\zeta\leq c\zeta$ for all $j'\leq j-1$. Using Remark \ref{zetastar}, $\zeta_{j,*} \leq \zeta_1^* + \sum_{j'=1}^{j-1} \zeta_{j',K} \leq r_0\zeta + (j-1)c\zeta \leq \zeta_*^+$.

\begin{proof}[Proof of Theorem \ref{thm1}]

The theorem is a direct consequence of Lemmas \ref{expzeta}, \ref{mainlem}, and Lemma \ref{subset_lem}.

Notice that $\Gamma_{j,0}^e \supseteq  \Gamma_{j,1}^e \supseteq \dots \supseteq   \Gamma_{j,K,0}^e \supseteq  \Gamma_{j+1,0}^e$. Thus, by Lemma \ref{subset_lem}, $\mathbf{P}(\Gamma_{j+1,0}^e | \Gamma_{j,0}^e) = \mathbf{P}(\Gamma_{j+1,0}^e | \Gamma_{j,K}^e) \prod_{k=1}^{K} \mathbf{P}(\Gamma_{j,k}^e | \Gamma_{j,k-1}^e)$ and $\mathbf{P}(\Gamma_{J+1,0} | \Gamma_{1,0}) = \prod_{j=1}^{J}  \mathbf{P}(\Gamma_{j+1,0}^e | \Gamma_{j,0}^e)$.

Using Lemmas \ref{mainlem}, and the fact that $p_k(\alpha,\zeta) \geq p_K(\alpha,\zeta)$ (see their respective definitions in Lemma \ref{termbnds} and equation \eqref{pk}), we get $\mathbf{P}(\Gamma_{J+1,0}^e| \Gamma_{1,0,0}) \geq {p}_K(\alpha,\zeta)^{KJ}$.
Also, $\mathbf{P}(\Gamma_{1,0}^e)=1$. This follows by the assumption on $\hat{P}_0$ and Lemma \ref{cslem}. Thus, $\mathbf{P}(\Gamma_{J+1,0}^e) \geq {p}_K(\alpha,\zeta)^{KJ}$.

Using the definition of $\alpha_\add$ we get that $\mathbf{P}(\Gamma_{J+1,0}^e) \geq {p}_K(\alpha,\zeta)^{KJ} \geq 1- n^{-10}$ whenever $\alpha \geq \alpha_{\add}$.

The event $\Gamma_{J+1,0}^e$ implies that $\That_t=T_t$ and $e_t$ satisfies (\ref{etdef0}) for all $t < t_{J+1}$. Using Remarks \ref{zetastar} and \ref{Gamma_rem}, $\Gamma_{J+1,0}^e$ implies that all the bounds on the subspace error hold. Using these, $\|a_{t,\new}\|_2 \le \sqrt{c} \gamma_{\new,k}$, and $\|a_t\|_2 \le \sqrt{r} \gamma_*$, $\Gamma_{J+1,0}^e$ implies that all the bounds on $\|e_t\|_2$ hold (the bounds are obtained in Lemma \ref{cslem}).

Thus, all conclusions of the the result hold w.p. at least $1- n^{-10}$.
\end{proof}

\section{Proof of Lemma \ref{expzeta} } \label{pfoflem1}

\begin{proof}
By the high probability bounds on $\lambda_{\min}(A_k)$, (\ref{A_k_bound}), $\|A_{k,\perp}\|$, (\ref{A_k_perp_bound}), $\|H_k\|$, (\ref{H_k_bound}) and Assumption \ref{model_lt}(3g), we have
\bea
b_{A_k}\geq&& \tb_{A_k} = (1-(\zeta_*^+)^2)(1-\frac{b^2-b^{2\alpha+2}}{100(1-b^2)})\lambda_{\new,k}^--2\frac{b^2(1-b^{2\alpha})}{1-b^2}\zeta_*^+\sqrt{\frac{r}{c}}\eta\lambda^+\nn\\
b_{A_{k,\perp}}\leq&& \tb_{A_{k,\perp}} =  (\zeta_*^+)^2\lambda^+ +2\frac{b^2(1-b^{2\alpha})}{1-b^2}(\zeta_*^+)^2\frac{r}{c}\eta\lambda^+\nn\\
b_{\Hc}\leq&& \tb_{\Hc}= (\phi^+)^2(\zeta_*^+)^2\lambda^++(\phi^+)^2(\kappa_s^+\zeta_{k-1}^+)^2\lambda_{\new,k}^+ +2\frac{b^2(1-b^{2\alpha})}{1-b^2}\sqrt{\frac{r}{c}} \eta\lambda^+\zeta_*^+\zeta_{k-1}^+\kappa_s^+(\phi^+)^2 + \frac{b^2(1-b^{2\alpha})}{1-b^2}\eta\frac{r}{c}\lambda^+(\phi^+\zeta_*^+)^2 \nn\\&& + \frac{b^2(1-b^{2\alpha})}{1-b^2}\eta\lambda_{\new,k}^+(\phi^+\kappa_s^+\zeta_{k-1}^+)^2 + 2\phi^+\kappa_s^+\frac{(\zeta_*^+)^2}{\sqrt{1-(\zeta_*^+)^2}}(\lambda^++\frac{b^2(1-b^{2\alpha})}{1-b^2}\eta\frac{r}{c}\lambda^+)+ \nn\\ && 2\phi^+\max\{0.15c\zeta,\zeta_{k-1}^+\}\frac{(\kappa_s^+)^2}{\sqrt{1-(\zeta_*^+)^2}}\bigg(\lambda_{\new,k}^+ +\frac{b^2(1-b^{2\alpha})}{1-b^2}\eta\lambda_{\new,k}^+\bigg) +  4\frac{b^2(1-b^{2\alpha})}{1-b^2}\eta\frac{r}{c}\lambda^+\zeta_*^+\phi^+\frac{\kappa_s^+}{\sqrt{1-(\zeta_*^+)^2}}+ \nn\\ && (\zeta_*^++\phi^+\zeta_*^+)\bigg(\zeta_*^++\zeta_*^+\phi^+\frac{\kappa_s^+}{\sqrt{1-(\zeta_*^+)^2}}\bigg) \bigg(\lambda^++\frac{b^2(1-b^{2\alpha})}{1-b^2}\eta\lambda^+\bigg)+\zeta_{k-1}^+\phi^+\kappa_s^+\bigg(1+\phi^+\zeta_{k-1}^+\frac{(\kappa_s^+)^2}{\sqrt{1-(\zeta_*^+)^2}}\bigg)\nn\\
&&\bigg(\lambda_{\new,k}^+ + \frac{b^2(1-b^{2\alpha})}{1-b^2}\eta\lambda_{\new,k}^+\bigg)+ \frac{b^2(1-b^{2\alpha})}{1-b^2}\sqrt{\frac{r}{c}}\eta\lambda^+\Bigg((\zeta_*^++\phi^+\zeta_*^+)\bigg(1+(\kappa_s^+)^2\zeta_{k-1}^+\frac{\phi^+}{\sqrt{1-(\zeta_*^+)^2}}\bigg) \nn\\&& + \phi^+\kappa_s^+\zeta_{k-1}^+\bigg(\zeta_*^++\zeta_*^+\phi^+\frac{\kappa_s^+}{\sqrt{1-(\zeta_*^+)^2}}\bigg)\Bigg)
\eea
In the first inequality, we use assumption that $\alpha\geq 100$.
When $k=1$, $\tb_{\Hc}$ is a little different because the reason which will be discussed later.
\bea
b_{\Hc_1}\leq&& \tb_{\Hc_1}= (\phi_0)^2(\zeta_*^+)^2\lambda^++(\phi_0)^2(\kappa_s^+\zeta_{k-1}^+)^2\lambda_{\new,k}^+ +2\frac{b^2(1-b^{2\alpha})}{1-b^2}\sqrt{\frac{r}{c}} \eta\lambda^+\zeta_*^+\zeta_{k-1}^+\kappa_s^+(\phi_0)^2 + \frac{b^2(1-b^{2\alpha})}{1-b^2}\eta\frac{r}{c}\lambda^+(\phi_0\zeta_*^+)^2 + \nn\\&& 2\phi_0\kappa_s^+\frac{(\zeta_*^+)^2}{\sqrt{1-(\zeta_*^+)^2}}(\lambda^++\frac{b^2(1-b^{2\alpha})}{1-b^2}\eta\frac{r}{c}\lambda^+)+ 2\phi_0\max\{0.15c\zeta,\zeta_{k-1}^+\}\frac{(\kappa_s^+)^2}{\sqrt{1-(\zeta_*^+)^2}}\bigg(\lambda_{\new,k}^+ +\frac{b^2(1-b^{2\alpha})}{1-b^2}\eta\lambda_{\new,k}^+\bigg)\nn\\ && +  4\frac{b^2(1-b^{2\alpha})}{1-b^2}\eta\frac{r}{c}\lambda^+\zeta_*^+\phi_0\frac{\kappa_s^+}{\sqrt{1-(\zeta_*^+)^2}}+ (\zeta_*^++\phi_0\zeta_*^+)\bigg(\zeta_*^++\zeta_*^+\phi_0\frac{\kappa_s^+}{\sqrt{1-(\zeta_*^+)^2}}\bigg) \bigg(\lambda^++\frac{b^2(1-b^{2\alpha})}{1-b^2}\eta\lambda^+\bigg)+\nn\\ &&\zeta_{k-1}^+\phi_0\kappa_s^+\bigg(1+\phi_0\zeta_{k-1}^+\frac{(\kappa_s^+)^2}{\sqrt{1-(\zeta_*^+)^2}}\bigg)
\lambda_{\new,k}^++ \frac{b^2(1-b^{2\alpha})}{1-b^2}\sqrt{\frac{r}{c}}\eta\lambda^+\Bigg((\zeta_*^++\phi_0\zeta_*^+)\bigg(1+(\kappa_s^+)^2\zeta_{k-1}^+\frac{\phi_0}{\sqrt{1-(\zeta_*^+)^2}}\bigg) \nn\\&& + \phi_0\kappa_s^+\zeta_{k-1}^+\bigg(\zeta_*^++\zeta_*^+\phi_0\frac{\kappa_s^+}{\sqrt{1-(\zeta_*^+)^2}}\bigg)\Bigg)
\eea
Conditions 2, 4 of Theorem \ref{thm1} imply that $\kappa_{2s,*} \leq \kappa_{2s,*}^+ = 0.3$, $\kappa_{2s,\new} \leq \kappa_{2s,\new}^+ = 0.15$, $\tilde{\kappa}_{2s,k} \leq \tilde{\kappa}_{2s}^+ = 0.15$, ${\kappa}_{s,k} \leq {\kappa}_{s}^+ = 0.15$ and $g \le g^+ = \sqrt{2}$. Using Lemma \ref{RIC_bnd}, this implies that $\phi_k \le \phi^+ = 1.1735$ ($\phi_0\leq 1.1111$).

Let $$f_{inc} (\zeta_{k-1}^+;\zeta_*^+,c\zeta, \zeta_*^+rf, \zeta_*^+rf,\kappa_s^+, \phi^+, b,\alpha, \eta)=\frac{\tb_{\Hc}+0.125 c\zeta\lambda^-}{\tb_{A_k}-\tb_{A_{k,\perp}}-\tb_{\Hc}-0.25c\zeta\lambda^-}$$
As $\frac{b^2(1-b^{2\alpha})}{1-b^2}=\sum_{i=0}^{\alpha-1}b^{2(1+i)}$ is an increasing function of $b$ and $\alpha$, $f_{inc} (\zeta_{k-1}^+;\zeta_*^+,f,c\zeta,\zeta_*^+rf,\kappa_s^+, \phi^+, b,\alpha, \eta)$ is an increasing function of all their arguments. Thus, by taking upper bounds on some of the variables, we define
$$\tf_{inc}(\zeta_{k-1}^+;c\zeta,\phi^+,b)= f_{inc} (\zeta_{k-1}^+;10^{-4},c\zeta, 1.5\times 10^{-4},  1.5\times 10^{-4},0.15, \phi^+,b,\infty, 1.7)$$

Using Fact \ref{constants},  $\zeta_*^+ \leq 10^{-4}$; $\zeta_*^+f \leq 1.5\times 10^{-4}$; $\zeta_*^+rf \leq 1.5\times 10^{-4}$ and $c\zeta \le 10^{-4}$. 
\ben
\item By definition, $\zeta_0^+=1$. For $k=1$, $\zeta_k^+ = \tf_{inc} (\zeta_{k-1}^+;c\zeta,\phi^+,b) \le \tf_{inc}(1; 10^{-4},1.1111,0.4)=0.5688$. For $k=2$, $\zeta_k^+ = \tf_{inc} (\zeta_{k-1}^+;c\zeta,\phi^+,b) \le \tf_{inc}(0.5688; 10^{-4},1.1735,0.4)=0.3568$.We prove the first claim by induction.
\bi
\item Base case: For $k=3$, $\zeta_4^+ = \tf_{inc} (\zeta_{k-1}^+;c\zeta,\phi^+,b) \le f_{inc}(0.3568;10^{-4},1.1735,0.4) < 0.3568 $.

\item Induction step: Assume that $\zeta_{k-1}^+ \leq \zeta_{k-2}^+$ for $k >=4$. Since $f_{inc}$ is an increasing function of its arguments,
$\zeta_k^+ = \tf_{inc} (\zeta_{k-1}^+;c\zeta,\phi^+,b) \leq f_{inc} (\zeta_{k-2}^+; c\zeta,\phi^+,b)= \zeta_{k-1}^+$.
\ei

\item For the second claim, when $k=0,1,2$, the result is obvious correct; when $k\geq3$,notice that
\bea
\zeta_{k}^+=&&\tf_{inc} (\zeta_{k-1}^+;c\zeta,\phi^+,b)\nn\\
\leq&& \tf_{inc} (\zeta_{k-1}^+;c\zeta,1.1735,0.4)\nn\\
=&&\frac{0.058\zeta_{k-1}^++0.4284}{0.9981 - 0.058(\zeta_{k-1}^+)^2 - 0.09885\zeta_{k-1}^+ - 0.2505c\zeta }\cdot \zeta_{k-1}^+ + \frac{0.1257c\zeta+ 4.041\times 10^{-5}}{0.9981 - 0.058(\zeta_{k-1}^+)^2 - 0.09885\zeta_{k-1}^+ - 0.2505c\zeta }\nn\\
\leq&& 0.5360\zeta_{k-1}^+ + 0.150c\zeta\nn\\
\leq&& 0.6^{k} + 0.15c\zeta
\eea
Thus the claim is correct for $k\geq0$ by induction.
\item Since $\zeta_k^+ \le \zeta_0^+=1$ and $$g_{dec}(\zeta_{k-1}^+;\zeta_*^+,c\zeta, \zeta_*^+rf, \zeta_*^+rf,\kappa_s^+, \phi^+, b,\alpha, \eta)=\frac{b_{A_k}-b_{A_{k,\perp}}-b_{\Hc}-0.25c\zeta\lambda^-}{\lambda_{\new,k}^-}$$ is a decreasing function of its variables, $g_{dec} \geq  g_{dec} (1;10^{-4},10^{-4}, 1.5\times 10^{-4},  1.5\times 10^{-4},0.15, 1.1735,0.4,\infty, 1.7) > 0$.
\een
\end{proof}

\section{Proof of Lemma \ref{mainlem} }\label{pfoflem}

The proof of Lemma \ref{mainlem} follows from two lemmas. The first is the final conclusion for the projected CS step for $t\in \mathcal{I}_{j,k}$.  The second is the final conclusion for one projection PCA (i.e.) for $t\in \mathcal{I}_{j,k}$.  We will state the two lemmas first and then proceed to prove them in order.

\begin{lem}[Projected Compressed Sensing Lemma]\label{cslem}
Assume that all conditions of Theorem \ref{thm1} hold.
\ben
\item For all $t \in  \mathcal{I}_{j,k}$, for any $k=1,2,\dots K$, if $X_{j,k-1} \in \Gamma_{j,k-1}$,
\ben
\item the projection noise $\beta_t$ satisfies $\|\beta_t\|_2 \leq \zeta_{k-1}^+ \sqrt{c} \gamma_{\new,k} + \zeta_{*}^+ \sqrt{r} \gamma_* \le \sqrt{c} 0.72^{k-1} \gamma_{\new} + \sqrt{\zeta} (\sqrt{r} + 0.15\sqrt{c}) \le \xi_0$.
\item the CS error satisfies $\|\hat{S}_{t,\cs} - S_t\|_2 \le7 \xi_0$.
\item  $\hat{T}_t = T_t$
\item $e_t$ satisfies \eqref{etdef0} and $\|e_t\|_2 \leq \phi^+ [\kappa_s^+ \zeta_{k-1}^+ \sqrt{c} \gamma_{\new,k} + \zeta_{*}^+ \sqrt{r} \gamma_*] \le
 0.18 \sqrt{c}0.72^{k-1}\gamma_{\new} + 1.2 \sqrt{\zeta}(\sqrt{r} + 0.023 \sqrt{c})$.  Recall that \eqref{etdef0} is
\[
 I_{T_t} {(\Phi_{(t)})_{T_t}}^{\dag} \beta_t = I_{T_t} [ (\Phi_{(t)})_{T_t}'(\Phi_{(t)})_{T_t}]^{-1}  {I_{T_t}}' \Phi_{(t)} L_t
\]

\een
\item For all $k=1,2,\dots K$, $\mathbf{P}(\That_t = T_t \ \text{and} \ e_t \ \text{satisfies (\ref{etdef0})}  \text{ for all } t \in \mathcal{{I}}_{j,k}  | \Gamma_{j,k-1}^e)=1$.
\een
\end{lem}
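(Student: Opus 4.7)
The plan is to first translate the conditioning event $\Gamma_{j,k-1}^e$ into the two quantitative subspace-error bounds that drive the rest of the argument, namely $\zeta_{j,*} \le \zeta_*^+ = r\zeta$ and $\zeta_{j,k-1} \le \zeta_{k-1}^+$. The first follows from Remark \ref{Gamma_rem} together with the exponential-decay bound $\zeta_k^+ \le 0.6^k + 0.15 c\zeta$ from Lemma \ref{expzeta}, and the second is baked into the definition of $\check{\Gamma}_{j,k-1}$. Given these, the signal-model decomposition $L_t = P_{j-1} a_{t,*} + P_{j,\new} a_{t,\new}$ and the definitions of $D_{j,*,k-1}$ and $D_{j,\new,k-1}$ give $\beta_t = D_{j,*,k-1} a_{t,*} + D_{j,\new,k-1} a_{t,\new}$, so the triangle inequality together with the a.s.\ bounds $\|a_{t,*}\|_2 \le \sqrt{r}\gamma_*$ and $\|a_{t,\new}\|_2 \le \sqrt{c}\gamma_{\new,k}$ (which follow by unrolling the AR recursion under Assumption \ref{model_lt}(3b,3c)) yields Part 1(a). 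The specific numerical upper bound then uses $\zeta_*^+ \le \sqrt{\zeta}$ (via $\zeta \le 1/(r^3\gamma_*^2)$) and $\zeta_{k-1}^+\sqrt{c}\gamma_{\new,k} \le \sqrt{c}\,0.72^{k-1}\gamma_{\new} + $ small correction.

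For Part 1(b) I would invoke Theorem \ref{candes_csbound} with $\Psi = \Phi_{j,k-1}$. Writing $\Phi_{j,k-1} = I - \hat P \hat P'$ where $\hat P := [\hat P_{j-1},\ \hat P_{j,\new,k-1}]$ is a basis matrix (by the orthogonality $\hat P_{j,\new,k-1}'\hat P_{j-1}=0$), Lemma \ref{delta_kappa} reduces the RIC to $\delta_{2s}(\Phi_{j,k-1}) = \kappa_{2s}^2(\hat P)$. I would bound $\kappa_{2s}(\hat P)$ by splitting $\hat P_{j-1}$ into its components in $\Span(P_{j-1})$ and its orthogonal complement, and $\hat P_{j,\new,k-1}$ similarly using $Q_{j,\new,k-1} = (I-P_{j,\new}P_{j,\new}')\hat P_{j,\new,k-1}$. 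Combining the denseness assumptions $\kappa_{2s,*}^+ = \kappa_{2s,\new}^+ + \tilde\kappa_{2s}^+ = 0.3$ with Lemma \ref{hatswitch}(1) to control the perpendicular pieces via $\zeta_*^+$ gives $\delta_{2s}(\Phi_{j,k-1})$ strictly below the Candes threshold; the same RIC computation, applied to $s$ rather than $2s$, also shows $\phi_{k-1} \le \phi^+ = 1.1735$ needed later. Theorem \ref{candes_csbound} then produces $\|\hat S_{t,\cs} - S_t\|_2 \le C_1 \xi \le 7\xi$ with the constants chosen in Theorem \ref{thm1}.

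For Part 1(c), support recovery is a standard thresholding argument: $\|\hat S_{t,\cs} - S_t\|_\infty \le 7\xi$ together with $7\xi \le \omega \le S_{\min} - 7\xi$ gives $|(\hat S_{t,\cs})_i|\le \omega$ for $i\notin T_t$ and $|(\hat S_{t,\cs})_i|>\omega$ for $i\in T_t$, hence $\hat T_t = T_t$. Part 1(d) then follows by direct computation: since $\Phi_{(t)} S_t = (\Phi_{(t)})_{T_t}(S_t)_{T_t}$, the LS step returns $(\hat S_t)_{T_t} = (S_t)_{T_t} + ((\Phi_{(t)})_{T_t})^{\dag}\beta_t$ and zero off support, which is precisely \eqref{etdef0}. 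The norm bound follows from $\|e_t\|_2 \le \phi^+\,\|I_{T_t}'\Phi_{(t)}L_t\|_2$, the loose bound $\|I_{T_t}'D_{j,*,k-1}\|_2 \le \|D_{j,*,k-1}\|_2 \le \zeta_*^+$, and the key denseness-driven estimate $\|I_{T_t}'D_{j,\new,k-1}\|_2 \le \kappa_s(D_{j,\new,k-1})\,\|D_{j,\new,k-1}\|_2 \le \kappa_s^+\zeta_{k-1}^+$, which is obtained by writing $D_{j,\new,k-1} = E_{j,\new,k-1} R_{j,\new,k-1}$ (QR) and noting $\|R_{j,\new,k-1}\|_2 = \|D_{j,\new,k-1}\|_2$.

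Finally, Part 2 is immediate from Part 1: every step above is deterministic once $X_{j,k-1}\in \Gamma_{j,k-1}$, and the remaining randomness (the $\nu_t$ for $t\in\mathcal I_{j,k}$) enters only through almost-sure bounds by Assumption \ref{model_lt}. The main obstacle is the RIC bookkeeping in Part 1(b): three distinct denseness constants (on $P_{j-1}$, on $P_{j,\new}$, and on $Q_{j,\new,k-1}$) together with the subspace-error $\zeta_*^+$ have to combine into a single bound on $\kappa_{2s}(\hat P)$ small enough to put $\delta_{2s}(\Phi_{j,k-1})$ strictly below $b(\sqrt 2 - 1)$ for some $b<1$ and simultaneously small enough to yield $\phi_{k-1}\le \phi^+ = 1.1735$ used in Part 1(d); once this single bookkeeping lemma is in hand, the rest of the proof is essentially linear algebra plus Theorem \ref{candes_csbound} and the thresholding argument.
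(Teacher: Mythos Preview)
Your proposal is correct and follows essentially the same route as the paper: the paper packages your ``RIC bookkeeping'' step as a separate Lemma \ref{RIC_bnd} (with the numerical consequences in Corollary \ref{RICnumbnd}), splitting $\kappa_s^2([\hat P_{j-1},\hat P_{j,\new,k-1}]) \le \kappa_s^2(\hat P_{j-1}) + \kappa_s^2(\hat P_{j,\new,k-1})$ and bounding each piece exactly as you describe, and then Parts 1(a)--(d) and Part 2 proceed by the same decomposition of $\beta_t$, the same application of Theorem \ref{candes_csbound}, the same thresholding argument, and the same LS error identity. One small correction: the numerical step in 1(a) uses $\zeta_*^+\gamma_* \le \sqrt{\zeta}$ (Fact \ref{constants}, item 4), not $\zeta_*^+ \le \sqrt{\zeta}$ on its own.
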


\begin{lem}[Subspace Recovery Lemma] \label{zetak}
Assume that all the conditions of Theorem \ref{thm1} hold.  Let $\zeta_*^+ = r \zeta$.  Then, for all $k=1,2, \dots K$,
\[
\mathbf{P}(\zeta_{k} \le \zeta_k^+|\egam_{j,k-1}) \ge p_k(\alpha,\zeta)
\]
where $\zeta_k^+$ is defined in Definition \ref{zetakplus} and $p_k(\alpha,\zeta)$ is defined in \eqref{pk}.
\end{lem}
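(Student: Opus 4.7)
The strategy is to bound $\zeta_{j,k}$ in three stages: (i) convert it into a ratio of spectral quantities of $A_{j,k}$, $A_{j,k,\perp}$, $H_{j,k}$, $H_{j,k,\perp}$, $B_{j,k}$ via the $\sin\theta$ and Weyl theorems; (ii) bound each spectral quantity conditioned on $\egam_{j,k-1}$ using the matrix Azuma corollaries; and (iii) plug the resulting ratio into Definition \ref{zetakplus} and verify it reproduces $\zeta_k^+$.

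For (i), the QR factorization $D_{j,\new} = E_{j,\new} R_{j,\new}$ together with the orthogonality $\Phat_{j-1}' \Phat_{j,\new,k} = 0$ yields
\[
\Phi_{j,k} P_{j,\new} = (I - \Phat_{j,\new,k}\Phat_{j,\new,k}')(I - \Phat_{j-1}\Phat_{j-1}')P_{j,\new} = (I - \Phat_{j,\new,k}\Phat_{j,\new,k}')E_{j,\new} R_{j,\new},
\]
so $\zeta_{j,k} \le \|(I - \Phat_{j,\new,k}\Phat_{j,\new,k}')E_{j,\new}\|_2$ since $\|R_{j,\new}\|_2 = \|D_{j,\new}\|_2 \le 1$. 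I would then apply Theorem \ref{sin_theta} to the two decompositions of $\mathcal{A}_{j,k} + \mathcal{H}_{j,k}$ from Definition \ref{defHk}, with $E = E_{j,\new}$, $F = \Phat_{j,\new,k}$, and use Weyl (Theorem \ref{weyl}) on each diagonal block of the perturbed matrix to obtain
\[
\zeta_{j,k} \le \frac{\|H_{j,k}\|_2 + \|B_{j,k}\|_2}{\lambda_{\min}(A_{j,k}) - \|H_{j,k}\|_2 - \|A_{j,k,\perp}\|_2 - \|H_{j,k,\perp}\|_2},
\]
the numerator being $\|\mathcal{H}_{j,k} E_{j,\new}\|_2$.

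For (ii), conditioned on $X_{j,k-1}$ all of $\Phat_{j-1}, \Phi_{j,0}, E_{j,\new}, \Phat_{j,\new,k-1}$ are deterministic. Each of the five quantities above is an $\alpha$-length sum $\frac{1}{\alpha}\sum_{t \in \mathcal{I}_{j,k}} Z_t$ of things driven by $a_t$ (and, for the $H$'s and $B$, by $e_t$, which under $\egam_{j,k-1}$ and Lemma \ref{cslem} is a deterministic-coefficient linear function of $a_t$, namely $e_t = I_{T_t}[(\Phi_{(t)})_{T_t}'(\Phi_{(t)})_{T_t}]^{-1}{I_{T_t}}'\Phi_{(t)} P_j a_t$ with $\That_t = T_t$). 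I would apply Corollary \ref{azuma_nonzero} to each Hermitian sum and Corollary \ref{azuma_rec} to the non-Hermitian $B_{j,k}$, centering by $\E_{t-1}[Z_t \mid X_{j,k-1}]$. The AR correlation is handled by the identity
\[
\E[a_t a_t' \mid a_{t-1}] = b^2 a_{t-1} a_{t-1}' + \Lambda_{\nu,t},
\]
unrolled through $a_t = b^{t-t_0} a_{t_0} + \sum_{s=t_0+1}^{t} b^{t-s}\nu_s$ with $t_0 := t_j + (k-1)\alpha - 1$, so that
\[
\tfrac{1}{\alpha}\sum_{t \in \mathcal{I}_{j,k}} \E[a_t a_t' \mid X_{j,k-1}] = \tfrac{1}{\alpha}\sum_{t \in \mathcal{I}_{j,k}} \Bigl( b^{2(t-t_0)} a_{t_0} a_{t_0}' + \sum_{s=t_0+1}^{t} b^{2(t-s)} \Lambda_{\nu,s} \Bigr).
\]
The geometric series evaluations produce exactly the $\frac{b^2(1-b^{2\alpha})}{1-b^2}$ factors appearing throughout $b_{A_k}, b_{A_{k,\perp}}, b_{\mathcal{H}}$ in Lemma \ref{expzeta}. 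The almost-sure bound $\|a_t\|_\infty \le \gamma_*$ from Assumption \ref{model_lt} controls the $(b_2-b_1)$ parameter of Azuma, giving per-event tail probabilities of the form $n\,\digamma(\alpha,\epsilon,\gamma_*^2)$.

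Step (iii) is routine bookkeeping: choosing the Azuma slacks $\epsilon$ to total $0.125 c\zeta\lambda^-$ in the numerator and $0.25 c\zeta\lambda^-$ in the denominator, a union bound over the five events gives the stated $p_k(\alpha,\zeta)$; substituting the bounds into the $\sin\theta$ inequality reproduces the ratio defining $\zeta_k^+$ in Definition \ref{zetakplus}. The main obstacle is the Azuma analysis of $H_{j,k}$ and $B_{j,k}$: because $e_t$ is linear in $a_t$, their summands are quartic in the AR sequence, and one must split each quartic into (a) deterministic means bounded by the denseness coefficients $\kappa_{s}^+, \kappa_{2s,*}^+, \kappa_{2s,\new}^+, \tilde\kappa_{2s}^+$ of Definition \ref{kappaplus} (controlled via Lemma \ref{delta_kappa}), and (b) martingale differences amenable to the conditional Azuma corollaries. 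Once this decomposition is in place, the remaining algebra mirrors the i.i.d.\ argument in \cite{rrpcp_perf}, with each variance factor multiplied by a $1 + \frac{b^2(1-b^{2\alpha})}{1-b^2}\eta$-type correction, and combining with Lemma \ref{expzeta} closes the argument.
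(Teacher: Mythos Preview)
Your three–stage plan matches the paper's architecture, and stage (i) is essentially Lemma \ref{zetakbnd} (though note that Weyl must be applied to the full matrix $\mathcal{A}_{j,k}+\mathcal{H}_{j,k}$, not block-by-block: $\lambda_{\max}(\Lambda_{k,\perp})=\lambda_{c_{\new}+1}(\mathcal{A}_{j,k}+\mathcal{H}_{j,k})\le \|A_{j,k,\perp}\|_2+\|\mathcal{H}_{j,k}\|_2$, so the denominator should carry $\|\mathcal{H}_{j,k}\|_2$, not just $\|H_{j,k,\perp}\|_2$).

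The real gap is in stage (ii). You write ``centering by $\E_{t-1}[Z_t\mid X_{j,k-1}]$'' but then display a formula for $\E[a_t a_t'\mid X_{j,k-1}]$; these are different objects when the $a_t$'s are AR-correlated. Corollary \ref{azuma_nonzero} needs the \emph{predictable part} $\E_{t-1}[Z_t\mid X_{j,k-1}]$, which for $Z_t=M a_{t,\new}a_{t,\new}'M'$ equals $M(b^2 a_{t-1,\new}a_{t-1,\new}'+\Lambda_{\nu,t,\new})M'$ and still contains the random $a_{t-1,\new}a_{t-1,\new}'$. Bounding $\tfrac{1}{\alpha}\sum_t\E_{t-1}[Z_t]$ below by dropping that PSD piece gives only $b_3\approx(1-(\zeta_*^+)^2)(1-b^2)\lambda_\new^-$, not the paper's $(1-(\zeta_*^+)^2)\bigl(1-\tfrac{b^2-b^{2\alpha+2}}{\alpha(1-b^2)}\bigr)\lambda_{\new,k}^-$ that appears in $b_{A_k}$; since $\zeta_k^+$ in Definition \ref{zetakplus} is defined via the specific $b_{A_k},b_{A_{k,\perp}},b_{\mathcal H}$ of Lemma \ref{termbnds}, your route does not prove the lemma as stated. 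The key device you are missing (and which the paper flags as the main novelty over \cite{rrpcp_perf}) is to \emph{first} expand $a_t=b^{t-t_0}a_{t_0}+\sum_i b^{t-i}\nu_i$ inside $\sum_t a_t a_t'$, swap the order of summation, and regroup into several $\nu$-indexed sums $\sum_i Z_{1,i},\sum_i Z_{2,i},\dots$ where each $Z_{\cdot,i}$ contains a single fresh $\nu_i$ (or $\nu_{\alpha-1-i}$) multiplied by a function of the remaining $\nu$'s. Then Lemma \ref{ind_expec} gives $\E_{i-1}[Z_{\cdot,i}]=0$ and Azuma applies cleanly; the ``diagonal'' piece $Z_{1,i}=\tfrac{1-b^{2(\alpha-i)}}{1-b^2}\nu_i\nu_i'$ recovers essentially the full $\lambda_{\new,k}^-$ in the mean, which is what makes Lemma \ref{expzeta} close numerically. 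Without this re-indexing, the martingale structure you need is not available.
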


\begin{proof}[Proof of Lemma \ref{mainlem}]
Observe that $\mathbf{P}(\Gamma_{j,k}|\Gamma_{j,k-1}) = \mathbf{P}( \check{\Gamma}_{j,k} | \Gamma_{j,k-1})$. The lemma then follows by combining Lemma \ref{zetak} and item 2 of Lemma \ref{cslem}.
\end{proof}

\subsection{Proof of Lemma \ref{cslem}}

In order to prove Lemma \ref{cslem} we first need a bound on the RIC of the compressed sensing matrix $\Phi_k.$

\begin{lem}[Bounding the RIC of $\Phi_k$] \label{RIC_bnd}
Recall that $\zeta_*:= \|(I-\Phat_*{\Phat_*}')P_*\|_2$.  The following hold.
\ben
\item Suppose that a basis matrix $P$ can be split as $P = [P_1, P_2]$ where $P_1$ and $P_2$ are also basis matrices. Then $\kappa_s^2 (P) = \max_{T: |T| \le s} \|I_T'P\|_2^2 \le \kappa_s^2 (P_1) + \kappa_s^2 (P_2)$.
\item $\kappa_s^2(\Phat_*) \leq \kappa_{s,*}^2 + 2\zeta_*$
\item $\kappa_s (\Phat_{\new,k}) \leq \kappa_{s,\new} + \tilde{\kappa}_{s,k} \zeta_k + \zeta_*$
\item $\delta_{s} (\Phi_0) = \kappa_s^2 (\Phat_*) \leq  \kappa_{s,*}^2 + 2 \zeta_*$
\item $\delta_{s}(\Phi_k)  = \kappa_s^2 ([\Phat_* \ \Phat_{\new,k}]) \leq \kappa_s^2 (\Phat_*) + \kappa_s^2 (\Phat_{\new,k})\leq \kappa_{s,*}^2 + 2\zeta_* + (\kappa_{s,\new} + \tilde{\kappa}_{s,k} \zeta_k + \zeta_*)^2$ for $k \ge 1$
\een
\end{lem}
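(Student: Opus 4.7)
The plan is to prove the five parts in order, with each part feeding into the next. Items 1 and 2 are purely linear-algebraic facts about basis matrices and rank-one perturbations; item 3 requires the key subspace-switching identities from Lemma \ref{hatswitch}; and items 4 and 5 are immediate reductions to items 1--3 via Lemma \ref{delta_kappa} together with the structural observation that $\Phat_{j,\new,k}$ is orthogonal to $\Phat_*$ (this is forced by the \emph{proj-PCA} step of Algorithm \ref{reprocs}, where we first project onto the orthogonal complement of $\Phat_{j-1}$ before extracting eigenvectors, so $[\Phat_*,\Phat_{\new,k}]$ is itself an honest basis matrix).

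For item 1, I will use $\|I_T'P\|_2^2=\|I_T'PP'I_T\|_2$ and the fact that, because $P=[P_1,P_2]$ has orthonormal columns, $P_1'P_2=0$ and hence $PP' = P_1P_1' + P_2P_2'$. The triangle inequality for the operator norm then gives $\|I_T'PP'I_T\|_2 \le \|I_T'P_1\|_2^2 + \|I_T'P_2\|_2^2$, and taking the max over $|T|\le s$ concludes. Item 2 is similar: write $\Phat_*\Phat_*' = P_*P_*' + (\Phat_*\Phat_*'-P_*P_*')$ and bound $\|I_T'\Phat_*\Phat_*'I_T\|_2 \le \kappa_{s,*}^2 + \|\Phat_*\Phat_*'-P_*P_*'\|_2 \le \kappa_{s,*}^2 + 2\zeta_*$ using Lemma \ref{hatswitch}(2).

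Item 3 is the substantive step. I will split $\Phat_{\new,k} = P_\new P_\new'\Phat_{\new,k} + (I-P_\new P_\new')\Phat_{\new,k}$ and bound $\|I_T'\Phat_{\new,k}\|_2$ by the triangle inequality. The first term is at most $\kappa_s(P_\new)\|P_\new'\Phat_{\new,k}\|_2 \le \kappa_{s,\new}$. For the second term, writing $Q:=(I-P_\new P_\new')\Phat_{\new,k}=\mathrm{basis}(Q)R$ and using that $\|R\|_2 = \|Q\|_2$ gives $\|I_T'Q\|_2\le \tilde\kappa_{s,k}\,\|Q\|_2$. The main obstacle is bounding $\|Q\|_2$ in terms of $\zeta_k$; the trick is to note that since $[\Phat_*,\Phat_{\new,k}]$ is a basis matrix of width $r_{j-1}+c_{j,\new}$, Lemma \ref{hatswitch}(1) (applied with the square basis factors $P_\new$ and $\Phat_{\new,k}$, both $n\times c_{j,\new}$) yields $\|(I-P_\new P_\new')\Phat_{\new,k}\|_2 = \|(I-\Phat_{\new,k}\Phat_{\new,k}')P_\new\|_2$, and this last quantity is bounded via
\[
\|(I-\Phat_{\new,k}\Phat_{\new,k}')P_\new\|_2 \le \|(I-\Phat_*\Phat_*'-\Phat_{\new,k}\Phat_{\new,k}')P_\new\|_2 + \|\Phat_*\Phat_*'P_\new\|_2 = \zeta_k + \|\Phat_*'P_\new\|_2 \le \zeta_k + \zeta_*,
\]
where the last step uses $P_*'P_\new=0$ together with Lemma \ref{hatswitch}(3). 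Combining yields $\|I_T'\Phat_{\new,k}\|_2 \le \kappa_{s,\new} + \tilde\kappa_{s,k}(\zeta_k+\zeta_*) \le \kappa_{s,\new} + \tilde\kappa_{s,k}\zeta_k + \zeta_*$ as claimed.

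Finally, items 4 and 5 are immediate. For item 4, Lemma \ref{delta_kappa} applied to the basis matrix $\Phat_*$ gives $\delta_s(\Phi_0)=\delta_s(I-\Phat_*\Phat_*')=\kappa_s^2(\Phat_*)$, and then item 2 finishes. For item 5, I use that $[\Phat_*,\Phat_{\new,k}]$ is a basis matrix (by the orthogonality of the proj-PCA output to $\Phat_*$), so Lemma \ref{delta_kappa} gives $\delta_s(\Phi_k) = \kappa_s^2([\Phat_*\ \Phat_{\new,k}])$, at which point item 1 decomposes the squared coefficient and items 2--3 supply the two summands. The only step that requires any real care is item 3; everything else is a straightforward chain of the tools already established in Section \ref{bgnd}.
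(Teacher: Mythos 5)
Your proof is correct and follows essentially the same route as the paper's: items 1--2 via the same Gram-matrix decompositions and Lemma \ref{hatswitch}(2), item 3 by splitting $\Phat_{\new,k}$ along $\Span(P_\new)$ and its complement and then invoking Lemma \ref{hatswitch}(1) and (3) together with the identity $(I-\Phat_{\new,k}\Phat_{\new,k}')P_\new = D_{\new,k} + \Phat_*\Phat_*'P_\new$, and items 4--5 via Lemma \ref{delta_kappa} applied to $\Phat_*$ and $[\Phat_*\ \Phat_{\new,k}]$. One small wording note: in item 3 the phrase ``square basis factors'' is a misnomer --- $P_\new$ and $\Phat_{\new,k}$ are both $n\times c_{j,\new}$ (same shape, not square), which is exactly what Lemma \ref{hatswitch}(1) requires.
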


\begin{proof}
\ben
\item Since $P$ is a basis matrix, $\kappa_s^2 (P) = \max_{|T| \leq s} \|{I_T}' P\|_2^2$. Also, $\|{I_T}' P\|_2^2 = \|{I_T}' [P_1, P_2] [P_1, P_2]' I_T \|_2 =  \|{I_T}' (P_1P_1' + P_2 P_2') I_T \|_2 \le \|{I_T}' P_1 P_1' I_T\|_2 + \|{I_T}' P_2 P_2' I_T\|_2$. Thus, the inequality follows.

\item For any set $T$ with $|T| \le s$, $\|{I_T}' \Phat_*\|_2^2  = \|{I_T}' \Phat_* {\Phat_*}'I_T\|_2 =\|{I_T}'( \Phat_* {\Phat_*}' -P_* {P_*}' + P_*{P_*}')I_T\|_2 \leq \|{I_T}'( \Phat_* {\Phat_*}' -P_* {P_*}')I_T\|_2 + \|{I_T}'P_* {P_*}' I_T\|_2 \leq 2\zeta_* + \kappa_{s,*}^2$. The last inequality follows using Lemma \ref{lemma0} with $P=P_*$ and $\hat{P} = \hat{P}_*$.

\item By Lemma \ref{lemma0} with $P = P_*$, $\Phat = \Phat_*$ and $Q = P_\new$, $\|{P_{\new}}' \Phat_*\|_2 \leq \zeta_*$. By Lemma \ref{lemma0} with $P = P_\new$ and $\hat{P} = \hat{P}_{\new,k}$, $\|(I-P_\new P_\new')\Phat_{\new,k}\|_2  = \|(I-\Phat_{\new,k}{\Phat_{\new,k}}')P_{\new}\|_2$.
For any set $T$ with $|T| \leq s$, $\|{I_T}'\Phat_{\new,k}\|_2 \leq \|{I_T}'(I-P_{\new}P_{\new}') \Phat_{\new,k}\|_2 + \|{I_T}'P_{\new}P_{\new}' \Phat_{\new,k}\|_2 \leq \tilde{\kappa}_{s,k} \|(I- P_{\new}{P_{\new}}')\Phat_{\new,k}\|_2 + \|{I_T}'P_{\new}\|_2 = \tilde{\kappa}_{s,k} \|(I-\Phat_{\new,k}{\Phat_{\new,k}}')P_{\new}\|_2 + \|{I_T}'P_{\new}\|_2 \leq \tilde{\kappa}_{s,k} \|D_{\new,k}\|_2 + \tilde{\kappa}_{s,k}\| \Phat_* {\Phat_*}'P_{\new}\|_2 + \|{I_T}'P_{\new}\|_2 \le \tilde{\kappa}_{s,k}\zeta_{k} + \tilde{\kappa}_{s,k} \zeta_* + \kappa_{s,\new} \leq  \tilde{\kappa}_{s,k}\zeta_{k} + \zeta_* + \kappa_{s,\new}$. Taking $\max$ over $|T| \le s$ the claim follows.

\item This follows using Lemma \ref{delta_kappa} and the second claim of this lemma.

\item This follows using Lemma \ref{delta_kappa} and the first three claims of this lemma.
\een

\end{proof}

\begin{corollary}\label{RICnumbnd}
If the conditions of Theorem \ref{thm1} are satisfied, and $X_{j,k-1}\in \Gamma_{j,k-1}$,  then
\ben
\item $\delta_s(\Phi_0) \leq \delta_{2s}(\Phi_0)  \leq {\kappa_{2s,*}^+}^2 + 2\zeta_*^+ <0.1$
\item $\delta_s(\Phi_{k-1}) \leq \delta_{2s}(\Phi_{k-1}) \leq {\kappa_{2s,*}^+}^2 + 2\zeta_*^+ +(\kappa_{2s,\new}^+ + \tilde{\kappa}_{2s,k-1}^+ \zeta_{k-1}^+ + \zeta_*^+)^2 <0.1479$
\item $\phi_{k-1} \le \frac{1}{1-\delta_s(\Phi_{k-1})} < \phi^+$
\een
\end{corollary}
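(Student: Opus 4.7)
The corollary is essentially a bookkeeping consequence of Lemma \ref{RIC_bnd}, Lemma \ref{expzeta}, and the numerical constants fixed in Theorem \ref{thm1}; the plan is to unpack the hypothesis $X_{j,k-1}\in\Gamma_{j,k-1}$, apply the relevant bounds from Lemma \ref{RIC_bnd} at sparsity level $2s$, and then verify the numerical estimates.

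First I would translate the hypothesis. By Remark \ref{Gamma_rem}, $X_{j,k-1}\in\Gamma_{j,k-1}^e$ implies $\zeta_{j,*}\le \zeta_*^+ = r\zeta$, and by the chain of definitions of the $\check{\Gamma}$'s (for $k\ge 2$) it also yields $\zeta_{j,k-1}\le \zeta_{k-1}^+$. From the denseness conditions 3 and 4 of Theorem \ref{thm1}, I also have $\kappa_{2s,*}\le \kappa_{2s,*}^+ = 0.3$, $\kappa_{2s,\new}\le \kappa_{2s,\new}^+ = 0.15$, and $\tilde{\kappa}_{2s,k-1}\le\tilde{\kappa}_{2s}^+ = 0.15$. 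Finally, Fact \ref{constants} (invoked in Lemma \ref{expzeta}'s proof) gives $\zeta_*^+=r\zeta\le 10^{-4}$ and $c\zeta\le 10^{-4}$.

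For item 1, I would apply item 4 of Lemma \ref{RIC_bnd} at level $2s$ in place of $s$ to get $\delta_{2s}(\Phi_0)\le \kappa_{2s,*}^2+2\zeta_*\le (\kappa_{2s,*}^+)^2+2\zeta_*^+$. The inequality $\delta_s\le\delta_{2s}$ is the standard monotonicity of RIC in the sparsity parameter. Plugging in numbers gives $\delta_{2s}(\Phi_0)\le 0.09+2\cdot 10^{-4}<0.1$. For item 2, I would apply item 5 of Lemma \ref{RIC_bnd} (again at level $2s$) which yields the displayed upper bound $(\kappa_{2s,*}^+)^2+2\zeta_*^++(\kappa_{2s,\new}^++\tilde{\kappa}_{2s}^+\zeta_{k-1}^++\zeta_*^+)^2$. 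To show this is $<0.1479$, I would use Lemma \ref{expzeta} to bound $\zeta_{k-1}^+\le \zeta_1^+=0.5688$ for $k\ge 2$; a direct computation gives $0.09+2\cdot 10^{-4}+(0.15+0.15\cdot 0.5688+10^{-4})^2\approx 0.1456<0.1479$. The edge case $k=1$ (when $\Phi_{k-1}=\Phi_0$ has no $\Phat_{\new,\cdot}$ block) reduces to item 1, for which the stronger bound $<0.1$ already holds and is of course below $0.1479$. For item 3, the inequality $\phi_{k-1}\le (1-\delta_s(\Phi_{k-1}))^{-1}$ is the standard fact recorded immediately after Definition \ref{defn_delta}, and $1/(1-0.1479)<1.1735=\phi^+$.

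There is no real obstacle here; the whole argument is a short chain of substitutions. The only items requiring minor care are (i) invoking Lemma \ref{RIC_bnd} with $s$ replaced by $2s$ everywhere, so that $\kappa_{2s,\cdot}$ and $\tilde{\kappa}_{2s,\cdot}$ appear rather than their $s$-sparse analogues, and (ii) treating the edge case $k=1$ in item 2 via item 1 to avoid an apparent dependence on $\zeta_0^+=1$ (which would blow the budget). Once these points are dispatched, the bounds fall out by direct arithmetic using the numerical constants fixed in the theorem statement.
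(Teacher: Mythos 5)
Your argument matches the paper's approach exactly: it is a direct unpacking of Lemma~\ref{RIC_bnd} (at sparsity level $2s$), the definition of $\Gamma_{j,k-1}$, and the numerical bound $\zeta_{k-1}^+\le\zeta_1^+=0.5688$ from Lemma~\ref{expzeta}, followed by arithmetic. Your explicit handling of the $k=1$ edge case is a nice touch that the paper's one-line proof leaves implicit; you are right that plugging $\zeta_0^+=1$ into the item-2 formula gives $\approx 0.18$, so for $k=1$ one must instead fall back on item 1, where $\Phat_{\new,0}=[.]$ makes $\Phi_0=I-\Phat_*\Phat_*'$ and the tighter bound $<0.1$ applies.

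One small arithmetic slip in your final sentence: the claim ``$1/(1-0.1479)<1.1735$'' is actually false by a hair, since $0.8521\times 1.1735\approx 0.99994<1$, so $1/(1-0.1479)\approx 1.17357>1.1735=\phi^+$. The fix is easy and already sits in your own computation: you showed $\delta_{2s}(\Phi_{k-1})\le 0.1456$, not just $<0.1479$, so $\phi_{k-1}\le 1/(1-0.1456)\approx 1.1704<1.1735=\phi^+$ (and for $k=1$, $\phi_0\le 1/(1-0.0902)\approx 1.099$, also below $\phi^+$, matching the paper's comment ``$\phi_0\le 1.1111$''). In other words, item 3 should be deduced from the numerical value of the middle expression in item 2, not from the looser rounded constant $0.1479$.
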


\begin{proof}
This follows using Lemma \ref{RIC_bnd}, the definition of $\Gamma_{j,k-1}$, and the bound on $\zeta_{k-1}^+$ from Lemma \ref{expzeta}.
\end{proof}

The following are striaghtforward bounds that will be useful for the proof of Lemma \ref{cslem} and later.

\begin{fact}\label{constants}
Under the assumptions of Theorem \ref{thm1}:
\ben
\item $\zeta \gamma_* \le \frac{\sqrt{\zeta}}{(r_0 + (J-1) c)^{3/2}} \le  \sqrt{\zeta}$
\item $\zeta_*^+ \leq  \frac{10^{-4}}{(r_0 + (J-1) c)} \le 10^{-4}$
\item $ \zeta_*^+ \gamma_*^2 \leq \frac{1}{(r_0 + (J-1)c)^2} \le 1$
\item $ \zeta_*^+ \gamma_* \le \frac{\sqrt{\zeta}}{\sqrt{r_0 + (J-1)c}} \le \sqrt{\zeta}$
\item $\zeta_*^+ f \leq \frac{1.5 \times 10^{-4}}{r_0 + (J-1)c} \le 1.5 \times 10^{-4}$
\item $\zeta_{k-1}^+ \leq 0.6^{k-1} + 0.15 c\zeta$ (from Lemma \ref{expzeta})
\item $\zeta_{k-1}^+ \gamma_{\new,k} \leq (0.6 \cdot 1.2)^{k-1} \gamma_{\new} + 0.15c\zeta \gamma_*  \le 0.72^{k-1}\gamma_{\new} + \frac{0.15\sqrt{\zeta}}{\sqrt{r_0 + (J-1)c}} \le 0.72^{k-1}\gamma_{\new} + 0.15\sqrt{\zeta}$
\item $\zeta_{k-1}^+ \gamma_{\new,k}^2 \leq (0.6 \cdot 1.2^2)^{k-1} \gamma_{\new}^2 + 0.15 c\zeta \gamma_*^2 \le 0.864^{k-1}\gamma_{\new}^2 + \frac{0.15}{{(r_0 + (J-1)c})^2} \le 0.864^{k-1}\gamma_{\new}^2 + 0.15$
\een
\end{fact}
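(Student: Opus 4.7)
The plan is to verify each of the eight bounds directly from the hypotheses of Theorem \ref{thm1}: (i) the definition $\zeta_*^+ = r\zeta$ with $r := r_0 + (J-1)c$, (ii) the three-way bound $\zeta \le \min(10^{-4}/r^2,\ 1.5\times 10^{-4}/(r^2 f),\ 1/(r^3\gamma_*^2))$, (iii) the signal model assumption $\gamma_{\new,k} = \min(1.2^{k-1}\gamma_\new, \gamma_*)$, and (iv) the exponential decay statement of Lemma \ref{expzeta} (item 2), which is exactly item 6. So item 6 is a restatement, and the remaining work is to grind out items 1--5 and 7--8.

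For items 1--5 each bound reduces to a single algebraic step. For item 1, I would square and use the third piece of the $\zeta$-bound: $(\zeta\gamma_*)^2 = \zeta\cdot(\zeta\gamma_*^2)\le \zeta/r^3$, giving $\zeta\gamma_*\le \sqrt\zeta/r^{3/2}$; the outer inequality holds because $r\ge 1$. Items 2 and 5 follow by multiplying the first and second pieces of the $\zeta$-bound by $r$, namely $r\zeta \le 10^{-4}/r$ and $r\zeta f \le 1.5\times 10^{-4}/r$. Item 3 is $\zeta_*^+\gamma_*^2 = r\zeta\gamma_*^2 \le 1/r^2$, which is just $r$ times the third piece of the $\zeta$-bound. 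Item 4 is item 1 multiplied by $r$, yielding $r\zeta\gamma_*\le \sqrt\zeta/\sqrt r$.

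For items 7 and 8, I would substitute the Lemma \ref{expzeta} bound $\zeta_{k-1}^+\le 0.6^{k-1} + 0.15 c\zeta$, expand the product, and exploit the fact that $\gamma_{\new,k}$ has \emph{two} available upper bounds. To the leading $0.6^{k-1}$ term I apply $\gamma_{\new,k}\le 1.2^{k-1}\gamma_\new$, producing $(0.6\cdot 1.2)^{k-1}\gamma_\new = 0.72^{k-1}\gamma_\new$ in item 7 and $(0.6\cdot 1.2^2)^{k-1}\gamma_\new^2 = 0.864^{k-1}\gamma_\new^2$ in item 8. To the remainder $0.15c\zeta$ I apply $\gamma_{\new,k}\le\gamma_*$, leaving residuals $0.15c\zeta\gamma_*$ and $0.15c\zeta\gamma_*^2$. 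These are then absorbed by items 1 and 3 respectively, after noting $c\le r$ (which follows from $r = r_0 + (J-1)c$ together with the implicit $r_0 \ge c$, or for $J\ge 2$ simply from $r\ge c$): thus $c\zeta\gamma_* \le \sqrt\zeta/\sqrt r$ and $c\zeta\gamma_*^2\le 1/r^2$.

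The main (mild) obstacle is purely bookkeeping: choosing which of the two bounds on $\gamma_{\new,k}$ to apply in each term, and noticing that the step $c\le r$ is what allows the ``small residual'' pieces to be folded into bounds of the same form as items 1 and 3. No deeper structural argument is needed.
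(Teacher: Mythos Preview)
Your proposal is correct and is exactly the kind of direct verification the paper intends; the paper itself calls Fact~\ref{constants} ``straightforward'' and provides no proof, so there is no alternative approach to compare against. Your only slight soft spot is the step $c\le r$ needed in items 7--8: for $J\ge 2$ this is automatic from $r=r_0+(J-1)c$, while for $J=1$ it requires $r_0\ge c$, which the paper uses implicitly (and which holds in the intended regime where $c$ is small relative to the initial rank).
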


\begin{proof}[Proof of Lemma \ref{cslem}]
Recall that $X_{j,k-1} \in \Gamma_{j,k-1}$ implies that $\zeta_* \leq \zeta_*^+$ and $\zeta_{k-1}\leq \zeta_{k-1}^+$.
\ben
\item
\begin{enumerate}
\item For $t \in \mathcal{I}_{j,k}$, $\beta_t := (I-\Phat_{(t-1)} {\Phat_{(t-1)} }') L_t = D_{*,k-1} a_{t,*} + D_{\new,k-1} a_{t,\new} $. Thus, using Fact \ref{constants}
\begin{align*}
\|\beta_t\|_2 & \leq \zeta_* \sqrt{r} \gamma_* + \zeta_{k-1} \sqrt{c} \gamma_{\new,k} \\
& \leq \sqrt{\zeta}\sqrt{r} + (0.72^{k-1}\gamma_{\new} + 0.15\sqrt{\zeta})\sqrt{c} \\
& = \sqrt{c} 0.72^{k-1} \gamma_{\new} + \sqrt{\zeta} (\sqrt{r} + 0.15\sqrt{c}) \leq \xi_0.
\end{align*}

\item By Corollary \ref{RICnumbnd}, $\delta_{2s} (\Phi_{k-1}) < 0.15< \sqrt{2}-1$. Given $|T_t| \leq s$, $\|\beta_t\|_2 \leq \xi_0 = \xi$, by Theorem \ref{candes_csbound}, the CS error satisfies
\[
\|\hat{S}_{t,\cs} - S_t\|_2 \leq  \frac{4\sqrt{1+\delta_{2s}(\Phi_{k-1})}}{1-(\sqrt{2}+1)\delta_{2s}(\Phi_{k-1})} \xi_0 < 7 \xi_0.
\]

\item Using the above and the definition of $\rho$, $\|\hat{S}_{t,\cs} - S_t\|_{\infty} \leq 7 \rho \xi_0$. Since $\min_{i\in T_t} |(S_t)_{i}| \geq S_{\min}$ and $(S_t)_{T_t^c} = 0$, $\min_{i\in T_t} |(\hat{S}_{t,cs})_i| \geq S_{\min} - 7 \rho \xi_0$ and $\min_{i\in T_t^c} |(\hat{S}_{t,\cs})_i| \leq 7 \rho \xi_0$. If $\omega < S_{\min} - 7 \rho \xi_0$, then $\hat{T}_t \supseteq T_t$. On the other hand, if $\omega > 7 \rho \xi_0$, then $\hat{T}_t \subseteq T_t$. Since $S_{\min} > 14 \rho \xi_0$ (condition 2 of the theorem) and $\omega$ satisfies $7 \rho \xi_0 \leq \omega \leq S_{\min} -7\rho \xi_0$ (condition 1 of the theorem), then the support of $S_t$ is exactly recovered, i.e. $\hat{T}_t = T_t$.

\item Given $\hat{T}_t = T_t$, the LS estimate of $S_t$ satisfies $(\hat{S}_t)_{T_t} = [(\Phi_{k-1})_{T_t}]^{\dag} y_t =[(\Phi_{k-1})_{T_t}]^{\dag} (\Phi_{k-1} S_t + \Phi_{k-1}L_t)$ and $(\hat{S}_t)_{T_t^c} = 0$ for $t \in \mathcal{I}_{j,k}$. Also,  ${(\Phi_{k-1})_{T_t}}' \Phi_{k-1} = {I_{T_t}}' \Phi_{k-1}$ (this follows since $(\Phi_{k-1})_{T_t} = \Phi_{k-1} I_{T_t}$ and $\Phi_{k-1}'\Phi_{k-1} = \Phi_{k-1}$).  Using this, the LS error $e_t := \hat{S}_t - S_t$ satisfies (\ref{etdef0}).
Thus, using Fact \ref{constants} and condition 2 of the theorem,
\begin{align*}
\|e_t\|_2 & \le \phi^+ (\zeta_*^+ \sqrt{r}\gamma_* + \kappa_{s,k-1} \zeta_{k-1}^+ \sqrt{c}\gamma_{\new,k}) \\
& \le 1.2 \left(\sqrt{r}\sqrt{\zeta}+ \sqrt{c} 0.15 (0.72)^{k-1} + \sqrt{c} 0.023\sqrt{\zeta}\right) \\
& = 0.18 \sqrt{c}0.72^{k-1}\gamma_{\new} + 1.2 \sqrt{\zeta}(\sqrt{r} + 0.023 \sqrt{c}).
\end{align*}
\end{enumerate}

\item The second claim is just a restatement of the first.

\een

\end{proof}

\section{Proof of Lemma \ref{zetak}}

The proof of Lemma \ref{zetak} will use the next two lemmas (\ref{zetakbnd}, and \ref{termbnds}).

\begin{lem}\label{zetakbnd}
If $\lambda_{\min}(A_k) - \|A_{k,\perp}\|_2 - \|\mathcal{H}_k\|_2 >0$, then
\beq \label{zetakbound}
\zeta_k \leq  \frac{\|\mathcal{R}_k\|_2}{\lambda_{\min} (A_k) - \|A_{k,\perp}\|_2 - \|\mathcal{H}_k\|_2} \leq  \frac{\|\mathcal{H}_k\|_2}{\lambda_{\min} (A_k) - \|A_{k,\perp}\|_2 - \|\mathcal{H}_k\|_2}
\eeq
where $\mathcal{R}_k := \mathcal{H}_k E_\new$ and $A_k$, $A_{k,\perp}$, $\mathcal{H}_k$ are defined in Definition \ref{defHk}.
\end{lem}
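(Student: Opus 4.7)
The plan is to reduce $\zeta_k = \|D_{j,\new,k}\|_2$ to a quantity of the form $\|(I - FF')E\|_2$ to which the Davis--Kahan $\sin\theta$ theorem (Theorem \ref{sin_theta}) applies, and then bound the eigenvalue gap appearing in the denominator using Weyl's theorem (Theorem \ref{weyl}).

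First I would remove the subscript $j$ for clarity and exploit the fact that $\hat{P}_{\new,k}$ is obtained by EVD of $\mathcal{A}_k + \mathcal{H}_k = \frac{1}{\alpha}\sum_t \Phi_0 \hat L_t \hat L_t' \Phi_0$, so $\range(\hat{P}_{\new,k}) \subseteq \range(\Phi_0)$, which gives $\Phi_0 \hat{P}_{\new,k} = \hat{P}_{\new,k}$ and, by symmetry, $\hat{P}_{\new,k}'\Phi_0 = \hat{P}_{\new,k}'$. Combining this with $\Phi_k = \Phi_0 - \hat{P}_{\new,k}\hat{P}_{\new,k}'$ (valid because $\hat{P}_{\new,k}'\hat P_* = 0$) yields $\Phi_k = (I - \hat{P}_{\new,k}\hat{P}_{\new,k}')\Phi_0$. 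Using $D_\new = \Phi_0 P_\new = E_\new R_\new$ (QR) one then writes
\[
D_{\new,k} = \Phi_k P_\new = (I - \hat{P}_{\new,k}\hat{P}_{\new,k}') E_\new R_\new,
\]
and since $\|R_\new\|_2 = \|D_\new\|_2 \le \|\Phi_0\|_2\|P_\new\|_2 \le 1$, we get $\zeta_k \le \|(I - \hat{P}_{\new,k}\hat{P}_{\new,k}') E_\new\|_2$.

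Next I would apply Theorem \ref{sin_theta} with $\mathcal{A} = \mathcal{A}_k$, $\mathcal{H} = \mathcal{H}_k$, $E = E_\new$, $E_\perp = E_{\new,\perp}$, and $F = \hat{P}_{\new,k}$ (so $F_\perp = \hat{P}_{\new,k,\perp}$ and $\Lambda, \Lambda_\perp$ become $\Lambda_k, \Lambda_{k,\perp}$, as per Definition \ref{defHk}). The perturbation matrix $\mathcal{R} = (\mathcal{A}_k+\mathcal{H}_k)E_\new - \mathcal{A}_k E_\new = \mathcal{H}_k E_\new =: \mathcal{R}_k$, and $\|\mathcal{R}_k\|_2 \le \|\mathcal{H}_k\|_2 \cdot \|E_\new\|_2 = \|\mathcal{H}_k\|_2$ since $E_\new$ has orthonormal columns. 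This gives
\[
\zeta_k \le \|(I-\hat{P}_{\new,k}\hat{P}_{\new,k}')E_\new\|_2 \le \frac{\|\mathcal{R}_k\|_2}{\lambda_{\min}(A_k) - \lambda_{\max}(\Lambda_{k,\perp})},
\]
provided $\lambda_{\min}(A_k) > \lambda_{\max}(\Lambda_{k,\perp})$.

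Finally I would bound the denominator. Observe that $\mathcal{A}_k$ is positive semidefinite (being a sum of terms of the form $v_t v_t'$), so $A_{k,\perp}$ is p.s.d., and $\|A_{k,\perp}\|_2 = \lambda_{\max}(A_{k,\perp})$. By Weyl's theorem applied to $\mathcal{A}_k + \mathcal{H}_k$,
\[
\lambda_{\max}(\Lambda_{k,\perp}) = \lambda_{c_\new+1}(\mathcal{A}_k + \mathcal{H}_k) \le \lambda_{c_\new+1}(\mathcal{A}_k) + \|\mathcal{H}_k\|_2 = \|A_{k,\perp}\|_2 + \|\mathcal{H}_k\|_2,
\]
where the last equality uses the block-diagonal form of $\mathcal{A}_k$ together with the implicit assumption $\lambda_{\min}(A_k) \ge \|A_{k,\perp}\|_2$ (which holds whenever the hypothesis $\lambda_{\min}(A_k) - \|A_{k,\perp}\|_2 - \|\mathcal{H}_k\|_2 > 0$ of the lemma holds). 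Substituting this into the $\sin\theta$ bound yields the second inequality in (\ref{zetakbound}), and the first inequality is then simply the stronger bound obtained by keeping $\|\mathcal{R}_k\|_2$ instead of $\|\mathcal{H}_k\|_2$ in the numerator.

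The only subtle step is the reduction in the first paragraph, which hinges on $\range(\hat{P}_{\new,k}) \subseteq \range(\Phi_0)$; the rest is a mechanical combination of Theorems \ref{sin_theta} and \ref{weyl} with the bookkeeping from Definition \ref{defHk}. I do not expect any serious obstacle here, since the lemma is essentially a deterministic consequence of the EVD structure -- the heavy work of controlling the three scalar quantities $\lambda_{\min}(A_k)$, $\|A_{k,\perp}\|_2$, and $\|\mathcal{H}_k\|_2$ with high probability is deferred to Lemma \ref{termbnds}.
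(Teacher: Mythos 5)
Your proof is correct and follows essentially the same route as the paper's: reduce $\zeta_k$ to $\|(I-\Phat_{\new,k}\Phat_{\new,k}')E_\new\|_2$ via $\|R_\new\|_2\le 1$, apply the Davis--Kahan $\sin\theta$ theorem, and bound $\lambda_{\max}(\Lambda_{k,\perp})$ through Weyl's theorem using $\lambda_{c_\new+1}(\mathcal{A}_k)=\|A_{k,\perp}\|_2$. The extra detail you supply in the first paragraph (deriving $\Phi_k=(I-\Phat_{\new,k}\Phat_{\new,k}')\Phi_0$ from $\range(\Phat_{\new,k})\subseteq\range(\Phi_0)$) is a justification the paper glosses over but implicitly relies on via Remark~\ref{zetastar}.
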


\begin{proof}
Since $\lambda_{\min}(A_k) - \|A_{k,\perp}\|_2 - \|\mathcal{H}_k\|_2 >0$, so $\lambda_{\min}(A_k) > \|A_{k,\perp}\|_2$. Since $A_k$ is of size $c_\new \times c_\new$ and $\lambda_{\min}(A_k) > \|A_{k,\perp}\|_2$, $\lambda_{c_\new+1} (\mathcal{A}_k) = \|A_{k,\perp}\|_2$. By definition of EVD, and since $\Lambda_k$ is a $c_\new \times c_\new$ matrix, $\lambda_{\max}(\Lambda_{k,\perp}) = \lambda_{c_\new+1}(\mathcal{A}_k + \mathcal{H}_k)$. By Weyl's theorem (Theorem \ref{weyl}), $\lambda_{c_\new+1}(\mathcal{A}_k + \mathcal{H}_k) \leq \lambda_{c_\new+1} (\mathcal{A}_k) + \|\mathcal{H}_k\|_2 = \|A_{k,\perp}\|_2 + \|\mathcal{H}_k\|_2$. Therefore, $\lambda_{\max}(\Lambda_{k,\perp})\leq \|A_{k,\perp}\|_2 + \|\mathcal{H}_k\|_2$ and hence $\lambda_{\min}(A_k) - \lambda_{\max}(\Lambda_{k,\perp})\geq \lambda_{\min}(A_k) - \|A_{k,\perp}\|_2 - \|\mathcal{H}_k\|_2 > 0$. Apply the $\sin \theta$ theorem (Theorem \ref{sin_theta}) with $\lambda_{\min}(A_k) - \lambda_{\max}(\Lambda_{k,\perp})> 0$, we get
$$\|(I- \Phat_{\new,k} {\Phat_{\new,k}}') E_{\new} \|_2 \leq \frac{\|\mathcal{R}_k\|_2}{ \lambda_{\min}(A_k) - \lambda_{\max}(\Lambda_{k,\perp})}
\leq \frac{\|\mathcal{H}_k\|_2}{\lambda_{\min}(A_k) - \|A_{k,\perp}\|_2 - \|\mathcal{H}_k\|_2}$$
Since $\zeta_k  = \|(I- \Phat_{\new,k} {\Phat_{\new,k}}') D_{\new}\|_2 = \|(I- \Phat_{\new,k} {\Phat_{\new,k}}') E_{\new} R_{\new} \|_2 \leq \|(I- \Phat_{\new,k} {\Phat_{\new,k}}') E_{\new}\|_2$, the result follows.
The last inequality follows because  $\|R_\new\|_2 = \|E_\new' D_\new\|_2 \le 1$.
\end{proof}

\begin{lem}[High probability bounds for each of the terms in the $\zeta_k$ bound (\ref{zetakbound})]\label{termbnds}
Assume the conditions of Theorem \ref{thm1} hold.  Also assume that $\mathbf{P}(\Gamma_{j,k-1}^e)>0$ for all $1\leq k \leq K+1$. Then, for all $1 \le k \le K$
\begin{enumerate}

\item \bea
\Pb\left(\lambda_{\min}(A_k)\geq b_{A_k} - \frac{c\zeta\lambda^-}{12}\right)\geq 1-p_a(\alpha,\zeta)
\label{A_k_bound}
\eea
where $b_{A_k}=(1-(\zeta_*^+)^2)(1-\frac{b^2-b^{2\alpha+2}}{\alpha(1-b^2)})\lambda_{\new,k}^--2\frac{b^2\left(1-b^{2\alpha}\right)}{1-b^2}\sqrt{cr}\gamma_*\gamma_{\new,k}\zeta_*^+$ and
\bea
p_a(\alpha,\zeta)= &&c\digamma\left(\alpha,\frac{c\zeta\lambda^-}{96},\frac{(1-b^{2\alpha})(1-b)}{1+b}c\gamma_{\new,k}^2\right) + 2c\digamma\left(\alpha,\frac{c\zeta\lambda^-}{96}, 2\frac{b(1-b)(1-b^{2(\alpha-1)})}{1+b}c\gamma_{\new,k}^2\right) \nn\\&&+ 2c\digamma\left(\alpha,\frac{c\zeta\lambda^-}{96},2\frac{(1-b^{\alpha})^2}{1+b}c\gamma_{\new,k}^2\right) +2c\digamma(\alpha,\frac{c\zeta\lambda^-}{48},4\frac{1+b-2\sqrt{b^{\alpha+1}(1+b-b^{\alpha+1})}}{1+b}\sqrt{cr}\gamma_*\gamma_{\new,k}\zeta_*^+)
\nn\\&&+2c\digamma\left(\alpha,\frac{c\zeta\lambda^-}{96},4\frac{b(1-b^{2\alpha})}{1+b}c\gamma_{\new,k}\gamma_{\new,k-1}\right) +2c\digamma\left(\alpha,\frac{c\zeta\lambda^-}{48}, 8\frac{b(1-b^{2\alpha})}{1+b}\sqrt{cr}\gamma_*\gamma_{\new,k}\zeta_*^+\right)
\eea

\item \bea
&&\Pb\left(\left\|A_{k,\perp}\right\|\leq b_{A_{k,\perp}}+\frac{c\zeta\lambda^-}{24}\Big|X_{j,k-1}\right)\geq1-p_b(\alpha,\zeta)
\label{A_k_perp_bound}
\eea
where $b_{A_{k,\perp}}=\lambda^+(\zeta_*^+)^2+\frac{b^2(1-b^{2\alpha})}{1-b^2}r\gamma_*^2(\zeta_*^+)^2$ and
\bea
p_b(\alpha,\zeta)=&&(n-c)\digamma\left(\alpha,\frac{c\zeta\lambda^-}{96},\frac{(1-b^{2\alpha})(1-b)}{1+b}r\gamma_*^2(\zeta_*^+)^2\right) + 2(n-c)\digamma\left(\alpha,\frac{c\zeta\lambda^-}{96}, 2\frac{b(1-b)(1-b^{2(\alpha-1)})}{(1+b)}r\gamma_*^2(\zeta_*^+)^2\right) \nn\\&&+ 2(n-c)\digamma\left(\alpha,\frac{c\zeta\lambda^-}{96},2\frac{(1-b^{\alpha})^2}{1+b}r\gamma_*^2(\zeta_*^+)^2\right) +2(n-c)\digamma\left(\alpha,\frac{c\zeta\lambda^-}{96},4\frac{b(1-b^{2\alpha})}{1+b}r\gamma_*^2(\zeta_*^+)^2\right)
\eea

\item \bea
\mathbf{P} \left(\|\mathcal{H}_k\|_2 \leq b_{\Hc} + \frac{c\zeta\lambda^-}{8}|X_{j,k-1}\right) \geq 1- p_c(\alpha,\zeta) - p_f(\alpha,\zeta)- \max\{p_d(\alpha,\zeta), p_e(\alpha,\zeta)\}
\label{H_k_bound}
\eea
where

\bea
b_{\Hc} :=&&
(\phi^+)^2((\zeta_*^+)^2+(\kappa_s^+\zeta_{k-1}^+)^2)\lambda^+ +2\frac{b^2(1-b^{2\alpha})}{1-b^2}\sqrt{cr}\gamma_*\gamma_{\new,k}\zeta_*^+\zeta_{k-1}^+\kappa_s^+(\phi^+)^2 + \frac{b^2(1-b^{2\alpha})}{1-b^2}r\gamma_*^2(\phi^+\zeta_*^+)^2 + \nn\\&&\frac{b^2(1-b^{2\alpha})}{1-b^2}c\gamma_{\new,k-1}^2(\phi^+\kappa_s^+\zeta_{k-1}^+)^2 + 2\phi^+\kappa_s^+\frac{(\zeta_*^+)^2}{\sqrt{1-(\zeta_*^+)^2}}\left(\lambda^++\frac{b^2(1-b^{2\alpha})}{1-b^2}r\gamma_{*}^2\right)+ \nn\\ && 2\phi^+\max\{0.15c\zeta,\zeta_{k-1}^+\}\frac{(\kappa_s^+)^2}{\sqrt{1-(\zeta_*^+)^2}}\left(\lambda^+ +\frac{b^2(1-b^{2\alpha})}{1-b^2}c\gamma_{\new,k-1}^2\right) +  4\frac{b^2(1-b^{2\alpha})}{1-b^2}r\gamma_*^2\zeta_*^+\phi^+\frac{\kappa_s^+}{\sqrt{1-(\zeta_*^+)^2}}+ \nn\\ && (\zeta_*^++\phi^+\zeta_*^+)(\zeta_*^++\zeta_*^+\phi^+\frac{\kappa_s^+}{\sqrt{1-(\zeta_*^+)^2}}) (\lambda^++\frac{b^2(1-b^{2\alpha})}{1-b^2}r\gamma_{*}^2)+\zeta_{k-1}^+\phi^+\kappa_s^+(1+\phi^+\zeta_{k-1}^+\frac{(\kappa_s^+)^2}{\sqrt{1-(\zeta_*^+)^2}})\nn\\
&&\left(\lambda^+ + \frac{b^2(1-b^{2\alpha})}{1-b^2}c\gamma_{\new,k-1}^2\right)+ \frac{b^2(1-b^{2\alpha})}{1-b^2}\sqrt{cr}\gamma_*\gamma_{\new,k}\Bigg((\zeta_*^++\phi^+\zeta_*^+)\bigg(1+(\kappa_s^+)^2\zeta_{k-1}^+\frac{\phi^+}{\sqrt{1-(\zeta_*^+)^2}}\bigg) \nn\\&& + \phi^+\kappa_s^+\zeta_{k-1}^+\Big(\zeta_*^++\zeta_*^+\phi^+\frac{\kappa_s^+}{\sqrt{1-(\zeta_*^+)^2}}\Big)\Bigg)
\eea

\end{enumerate}

\end{lem}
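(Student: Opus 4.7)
The three claims share a common template: expand each target using the AR recursion, isolate the part that is measurable in $X_{j,k-1}$, and apply matrix Azuma (Corollaries \ref{azuma_nonzero} and \ref{azuma_rec}) to the remaining fresh-noise martingale conditional on $X_{j,k-1}$. Concretely, for $t\in\mathcal{I}_{j,k}$ I would write
$$a_t = b^{t-\tau_k}a_{\tau_k} + \sum_{i=\tau_k+1}^{t} b^{t-i}\nu_i,\qquad \tau_k:=t_j+(k-1)\alpha-1,$$
so that $a_{\tau_k}$ is measurable in $X_{j,k-1}$ while each $\nu_i$ with $i>\tau_k$ is mean-zero, bounded by Assumption \ref{model_lt}, and independent both of $X_{j,k-1}$ and of $\nu_{\tau_k+1},\dots,\nu_{i-1}$. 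Substituting this into $a_ta_t'$ produces three families of terms: (a) $b^{2(t-\tau_k)}a_{\tau_k}a_{\tau_k}'$, which is entirely $X_{j,k-1}$-measurable and contributes to the conditional mean; (b) diagonal noise products $b^{2(t-i)}\nu_i\nu_i'$ with conditional mean $b^{2(t-i)}\Lambda_{\nu,i}$ and bounded martingale residual; and (c) cross products $\nu_i\nu_\ell'$ with $i\neq\ell$ and mixed $a_{\tau_k}\nu_i'$ pieces, which have zero conditional mean and are martingale increments in the larger time variable.

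For $\lambda_{\min}(A_k)$, the key identity is $E_\new'\Phi_0 P_\new=R_\new$ (since $\Phi_0 P_\new=D_\new=E_\new R_\new$), with $\|R_\new R_\new'\|_2\ge 1-(\zeta_*^+)^2$ by Lemma \ref{hatswitch}. The new-direction part of the conditional mean is then $R_\new\big[\frac{1}{\alpha}\sum_t\E(a_{t,\new}a_{t,\new}'|X_{j,k-1})\big]R_\new'$; evaluating the geometric sum $\frac{1}{\alpha}\sum_{t=\tau_k+1}^{\tau_k+\alpha}(1-b^{2(t-\tau_k)})$ produces the $\big(1-\tfrac{b^2-b^{2\alpha+2}}{\alpha(1-b^2)}\big)\lambda_{\new,k}^-$ factor in $b_{A_k}$, while the cross $P_*$--$P_\new$ contributions are $O(\zeta_*^+\sqrt{cr}\gamma_*\gamma_{\new,k})$ by Cauchy--Schwarz and produce the subtracted term. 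Corollary \ref{azuma_nonzero} handles the centered $\nu_i\nu_i'$ sums (with $b_2-b_1$ taken from $\|\nu_{t,\new}\|_\infty\le(1-b)\gamma_{\new,k}$); Corollary \ref{azuma_rec} handles the cross and $a_{\tau_k}\nu_i'$ pieces; Lemma \ref{rem_prob} then translates each $X$-conditional bound into a bound given $\egam_{j,k-1}$. A union bound over the six resulting deviations (with $\epsilon$ apportioned as $c\zeta\lambda^-/96$ or $c\zeta\lambda^-/48$) gives $p_a(\alpha,\zeta)$.

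The $\|A_{k,\perp}\|$ bound is the easier cousin, because $E_{\new,\perp}'\Phi_0 P_\new=E_{\new,\perp}'E_\new R_\new=0$ kills the $P_\new$ component; only the $\zeta_*^+$-weighted $P_*$ pieces survive, leaving four Azuma terms whose union bound constitutes $p_b$. The $\|\mathcal{H}_k\|$ bound is the heavy one: under $\egam_{j,k-1}$, Lemma \ref{cslem} yields $\hat T_t=T_t$ and the closed form $e_t=I_{T_t}[(\Phi_{k-1})_{T_t}'(\Phi_{k-1})_{T_t}]^{-1}I_{T_t}'\Phi_{k-1}L_t$, so every summand in $\mathcal{H}_k$ (the $e_te_t'$, $L_te_t'$, $e_tL_t'$ pieces, their $E_\new/E_{\new,\perp}$ sandwiches, and the $F$ term from the remark after Definition \ref{defHk}) becomes a bilinear form in $a_t$ whose coefficient matrices are controlled by $\phi_{k-1}\le\phi^+$ (Corollary \ref{RICnumbnd}), $\|\Phi_{k-1}P_*\|\le\zeta_*^+$, $\|\Phi_{k-1}P_\new\|\le\kappa_s^+\zeta_{k-1}^+$, and $\|R_\new^{-1}\|_2\le 1/\sqrt{1-(\zeta_*^+)^2}$. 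Each bilinear form then undergoes the same (a)/(b)/(c) AR decomposition and triggers an Azuma-type probability, which collect via union bound into $p_c+p_f+\max\{p_d,p_e\}$, while the conditional means sum to $b_{\Hc}$.

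The main obstacle is not any single concentration step but the combinatorial explosion of cross-terms once $e_t$ is substituted. Each of the roughly dozen bilinear forms in $\mathcal{H}_k$ requires its own AR expansion and its own tailored Azuma parameters so that the accumulated deviation stays within $c\zeta\lambda^-/8$; the bookkeeping of exactly when each summand picks up $\zeta_*^+$, $\zeta_{k-1}^+$, $\kappa_s^+$, or the geometric factor $\frac{b^2(1-b^{2\alpha})}{1-b^2}$ is what produces the explicit form of $b_{\Hc}$. Conceptually, however, the AR correlations contribute only the extra deterministic $b$-factors visible in the statement: conditioning on $X_{j,k-1}$ collapses all same-interval correlations back into independent $\nu_i$ increments, so matrix Azuma applies just as in the i.i.d.\ case of \cite{rrpcp_perf}, only with modified mean and variance proxies.
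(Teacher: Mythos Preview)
Your outline matches the paper's proof essentially term for term: the same AR expansion of $a_t$ into the $X_{j,k-1}$-measurable seed $a_{\tau_k}$ plus fresh $\nu_i$'s, the same split of $\frac{1}{\alpha}\sum a_ta_t'$ into diagonal, off-diagonal, and seed-cross pieces (the paper calls these $Z_{1,i}$--$Z_{4,i}$) handled respectively by Corollaries \ref{azuma_nonzero} and \ref{azuma_rec}, the same use of $E_{\new,\perp}'D_\new=0$ for item 2, and for item 3 the same block decomposition $\|\mathcal{H}_k\|\le\|\frac{1}{\alpha}\sum e_te_t'\|+\max(\|T2\|,\|T4\|)+\|B_k\|$ after invoking the closed form for $e_t$ from Lemma \ref{cslem}. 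One small slip to correct before you carry out the bookkeeping: the bound you wrote as $\|\Phi_{k-1}P_\new\|\le\kappa_s^+\zeta_{k-1}^+$ should be $\|I_{T_t}'\Phi_{k-1}P_\new\|=\|I_{T_t}'D_{\new,k-1}\|\le\kappa_s^+\zeta_{k-1}^+$, since the factor $\kappa_s^+$ is exactly the denseness coefficient that enters only through the support restriction $I_{T_t}$ (without it one has merely $\|\Phi_{k-1}P_\new\|=\zeta_{k-1}$).
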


\begin{proof}[Proof of Lemma \ref{zetak}]
Lemma \ref{zetak} now follows by combining Lemmas \ref{zetakbnd} and \ref{termbnds} and defining
\begin{equation}
p_k(\alpha,\zeta) := 1 - p_{a}(\alpha,\zeta) - p_b(\alpha,\zeta) - p_c(\alpha,\zeta) - p_f(\alpha,\zeta)- \max\{p_d(\alpha,\zeta), p_e(\alpha,\zeta)\}. \label{pk}
\end{equation}
\end{proof}

As above, we will start with some simple facts that will be used to prove Lemma \ref{termbnds}.

For convenience, we will use $\frac{1}{\alpha}\sum_t$ to denote $\frac{1}{\alpha} \sum_{t \in \mathcal{I}_{j,k}}$

\begin{fact}\label{keyfacts}
Under the assumptions of Theorem \ref{thm1} the following are true.

\begin{enumerate}

\item The matrices $D_{\new}$, $R_{\new}$, $E_{\new}$, $D_{*}, D_{\new,k-1}$, $\Phi_{k-1}$ are functions of the r.v. $X_{j,k-1}$.  All terms that we bound for the first two claims of the lemma are of the form $\frac{1}{\alpha} \sum_{t \in \mathcal{I}_{j,k}} Z_t$ where $Z_t= f_1(X_{j,k-1}) Y_t f_2(X_{j,k-1})$, $Y_t$ is a sub-matrix of $a_t a_t'$ and $f_1(.)$ and $f_2(.)$ are functions of $X_{j,k-1}$.

\item  $X_{j,k-1}$ is independent of any $a_{t}$ for $t \in  \mathcal{I}_{j,k}$ , and hence the same is true for the matrices  $D_{\new}$, $R_{\new}$, $E_{\new}$, $D_{*}, D_{\new,k-1}$, $\Phi_{k-1}$. Also, $a_{t}$'s for different $t \in \mathcal{I}_{j,k}$ are mutually independent. Thus, conditioned on  $X_{j,k-1}$, the $Z_t$'s defined above are mutually independent.
\label{X_at_indep} \label{zt_indep}

\item All the terms that we bound for the third claim contain $e_t$. Using the second claim of Lemma \ref{cslem}, conditioned on $X_{j,k-1}$, $e_t$ satisfies (\ref{etdef0}) w.p. one whenever $X_{j,k-1} \in \Gamma_{j,k-1}$. Conditioned on $X_{j,k-1}$, all these terms are also of the form $\frac{1}{\alpha} \sum_{t \in \mathcal{I}_{j,k}} Z_t$ with $Z_t$ as defined above, whenever $X_{j,k-1} \in \Gamma_{j,k-1}$. Thus, conditioned on  $X_{j,k-1}$, the $Z_t$'s for these terms are mutually independent, whenever $X_{j,k-1} \in \Gamma_{j,k-1}$.

\item It is easy to see that $\|\Phi_{k-1} P_*\|_2 \le \zeta_*$, $\zeta_0 = \|D_\new\|_2 \le 1$,  $\Phi_0 D_\new = \Phi_0'D_\new = D_\new$,  $\|R_\new\| \le  1$, $\|(R_\new)^{-1}\| \le 1/\sqrt{1 - \zeta_*^2}$, ${E_{\new,\perp}}' D_\new = 0$, and $\|{E_{\new}}' \Phi_0 e_t\| =\|(R_\new')^{-1} D_\new' \Phi_0 e_t\| = \|(R_\new)^{-1} D_\new' e_t\| \le \|(R_\new')^{-1} D_\new' I_{T_t}\| \|e_t\| \le \frac{\kappa_s^+}{\sqrt{1 - \zeta_*^2}}\|e_t\|$. The bounds on $\|R_\new\|$ and $\|(R_\new)^{-1}\|$ follow using Lemma \ref{lemma0} and the fact that $\sigma_{i}(R_\new) = \sigma_{i}(D_\new)$.
 \label{rnew}

\item $X_{j,k-1} \in \Gamma_{j,k-1}$ implies that
\begin{enumerate}
\item $\zeta_{*} \le \zeta_*^+$ (see Remark \ref{Gamma_rem})
\item $\zeta_{k-1}\leq \zeta_{k-1}^+ \leq 0.6^{k-1} + 0.15c\zeta$  (This follows by the definition of $\Gamma_{j,k-1}$ and Lemma \ref{expzeta}.)
\end{enumerate}
\label{leqzeta}

\item Item \ref{leqzeta} implies that
\ben
\item $\lambda_{\min}(R_{\new}{R_{\new}}') \geq 1-(\zeta_*^+)^2$. This follows from Lemma \ref{lemma0} and the fact that $\sigma_{\min}(R_\new) = \sigma_{\min}(D_\new)$.
\item $\|{I_{T_t}}' \Phi_{k-1} P_* \|_2 \leq \|\Phi_{k-1} P_* \|_2 \leq \zeta_* \leq \zeta_*^+$, $\|{I_{T_t}}'D_{\new,k-1}\|_2 \leq \kappa_{s,k-1} \zeta_{k-1} \leq \kappa_s^+ \zeta_{k-1}^+$.
\een
\label{X_in_Gamma}

\item By Weyl's theorem (Theorem \ref{weyl}),  for a sequence of matrices $B_t$, $\lambda_{\min}(\sum_t B_t) \ge \sum_t \lambda_{\min}(B_t)$ and $\lambda_{\max}(\sum_t B_t) \le \sum_t \lambda_{\max}(B_t)$.

\end{enumerate}

\end{fact}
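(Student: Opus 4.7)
The plan is to verify the seven items of Fact \ref{keyfacts} in order, since each is a short bookkeeping statement about measurability, independence, or norm arithmetic that follows from definitions already in place. The bulk of the work is just unpacking the construction of the algorithm and invoking Lemma \ref{lemma0}, Lemma \ref{expzeta}, and the definition of $\egam_{j,k-1}$; the single delicate point is the independence assertion in item 2, where the AR structure forces us to interpret ``independent of $a_t$'' through an unraveling of the recursion.

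For items 1 and 3 I would argue by induction along the algorithm. The initial basis $\Phat_0$ is built from $L_1,\dots,L_{t_\train}$, which by Signal Model \ref{model_lt} are measurable with respect to the initial $a_0$ together with $\nu_1,\dots,\nu_{t_\train}$. At every subsequent step of Algorithm \ref{reprocs}, $\Phi_{(t)}$ depends only on previously committed estimates, and $\hat S_t$, $\hat L_t=M_t-\hat S_t$ depend only on $M_t$ and $\Phi_{(t)}$. Walking this recursion up to $t=t_j+(k-1)\alpha-1$ shows that $\Phat_{j-1}$, $\Phat_{j,\new,k-1}$, and hence $\Phi_{j,0}=I-\Phat_{j-1}\Phat_{j-1}'$, $\Phi_{k-1}$, $D_*$, $D_{\new,k-1}$, $D_\new$, $E_\new$, $R_\new$ are all measurable functions of $X_{j,k-1}$ (and of the deterministic quantities $S_t$, $P_j$, $a_0$). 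Reading off the definitions of $A_k$, $A_{k,\perp}$, $H_k$, $H_{k,\perp}$, $B_k$ in Definition \ref{defHk} and expanding $L_tL_t'=P_{(t)}a_ta_t'P_{(t)}'$ gives the claimed factorization $Z_t=f_1(X_{j,k-1})\,Y_t\,f_2(X_{j,k-1})$ with $Y_t$ a submatrix of $a_ta_t'$. For item 3, once $X_{j,k-1}\in\Gamma_{j,k-1}$ Lemma \ref{cslem} yields $\hat T_t=T_t$, so that $e_t$ is given by \eqref{etdef0}, which exhibits $e_t$ as a product of an $X_{j,k-1}$-measurable factor and a factor linear in $L_t=P_{(t)}a_t$; this puts every $H$-term into the same $Z_t$-form.

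Item 2 is the one point that deserves care: because $a_t=ba_{t-1}+\nu_t$, the random vector $a_t$ for $t\in\mathcal I_{j,k}$ genuinely depends on $\nu_s$ for $s\le t_j+(k-1)\alpha-1$, so it is \emph{not} strictly independent of $X_{j,k-1}$. The intended reading, which is what is actually used in Lemma \ref{termbnds} (cf.\ the outline in Sec.\ \ref{outline}), is that after conditioning on $X_{j,k-1}$ the initial state $a_{t_j+(k-1)\alpha-1}$ becomes a constant, and the only remaining randomness in $a_t$ for $t\in\mathcal I_{j,k}$ is the fresh noise $\nu_s$ with $t_j+(k-1)\alpha\le s\le t$; these are, by Signal Model \ref{model_lt}, mutually independent and jointly independent of $X_{j,k-1}$. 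I would therefore justify item 2 by writing $a_t=b^{\,t-(t_j+(k-1)\alpha-1)}a_{t_j+(k-1)\alpha-1}+\sum_{s=t_j+(k-1)\alpha}^{t}b^{t-s}\nu_s$ and noting that, conditional on $X_{j,k-1}$, the $Z_t$'s are deterministic functions of disjoint (and independent) blocks of the $\{\nu_s\}$ sequence. This is precisely the conditional independence needed for the matrix Azuma corollaries to apply. I expect this reconciliation to be the main obstacle; the rest is algebra.

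Items 4, 5, 6, 7 are routine. For item 4, $\Phi_0=\Phi_0'=\Phi_0^2$ as an orthogonal projector gives the $\Phi_0 D_\new=D_\new$ identities; $\|\Phi_{k-1}P_*\|_2\le\zeta_*$ follows from the definition of $\zeta_*$ together with $\Phat_{j,\new,k-1}'\Phat_{j-1}=0$; $\zeta_0=\|D_\new\|_2\le\|\Phi_0\|\,\|P_\new\|\le 1$; the QR step $D_\new=E_\new R_\new$ with $E_\new$ orthonormal gives $\sigma_i(R_\new)=\sigma_i(D_\new)$, and item 4 of Lemma \ref{lemma0} combined with $P_\new'P_*=0$ yields $\sqrt{1-\zeta_*^2}\le\sigma_i(D_\new)\le 1$, hence the bounds on $\|R_\new\|$ and $\|R_\new^{-1}\|$; $E_{\new,\perp}'D_\new=0$ by construction of $E_{\new,\perp}$; and the chain for $\|E_\new'\Phi_0 e_t\|$ substitutes $E_\new'\Phi_0 D_\new=R_\new$ (so that $E_\new'\Phi_0=(R_\new')^{-1}D_\new'\Phi_0=(R_\new')^{-1}D_\new'$), uses $e_t=I_{T_t}(\cdot)$ from \eqref{etdef0}, and finally bounds $\|D_\new'I_{T_t}\|_2\le\kappa_s(D_\new)\le\kappa_s^+$ by Definition \ref{subspace_kappa}. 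Item 5 is immediate from the definition of $\Gamma_{j,k-1}$, Remark \ref{Gamma_rem}, and the exponential bound of Lemma \ref{expzeta}(2). Item 6 then plugs $\zeta_*\le\zeta_*^+$ and $\zeta_{k-1}\le\zeta_{k-1}^+$ into the bounds of item 4 and into the denseness definition $\|I_{T_t}'D_{\new,k-1}\|_2\le\kappa_{s,k-1}\zeta_{k-1}$. Item 7 is a one-line induction on Weyl's inequality (Theorem \ref{weyl}). Together these complete the verification of Fact \ref{keyfacts}.
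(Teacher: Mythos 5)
Your treatment of items 1 and 3--7 matches the paper's own (essentially inline) justifications: the measurability of $D_{\new}$, $R_{\new}$, $E_{\new}$, $D_*$, $D_{\new,k-1}$, $\Phi_{k-1}$ with respect to $X_{j,k-1}$, the norm bounds via Lemma \ref{lemma0} and the QR decomposition, the use of Lemma \ref{cslem} for $e_t$, the definition of $\Gamma_{j,k-1}$ together with Lemma \ref{expzeta}, and Weyl's theorem are all the right ingredients. The genuine gap is in your repair of item 2. You correctly observe that, under the AR model, $a_t$ for $t\in\mathcal{I}_{j,k}$ is \emph{not} independent of $X_{j,k-1}$, but your proposed fix --- that after conditioning on $X_{j,k-1}$ the $Z_t$'s become ``deterministic functions of disjoint (and independent) blocks of the $\{\nu_s\}$ sequence'' --- is false whenever $b\neq 0$. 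Writing $a_t=b^{\,t-(t_j+(k-1)\alpha-1)}a_{t_j+(k-1)\alpha-1}+\sum_{s=t_j+(k-1)\alpha}^{t}b^{t-s}\nu_s$, two times $t<t'$ in the same interval $\mathcal{I}_{j,k}$ share all the innovations $\nu_s$ with $t_j+(k-1)\alpha\le s\le t$ (and the common, not fully $X_{j,k-1}$-determined, initial state through $a_0$), so the conditional mutual independence of the $Z_t$'s that you assert does not hold. The paper itself is explicit about this in Section \ref{outline}: ``However, here they are not'' conditionally independent given $X_{j,k-1}$, and this is precisely the departure from Qiu et al.\ that motivates the present analysis. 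Your closing remark that this conditional independence ``is precisely what the matrix Azuma corollaries need'' is also a misreading: Corollaries \ref{azuma_rec} and \ref{azuma_nonzero} require only an adapted sequence with conditionally zero mean and almost-sure bounds, not independence.

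The route the paper actually takes (in the proof of Lemma \ref{termbnds}) is to unravel the AR recursion inside each sum $\sum_{t\in\mathcal{I}_{j,k}} a_t a_t'$ (and the cross terms), regroup it into sequences such as $Z_{1,i}$, $Z_{2,i}$, $Z_{3,i}$, $Z_{4,i}$ that are adapted with $\E_{i-1}(\cdot\,|X_{j,k-1})=0$ (via Lemma \ref{ind_expec}), plus a residual $d_{-1}d_{-1}'$ term bounded deterministically, and then apply the Azuma corollaries to each piece. So if you want to salvage item 2 you should restate it in that weaker form (conditioned on $X_{j,k-1}$, the relevant summands can be decomposed into adapted, conditionally zero-mean sequences driven by $\{\nu_s\}_{s\ge t_j+(k-1)\alpha}$), rather than claiming conditional independence of the $Z_t$'s; as written, that step of your argument would fail, and the same caveat applies to the independence sentence at the end of your item 3.
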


\begin{proof}[Proof of Lemma \ref{termbnds}]
In this proof, we frequently refer to items from Sec. \ref{keyfacts} and the following bounds (\ref{err_bound}).
\bea
&&\|E_{\new}'\Phi_0\|_2 \leq 1\nn\\
&&\|E_{\new}'\Phi_0I_{T_t}\|_2 = \|\left(R_{\new}'\right)^{-1}D_{\new}'\Phi_0I_{T_t}\|_2 = \|\left(R_{\new}'\right)^{-1}D_{\new}'I_{T_t}\|_2 \leq \frac{\zeta_0\kappa_s^+}{\sqrt{1-\zeta_*^2}} \leq \frac{\kappa_s^+}{\sqrt{1-\left(\zeta_*^+\right)^2}}\nn\\
&&\|\Phi_0P_*\|_2=\zeta_*\leq \zeta_*^+\nn\\
&&\|\Phi_{k-1}P_*\|_2\leq \|\Phi_{0}P_*\|_2 \leq \zeta_*^+\nn\\
&&\|I_{T_t}'D_{\new,k-1}\|\leq\kappa_s^+\zeta_{k-1}^+
\label{err_bound}
\eea

\subsection{Bounds for $\sum_{t\in \Ic_{j,k}}a_{t,\new}a_{t,\new}', \sum_{t\in \Ic_{j,k}}a_{t,\new}a_{t,*}', \sum_{t\in \Ic_{j,k}}a_{t,*}a_{t,*}'$}

For calculation simplicity, let $\{d_t\}_{0\leq t\leq \alpha-1}$ denote $\{a_{t,\new}\}_{t\in \Ic_{j,k}}$, $\{\mu_t\}_{0\leq t\leq \alpha-1}$ denote $\{\nu_{t,\new}\}_{t\in \Ic_{j,k}}$, $\{m_t\}_{0\leq t\leq \alpha-1}$ denote $\{a_{t,*}\}_{t\in \Ic_{j,k}}$, and $\{\omega_t\}_{0\leq t\leq \alpha-1}$ denote $\{\nu_{t,*}\}_{t\in \Ic_{j,k}}$.
\bean
&&\sum_{t=0}^{\alpha-1}d_{t}d_{t}'\nn\\
=&&\sum_{t=0}^{\alpha-1}(bd_{t-1}+\mu_{t})(bd_{t-1}+\mu_{t})'\nn\\
=&&\sum_{t=0}^{\alpha-1}(b^{t+1}d_{-1}+ \sum_{i=0}^{t}b^{t-i}\mu_{i})(b^{t+1}d_{-1}+ \sum_{i=0}^{t}b^{t-i}\mu_{i})'\nn\\
=&&\sum_{t=0}^{\alpha-1}(b^{2t}\sum_{i=0}^{t}b^{-2i}\mu_{i}\mu_{i}')+\sum_{t=0}^{\alpha-1}(b^{2t}\sum_{0\le i_1, i_2 \le t, i_1\neq i_2}b^{-i_1-i_2}\mu_{i_1}\mu_{i_2}')+\sum_{t=0}^{\alpha-1}\sum_{i=0}^{t}b^{2t-i+1}d_{-1}\mu_{i}' +\nn\\
 && \sum_{t=0}^{\alpha-1}\sum_{i=0}^{t}b^{2t-i+1}\mu_{i}d_{-1}'+ \sum_{t=0}^{\alpha-1}b^{2(t+1)}d_{-1}d_{-1}'\nn\\
=&&\sum_{i=0}^{\alpha-1}\sum_{t=i }^{\alpha-1 }(b^{2t}b^{-2i}\mu_{i}\mu_{i}') + \sum_{t=0}^{\alpha-1}b^{2t} (\sum_{i_1=0}^{t}\sum_{i_2=0}^{i_1-1} b^{-i_1-i_2}\mu_{i_1}\mu_{i_2}' +\sum_{i_1=0}^{ t}\sum_{i_2= i_1+1}^{t} b^{-i_1-i_2}\mu_{i_1}\mu_{i_2}')+ \nn\\
&&\sum_{i=0}^{\alpha-1}\sum_{t=i}^{\alpha-1}b^{2t-i+1}d_{-1}\mu_{i}' + \sum_{i=0}^{\alpha-1}\sum_{t=i}^{\alpha-1}b^{2t-i+1}\mu_{i}d_{-1}'+ \frac{b^2(1-b^{2\alpha})}{1-b^2}d_{-1}d_{-1}'\nn\\
=&&\sum_{i=0}^{\alpha-1}\frac{(1-b^{2(\alpha-i)})}{1-b^2}\mu_{i}\mu_{i}'+\sum_{i_1=0}^{\alpha-1}\sum_{t=i_1 }^{\alpha -1 }b^{2t}(\sum_{i_2=t_j+(k-1)}^{i_1-1} b^{-i_1-i_2}\mu_{i_1}\mu_{i_2}'+\sum_{i_2= i_1+1}^{t} b^{-i_1-i_2}\mu_{i_1}\mu_{i_2}')+\nn\\
&&\sum_{i=0}^{\alpha-1}\frac{b^{i+1}(1-b^{2(\alpha-i)})}{1-b^2}(\mu_{i}d_{-1}' +d_{-1}\mu_{i}')+ \frac{b^2(1-b^{2\alpha})}{1-b^2}d_{-1}d_{-1}'\nn\\
=&&\sum_{i=0}^{\alpha-1}\frac{(1-b^{2(\alpha-i)})}{1-b^2}\mu_{i}\mu_{i}'+\sum_{i_1=0}^{\alpha-1}\frac{(1-b^{2(\alpha-i_1)})}{1-b^2} \sum_{i_2=0}^{i_1-1} b^{i_1-i_2}\mu_{i_1}\mu_{i_2}' + \sum_{i_1=-1}^{\alpha-1}\sum_{i_2=i_1+1}^{\alpha-1 }\sum_{t=i_2}^{\alpha-1} b^{2t}b^{-i_1-i_2}\mu_{i_1}\mu_{i_2}'\nn\\ &&+\sum_{i=0}^{\alpha-1}\frac{b^{i+1}(1-b^{2(\alpha-i)})}{1-b^2}(\mu_{i}d_{-1}' + d_{-1}\mu_{i}')+ \frac{b^2(1-b^{2\alpha})}{1-b^2}d_{-1}d_{-1}'\nn\\
=&&\sum_{i=0}^{\alpha-1}\frac{(1-b^{2(\alpha-i)})}{1-b^2}\mu_{i}\mu_{i}' + \sum_{i_1=0}^{\alpha-1}\frac{(1-b^{2(\alpha-i_1)})}{1-b^2}\sum_{i_2=0}^{i_1-1} b^{i_1-i_2}\mu_{i_1}\mu_{i_2}'+ \sum_{i_1=0}^{\alpha-1}\sum_{i_2=i_1+1}^{\alpha-1 }\frac{(1-b^{2(\alpha-i_2)})}{1-b^2}b^{-i_1+i_2}\mu_{i_1}\mu_{i_2}'\nn\\ && +\sum_{i=0}^{\alpha-1}\frac{b^{i+1}(1-b^{2(\alpha-i)})}{1-b^2}(\mu_{i}d_{-1}' + d_{-1}\mu_{i}')+ \frac{b^2(1-b^{2\alpha})}{1-b^2}d_{-1}d_{-1}'\nn\\
=&&\sum_{i=0}^{\alpha-1}\frac{(1-b^{2(\alpha-i)})}{1-b^2}\mu_{i}\mu_{i}' + \sum_{i_1=0}^{\alpha-1}\frac{(1-b^{2(\alpha-i_1)})}{1-b^2}\sum_{i_2=0}^{i_1-1} b^{i_1-i_2}\mu_{i_1}\mu_{i_2}'+ \sum_{i_1=0}^{\alpha-1}\sum_{i_2=\alpha-i_1}^{\alpha-1 }\frac{(1-b^{2(\alpha-i_2)})}{1-b^2}b^{i_1+i_2-\alpha+1}\mu_{\alpha-i_1-1}\mu_{i_2}'\nn\\ && +\sum_{i=0}^{\alpha-1}\frac{b^{i+1}(1-b^{2(\alpha-i)})}{1-b^2}(\mu_{i}d_{-1}' + d_{-1}\mu_{i}')+ \frac{b^2(1-b^{2\alpha})}{1-b^2}d_{-1}d_{-1}'
\label{d_t_new}
\eean

Let
\bea
&&Z_{1,i}=\frac{(1-b^{2(\alpha-i)})}{1-b^2}\mu_{i}\mu_{i}', \nn\\
&&Z_{2,i}=\sum_{i_2=0}^{i-1}\frac{(1-b^{2(\alpha-i)})}{1-b^2}b^{i-i_2}\mu_{i}\mu_{i_2}', i\geq 1  \nn\\
&&Z_{3,i}=\sum_{i_2=\alpha-i}^{\alpha-1 }\frac{(1-b^{2(\alpha-i_2)})}{1-b^2}b^{i+i_2-\alpha+1}\mu_{\alpha-i-1}\mu_{i_2}',\nn\\
&&Z_{4,i}=\frac{b^{i+1}(1-b^{2(\alpha-i)})}{1-b^2}\mu_{i}d_{-1}'
\eea
$Z_{3,i}=\sum_{i_2=\alpha-i}^{\alpha-1 }\frac{(1-b^{2(\alpha-i_2)})}{1-b^2}b^{i+i_2-\alpha+1}\mu_{\alpha-i-1}\mu_{i_2}'=\mu_{\alpha-i-1}h_i(\mu_{\alpha-i}, \mu_{\alpha-i+1}, \cdots, \mu_{\alpha-1})$, thus
\bea
\E_{i-1}(Z_{3,i})&&=\E(\mu_{\alpha-i-1}h_i(\mu_{\alpha-i}, \mu_{\alpha-i+1}, \cdots, \mu_{\alpha-1})|g_i(\mu_{\alpha-i}, \mu_{\alpha-i+1}, \cdots, \mu_{\alpha-1}))\nn\\
&&=\E(\mu_{\alpha-i-1})\E(h_i(\mu_{\alpha-i}, \mu_{\alpha-i+1}, \cdots, \mu_{\alpha-1})|g_i(\mu_{\alpha-i}, \mu_{\alpha-i+1}, \cdots, \mu_{\alpha-1}))\nn\\
&&=0
\eea

Then we can split $\{d_{t}d_{t}'\}_{0\leq t\leq \alpha-1}$ into four adapted sequences, i.e., $\{Z_{1,i}\}_{0\leq i\leq \alpha-1}, \{Z_{2,i}\}_{0\leq i\leq \alpha-1}, \{Z_{3,i}\}_{0\leq i\leq \alpha-1}$ and $\{Z_{4,i}\}_{0\leq i\leq \alpha-1}.$ \\
According to Assumption \ref{model_lt}(3),
\bea
&&\lambda_{\min}\left(\E_{i-1}\left(\frac{1}{\alpha}\sum_{i=0}^{\alpha-1}Z_{1,i}|X_{j,k-1}\right)\right)\nn\\
\geq&&\frac{1}{\alpha}\sum_{i=0}^{\alpha-1}\frac{(1-b^{2(\alpha-i)})}{1-b^2}\lambda_{\min}(\E(\mu_{i}\mu_{i}'))\nn\\
\geq&&\frac{\alpha(1-b^2)-b^2+b^{2\alpha+2}}{\alpha(1-b^2)}\lambda_{\new,k}^-\nn\\
=&&(1-\frac{b^2-b^{2\alpha+2}}{\alpha(1-b^2)})\lambda_{\new,k}^-\nn
\eea
and
\bea
&&\lambda_{\max}\left(\E_{i-1}\left(\frac{1}{\alpha}\sum_{i=0}^{\alpha-1}Z_{1,i}|X_{j,k-1}\right)\right)\nn\\
\leq&&\frac{1}{\alpha}\sum_{i=0}^{\alpha-1}\frac{(1-b^{2(\alpha-i)})}{1-b^2}\lambda_{\max}(\E(\mu_{i}\mu_{i}'))\nn\\
\leq&&\left(\frac{\alpha(1-b^2)-b^2+b^{2\alpha+2}}{\alpha(1-b^2)}\lambda_{\new,k}^+\right)\nn\\
\leq&&\lambda_{\new,k}^+\nn
\eea
and by Lemma \ref{ind_expec},
\bea
\E_{i-1} (Z_{2,i}|X_{j,k-1})=&&\E_{i-1}\left(\sum_{i_2=0}^{i-1}\frac{(1-b^{2(\alpha-i)})}{1-b^2}b^{i-i_2}\mu_{i}\mu_{i_2}'\right)\nn\\
=&&\E\left(\mu_{i}\sum_{i_2=0}^{i-1}\frac{(1-b^{2(\alpha-i)})}{1-b^2}b^{i-i_2}\mu_{i_2}'|Z_{2,0},\cdots,Z_{2,i-1}\right)\nn\\
=&&\E\left(\mu_{i}h(\mu_0,\mu_1,\cdots,\mu_{i-1})|g(\mu_0,\mu_1,\cdots,\mu_{i-1})\right)\nn\\
=&&\E(\mu_{i})\E\left(h(\mu_0,\mu_1,\cdots,\mu_{i-1})|g(\mu_0,\mu_1,\cdots,\mu_{i-1})\right)\nn\\
=&&0,
\label{exp_0}
\eea
Similarly, we have $\E_{i-1} (Z_{3,i}|X_{j,k-1})=0, \E_{i-1}(Z_{4,i}|X_{j,k-1})=0.$

$$0\preceq Z_{1,i}\preceq\frac{(1-b^{2(\alpha-i)})}{1-b^2} c(1-b)^2\gamma_{\new,k}^2I\preceq \frac{(1-b^{2\alpha})(1-b)}{1+b}c\gamma_{\new,k}^2I,$$
\bea
\|Z_{2,i_1}\|&&\leq\sum_{i_2=0}^{i_1-1}\frac{(1-b^{2(\alpha-i_1)})}{1-b^2}b^{i_1-i_2}c(1-b)^2\gamma_{\new,k}^2\nn\\
&&\leq \frac{b(1-b^{2(\alpha-i_1)})(1-b^{i_1})}{1+b}c\gamma_{\new,k}^2\nn\\
&&\leq \frac{b(1-b)(1-b^{2(\alpha-1)})}{1+b}c\gamma_{\new,k}^2
\eea

\bea
\|Z_{3,i}\|&&\leq\sum_{i_2=\alpha-i}^{\alpha-1 }\frac{(1-b^{2(\alpha-i_2)})}{1-b^2}b^{i+i_2-\alpha+1}c(1-b)^2\gamma_{\new,k}^2\nn\\
&&\leq \frac{(1-b^{i+1})^2}{1+b}c\gamma_{\new,k}^2\nn\\
&&\leq \frac{(1-b^{\alpha})^2}{1+b}c\gamma_{\new,k}^2
\eea
\bea
\|Z_{4,i}\|&&\leq\frac{b^{i+1}(1-b^{2(\alpha-i)})}{1-b^2}c(1-b)\gamma_{\new,k}\gamma_{\new,k-1}\nn\\
&&\leq \frac{b(1-b^{2\alpha})}{1+b}c\gamma_{\new,k}\gamma_{\new,k-1}
\eea

\bea
0\preceq\frac{b^2(1-b^{2\alpha})}{1-b^2}d_{-1}d_{-1}'\preceq \frac{b^2(1-b^{2\alpha})}{1-b^2}c\gamma_{\new,k-1}^2I
\eea

Given $X_{j,k-1}$, $\{Z_{1,i}\}_{i\in \Ic_{j,k}}$ is an adapted sequence of Hermitian matrices, thus by Corollary \ref{azuma_nonzero}, for all $X_{j,k-1}\in \Gamma_{j,k-1}$, we have

\begin{align}
\Pb\left(\lambda_{\min}\left(\frac{1}{\alpha}\sum_{i=0}^{\alpha-1}Z_{1,i}\right)\geq
\left(1-\frac{b^2-b^{2\alpha+2}}{\alpha(1-b^2)}\right)\lambda_{\new,k}^--\epsilon\Big|X_{j,k-1}\right) \geq
 1-c\digamma\left(\alpha,\epsilon,\frac{(1-b^{2\alpha})(1-b)}{1+b}c\gamma_{\new,k}^2\right)
\label{z_1_bound}
\end{align}
Similarly, given $X_{j,k-1}$, $\{Z_{2,i_1}\}_{i_1\in \Ic_{j,k}}, \{Z_{3,i_1}\}_{i_1\in \Ic_{j,k}}, \{Z_{4,i}\}_{i\in \Ic_{j,k}}$ are adapted sequences of matrices, by Corollary \ref{azuma_rec}, $\text{ for all } X_{j,k-1}\in \Gamma_{j,k-1}$ we have
\bea
\Pb\left(\left\|\frac{1}{\alpha}\sum_{i_1=0}^{\alpha-1}Z_{2,i_1}\right\|\leq\epsilon\Big|X_{j,k-1}\right)\geq 1-2c\digamma\left(\alpha,\epsilon, 2\frac{b(1-b)(1-b^{2(\alpha-1)})}{1+b}c\gamma_{\new,k}^2\right)
\label{z_2_bound}
\eea
\bea
\Pb\left(\left\|\frac{1}{\alpha}\sum_{i_1=0}^{\alpha-1}Z_{3,i_1}\right\|\leq\epsilon\Big|X_{j,k-1}\right)\geq 1-2c\digamma\left(\alpha,\epsilon,2\frac{(1-b^{\alpha})^2}{1+b}c\gamma_{\new,k}^2\right)
\label{z_3_bound}
\eea
\bea
\Pb\left(\left\|\frac{1}{\alpha}\sum_{i=0}^{\alpha-1}(Z_{4,i}+Z_{4,i}')\right\|\leq\epsilon\Big|X_{j,k-1}\right)\geq 1-2c\digamma\left(\alpha,\epsilon,4\frac{b(1-b^{2\alpha})}{1+b}c\gamma_{\new,k}\gamma_{\new,k-1}\right)
\label{z_4_bound}
\eea

Thus, given $X_{j,k-1}$, we have
\bea
&&\Pb\left(\left\|\frac{1}{\alpha}\sum_{t\in \Ic_{j,k}}a_{t,\new}a_{t,\new}'\right\|\leq\lambda_{\new,k}^++\frac{b^2(1-b^{2\alpha})}{1-b^2}c\gamma_{\new,k-1}^2+\epsilon\Big|X_{j,k-1}\right)\nn\\
=&& 1- \Pb\left(\left\|\frac{1}{\alpha}\sum_{t\in \Ic_{j,k}}a_{t,\new}a_{t,\new}'\right\|>\lambda_{\new,k}^++\frac{b^2(1-b^{2\alpha})}{1-b^2}c\gamma_{\new,k-1}^2+\epsilon\Big|X_{j,k-1}\right)\nn\\
\geq&& 1- \Pb\left(\left\|\frac{1}{\alpha}\sum_{i=0}^{\alpha-1}Z_{1,i}\right\|>\lambda_{\new,k}^++\frac{\epsilon}{4}\Big|X_{j,k-1}\right) -\Pb\left(\left\|\frac{1}{\alpha}\sum_{i_1=0}^{\alpha-1}Z_{2,i_1}\right\|>\frac{\epsilon}{4}\bigg|X_{j,k-1}\right)
-\Pb\left(\left\|\frac{1}{\alpha}\sum_{i_1=0}^{\alpha-1}Z_{3,i}\right\|>\frac{\epsilon}{4}\bigg|X_{j,k-1}\right)\nn\\
&&-\Pb\left(\left\|\frac{1}{\alpha}\sum_{i_1=0}^{\alpha-1}(Z_{4,i}+Z_{4,i}')\right\|>\frac{\epsilon}{4}\bigg|X_{j,k-1}\right)- \Pb\left(\left\|\frac{b^2(1-b^{2\alpha})}{1-b^2}d_{-1}d_{-1}'\right\|>\frac{b^2(1-b^{2\alpha})}{1-b^2}c\gamma_{\new,k-1}^2\right)\nn\\
\geq&&1-c\digamma\left(\alpha,\frac{\epsilon}{4},\frac{(1-b^{2\alpha})(1-b)}{1+b}c\gamma_{\new,k}^2\right) - 2c\digamma\left(\alpha,\frac{\epsilon}{4}, 2\frac{b(1-b)(1-b^{2(\alpha-1)})}{1+b}c\gamma_{\new,k}^2\right) - 2c\digamma\left(\alpha,\frac{\epsilon}{4},2\frac{(1-b^{\alpha})^2}{1+b}c\gamma_{\new,k}^2\right) \nn\\&& -2c\digamma\left(\alpha,\frac{\epsilon}{4},4\frac{b(1-b^{2\alpha})}{1+b}c\gamma_{\new,k}\gamma_{\new,k-1}\right)
\label{a_t_new_upper_bound}
\eea

\bea
&&\Pb\left(\lambda_{\min}\left(\frac{1}{\alpha}\sum_{t\in \Ic_{j,k}}a_{t,\new}a_{t,\new}'\right)\geq(1-\frac{b^2-b^{2\alpha+2}}{\alpha(1-b^2)})\lambda_{\new,k}^--\epsilon\Big|X_{j,k-1}\right)\nn\\
=&& 1- \Pb\left(\lambda_{\min}\left(\frac{1}{\alpha}\sum_{t\in \Ic_{j,k}}a_{t,\new}a_{t,\new}'\right)<(1-\frac{b^2-b^{2\alpha+2}}{\alpha(1-b^2)})\lambda_{\new,k}^--\epsilon\Big|X_{j,k-1}\right)\nn\\
\geq&& 1- \Pb\left(\lambda_{\min}\left(\frac{1}{\alpha}\sum_{i=0}^{\alpha-1}Z_{1,i}\right)<(1-\frac{b^2-b^{2\alpha+2}}{\alpha(1-b^2)})\lambda_{\new,k}^--\frac{\epsilon}{4}\Big|X_{j,k-1}\right) -\Pb\left(\lambda_{\min}\left(\frac{1}{\alpha}\sum_{i_1=0}^{\alpha-1}Z_{2,i_1}\right)<-\frac{\epsilon}{4}\bigg|X_{j,k-1}\right)\nn\\
&&-\Pb\left(\lambda_{\min}\left(\frac{1}{\alpha}\sum_{i_1=0}^{\alpha-1}Z_{3,i}\right)<-\frac{\epsilon}{4}\bigg|X_{j,k-1}\right)
-\Pb\left(\lambda_{\min}\left(\frac{1}{\alpha}\sum_{i_1=0}^{\alpha-1}(Z_{4,i}+Z_{4,i}')\right)<-\frac{\epsilon}{4}\bigg|X_{j,k-1}\right)\nn\\
&&-\Pb\left(\lambda_{\min}\left(\frac{b^2(1-b^{2\alpha})}{1-b^2}d_{-1}d_{-1}'\right)<0\right)\nn\\
\geq&&1-c\digamma\left(\alpha,\frac{\epsilon}{4},\frac{(1-b^{2\alpha})(1-b)}{1+b}c\gamma_{\new,k}^2\right) - 2c\digamma\left(\alpha,\frac{\epsilon}{4}, 2\frac{b(1-b)(1-b^{2(\alpha-1)})}{1+b}c\gamma_{\new,k}^2\right) - 2c\digamma\left(\alpha,\frac{\epsilon}{4},2\frac{(1-b^{\alpha})^2}{1+b}c\gamma_{\new,k}^2\right) \nn\\&& -2c\digamma\left(\alpha,\frac{\epsilon}{4},4\frac{b(1-b^{2\alpha})}{1+b}c\gamma_{\new,k}\gamma_{\new,k-1}\right)
\label{a_t_new_lower_bound}
\eea

\begin{remark}
$d_{-1}$ corresponds to $a_{t_j+(k-1)\alpha-1}$. When $k=1$, there is no $d_{-1}$ in previous equations, in which case there is no $\frac{b^2(1-b^{2\alpha})}{1-b^2}c\gamma_{\new,k-1}^2$ in the upper bound (\ref{a_t_new_upper_bound}).
\label{a_t_new_k_1}
\end{remark}

\bea
&&\sum_{t=0}^{\alpha-1}d_{t}m_{t}'\nn\\
=&&\sum_{t=0}^{\alpha-1}(bd_{t-1}+\mu_{t})(bm_{t-1}+\omega_{t}')\nn\\
=&&\sum_{t=0}^{\alpha-1}(b^{t+1}d_{-1}+ \sum_{i=0}^{t}b^{t-i}\mu_{i})(b^{t+1} m_{-1}+\sum_{i=0}^{t}b^{t-i}\omega_{i}')\nn\\
=&&\sum_{t=0}^{\alpha-1}\sum_{i_1=0}^{t}\sum_{i_2=0}^{t}b^{2t-i_1-i_2}\mu_{i_1}\omega_{i_2}' + \sum_{t=0}^{\alpha-1}\sum_{i=0}^{t}b^{2t-i+1}\mu_{i}m_{-1}'+\sum_{t=0}^{\alpha-1}b^{2(t+1)}d_{-1}m_{-1}'+ \nn\\
&&\sum_{t=0}^{\alpha-1} \sum_{i=0}^{t} b^{2t-i+1} d_{-1}\omega_{i}'\nn\\
=&&\sum_{i_1=0}^{ \alpha-1}\sum_{t=i_1}^{ \alpha-1}\sum_{i_2=0}^{t}b^{2t-i_1-i_2}\mu_{i_1}\omega_{i_2}' + \sum_{i=0}^{ \alpha-1}\sum_{t=i}^{ \alpha-1}b^{2t-i+1}\mu_{i}m_{-1}'+\frac{b^2(1-b^{2\alpha})}{1-b^2}d_{-1}m_{-1}'+ \nn\\
&&\sum_{i=0}^{\alpha-1} \sum_{t=i}^{\alpha-1} b^{2t-i+1} d_{-1}\omega_{i}'\nn\\
=&&\sum_{i_1=0}^{ \alpha-1}\sum_{i_2=0}^{\alpha-1}\sum_{t=\max\{i_1,i_2\}}^{\alpha-1}b^{2t-i_1-i_2}\mu_{i_1}\omega_{i_2}' + \sum_{i=0}^{ \alpha-1}\frac{(1-b^{2(\alpha-i)})}{1-b^2}b^{i+1}\mu_{i}m_{-1}'+\frac{b^2(1-b^{2\alpha})}{1-b^2}d_{-1}m_{-1}' \nn\\
&&+\sum_{i=0}^{ \alpha-1}\frac{(1-b^{2(\alpha-i)})} {1-b^2}b^{i+1} d_{-1}\omega_{i}'\nn\\
=&&\sum_{i_1=0}^{ \alpha-1} \sum_{i_2=0}^{\alpha-1} \frac{(1-b^{2(\alpha-\max\{i_1,i_2\})})}{1-b^2}b^{2\max\{i_1,i_2\}-i_1-i_2}\mu_{i_1}\omega_{i_2}'+ \sum_{i=0}^{ \alpha-1}\frac{(1-b^{2(\alpha-i)})}{1-b^2}b^{i+1}(\mu_{i}m_{-1}'+ d_{-1}\omega_{i}')\nn\\
&&+ \frac{b^2(1-b^{2\alpha})}{1-b^2}d_{-1}m_{-1}'
\label{d_t_new_star}
\eea

Let
\bea
&&Y_{1,i}=\sum_{i_2=0}^{\alpha-1} \frac{(1-b^{2(\alpha-\max\{i,i_2\})})}{1-b^2}b^{2\max\{i,i_2\}-i-i_2}\mu_{i}\omega_{i_2}' \nn\\
&&Y_{2,i}=\frac{(1-b^{2(\alpha-i)})}{1-b^2}b^{i+1}(\mu_{i}m_{-1}'+d_{-1}\omega_{i}')\nn
\eea
then by Lemma \ref{ind_expec} and similar procedure to (\ref{exp_0}),
\bea
\E_{i-1}(Y_{1,i}|X_{j,k-1})=0, \E_{i-1}(Y_{2,i}|X_{j,k-1})=0
\eea
and
\bea
\|Y_{1,i_1}\| \leq \sum_{i_2=0}^{\alpha-1} \frac{(1-b^{2(\alpha-\max\{i_1,i_2\})})}{1-b^2}b^{2\max\{i_1,i_2\}-i_1-i_2}\sqrt{cr}(1-b)^2\gamma_*\gamma_{\new,k}
\eea
As
\bea
&&\sum_{i_2=0}^{\alpha-1} \frac{(1-b^{2(\alpha-\max\{i_1,i_2\})})}{1-b^2}b^{2\max\{i_1,i_2\}-i_1-i_2}\nn\\
&&=\sum_{i_2=0}^{i_1} \frac{(1-b^{2(\alpha-i_1)})}{1-b^2}b^{2i_1-i_1-i_2}+\sum_{i_2=i_1+1}^{\alpha-1} \frac{(1-b^{2(\alpha-i_2)})}{1-b^2}b^{2i_2-i_1-i_2}\nn\\
&&=\frac{(1-b^{2(\alpha-i_1)})b^{i_1}(1-b^{-(i_1+1)})}{(1-b^2)(1-b^{-1})}+\frac{b(1-b^{\alpha-1-i_1})+b^{2(\alpha)-2i_1}(1-b^{-(\alpha-i_1-1)})}{(1-b^2)(1-b)}\nn\\
&&=\frac{-b^{i_1+1}+1+b^{\alpha-i_1+\alpha+1}+b-b^{\alpha-i_1}-b^{\alpha-i_1+1}}{(1-b^2)(1-b)}\nn\\
&&\leq\frac{1+b-2\sqrt{b^{\alpha+1}(1+b-b^{\alpha+1})}}{(1-b^2)(1-b)},\nn
\eea
we have
$$\|Y_{1,i_1}\|_2\leq\frac{1+b-2\sqrt{b^{\alpha+1}(1+b-b^{\alpha+1})}}{1+b}\sqrt{cr}\gamma_*\gamma_{\new,k}.$$
Also,
\bea
\|Y_{2,i}\|_2&&\leq 2\frac{(1-b^{2(\alpha-i)})}{1-b^2}b^{+i+1}\sqrt{cr}(1-b)\gamma_*\gamma_{\new,k}\nn\\
&&\leq2\frac{b(1-b^{2\alpha})}{1+b}\sqrt{cr}\gamma_*\gamma_{\new,k}
\eea
\bea
\|\frac{b^2\left(1-b^{2\alpha}\right)}{1-b^2}d_{-1}m_{-1}'\|\leq\frac{b^2\left(1-b^{2\alpha}\right)}{1-b^2}\sqrt{cr}\gamma_*\gamma_{\new,k}
\eea

Thus, by Corollary \ref{azuma_rec}, we have
\bea
\Pb\left(\left\|\frac{1}{\alpha}\sum_{i_1\in \Ic_{j,k}}Y_{1,i_1}\right\|\leq\epsilon\Big|X_{j,k-1}\right)\geq 1-(r_j+c)\digamma(\alpha,\epsilon, 2\frac{1+b-2\sqrt{b^{\alpha+1}(1+b-b^{\alpha+1})}}{1+b}\sqrt{cr}\gamma_*\gamma_{\new,k}),
\label{y_1_bound}
\eea
for all $X_{j,k-1}\in \Gamma_{j,k-1}$; and
\bea
\Pb\left(\left\|\frac{1}{\alpha}\sum_{i_1\in \Ic_{j,k}}Y_{2,i_1}\right\|\leq\epsilon\Big|X_{j,k-1}\right)\geq 1-(r_j+c)\digamma\left(\alpha,\epsilon, 4\frac{b(1-b^{2\alpha})}{1+b}\sqrt{cr}\gamma_*\gamma_{\new,k}\right),
\label{y_2_bound}
\eea
for all $X_{j,k-1}\in \Gamma_{j,k-1}$.\\
Thus, given $X_{j,k-1}$, we have
\bea
&&\Pb\left(\lambda_{\max}\left(\frac{1}{\alpha} \sum_{t\in \Ic_{j,k}} a_{t,\new}{a_{t,*}}'\right)\leq\frac{b^2\left(1-b^{2\alpha}\right)}{1-b^2}\sqrt{cr}\gamma_*\gamma_{\new,k} +\epsilon\right)\nn\\
&&= 1- \Pb\left(\lambda_{\max}\left(\frac{1}{\alpha} \sum_{t\in \Ic_{j,k}} a_{t,\new}{a_{t,*}}'\right)>\frac{b^2\left(1-b^{2\alpha}\right)}{1-b^2}\sqrt{cr}\gamma_*\gamma_{\new,k} + \epsilon\right)\nn\\
&&\geq 1- \Pb\left(\lambda_{\max}\left(\frac{1}{\alpha}\sum_{i_1\in \Ic_{j,k}}Y_{1,i_1}\right)>\frac{\epsilon}{2}\Big|X_{j,k-1}\right) - \Pb\left(\lambda_{\max}\left(\frac{1}{\alpha}\sum_{i_1\in \Ic_{j,k}}Y_{2,i_1}\right)>\frac{\epsilon}{2}\Big|X_{j,k-1}\right) -\nn\\
&&\Pb\left(\lambda_{\max}\left(\frac{1}{\alpha}\sum_{i_1\in \Ic_{j,k}}\frac{b^2\left(1-b^{2\alpha}\right)}{1-b^2}d_{-1}m_{-1}'\right)>\frac{b^2\left(1-b^{2\alpha}\right)}{1-b^2}\sqrt{cr}\gamma_*\gamma_{\new,k}\Big|X_{j,k-1}\right)\nn\\
&&\geq 1- (r_j+c)\digamma(\alpha,\frac{\epsilon}{2}, 2\frac{1+b-2\sqrt{b^{\alpha+1}(1+b-b^{\alpha+1})}}{1+b}\sqrt{cr}\gamma_*\gamma_{\new,k}) -(r_j+c)\digamma\left(\alpha,\frac{\epsilon}{2}, 4\frac{b(1-b^{2\alpha})}{1+b}\sqrt{cr}\gamma_*\gamma_{\new,k}\right)
\label{a_t_new_star_upper_bound}
\eea

\bea
&&\Pb\left(\lambda_{\min}\left(\frac{1}{\alpha} \sum_{t\in \Ic_{j,k}} a_{t,\new}{a_{t,*}}'\right)\geq-\frac{b^2\left(1-b^{2\alpha}\right)}{1-b^2}\sqrt{cr}\gamma_*\gamma_{\new,k} - \epsilon\right)\nn\\
&&= 1- \Pb\left(\lambda_{\min}\left(\frac{1}{\alpha} \sum_{t\in \Ic_{j,k}} a_{t,\new}{a_{t,*}}'\right)<-\frac{b^2\left(1-b^{2\alpha}\right)}{1-b^2}\sqrt{cr}\gamma_*\gamma_{\new,k} - \epsilon\right)\nn\\
&&\geq 1- \Pb\left(\lambda_{\min}\left(\frac{1}{\alpha}\sum_{i_1\in \Ic_{j,k}}Y_{1,i_1}\right)<-\frac{\epsilon}{2}\Big|X_{j,k-1}\right) - \Pb\left(\lambda_{\min}\left(\frac{1}{\alpha}\sum_{i_1\in \Ic_{j,k}}Y_{2,i_1}\right)<-\frac{\epsilon}{2}\Big|X_{j,k-1}\right) -\nn\\
&&\Pb\left(\lambda_{\min}\left(\frac{1}{\alpha}\sum_{i_1\in \Ic_{j,k}}\frac{b^2\left(1-b^{2\alpha}\right)}{1-b^2}d_{-1}m_{-1}'\right)<-\frac{b^2\left(1-b^{2\alpha}\right)}{1-b^2}\sqrt{cr}\gamma_*\gamma_{\new,k}\Big|X_{j,k-1}\right)\geq 1- \nn\\
&& (r_j+c)\digamma\left(\alpha,\frac{\epsilon}{2}, 2\frac{1+b-2\sqrt{b^{\alpha+1}(1+b-b^{\alpha+1})}}{1+b}\sqrt{cr}\gamma_*\gamma_{\new,k}\right) -(r_j+c)\digamma\left(\alpha,\frac{\epsilon}{2}, 4\frac{b(1-b^{2\alpha})}{1+b}\sqrt{cr}\gamma_*\gamma_{\new,k}\right)
\label{a_t_new_star_lower_bound}
\eea

\bea
&&\sum_{t=0}^{\alpha-1}m_{t}m_{t}'\nn\\
=&&\sum_{t=0}^{\alpha-1}(bm_{t-1}+\omega_{t})(bm_{t-1}+\omega_{t}')\nn\\
=&&\sum_{t=0}^{\alpha-1}(b^{t+1}m_{-1}+ \sum_{i=0}^{t}b^{t-i}\omega_{i})(b^{t+1}m_{-1}+ \sum_{i=0}^{t}b^{t-i}\omega_{i}')\nn\\
=&&\sum_{t=0}^{\alpha-1}(b^{2t}\sum_{i=0}^{t}b^{-2i}\omega_{i}\omega_{i}')+\sum_{t=0}^{\alpha-1}(b^{2t}\sum_{0\le i_1, i_2 \le t, i_1\neq i_2}b^{-i_1-i_2}\omega_{i_1}\omega_{i_2}')+\sum_{t=0}^{\alpha-1}\sum_{i=0}^{t}b^{2t-i+1}m_{-1}\omega_{i}' \nn\\
 &&+ \sum_{t=0}^{\alpha-1}\sum_{i=0}^{t}b^{2t-i+1}\omega_{i}m_{-1}'+ \sum_{t=0}^{\alpha-1}b^{2(t+1)}m_{-1}m_{-1}'\nn\\
=&&\sum_{i=0}^{\alpha-1}\sum_{t=i }^{\alpha-1 }(b^{2t}b^{-2i}\omega_{i}\omega_{i}') + \sum_{t=0}^{\alpha-1}b^{2t} (\sum_{i_1=0}^{t}\sum_{i_2=0}^{i_1-1} b^{-i_1-i_2}\omega_{i_1}\omega_{i_2}' +\sum_{i_1=0}^{ t}\sum_{i_2= i_1+1}^{t} b^{-i_1-i_2}\omega_{i_1}\omega_{i_2}')+ \nn\\
&&\sum_{i=0}^{\alpha-1}\sum_{t=i}^{\alpha-1}b^{2t-i+1}m_{-1}\omega_{i}' + \sum_{i=0}^{\alpha-1}\sum_{t=i}^{\alpha-1}b^{2t-i+1}\omega_{i}m_{-1}' + \frac{b^2(1-b^{2\alpha})}{1-b^2}m_{-1}m_{-1}'\nn\\
=&&\sum_{i=0}^{\alpha-1}\frac{(1-b^{2(\alpha-i)})}{1-b^2}\omega_{i}\omega_{i}'+\sum_{i_1=0}^{\alpha-1}\sum_{t=i_1 }^{\alpha -1 }b^{2t}(\sum_{i_2=t_j+(k-1)}^{i_1-1} b^{-i_1-i_2}\omega_{i_1}\omega_{i_2}'+\sum_{i_2= i_1+1}^{t} b^{-i_1-i_2}\omega_{i_1}\omega_{i_2}')\nn\\
&&\sum_{i=0}^{\alpha-1}\frac{b^{i+1}(1-b^{2(\alpha-i)})}{1-b^2}(\omega_{i}m_{-1}' + m_{-1}\omega_{i}')+ \frac{b^2(1-b^{2\alpha})}{1-b^2}m_{-1}m_{-1}'\nn\\
=&&\sum_{i=0}^{\alpha-1}\frac{(1-b^{2(\alpha-i)})}{1-b^2}\omega_{i}\omega_{i}'+\sum_{i_1=0}^{\alpha-1}\frac{(1-b^{2(\alpha-i_1)})}{1-b^2} \sum_{i_2=0}^{i_1-1} b^{i_1-i_2}\omega_{i_1}\omega_{i_2}' + \sum_{i_1=-1}^{\alpha-1}\sum_{i_2=i_1+1}^{\alpha-1 }\sum_{t=i_2}^{\alpha-1} b^{2t}b^{-i_1-i_2}\omega_{i_1}\omega_{i_2}'+\nn\\ &&\sum_{i=0}^{\alpha-1}\frac{b^{i+1}(1-b^{2(\alpha-i)})}{1-b^2}(\omega_{i}m_{-1}' + m_{-1}\omega_{i}')+ \frac{b^2(1-b^{2\alpha})}{1-b^2}m_{-1}m_{-1}'\nn\\
=&&\sum_{i=0}^{\alpha-1}\frac{(1-b^{2(\alpha-i)})}{1-b^2}\omega_{i}\omega_{i}' + \sum_{i_1=0}^{\alpha-1}\frac{(1-b^{2(\alpha-i_1)})}{1-b^2}\sum_{i_2=0}^{i_1-1} b^{i_1-i_2}\omega_{i_1}\omega_{i_2}' + \sum_{i_1=0}^{\alpha-1}\sum_{i_2=i_1+1}^{\alpha-1 }\frac{(1-b^{2(\alpha-i_2)})}{1-b^2}b^{-i_1+i_2}\omega_{i_1}\omega_{i_2}'\nn\\ &&+\sum_{i=0}^{\alpha-1}\frac{b^{i+1}(1-b^{2(\alpha-i)})}{1-b^2}(\omega_{i}m_{-1}' + m_{-1}\omega_{i}')+ \frac{b^2(1-b^{2\alpha})}{1-b^2}m_{-1}m_{-1}'\nn\\
=&&\sum_{i=0}^{\alpha-1}\frac{(1-b^{2(\alpha-i)})}{1-b^2}\omega_{i}\omega_{i}' + \sum_{i_1=0}^{\alpha-1}\frac{(1-b^{2(\alpha-i_1)})}{1-b^2}\sum_{i_2=0}^{i_1-1} b^{i_1-i_2}\omega_{i_1}\omega_{i_2}' + \sum_{i_1=0}^{\alpha-1}\sum_{i_2=\alpha-i_1}^{\alpha-1 }\frac{(1-b^{2(\alpha-i_2)})}{1-b^2}b^{i_1+i_2+1-\alpha}\omega_{\alpha-1-i_1}\omega_{i_2}'\nn\\ &&+\sum_{i=0}^{\alpha-1}\frac{b^{i+1}(1-b^{2(\alpha-i)})}{1-b^2}(\omega_{i}m_{-1}' + m_{-1}\omega_{i}')+ \frac{b^2(1-b^{2\alpha})}{1-b^2}m_{-1}m_{-1}'
\label{d_t_star}
\eea

Let
\bea
&&Z_{1,i}=\frac{(1-b^{2(\alpha-i)})}{1-b^2}\omega_{i}\omega_{i}', \nn\\
&&Z_{2,i}=\sum_{i_2=0}^{i-1}\frac{(1-b^{2(\alpha-i)})}{1-b^2}b^{i-i_2}\omega_{i}\omega_{i_2}', i\geq 1  \nn\\
&&Z_{3,i}=\sum_{i_2=\alpha-i}^{\alpha-1 }\frac{(1-b^{2(\alpha-i_2)})}{1-b^2}b^{i+i_2+1-\alpha}\omega_{\alpha-1-i}\omega_{i_2}',\nn\\
&&Z_{4,i}=\frac{b^{i+1}(1-b^{2(\alpha-i)})}{1-b^2}\omega_{i}m_{-1}'
\eea
then,
\bea
&&\lambda_{\max}\left(\E_{i-1}(\frac{1}{\alpha}\sum_{i=0}^{\alpha-1}Z_{1,i}\Big|X_{j,k-1})\right)\nn\\
\leq&&\frac{1}{\alpha}\sum_{i=0}^{\alpha-1}\frac{(1-b^{2(t_j+k\alpha-i)})}{1-b^2}\lambda_{\max}(\E(\omega_{i}\omega_{i}'))\nn\\
\leq&&(\frac{\alpha(1-b^2)-b^2+b^{2\alpha+2}}{\alpha(1-b^2)}\lambda^+)\nn\\
\leq&&\lambda^+\nn
\eea
\bea
&&\lambda_{\min}\left(\E_{i-1}(\frac{1}{\alpha}\sum_{i=0}^{\alpha-1}Z_{1,i}\Big|X_{j,k-1})\right)\nn\\
\geq&&\frac{1}{\alpha}\sum_{i=0}^{\alpha-1}\frac{(1-b^{2(t_j+k\alpha-i)})}{1-b^2}\lambda_{\min}(\E(\omega_{i}\omega_{i}'))\nn\\
\geq&&\frac{\alpha(1-b^2)-b^2+b^{2\alpha+2}}{\alpha(1-b^2)}\lambda_{\new,k}^-
\eea
and by Lemma \ref{ind_expec} and similar procedure to (\ref{exp_0}),
\bea
\E_{i-1} (Z_{2,i}|X_{j,k-1})=0, \E_{i-1} (Z_{3,i}|X_{j,k-1})=0, \E_{i-1}(Z_{4,i}|X_{j,k-1})=0.
\eea

$$0\preceq  Z_{1,i_1}\preceq\frac{(1-b^{2(\alpha-i)})}{1-b^2} r(1-b)^2\gamma_*^2I\preceq \frac{(1-b^{2\alpha})(1-b)}{1+b}r\gamma_*^2 I,$$
\bea
 \|Z_{2,i}\|&&\leq\sum_{i_2=0}^{i-1}\frac{(1-b^{2(\alpha-i)})}{1-b^2}b^{i-i_2}r(1-b)^2\gamma_*^2\nn\\
&&\leq \frac{b(1-b^{2(\alpha-i)})(1-b^{i})}{1+b}r\gamma_*^2\nn\\
&&\leq \frac{b(1-b)(1-b^{2(\alpha-1)})}{(1+b)}r\gamma_*^2\nn
\eea

\bea
\|Z_{3,i}\|&&\leq\sum_{i_2=\alpha-i}^{\alpha-1 }\frac{(1-b^{2(\alpha-i_2)})}{1-b^2}b^{i+i_2+1-\alpha}r(1-b)^2\gamma_*^2\nn\\
&&\leq \frac{(1-b^{i+1})^2}{1+b}r\gamma_*^2\nn\\
&&\leq \frac{(1-b^{\alpha})^2}{1+b}r\gamma_*^2\nn
\eea

\bea
 \|Z_{4,i_1}\|&&\leq\frac{b^{i+1}(1-b^{2(\alpha-i)})}{1-b^2}c(1-b)\gamma_*^2\nn\\
&&\leq \frac{b(1-b^{2\alpha})}{1+b}r\gamma_*^2\nn
\eea

\bea
0 \preceq \frac{b^2(1-b^{2\alpha})}{1-b^2}m_{-1}m_{-1}' \preceq \frac{b^2(1-b^{2\alpha})}{1-b^2}r\gamma_*^2I
\eea

Given $X_{j,k-1}$, $\{Z_{1,i}\}_{i\in \Ic_{j,k}}$ is an adapted sequence of Hermitian matrices, thus by Corollary \ref{azuma_nonzero}, for all $X_{j,k-1}\in \Gamma_{j,k-1}$, we have

\bea
\Pb&&\left(\lambda_{\max}\left(\frac{1}{\alpha}\sum_{i=0}^{\alpha-1} Z_{1,i}\right)\leq\lambda^+-\epsilon\Big|X_{j,k-1}\right)\geq  1-r_j\digamma(\alpha,\epsilon,\frac{(1-b^{2\alpha})(1-b)}{1+b}r\gamma_*^2 )
\label{z_1_bound_2}
\eea
Similarly, given $X_{j,k-1}$, $\{Z_{2,i_1}\}_{i_1\in \Ic_{j,k}}, \{Z_{3,i_1}\}_{i_1\in \Ic_{j,k}}, \{Z_{4,i}\}_{i\in \Ic_{j,k}}$ are adapted sequences of matrices, by Corollary \ref{azuma_rec}, $\text{ for all } X_{j,k-1}\in \Gamma_{j,k-1}$ we have
\bea
\Pb&&\left(\left\|\frac{1}{\alpha}\sum_{i=0}^{\alpha-1} Z_{2,i}\right\|\leq\epsilon\Big|X_{j,k-1}\right)\geq 1-2r_j\digamma(\alpha,\epsilon,2\frac{b(1-b)(1-b^{2(\alpha-1)})}{(1+b)}r\gamma_*^2)
\label{z_2_bound_2}
\eea
\bea
\Pb\left(\left\|\frac{1}{\alpha}\sum_{i=0}^{\alpha-1} Z_{3,i}\right\|\leq\epsilon\bigg|X_{j,k-1}\right)\geq 1-2r_j\digamma(\alpha,\epsilon,2\frac{(1-b^{\alpha})^2}{1+b}r\gamma_*^2)
\label{z_3_bound_2}
\eea
\bea
\Pb&&\left(\left\|\frac{1}{\alpha}\sum_{i=0}^{\alpha-1} (Z_{4,i}+Z_{4,i}')\right\|\leq\epsilon\bigg|X_{j,k-1}\right)\geq 1-2r_j\digamma(\alpha,\epsilon,4\frac{b(1-b^{2\alpha})}{1+b}r\gamma_*^2)
\label{z_4_bound_2}
\eea

Thus, given $X_{j,k-1}$, we have
\bea
&&\Pb\left(\left\|\frac{1}{\alpha}\sum_{t\in \Ic_{j,k}}a_{t,*}a_{t,*}'\right\|\leq\lambda^++\frac{b^2(1-b^{2\alpha})}{1-b^2}r\gamma_*^2+\epsilon\Big|X_{j,k-1}\right)\nn\\
\geq&& 1- \Pb\left(\left\|\frac{1}{\alpha}\sum_{t\in \Ic_{j,k}}a_{t,*}a_{t,*}'\right\|>\lambda^++\frac{b^2(1-b^{2\alpha})}{1-b^2}c\gamma_{*,k-1}^2+\epsilon\Big|X_{j,k-1}\right)\nn\\
\geq&& 1- \Pb\left(\left\|\frac{1}{\alpha}\sum_{i=0}^{\alpha-1}Z_{1,i}\right\|>\lambda^++\frac{\epsilon}{4}\Big|X_{j,k-1}\right) -\Pb\left(\left\|\frac{1}{\alpha}\sum_{i_1=0}^{\alpha-1}Z_{2,i_1}\right\|>\frac{\epsilon}{4}\bigg|X_{j,k-1}\right)
-\Pb\left(\left\|\frac{1}{\alpha}\sum_{i_1=0}^{\alpha-1}Z_{3,i}\right\|>\frac{\epsilon}{4}\bigg|X_{j,k-1}\right)\nn\\
&&-\Pb\left(\left\|\frac{1}{\alpha}\sum_{i_1=0}^{\alpha-1}(Z_{4,i}+Z_{4,i}')\right\|>\frac{\epsilon}{4}\bigg|X_{j,k-1}\right)- \Pb\left(\left\|\frac{b^2(1-b^{2\alpha})}{1-b^2}m_{-1}m_{-1}'\right\|>\frac{b^2(1-b^{2\alpha})}{1-b^2}r\gamma_*^2\right)\nn\\
\geq&&1-r_j\digamma\left(\alpha,\frac{\epsilon}{4},\frac{(1-b^{2\alpha})(1-b)}{1+b}r\gamma_*^2\right) - 2r_j\digamma\left(\alpha,\frac{\epsilon}{4}, 2\frac{b(1-b)(1-b^{2(\alpha-1)})}{(1+b)}r\gamma_*^2\right) \nn\\&&- 2r_j\digamma\left(\alpha,\frac{\epsilon}{4},2\frac{(1-b^{\alpha})^2}{1+b}r\gamma_*^2\right) -2r_j\digamma\left(\alpha,\frac{\epsilon}{4},4\frac{b(1-b^{2\alpha})}{1+b}r\gamma_*^2\right)
\label{a_t_star_upper_bound}
\eea

\bea
&&\Pb\left(\lambda_{\min}\left(\frac{1}{\alpha}\sum_{t\in \Ic_{j,k}}a_{t,*}a_{t,*}'\right)\geq\frac{\alpha(1-b^2)-b^2+b^{2\alpha+2}}{\alpha(1-b^2)}\lambda_{\new,k}^--\epsilon\Big|X_{j,k-1}\right)\nn\\
\geq&& 1- \Pb\left(\lambda_{\min}\left(\frac{1}{\alpha}\sum_{t\in \Ic_{j,k}}a_{t,*}a_{t,*}'\right)<\frac{\alpha(1-b^2)-b^2+b^{2\alpha+2}}{\alpha(1-b^2)}\lambda_{\new,k}^--\epsilon\Big|X_{j,k-1}\right)\nn\\
\geq&& 1- \Pb\left(\lambda_{\min}\left(\frac{1}{\alpha}\sum_{i=0}^{\alpha-1}Z_{1,i}\right)<\frac{\alpha(1-b^2)-b^2+b^{2\alpha+2}}{\alpha(1-b^2)}\lambda_{\new,k}^- - \frac{\epsilon}{4}\Big|X_{j,k-1}\right) -\Pb\left(\lambda_{\min}\left(\frac{1}{\alpha}\sum_{i_1=0}^{\alpha-1}Z_{2,i_1}\right)<-\frac{\epsilon}{4}\bigg|X_{j,k-1}\right)
\nn\\&&-\Pb\left(\lambda_{\min}\left(\frac{1}{\alpha}\sum_{i_1=0}^{\alpha-1}Z_{3,i}\right)<-\frac{\epsilon}{4}\bigg|X_{j,k-1}\right)-
\Pb\left(\lambda_{\min}\left(\frac{1}{\alpha}\sum_{i_1=0}^{\alpha-1}(Z_{4,i}+Z_{4,i}')\right)<-\frac{\epsilon}{4}\bigg|X_{j,k-1}\right)\nn\\&&- \Pb\left(\lambda_{\min}\left(\frac{b^2(1-b^{2\alpha})}{1-b^2}m_{-1}m_{-1}'\right)<0\right)\nn\\
\geq&&1-r_j\digamma\left(\alpha,\frac{\epsilon}{4},\frac{(1-b^{2\alpha})(1-b)}{1+b}r\gamma_*^2\right) - 2r_j\digamma\left(\alpha,\frac{\epsilon}{4}, 2\frac{b(1-b)(1-b^{2(\alpha-1)})}{(1+b)}r\gamma_*^2\right) - 2r_j\digamma\left(\alpha,\frac{\epsilon}{4},2\frac{(1-b^{\alpha})^2}{1+b}r\gamma_*^2\right) \nn\\&& -2r_j\digamma\left(\alpha,\frac{\epsilon}{4},4\frac{b(1-b^{2\alpha})}{1+b}r\gamma_*^2\right)
\label{a_t_star_lower_bound}
\eea

\subsection{$\lambda_{\min}\left(A_k\right)$}
\label{Akbound}
Consider $A_k := \frac{1}{\alpha} \sum_{t\in \Ic_{j,k}} {E_{\new}}' \Phi_{0} L_t {L_t}' \Phi_{0} E_{\new}$. Notice that ${E_{\new}}' \Phi_{0} L_t = R_{\new} a_{t,\new} + {E_{\new}}' D_* a_{t,*}$. We have
\beq
A_k  \succeq \frac{1}{\alpha} \sum_{t\in \Ic_{j,k}} R_{\new} a_{t,\new} {a_{t,\new}}' {R_{\new}}' + \frac{1}{\alpha} \sum_{t\in \Ic_{j,k}} \left(R_{\new} a_{t,\new}{a_{t,*}}' {D_*}' {E_{\new}}' +  {E_{\new}}' D_* a_{t,*}{a_{t,\new}}' {R_{\new}}'\right) \label{lemmabound_1}
\eeq
By $\lambda_{\min}(R_{\new}R_{\new}') \geq 1-(\zeta_*^+)^2, \|R_{\new}\|\leq 1$ and similar procedure to get (\ref{a_t_new_lower_bound}), we have

\bea
&&\Pb\left(\lambda_{\min}\left(\frac{1}{\alpha}\sum_{t\in \Ic_{j,k}} R_{\new}a_{t,\new}a_{t,\new}'R_{\new}'\right)\geq(1-(\zeta_*^+)^2)(1-\frac{b^2-b^{2\alpha+2}}{\alpha(1-b^2)})\lambda_{\new,k}^--\frac{c\zeta\lambda^-}{24}\Big|X_{j,k-1}\right)\nn\\
\geq&&1-c\digamma\left(\alpha,\frac{c\zeta\lambda^-}{96},\frac{(1-b^{2\alpha})(1-b)}{1+b}c\gamma_{\new,k}^2\right) - 2c\digamma\left(\alpha,\frac{c\zeta\lambda^-}{96}, 2\frac{b(1-b)(1-b^{2(\alpha-1)})}{1+b}c\gamma_{\new,k}^2\right) \nn\\&&- 2c\digamma\left(\alpha,\frac{c\zeta\lambda^-}{96},2\frac{(1-b^{\alpha})^2}{1+b}c\gamma_{\new,k}^2\right) -2c\digamma\left(\alpha,\frac{c\zeta\lambda^-}{96},4\frac{b(1-b^{2\alpha})}{1+b}c\gamma_{\new,k}\gamma_{\new,k-1}\right)
\label{A_k_1}
\eea

By $\|R_{\new}\| \leq 1, \|E_{\new}\|\leq 1, \|D_*\|\leq\zeta_*^+$ and similar procedure to get (\ref{a_t_new_star_lower_bound}), we have
\bea
&&\Pb\left(\lambda_{\min}\left(\frac{1}{\alpha} \sum_{t\in \Ic_{j,k}} \left(R_{\new} a_{t,\new}{a_{t,*}}' {D_*}' {E_{\new}}' +  {E_{\new}}' D_* a_{t,*}{a_{t,\new}}' {R_{\new}}'\right)\right)\geq -2\frac{b^2\left(1-b^{2\alpha}\right)}{1-b^2}\sqrt{cr}\gamma_*\gamma_{\new,k}\zeta_*^+  -\frac{c\zeta\lambda^-}{24}\right)\geq \nn\\
&& 1 - 2c\digamma\left(\alpha,\frac{c\zeta\lambda^-}{48}, 4\frac{1+b-2\sqrt{b^{\alpha+1}(1+b-b^{\alpha+1})}}{1+b}\sqrt{cr}\gamma_*\gamma_{\new,k}\zeta_*^+\right) -2c\digamma\left(\alpha,\frac{c\zeta\lambda^-}{48}, 8\frac{b(1-b^{2\alpha})}{1+b}\sqrt{cr}\gamma_*\gamma_{\new,k}\zeta_*^+\right)
\label{A_k_2}
\eea

Combining the previous two inequalities, we have,
\bea
\Pb\left(\lambda_{\min}(A_k)\geq b_{A_k} - \frac{c\zeta\lambda^-}{12}\right)\geq 1-p_a(\alpha,\zeta)
\label{A_k_bound}
\eea
where $b_{A_k}=(1-(\zeta_*^+)^2)(1-\frac{b^2-b^{2\alpha+2}}{\alpha(1-b^2)})\lambda_{\new,k}^--2\frac{b^2\left(1-b^{2\alpha}\right)}{1-b^2}\sqrt{cr}\gamma_*\gamma_{\new,k}\zeta_*^+$ and
\bea
p_a(\alpha,\zeta)= &&c\digamma\left(\alpha,\frac{c\zeta\lambda^-}{96},\frac{(1-b^{2\alpha})(1-b)}{1+b}c\gamma_{\new,k}^2\right) + 2c\digamma\left(\alpha,\frac{c\zeta\lambda^-}{96}, 2\frac{b(1-b)(1-b^{2(\alpha-1)})}{1+b}c\gamma_{\new,k}^2\right) \nn\\&&+ 2c\digamma\left(\alpha,\frac{c\zeta\lambda^-}{96},2\frac{(1-b^{\alpha})^2}{1+b}c\gamma_{\new,k}^2\right) +2c\digamma\left(\alpha,\frac{c\zeta\lambda^-}{96},4\frac{b(1-b^{2\alpha})}{1+b}c\gamma_{\new,k}\gamma_{\new,k-1}\right) +2c\digamma\bigg(\alpha,\frac{c\zeta\lambda^-}{48},\nn\\&& 4\frac{1+b-2\sqrt{b^{\alpha+1}(1+b-b^{\alpha+1})}}{1+b}\sqrt{cr}\gamma_*\gamma_{\new,k}\zeta_*^+\bigg) +2c\digamma\left(\alpha,\frac{c\zeta\lambda^-}{48}, 8\frac{b(1-b^{2\alpha})}{1+b}\sqrt{cr}\gamma_*\gamma_{\new,k}\zeta_*^+\right)
\eea

\subsection{$\lambda_{\max}\left(A_{k,\perp}\right)$}
\label{Akperpbound}
\bea
A_{k,\perp} :&&= \frac{1}{\alpha} \sum_{t\in t_{j,k}} {E_{\new,\perp}}' \Phi_{0} L_t {L_t}' \Phi_{0} E_{\new,\perp} \nn\\
&&= \frac{1}{\alpha} \sum_{t\in t_{j,k}} {E_{\new,\perp}}' D_*a_{t,*}a_{t,*}'D_*' E_{\new,\perp}\nn
\eea
By $\|E_{\new,\perp}\|\leq 1, \|D_*\|\leq \zeta_*^+$ and similar procedure to get (\ref{a_t_star_upper_bound}), we have
\bea
&&\Pb\left(\left\|\frac{1}{\alpha}\sum_{t\in \Ic_{j,k}}{E_{\new,\perp}}' D_*a_{t,*}a_{t,*}'D_*' E_{\new,\perp}\right\|\leq b_{A_{k,\perp}}+\frac{c\zeta\lambda^-}{24}\Big|X_{j,k-1}\right)\geq1-p_b(\alpha,\zeta)
\label{A_k_perp_bound}
\eea
where $b_{A_{k,\perp}}=\lambda^+(\zeta_*^+)^2+\frac{b^2(1-b^{2\alpha})}{1-b^2}r\gamma_*^2(\zeta_*^+)^2$ and
\bea
p_b(\alpha,\zeta)=&&(n-c)\digamma\left(\alpha,\frac{c\zeta\lambda^-}{96},\frac{(1-b^{2\alpha})(1-b)}{1+b}r\gamma_*^2(\zeta_*^+)^2\right) + 2(n-c)\digamma\left(\alpha,\frac{c\zeta\lambda^-}{96}, 2\frac{b(1-b)(1-b^{2(\alpha-1)})}{(1+b)}r\gamma_*^2(\zeta_*^+)^2\right) \nn\\&&+ 2(n-c)\digamma\left(\alpha,\frac{c\zeta\lambda^-}{96},2\frac{(1-b^{\alpha})^2}{1+b}r\gamma_*^2(\zeta_*^+)^2\right) +2(n-c)\digamma\left(\alpha,\frac{c\zeta\lambda^-}{96},4\frac{b(1-b^{2\alpha})}{1+b}r\gamma_*^2(\zeta_*^+)^2\right)
\eea

\subsection{$\|\Hc_k\|_2$}
\label{Hkbound}
In this proof, we frequently refer to items from Sec. \ref{keyfacts} and the bounds (\ref{err_bound}).

For the second claim, using the expression for $\mathcal{H}_k$ given in Definition \ref{defHk}, it is easy to see that
\bea
\|\mathcal{H}_k \|_2 &\leq&  \max\{ \|H_k\|_2, \|H_{k,\perp}\|_2 \} + \|B_k\|_2 \leq \|\frac{1}{\alpha} \sum_t e_t {e_t}'\|_2 +  \max\left(\|T2\|_2, \|T4\|_2\right) + \|B_k\|_2
\label{add_calH1}
\eea
where $T2:= \frac{1}{\alpha} \sum_t  {E_{\new}}' \Phi_{0}\left( L_t {e_t}' + e_t {L_t}'\right)\Phi_{0} E_{\new}$ and $T4 :=\frac{1}{\alpha} \sum_t {E_{\new,\perp}}'\Phi_{0} \left(L_t {e_t}' + {e_t}'L_t\right)\Phi_{0} E_{\new,\perp}$. The second inequality follows by using the facts that (i) $H_k = T1 - T2$ where $T1 := \frac{1}{\alpha} \sum_t {E_{\new}}' \Phi_{0} e_t {e_t}'\Phi_{0} E_{\new}$, (ii) $H_{k,\perp} = T3 - T4$ where $T3 := \frac{1}{\alpha} \sum_t {E_{\new,\perp}}'\Phi_0 e_t {e_t}'\Phi_0  E_{\new,\perp}$, and (iii) $\max\left(\|T1\|_2, \|T3\|_2\right) \le \|\frac{1}{\alpha} \sum_t e_t {e_t}'\|_2$.
Next, we obtain high probability bounds on each of the terms on the RHS of (\ref{add_calH1}) using the Azuma corollaries.

\subsubsection{$\|\frac{1}{\alpha} \sum_t e_t {e_t}'\|_2$}
Consider $\|\frac{1}{\alpha} \sum_t e_t {e_t}'\|_2$. Let $Z_t = e_t {e_t}'$.
Then
\bea
Z_t=&&{I_{T_t}} [{\left(\Phi_{k-1}\right)_{T_t}}'\left(\Phi_{k-1}\right)_{T_t}]^{-1} {I_{T_t}}'[ \left(\Phi_{k-1} P_{*}\right) a_{t,*} + D_{\new,k-1} a_{t,\new}][ \left(\Phi_{k-1} P_{*}\right) a_{t,*} + D_{\new,k-1} a_{t,\new}]'{I_{T_t}} [{\left(\Phi_{k-1}\right)_{T_t}}'\left(\Phi_{k-1}\right)_{T_t}]^{-1} {I_{T_t}}'\nn\\
=&&{I_{T_t}} [{\left(\Phi_{k-1}\right)_{T_t}}'\left(\Phi_{k-1}\right)_{T_t}]^{-1} {I_{T_t}}'[ \left(\Phi_{k-1} P_{*}\right) a_{t,*}a_{t,*}' P_*' \Phi_{k-1} + D_{\new,k-1} a_{t,\new} a_{t,\new}' D_{\new,k-1}' + \left(\Phi_{k-1} P_{*}\right) a_{t,*}a_{t,\new}' D_{\new,k-1}'\nn\\&& + D_{\new,k-1} a_{t,\new}a_{t,*}' P_*' \Phi_{k-1}]{I_{T_t}} [{\left(\Phi_{k-1}\right)_{T_t}}'\left(\Phi_{k-1}\right)_{T_t}]^{-1} {I_{T_t}}'
\eea
(a) By (\ref{err_bound}) and similar procedure to get (\ref{a_t_star_upper_bound}), we have
$$\|{I_{T_t}} [{\left(\Phi_{k-1}\right)_{T_t}}'\left(\Phi_{k-1}\right)_{T_t}]^{-1} {I_{T_t}}' \left(\Phi_{k-1} P_{*}\right)\|\leq \phi^+\zeta_*^+$$
and
\bea
\Pb&&\Bigg(\left\|\frac{1}{\alpha}\sum_{t\in \Ic_{j,k}}{I_{T_t}} {\left(\Phi_{k-1}\right)_{T_t}}'\left(\Phi_{k-1}\right)_{T_t}]^{-1} {I_{T_t}}'[ \left(\Phi_{k-1} P_{*}\right) a_{t,*}a_{t,*}' P_*' \Phi_{k-1}{I_{T_t}} [{\left(\Phi_{k-1}\right)_{T_t}}'\left(\Phi_{k-1}\right)_{T_t}]^{-1} {I_{T_t}}'\right\|\nn\\
&&\leq\lambda^+(\phi^+\zeta_*^+)^2+\frac{b^2(1-b^{2\alpha})}{1-b^2}r\gamma_*^2(\phi^+\zeta_*^+)^2+\frac{c\zeta\lambda^-}{72}\Big|X_{j,k-1}\Bigg)\nn\\
\geq&& 1- p_{c_1}(\alpha,\zeta)
\label{e_t_1}
\eea
where
\bea
p_{c_1}(\alpha,\zeta)=&&n\digamma\left(\alpha,\frac{c\zeta\lambda^-}{288},\frac{(1-b^{2\alpha})(1-b)}{1+b}r\gamma_*^2(\phi^+\zeta_*^+)^2\right) + 2n\digamma\left(\alpha,\frac{c\zeta\lambda^-}{288}, 2\frac{b(1-b)(1-b^{2(\alpha-1)})}{(1+b)}r\gamma_*^2(\phi^+\zeta_*^+)^2\right) \nn\\&&+ 2n\digamma\left(\alpha,\frac{c\zeta\lambda^-}{288},2\frac{(1-b^{\alpha})^2}{1+b}r\gamma_*^2(\phi^+\zeta_*^+)^2\right) +2n\digamma\left(\alpha,\frac{c\zeta\lambda^-}{288},4\frac{b(1-b^{2\alpha})}{1+b}r\gamma_*^2(\phi^+\zeta_*^+)^2\right)
\eea
(b) By (\ref{err_bound}) and similar procedure to get (\ref{a_t_new_upper_bound}), we have
$$\|{I_{T_t}} [{\left(\Phi_{k-1}\right)_{T_t}}'\left(\Phi_{k-1}\right)_{T_t}]^{-1} {I_{T_t}}' D_{\new,k-1}\|\leq \phi^+\kappa_s^+\zeta_{k-1}^+$$
and
\bea
\Pb&&\Bigg(\left\|\frac{1}{\alpha}\sum_{t\in \Ic_{j,k}}{I_{T_t}} [{\left(\Phi_{k-1}\right)_{T_t}}'\left(\Phi_{k-1}\right)_{T_t}]^{-1} {I_{T_t}}' D_{\new,k-1} a_{t,\new} a_{t,\new}' D_{\new,k-1}'{I_{T_t}} [{\left(\Phi_{k-1}\right)_{T_t}}'\left(\Phi_{k-1}\right)_{T_t}]^{-1} {I_{T_t}}'\right\|\nn\\&&\leq\lambda_{\new,k}^+(\phi^+\kappa_s^+\zeta_{k-1}^+)^2+\frac{b^2(1-b^{2\alpha})}{1-b^2}c\gamma_{\new,k-1}^2(\phi^+\kappa_s^+\zeta_{k-1}^+)^2+\frac{c\zeta\lambda^-}{72}\Big|X_{j,k-1}\Bigg)\nn\\
\geq&&1-p_{c_2}(\alpha,\zeta)
\label{e_t_2}
\eea
where
\bea
&&p_{c_2}(\alpha,\zeta)=n\digamma\left(\alpha,\frac{c\zeta\lambda^-}{288},\frac{(1-b^{2\alpha})(1-b)}{1+b}c\gamma_{\new,k}^2(\phi^+\kappa_s^+\zeta_{k-1}^+)^2\right) + 2n\digamma\Bigg(\alpha,\frac{c\zeta\lambda^-}{288}, 2\frac{b(1-b)(1-b^{2(\alpha-1)})}{1+b}c\gamma_{\new,k}^2\left(\phi^+\kappa_s^+\right.\nn\\&&\left.\zeta_{k-1}^+\right)^2\Bigg)+ 2n\digamma\left(\alpha,\frac{c\zeta\lambda^-}{288},2\frac{(1-b^{\alpha})^2}{1+b}c\gamma_{\new,k}^2(\phi^+\kappa_s^+\zeta_{k-1}^+)^2\right) +2n\digamma\left(\alpha,\frac{c\zeta\lambda^-}{288},4\frac{b(1-b^{2\alpha})}{1+b}c\gamma_{\new,k}\gamma_{\new,k-1}(\phi^+\kappa_s^+\zeta_{k-1}^+)^2\right)\nn
\eea
(c) By (\ref{err_bound}) and similar procedure to get (\ref{a_t_new_star_upper_bound}), we have
$$\|{I_{T_t}} [{\left(\Phi_{k-1}\right)_{T_t}}'\left(\Phi_{k-1}\right)_{T_t}]^{-1}\|\leq \phi^+, \|{I_{T_t}}' D_{\new,k-1}\|\leq \kappa_s^+\zeta_{k-1}^+, \|\Phi_{k-1} P_{*}\|\leq \zeta_*^+$$
and
\bea
\Pb&&\Bigg(\bigg\|\frac{1}{\alpha} \sum_{t\in \Ic_{j,k}} {I_{T_t}} [{\left(\Phi_{k-1}\right)_{T_t}}'\left(\Phi_{k-1}\right)_{T_t}]^{-1} {I_{T_t}}'[\left(\Phi_{k-1} P_{*}\right) a_{t,*}a_{t,\new}' D_{\new,k-1}' + D_{\new,k-1} a_{t,\new}a_{t,*}' P_*' \Phi_{k-1}]\nn\\&&{I_{T_t}} [{\left(\Phi_{k-1}\right)_{T_t}}'\left(\Phi_{k-1}\right)_{T_t}]^{-1} {I_{T_t}}'\bigg\|\leq2\frac{b^2\left(1-b^{2\alpha}\right)}{1-b^2}\sqrt{cr}\gamma_*\gamma_{\new,k}(\phi^+)^2\kappa_s^+\zeta_{k-1}^+\zeta_*^+ +\frac{c\zeta\lambda^-}{72}\Big|X_{j,k-1}\Bigg)\nn\\
\geq&& 1- p_{c_3}(\alpha,\zeta)
\label{e_t_3}
\eea
where
\bea
p_{c_3}(\alpha,\zeta)=&&2n\digamma(\alpha,\frac{c\zeta\lambda^-}{144}, 4\frac{1+b-2\sqrt{b^{\alpha+1}(1+b-b^{\alpha+1})}}{1+b}\sqrt{cr}\gamma_*\gamma_{\new,k}(\phi^+)^2\kappa_s^+\zeta_{k-1}^+\zeta_*^+) +\nn\\ &&2n\digamma\left(\alpha,\frac{c\zeta\lambda^-}{144}, 8\frac{b(1-b^{2\alpha})}{1+b}\sqrt{cr}\gamma_*\gamma_{\new,k}(\phi^+)^2\kappa_s^+\zeta_{k-1}^+\zeta_*^+\right)
\eea
Combining (\ref{e_t_1}), (\ref{e_t_2}) and (\ref{e_t_3}), we have
\bea
\Pb&&\Bigg(\left\|\frac{1}{\alpha} \sum_{t\in \Ic_{j,k}} e_t {e_t}'\right\|\leq\lambda^+(\phi^+\zeta_*^+)^2+\frac{b^2(1-b^{2\alpha})}{1-b^2}r\gamma_*^2(\phi^+\zeta_*^+)^2+\lambda_{\new,k}^+(\phi^+\kappa_s^+\zeta_{k-1}^+)^2+\frac{b^2(1-b^{2\alpha})}{1-b^2}c\gamma_{\new,k-1}^2(\phi^+\kappa_s^+\zeta_{k-1}^+)^2 +\nn\\&&2\frac{b^2\left(1-b^{2\alpha}\right)}{1-b^2}\sqrt{cr}\gamma_*\gamma_{\new,k}(\phi^+)^2\kappa_s^+\zeta_{k-1}^+\zeta_*^+ +\frac{c\zeta\lambda^-}{24}\Big|X_{j,k-1}\Bigg)
\geq 1- p_{c}(\alpha,\zeta)
\label{e_t_bound}
\eea
where $p_c(\alpha,\zeta)=p_{c_1}(\alpha,\zeta)+p_{c_2}(\alpha,\zeta)+p_{c_3}(\alpha,\zeta)$.

\subsubsection{$\|T_2\|$}
Consider $T2$. Let
\bea
Z_t:=&&{E_{\new}}' \Phi_{0} L_t {e_t}'\Phi_{0} E_{\new}\nn\\
=&& {E_{\new}}' \Phi_{0} \left(P_* a_{t,*}+P_{\new} a_{t,\new}\right) [ \left(\Phi_{k-1} P_{*}\right) a_{t,*} + D_{\new,k-1} a_{t,\new}]'{I_{T_t}} [{\left(\Phi_{k-1}\right)_{T_t}}'\left(\Phi_{k-1}\right)_{T_t}]^{-1} {I_{T_t}}'\Phi_{0} E_{\new}\nn\\
=&&{E_{\new}}' \Phi_{0} \left(P_* a_{t,*}a_{t,*}'P_*'\Phi_{k-1}+P_{\new} a_{t,\new}a_{t,*}'P_*'\Phi_{k-1} + P_* a_{t,*}a_{t,\new}'D_{\new,k-1}' + P_{\new} a_{t,\new}a_{t,\new}'D_{\new,k-1}'\right) {I_{T_t}} \nn\\
&&[{\left(\Phi_{k-1}\right)_{T_t}}'\left(\Phi_{k-1}\right)_{T_t}]^{-1} {I_{T_t}}'\Phi_{0} E_{\new}
\eea
which is of size $c \times c$. Then $T2 = \frac{1}{\alpha} \sum_t \left(Z_t + Z_t'\right)$.\\
(a) By (\ref{err_bound}) and similar procedure to get (\ref{a_t_star_upper_bound}), we have
$$\|{E_{\new}}' \Phi_{0}P_*\|\leq \zeta_*^+, \|P_*'\Phi_{k-1}{I_{T_t}}[{\left(\Phi_{k-1}\right)_{T_t}}'\left(\Phi_{k-1}\right)_{T_t}]^{-1} {I_{T_t}}'\Phi_{0} E_{\new}\|\leq \zeta_*^+\phi^+\frac{\kappa_s^+}{\sqrt{1-\left(\zeta_*^+\right)^2}}$$
and
\bea
\Pb&&\Bigg(\left\|\frac{1}{\alpha}\sum_{t\in \Ic_{j,k}}{E_{\new}}' \Phi_{0}P_* a_{t,*}a_{t,*}' P_*'\Phi_{k-1}{I_{T_t}}[{\left(\Phi_{k-1}\right)_{T_t}}'\left(\Phi_{k-1}\right)_{T_t}]^{-1} {I_{T_t}}'\Phi_{0} E_{\new}\right\|\nn\\
&&\leq\lambda^+\phi^+(\zeta_*^+)^2\frac{\kappa_s^+}{\sqrt{1-\left(\zeta_*^+\right)^2}}+\frac{b^2(1-b^{2\alpha})}{1-b^2}r\gamma_*^2\phi^+(\zeta_*^+)^2\frac{\kappa_s^+}{\sqrt{1-\left(\zeta_*^+\right)^2}} +\frac{c\zeta\lambda^-}{144}\Big|X_{j,k-1}\Bigg)\nn\\
\geq&& 1- p_{d_1}(\alpha,\zeta)
\label{T_2_1}
\eea
where
\bea
&&p_{d_1}(\alpha,\zeta)=\nn\\&&c\digamma\left(\alpha,\frac{c\zeta\lambda^-}{576},\frac{(1-b^{2\alpha})(1-b)}{1+b}r\gamma_*^2\phi^+(\zeta_*^+)^2\frac{\kappa_s^+}{\sqrt{1-\left(\zeta_*^+\right)^2}}\right) + 2c\digamma\left(\alpha,\frac{c\zeta\lambda^-}{576}, 2\frac{b(1-b)(1-b^{2(\alpha-1)})}{(1+b)}r\gamma_*^2\phi^+(\zeta_*^+)^2\frac{\kappa_s^+}{\sqrt{1-\left(\zeta_*^+\right)^2}}\right) \nn\\&&+ 2c\digamma\left(\alpha,\frac{c\zeta\lambda^-}{576},2\frac{(1-b^{\alpha})^2}{1+b}r\gamma_*^2\phi^+(\zeta_*^+)^2\frac{\kappa_s^+}{\sqrt{1-\left(\zeta_*^+\right)^2}}\right) +2c\digamma\left(\alpha,\frac{c\zeta\lambda^-}{576},4\frac{b(1-b^{2\alpha})}{1+b}r\gamma_*^2\phi^+(\zeta_*^+)^2\frac{\kappa_s^+}{\sqrt{1-\left(\zeta_*^+\right)^2}}\right)\nn\\
\eea
(b) By (\ref{err_bound}) and similar procedure to get (\ref{a_t_new_star_upper_bound}), we have
$$\|{E_{\new}}' \Phi_{0}P_*\|\leq \zeta_*^+\leq 1, \|{E_{\new}}' \Phi_{0}P_{\new}\|\leq 1, \|P_*'\Phi_{k-1}{I_{T_t}}[{\left(\Phi_{k-1}\right)_{T_t}}'\left(\Phi_{k-1}\right)_{T_t}]^{-1} {I_{T_t}}'\Phi_{0} E_{\new}\|\leq \zeta_*^+\phi^+\frac{\kappa_s^+}{\sqrt{1-\left(\zeta_*^+\right)^2}}$$
and
\bea
\Pb&&\Bigg(\bigg\|\frac{1}{\alpha} \sum_{t\in \Ic_{j,k}} {E_{\new}}' \Phi_{0}(P_{\new} a_{t,\new}a_{t,*}'P_*'\Phi_{k-1} + P_* a_{t,*}a_{t,\new}'D_{\new,k-1}'){I_{T_t}}[{\left(\Phi_{k-1}\right)_{T_t}}'\left(\Phi_{k-1}\right)_{T_t}]^{-1} {I_{T_t}}'\Phi_{0} E_{\new}\bigg\|\nn\\&&\leq2\frac{b^2\left(1-b^{2\alpha}\right)}{1-b^2}\sqrt{cr}\gamma_*\gamma_{\new,k}\phi^+\zeta_*^+\frac{\kappa_s^+}{\sqrt{1-\left(\zeta_*^+\right)^2}} +\frac{c\zeta\lambda^-}{144}\Bigg)
\geq 1- p_{d_3}(\alpha,\zeta)
\label{T_2_2}
\eea
where
\bea
p_{d_3}(\alpha,\zeta)=&&2c\digamma(\alpha,\frac{c\zeta\lambda^-}{288}, 4\frac{1+b-2\sqrt{b^{\alpha+1}(1+b-b^{\alpha+1})}}{1+b}\sqrt{cr}\gamma_*\gamma_{\new,k}\phi^+\zeta_*^+\frac{\kappa_s^+}{\sqrt{1-\left(\zeta_*^+\right)^2}}) \nn\\&&+2c\digamma\left(\alpha,\frac{c\zeta\lambda^-}{288}, 8\frac{b(1-b^{2\alpha})}{1+b}\sqrt{cr}\gamma_*\gamma_{\new,k}\phi^+\zeta_*^+\frac{\kappa_s^+}{\sqrt{1-\left(\zeta_*^+\right)^2}}\right)
\eea

(c) By (\ref{err_bound}) and similar procedure to get (\ref{a_t_new_upper_bound}), we have
$$\|{E_{\new}}' \Phi_{0}P_{\new}\|\leq 1, \|D_{\new,k-1}'{I_{T_t}}[{\left(\Phi_{k-1}\right)_{T_t}}'\left(\Phi_{k-1}\right)_{T_t}]^{-1} {I_{T_t}}'\Phi_{0} E_{\new}\|\leq \zeta_{k-1}^+\phi^+\frac{(\kappa_s^+)^2}{\sqrt{1-\left(\zeta_*^+\right)^2}}$$
and
\bea
\Pb&&\Bigg(\left\|\frac{1}{\alpha}\sum_{t\in \Ic_{j,k}}{I_{T_t}} [{\left(\Phi_{k-1}\right)_{T_t}}'\left(\Phi_{k-1}\right)_{T_t}]^{-1} {I_{T_t}}' D_{\new,k-1} a_{t,\new} a_{t,\new}' D_{\new,k-1}'{I_{T_t}} [{\left(\Phi_{k-1}\right)_{T_t}}'\left(\Phi_{k-1}\right)_{T_t}]^{-1} {I_{T_t}}'\right\|\nn\\&&\leq\lambda_{\new,k}^+\zeta_{k-1}^+\phi^+\frac{(\kappa_s^+)^2}{\sqrt{1-\left(\zeta_*^+\right)^2}}+\frac{b^2(1-b^{2\alpha})}{1-b^2}c\gamma_{\new,k-1}^2\zeta_{k-1}^+\phi^+\frac{(\kappa_s^+)^2}{\sqrt{1-\left(\zeta_*^+\right)^2}} +\frac{c\zeta\lambda^-}{144}\Big|X_{j,k-1}\Bigg)\nn\\
\geq&&1-p_{d_3}(\alpha,\zeta)
\label{T_2_3}
\eea
where
\bea
p_{d_3}(\alpha,\zeta)=&&c\digamma\left(\alpha,\frac{c\zeta\lambda^-}{576},\frac{(1-b^{2\alpha})(1-b)}{1+b}c\gamma_{\new,k}^2(\phi^+\kappa_s^+\zeta_{k-1}^+)^2\frac{(\kappa_s^+)^2}{\sqrt{1-\left(\zeta_*^+\right)^2}} \right) \nn\\&&+ 2c\digamma\left(\alpha,\frac{c\zeta\lambda^-}{576}, 2\frac{b(1-b)(1-b^{2(\alpha-1)})}{1+b}c\gamma_{\new,k}^2(\phi^+\kappa_s^+\zeta_{k-1}^+)^2\frac{(\kappa_s^+)^2}{\sqrt{1-\left(\zeta_*^+\right)^2}}\right) \nn\\&&+ 2c\digamma\left(\alpha,\frac{c\zeta\lambda^-}{576},2\frac{(1-b^{\alpha})^2}{1+b}c\gamma_{\new,k}^2(\phi^+\kappa_s^+\zeta_{k-1}^+)^2\frac{(\kappa_s^+)^2}{\sqrt{1-\left(\zeta_*^+\right)^2}}\right) \nn\\&&+2c\digamma\left(\alpha,\frac{c\zeta\lambda^-}{576},4\frac{b(1-b^{2\alpha})}{1+b}c\gamma_{\new,k}\gamma_{\new,k-1}(\phi^+\kappa_s^+\zeta_{k-1}^+)^2\frac{(\kappa_s^+)^2}{\sqrt{1-\left(\zeta_*^+\right)^2}}\right)
\eea

Thus, combining (\ref{T_2_1}), (\ref{T_2_2}) and (\ref{T_2_3}), we have
\bea
\Pb&&\Bigg(\|T2\| = \left\|\frac{1}{\alpha} \sum_t \left(Z_t + Z_t'\right)\right\|\leq2\phi^+\kappa_s^+\frac{\left(\zeta_*^+\right)^2}{\sqrt{1-\left(\zeta_*^+\right)^2}}\lambda^++2\phi^+\zeta_{k-1}^+\frac{\left(\kappa_s^+\right)^2}{\sqrt{1-\left(\zeta_*^+\right)^2}}\lambda_{\new,k}^+ + \nn\\ &&
2\frac{b^2\left(1-b^{2\alpha}\right)}{1-b^2}r\gamma_*^2\phi^+\kappa_s^+\frac{\left(\zeta_*^+\right)^2}{\sqrt{1-\left(\zeta_*^+\right)^2}} + 2\frac{b^2\left(1-b^{2\alpha}\right)}{1-b^2}c\gamma_{\new,k-1}^2\phi^+\zeta_{k-1}^+\frac{\left(\kappa_s^+\right)^2}{\sqrt{1-\left(\zeta_*^+\right)^2}}+\nn\\&&
4\frac{b^2\left(1-b^{2\alpha}\right)}{1-b^2}\sqrt{cr}\gamma_*\gamma_{\new,k}\zeta_*^+\phi^+\frac{\kappa_s^+}{\sqrt{1-\left(\zeta_*^+\right)^2}}  +\frac{c\zeta\lambda^-}{24}\Big|X_{j,k-1}\Bigg)\nn\\
\geq&&1-p_{d}(\alpha,\zeta)
\label{T_2_bound}
\eea
where $p_d(\alpha,\zeta)=2p_{d_1}(\alpha,\zeta)+2p_{d_2}(\alpha,\zeta)+2p_{d_3}(\alpha,\zeta)$.

\subsubsection{$\|T_4\|$}
Consider $T4$. Let
\bea
Z_t: = {E_{\new,\perp}}'\Phi_{0} L_t {e_t}' \Phi_{0} E_{\new,\perp}
\eea
which is of size $\left(n-c\right)\times \left(n-c\right)$. Then $T4 = \frac{1}{\alpha} \sum_t \left(Z_t+Z_t'\right)$.
${E_{\new,\perp}}'\Phi_{0} L_t = {E_{\new,\perp}}'\Phi_{0}P_*a_{t,*} + {E_{\new,\perp}}'E_{\new}R_{\new}a_{t,\new} ={E_{\new,\perp}}' D_* a_{t,*}$.
Thus,
\bea
Z_t=&& {E_{\new,\perp}}' D_* a_{t,*} [ \left(\Phi_{k-1} P_{*}\right) a_{t,*} + D_{\new,k-1} a_{t,\new}]'{I_{T_t}} [{\left(\Phi_{k-1}\right)_{T_t}}'\left(\Phi_{k-1}\right)_{T_t}]^{-1} {I_{T_t}}' \Phi_{0} E_{\new,\perp}\nn\\
=&& {E_{\new,\perp}}' D_* a_{t, *} \left(a_{t,*}'P_*'\Phi_{k-1} + a_{t,\new}'D_{\new,k-1}'\right) {I_{T_t}} [{\left(\Phi_{k-1}\right)_{T_t}}'\left(\Phi_{k-1}\right)_{T_t}]^{-1} {I_{T_t}}' \Phi_{0} E_{\new,\perp}\nn
\eea
(a) Using (\ref{err_bound}) and similar procedure to get (\ref{a_t_star_upper_bound}), we have
\bea
\Pb&&\left(\|\frac{1}{\alpha}\sum_{t\in t_{j,k}}{E_{\new,\perp}}' D_* a_{t, *} a_{t,*}'P_*'\Phi_{k-1} {I_{T_t}} [{\left(\Phi_{k-1}\right)_{T_t}}'\left(\Phi_{k-1}\right)_{T_t}]^{-1} {I_{T_t}}' \Phi_{0} E_{\new,\perp}\|_2 \leq\phi^+{\left(\zeta_*^+\right)^2}\lambda^++ \right.\nn\\ && \left. \frac{b^2\left(1-b^{2\alpha}\right)}{1-b^2}r\gamma_{*}^2\phi^+{\left(\zeta_*^+\right)^2}
+ \frac{c\zeta\lambda^-}{96}\Big|X_{j,k-1}\right) \geq 1 - p_{e_1}\left(\alpha,\zeta\right)
\label{T_4_1}
\eea
where
\bea
p_{e_1}&&\left(\alpha,\zeta\right)= 2(n-c)\digamma\left(\alpha, \frac{c\zeta\lambda^-}{384}, \frac{\left(1-b^{2\alpha}\right)\left(1-b\right)}{1+b}r\gamma_*^2\phi^+{\left(\zeta_*^+\right)^2}\right) +2(n-c)\digamma\Bigg(\alpha,\frac{c\zeta\lambda^-}{384}, 2\frac{b(1-b)\left(1-b^{2\left(\alpha-1\right)}\right)}{1+b}r\gamma_*^2\phi^+\nn\\
&&\left(\zeta_*^+\right)^2\Bigg) + 2(n-c)\digamma\left(\alpha, \frac{c\zeta\lambda^-}{384}, 2\frac{\left(1-b^{\alpha}\right)^2}{1+b}r\gamma_*^2\phi^+{\left(\zeta_*^+\right)^2}\right)  + 2(n-c)\digamma\left(\alpha, \frac{c\zeta\lambda^-}{384}, 2\frac{b\left(1-b^{2\alpha}\right)}{1+b}r\gamma_*^2\phi^+{\left(\zeta_*^+\right)^2}\right)\nn
\eea

(b) Using (\ref{err_bound}) and similar procedure to get (\ref{a_t_new_star_upper_bound}), we have, conditioned on $X_{j,k-1}$,
\bea
\Pb&&\left(\|\frac{1}{\alpha}\sum_{t\in t_{j,k}}{E_{\new,\perp}}' D_* a_{t, *} a_{t,\new}'D_{\new,k-1}' {I_{T_t}} [{\left(\Phi_{k-1}\right)_{T_t}}'\left(\Phi_{k-1}\right)_{T_t}]^{-1} {I_{T_t}}' \Phi_{0} E_{\new,\perp}\|_2 \leq \right.\nn\\ &&\left.
\frac{b^2\left(1-b^{2\alpha}\right)}{1-b^2}\sqrt{cr}\gamma_*\gamma_{\new,k}\zeta_*^+\zeta_{k-1}^+\phi^+{\kappa_s^+}+ \frac{c\zeta\lambda^-}{96}\Big|X_{j,k-1}\right) \geq 1 - p_{e_2}\left(\alpha,\zeta\right)
\eea
where
\bea
p_{e_2}&&\left(\alpha,\zeta\right)=2(n-c)\digamma\left(\alpha, \frac{c\zeta\lambda^-}{192}, 2\frac{1+b-2\sqrt{b^{\alpha+1}\left(1+b-b^{\alpha+1}\right)}}{1+b}\sqrt{cr}\gamma_*\gamma_{\new,k}\zeta_*^+\zeta_{k-1}^+\phi^+{\kappa_s^+}\right) \nn\\&&+ 2(n-c)\digamma\left(\alpha, \frac{c\zeta\lambda^-}{192}, 4\frac{b\left(1-b^{2\alpha}\right)}{1+b}\sqrt{cr}\gamma_*\gamma_{\new,k}\zeta_*^+\zeta_{k-1}^+\phi^+{\kappa_s^+}\right)\nn
\eea

Thus, combining last two inequalities, we have
\bea
\Pb&&\left(\|T_4\|_2 = \|\frac{1}{\alpha}\sum_{t\in t_{j,k}}Z_t + Z_t'\|_2 \leq
2\phi^+{\left(\zeta_*^+\right)^2}\lambda^+ + 2\frac{b^2\left(1-b^{2\alpha}\right)}{1-b^2}r\gamma_{*}^2\phi^+{\left(\zeta_*^+\right)^2}+ 2\frac{b^2\left(1-b^{2\alpha}\right)}{1-b^2}\sqrt{cr}\gamma_*\gamma_{\new,k}\zeta_*^+\zeta_{k-1}^+\phi^+{\kappa_s^+}+\right.\nn\\ &&\left.\frac{c\zeta\lambda^-}{24}\Big|X_{j,k-1}\right)\geq 1 - p_{e}\left(\alpha,\zeta\right)
\label{T_4_bound}
\eea
where
\bea
p_e\left(\alpha,\zeta\right)=2p_{e_1}\left(\alpha,\zeta\right)+ 2p_{e_2}\left(\alpha,\zeta\right)
\eea

By condition in Theorem 18, $\zeta_*^+ = r_0\zeta+(j-1)c\zeta \leq r\zeta$, $\kappa_s^+=0.15, \zeta_*^+rf<1.5\times10^{-4}$, we have
\bea
r\gamma_*^2(\zeta_*^+)^2\leq (\zeta_*^+)^2\lambda^+ \eta/c= (\zeta_*^+)^2f\lambda_{\new,k}^-\eta/c = \zeta_*^+fr\zeta\lambda_{\new,k}^-\eta/c\leq 1.5\times10^{-4}\zeta\lambda_{\new,k}^-\eta \nn\\< 0.15c\zeta \frac{\kappa_s^+}{\sqrt{1-(\zeta_*^+)^2}}\eta\lambda_{\new,k}^- \leq \max\{0.15c\zeta,\zeta_{k-1}^+\}\frac{\kappa_s^+}{\sqrt{1-(\zeta_*^+)^2}}\eta\lambda_{\new,k}^-
\eea

Thus,
\bea
\Pb&&\Bigg(\max\{\|T_2\|_2, \|T_4\|_2\} \leq
2\phi^+\kappa_s^+\frac{(\zeta_*^+)^2}{\sqrt{1-(\zeta_*^+)^2}}\left(\lambda^++\frac{b^2(1-b^{2\alpha})}{1-b^2}r\gamma_{*}^2\right) +2\phi^+\max\{0.15c\zeta,\zeta_{k-1}^+\}\frac{(\kappa_s^+)^2}{\sqrt{1-(\zeta_*^+)^2}}\Big(\lambda_{\new,k}^+ \nn\\ &&+\frac{b^2(1-b^{2\alpha})}{1-b^2}\eta\lambda^+_{\new,k}\Big) +  4\frac{b^2(1-b^{2\alpha})}{1-b^2}r\gamma_*^2\zeta_*^+\phi^+\frac{\kappa_s^+}{\sqrt{1-(\zeta_*^+)^2}}+\frac{c\zeta\lambda^-}{24}\bigg|X_{j,k-1}\Bigg)\geq 1 - \max\{p_{d}(\alpha,\zeta), p_{e}(\alpha,\zeta)\}
\label{T_2_4_bound}
\eea

\subsubsection{$\|B_k\|$}
Consider $\|B_k\|_2$. Let $Z_t := {E_{\new,\perp}}'\Phi_{0} (L_t-e_t)({L_t}'-{e_t}')\Phi_{0} E_{\new}$ which is of size $(n-c)\times c$. Then $B_k = \frac{1}{\alpha} \sum_t Z_t$.
As ${E_{\new,\perp}}'\Phi_{0} P_{\new}a_{t,\new}={E_{\new,\perp}}'E_{\new} R_{\new} a_{t,\new}=0$, ${E_{\new,\perp}}'\Phi_{0} (L_t-e_t) = {E_{\new,\perp}}'( D_{*} a_{t,*} - \Phi_{0} e_t)$,  ${E_{\new}}' \Phi_{0} (L_t - e_t) = R_{\new} a_{t,\new}+ {E_{\new}}' D_* a_{t,*} - (R_\new')^{-1} D_\new' e_t$.
Thus,
\bea
Z_t&&= {E_{\new,\perp}}'( D_{*} a_{t,*} - \Phi_{0} e_t)(a_{t,\new}'R_{\new}'+ a_{t,*}'D_*'{E_{\new}} - e_t 'D_\new (R_\new)^{-1})\nn\\
=&&{E_{\new,\perp}}'( D_{*} a_{t,*} - \Phi_{0} {I_{T_t}}[{(\Phi_{k-1})_{T_t}}'(\Phi_{k-1})_{T_t}]^{-1}I_{T_t}'[ (\Phi_{k-1} P_{*}) a_{t,*} + D_{\new,k-1} a_{t,\new}])(a_{t,\new}'R_{\new}'+ a_{t,*}'D_*'{E_{\new}} - [ (\Phi_{k-1} P_{*}) a_{t,*}\nn\\
&& + D_{\new,k-1} a_{t,\new}]'{I_{T_t}} [{(\Phi_{k-1})_{T_t}}'(\Phi_{k-1})_{T_t}]^{-1} {I_{T_t}}'D_\new (R_\new)^{-1})\nn\\
=&& {E_{\new,\perp}}'( D_{*}  - \Phi_{0} {I_{T_t}}[{(\Phi_{k-1})_{T_t}}'(\Phi_{k-1})_{T_t}]^{-1}I_{T_t}' (\Phi_{k-1} P_{*}) ) a_{t,*}a_{t,*}'(D_*'{E_{\new}} - P_*' \Phi_{k-1}{I_{T_t}} [{(\Phi_{k-1})_{T_t}}'(\Phi_{k-1})_{T_t}]^{-1} {I_{T_t}}'D_\new (R_\new)^{-1})\nn\\
&&+ {E_{\new,\perp}}'( D_{*}  - \Phi_{0} {I_{T_t}}[{(\Phi_{k-1})_{T_t}}'(\Phi_{k-1})_{T_t}]^{-1}I_{T_t}' (\Phi_{k-1} P_{*}) ) a_{t,*}a_{t,\new}(R_{\new}'-D_{\new,k-1}'{I_{T_t}} [{(\Phi_{k-1})_{T_t}}'(\Phi_{k-1})_{T_t}]^{-1} {I_{T_t}}'D_\new (R_\new)^{-1})\nn\\
&&+ {E_{\new,\perp}}'(- \Phi_{0} {I_{T_t}}[{(\Phi_{k-1})_{T_t}}'(\Phi_{k-1})_{T_t}]^{-1}I_{T_t}' D_{\new,k-1} ) a_{t,\new}a_{t,*}'(D_*'{E_{\new}} - P_*' \Phi_{k-1}{I_{T_t}} [{(\Phi_{k-1})_{T_t}}'(\Phi_{k-1})_{T_t}]^{-1} {I_{T_t}}'D_\new (R_\new)^{-1})\nn\\
&&+ {E_{\new,\perp}}'(- \Phi_{0} {I_{T_t}}[{(\Phi_{k-1})_{T_t}}'(\Phi_{k-1})_{T_t}]^{-1}I_{T_t}' D_{\new,k-1} ) a_{t,\new}a_{t,\new}(R_{\new}'-D_{\new,k-1}'{I_{T_t}} [{(\Phi_{k-1})_{T_t}}'(\Phi_{k-1})_{T_t}]^{-1} {I_{T_t}}'D_\new (R_\new)^{-1})
\eea

(a) Using (\ref{err_bound}) and similar procedure to get (\ref{a_t_star_upper_bound}), we have
\bea
\Pb&&\Bigg(\bigg\|\frac{1}{\alpha}\sum_{t\in t_{j,k}}{E_{\new,\perp}}'( D_{*}  - \Phi_{0} {I_{T_t}}[{(\Phi_{k-1})_{T_t}}'(\Phi_{k-1})_{T_t}]^{-1}I_{T_t}' (\Phi_{k-1} P_{*}) ) a_{t,*}a_{t,*}'(D_*'{E_{\new}} - P_*' \Phi_{k-1}{I_{T_t}} [{(\Phi_{k-1})_{T_t}}'(\Phi_{k-1})_{T_t}]^{-1} {I_{T_t}}'\nn\\ &&D_\new (R_\new)^{-1})\bigg\|_2 \leq(\zeta_*^++\phi^+\zeta_*^+)(\zeta_*^++\zeta_*^+\phi^+\frac{\kappa_s^+}{\sqrt{1-(\zeta_*^+)^2}})(\lambda^++ \frac{b^2(1-b^{2\alpha})}{1-b^2}r\gamma_{*}^2) +
\frac{c\zeta\lambda^-}{96}\bigg|X_{j,k-1}\Bigg) \geq 1 - p_{f_1}(\alpha,\zeta)\nn\\
\label{B_k_1}
\eea
where
\bea
&&p_{f_1}(\alpha,\zeta)= n\digamma\left(\alpha, \frac{c\zeta\lambda^-}{384}, \frac{(1-b^{2\alpha})(1-b)}{1+b}r\gamma_*^2 (\zeta_*^+)^2\bigg(1+\phi^+)\bigg(1+\phi^+\frac{\kappa_s^+}{\sqrt{1-(\zeta_*^+)^2}}\bigg)\right) +n\digamma\Bigg(\alpha,\frac{c\zeta\lambda^-}{384}, \nn\\&&2\frac{b(1-b)(1-b^{2(\alpha-1)})}{1+b}r
\gamma_*^2(\zeta_*^+)^2(1+\phi^+)\bigg(1+\phi^+\frac{\kappa_s^+}{\sqrt{1-(\zeta_*^+)^2}}\bigg)\Bigg) + n\digamma\Bigg(\alpha, \frac{c\zeta\lambda^-}{384}, 2\frac{(1-b^{\alpha})^2}{1+b}r\gamma_*^2(\zeta_*^+)^2\nn\\
&&(1+\phi^+)(1+\phi^+\frac{\kappa_s^+}{\sqrt{1-(\zeta_*^+)^2}})\Bigg)+n\digamma\Bigg(\alpha, \frac{c\zeta\lambda^-}{384}, 2\frac{b(1-b^{2\alpha})}{1+b}r\gamma_*^2(\zeta_*^+)^2(1+\phi^+)\bigg(1+\phi^+\frac{\kappa_s^+}{\sqrt{1-(\zeta_*^+)^2}}\bigg)\Bigg)
\eea

(b) Using (\ref{err_bound}) and similar procedure to get (\ref{a_t_new_star_upper_bound}), we have
\bea
\Pb&&\Bigg(\bigg\| {E_{\new,\perp}}'( D_{*}  - \Phi_{0} {I_{T_t}}[{(\Phi_{k-1})_{T_t}}'(\Phi_{k-1})_{T_t}]^{-1}I_{T_t}' (\Phi_{k-1} P_{*}) ) a_{t,*}a_{t,\new}(R_{\new}'-D_{\new,k-1}'{I_{T_t}} [{(\Phi_{k-1})_{T_t}}'(\Phi_{k-1})_{T_t}]^{-1} {I_{T_t}}'D_\new \nn\\ &&(R_\new)^{-1})\bigg\|_2 \leq
\frac{b^2(1-b^{2\alpha})}{1-b^2}\sqrt{cr}\gamma_*\gamma_{\new,k}(\zeta_*^++\phi^+\zeta_*^+)\bigg(1+(\kappa_s^+)^2\zeta_{k-1}^+\frac{\phi^+}{\sqrt{1-(\zeta_*^+)^2}}\bigg) + \frac{c\zeta\lambda^-}{96}\bigg|X_{j,k-1}\Bigg) \geq 1 - p_{f_2}(\alpha,\zeta)\nn\\
\label{B_k_2}
\eea
where
\bea
p_{f_2}&&(\alpha,\zeta)=n\digamma\left(\alpha, \frac{c\zeta\lambda^-}{192}, 2\frac{1+b-2\sqrt{b^{\alpha+1}(1+b-b^{\alpha+1})}}{1+b}\sqrt{cr}\gamma_*\gamma_{\new,k}(\zeta_*^++\phi^+\zeta_*^+)(1+(\kappa_s^+)^2\zeta_{k-1}^+\frac{\phi^+}{\sqrt{1-(\zeta_*^+)^2}})\right) + \nn\\
&&n\digamma\left(\alpha, \frac{c\zeta\lambda^-}{192}, 4\frac{b(1-b^{2\alpha})}{1+b}\sqrt{cr}\gamma_*\gamma_{\new,k}(\zeta_*^++\phi^+\zeta_*^+)\bigg(1+(\kappa_s^+)^2\zeta_{k-1}^+\frac{\phi^+}{\sqrt{1-(\zeta_*^+)^2}}\bigg)\right)\nn
\eea
and
\bea
\Pb&&\Bigg(\| {E_{\new,\perp}}'(- \Phi_{0} {I_{T_t}}[{(\Phi_{k-1})_{T_t}}'(\Phi_{k-1})_{T_t}]^{-1}I_{T_t}' D_{\new,k-1} ) a_{t,\new}a_{t,*}'(D_*'{E_{\new}} - P_*' \Phi_{k-1}{I_{T_t}} [{(\Phi_{k-1})_{T_t}}'(\Phi_{k-1})_{T_t}]^{-1} {I_{T_t}}'D_\new \nn\\ &&(R_\new)^{-1})\|_2 \leq
\frac{b^2(1-b^{2\alpha})}{1-b^2}\sqrt{cr}\gamma_*\gamma_{\new,k}\phi^+\kappa_s^+\zeta_{k-1}^+\bigg(\zeta_*^++\zeta_*^+\phi^+\frac{\kappa_s^+}{\sqrt{1-(\zeta_*^+)^2}}\bigg) + \frac{c\zeta\lambda^-}{96}\bigg|X_{j,k-1}\Bigg) \geq 1 - p_{f_3}(\alpha,\zeta)
\label{B_k_3}
\eea
where
\bea
p_{f_3}&&(\alpha,\zeta)=n\digamma\left(\alpha, \frac{c\zeta\lambda^-}{192}, 2\frac{1+b-2\sqrt{b^{\alpha+1}(1+b-b^{\alpha+1})}}{1+b}\sqrt{cr}\gamma_*\gamma_{\new,k}\phi^+\kappa_s^+\zeta_{k-1}^+\bigg(\zeta_*^++\zeta_*^+\phi^+\frac{\kappa_s^+}{\sqrt{1-(\zeta_*^+)^2}}\bigg)\right) + \nn\\
&&n\digamma\left(\alpha, \frac{c\zeta\lambda^-}{192}, 4\frac{b(1-b^{2\alpha})}{1+b}\sqrt{cr}\gamma_*\gamma_{\new,k}\phi^+\kappa_s^+\zeta_{k-1}^+\bigg(\zeta_*^++\zeta_*^+\phi^+\frac{\kappa_s^+}{\sqrt{1-(\zeta_*^+)^2}}\bigg)\right)\nn
\eea

(c) Using (\ref{err_bound}) and similar procedure to get (\ref{a_t_new_upper_bound}), we have
\bea
\Pb&&\Bigg(\bigg\|\frac{1}{\alpha}\sum_{t\in t_{j,k}}{E_{\new,\perp}}'(- \Phi_{0} {I_{T_t}}[{(\Phi_{k-1})_{T_t}}'(\Phi_{k-1})_{T_t}]^{-1}I_{T_t}' D_{\new,k-1} ) a_{t,\new}a_{t,\new}(R_{\new}'-D_{\new,k-1}'{I_{T_t}} [{(\Phi_{k-1})_{T_t}}'(\Phi_{k-1})_{T_t}]^{-1} {I_{T_t}}'D_\new \nn\\
&&(R_\new)^{-1})\bigg\|_2 \leq\zeta_{k-1}^+\phi^+\kappa_s^+\bigg(1+\phi^+\zeta_{k-1}^+\frac{(\kappa_s^+)^2}{\sqrt{1-(\zeta_*^+)^2}}\bigg)\bigg(\lambda_{\new,k}^+ + \frac{b^2(1-b^{2\alpha})}{1-b^2}c\gamma_{\new,k-1}^2\bigg) + \frac{c\zeta\lambda^-}{96}\bigg|X_{j,k-1}\Bigg)\geq 1 - p_{f_4}(\alpha,\zeta)\nn\\
\label{B_k_4}
\eea
where
\bea
p_{f_4}&&(\alpha,\zeta)=n\digamma\left(\alpha, \frac{c\zeta\lambda^-}{384}, \frac{(1-b^{2\alpha})(1-b)}{1+b}c\gamma_{\new,k}^2\zeta_{k-1}^+\phi^+\kappa_s^+\bigg(1+\phi^+\zeta_{k-1}^+\frac{(\kappa_s^+)^2}{\sqrt{1-(\zeta_*^+)^2}}\bigg)\right) \nn\\
&&+ n\digamma\Bigg(\alpha, \frac{c\zeta\lambda^-}{384}, 2\frac{b(1-b)(1-b^{2(\alpha-1)})}{1+b}c\gamma_{\new,k}^2(\zeta_{k-1}^+\phi^+)^2\frac{(\kappa_s^+)^3}{\sqrt{1-(\zeta_*^+)^2}}\Bigg) \nn\\
&&+n\digamma\left(\alpha, \frac{c\zeta\lambda^-}{384}, 2\frac{(1-b^{\alpha})^2}{1+b}c\gamma_{\new,k}^2\zeta_{k-1}^+\phi^+\kappa_s^+\bigg(1+\phi^+\zeta_{k-1}^+\frac{(\kappa_s^+)^2}{\sqrt{1-(\zeta_*^+)^2}}\bigg)\right) \nn\\&&+n\digamma\Bigg(\alpha, \frac{c\zeta\lambda^-}{384},4\frac{b(1-b^{2\alpha})}{1+b}c\gamma_{\new,k}\gamma_{\new,k-1}\zeta_{k-1}^+\phi^+\kappa_s^+\bigg(1+\phi^+\zeta_{k-1}^+\frac{(\kappa_s^+)^2} {\sqrt{1-(\zeta_*^+)^2}}\bigg)\Bigg)
\eea

Using (\ref{B_k_1}), (\ref{B_k_2}), (\ref{B_k_3}) and (\ref{B_k_4}) and the union bound,  for any $X_{j,k-1} \in  \Gamma_{j,k-1}$,
\bea
\mathbf{P} &&\Bigg(\|B_k\|_2 \leq (\zeta_*^++\phi^+\zeta_*^+)\bigg(\zeta_*^++\zeta_*^+\phi^+\frac{\kappa_s^+}{\sqrt{1-(\zeta_*^+)^2}}\bigg) \bigg(\lambda^++\frac{b^2(1-b^{2\alpha})}{1-b^2}r\gamma_{*}^2\bigg)+\zeta_{k-1}^+\phi^+\kappa_s^+\bigg(1+\phi^+\zeta_{k-1}^+\frac{(\kappa_s^+)^2}{\sqrt{1-(\zeta_*^+)^2}}\bigg)\nn\\
&&\bigg(\lambda_{\new,k}^+ + \frac{b^2(1-b^{2\alpha})}{1-b^2}c\gamma_{\new,k-1}^2\bigg)+ \frac{b^2(1-b^{2\alpha})}{1-b^2}\sqrt{cr}\gamma_*\gamma_{\new,k}\bigg((\zeta_*^++\phi^+\zeta_*^+)\Big(1+(\kappa_s^+)^2\zeta_{k-1}^+\frac{\phi^+}{\sqrt{1-(\zeta_*^+)^2}}\Big) \nn\\&& + \phi^+\kappa_s^+\zeta_{k-1}^+\Big(\zeta_*^++\zeta_*^+\phi^+\frac{\kappa_s^+}{\sqrt{1-(\zeta_*^+)^2}}\Big)\bigg)+
\frac{c\zeta\lambda^-}{24}\bigg|X_{j,k-1}\Bigg)\geq 1-p_f(\alpha,\zeta)
\label{B_k_bound}
\eea
where $p_f(\alpha,\zeta)=p_{f_1}(\alpha,\zeta)+p_{f_2}(\alpha,\zeta)+p_{f_3}(\alpha,\zeta)+p_{f_4}(\alpha,\zeta)$.

Using (\ref{e_t_bound}), (\ref{T_2_4_bound}) and (\ref{B_k_bound}) and the union bound,  for any $X_{j,k-1} \in  \Gamma_{j,k-1}$,
\bea
\mathbf{P} \left(\|\mathcal{H}_k\|_2 \leq b_{\Hc} + \frac{c\zeta\lambda^-}{8}|X_{j,k-1}\right) \geq 1- p_c(\alpha,\zeta) - p_f(\alpha,\zeta)- \max\{p_d(\alpha,\zeta), p_e(\alpha,\zeta)\}
\label{H_k_bound}
\eea
where

\bea
b_{\Hc} :=&&
(\phi^+)^2(\zeta_*^+)^2\lambda^+ +(\phi^+)^2(\kappa_s^+\zeta_{k-1}^+)^2\lambda_{\new,k}^++2\frac{b^2(1-b^{2\alpha})}{1-b^2}\sqrt{cr}\gamma_*\gamma_{\new,k}\zeta_*^+\zeta_{k-1}^+\kappa_s^+(\phi^+)^2 + \frac{b^2(1-b^{2\alpha})}{1-b^2}r\gamma_*^2(\phi^+\zeta_*^+)^2 + \nn\\&&\frac{b^2(1-b^{2\alpha})}{1-b^2}c\gamma_{\new,k-1}^2(\phi^+\kappa_s^+\zeta_{k-1}^+)^2 + 2\phi^+\kappa_s^+\frac{(\zeta_*^+)^2}{\sqrt{1-(\zeta_*^+)^2}}\left(\lambda^++\frac{b^2(1-b^{2\alpha})}{1-b^2}r\gamma_{*}^2\right)+ \nn\\ && 2\phi^+\max\{0.15c\zeta,\zeta_{k-1}^+\}\frac{(\kappa_s^+)^2}{\sqrt{1-(\zeta_*^+)^2}}\left(\lambda_{\new,k}^+ +\frac{b^2(1-b^{2\alpha})}{1-b^2}\eta\lambda^+_{\new,k}\right) +  4\frac{b^2(1-b^{2\alpha})}{1-b^2}r\gamma_*^2\zeta_*^+\phi^+\frac{\kappa_s^+}{\sqrt{1-(\zeta_*^+)^2}}+ \nn\\ && (\zeta_*^++\phi^+\zeta_*^+)\left(\zeta_*^++\zeta_*^+\phi^+\frac{\kappa_s^+}{\sqrt{1-(\zeta_*^+)^2}}\right) \left(\lambda^++\frac{b^2(1-b^{2\alpha})}{1-b^2}r\gamma_{*}^2\right)+\zeta_{k-1}^+\phi^+\kappa_s^+\left(1+\phi^+\zeta_{k-1}^+\frac{(\kappa_s^+)^2}{\sqrt{1-(\zeta_*^+)^2}}\right)\nn\\
&&\left(\lambda_{\new,k}^+ + \frac{b^2(1-b^{2\alpha})}{1-b^2}c\gamma_{\new,k-1}^2\right)+ \frac{b^2(1-b^{2\alpha})}{1-b^2}\sqrt{cr}\gamma_*\gamma_{\new,k}\Bigg((\zeta_*^++\phi^+\zeta_*^+)\bigg(1+(\kappa_s^+)^2\zeta_{k-1}^+\frac{\phi^+}{\sqrt{1-(\zeta_*^+)^2}}\bigg) \nn\\&& + \phi^+\kappa_s^+\zeta_{k-1}^+\Big(\zeta_*^++\zeta_*^+\phi^+\frac{\kappa_s^+}{\sqrt{1-(\zeta_*^+)^2}}\Big)\Bigg)
\eea

\begin{remark}
As shown in Remark \ref{a_t_new_k_1}, when $k=1$, there is no $a_{t_j+(k-1)\alpha-1,\new}$, leading to changes in bounds $b_{A_k}, b_{A_{k,\perp}}, b_{\Hc}$, in which case $\zeta_{k}^+$ decreases exponentially with same parameters given above with larger $b$.
\end{remark}

\end{proof}

\appendix

\subsection{Proof of Lemma \ref{lemma0}}
\begin{proof} 
Because $P$, $Q$ and $\Phat$ are basis matrix, $P'P=I$, ${Q}'Q = I$ and $\Phat'\Phat=I$.
\ben
\item Using $P'P = I$ and $\|M\|_2^2 = \|MM'\|_2$, $\|(I-\Phat{\Phat}')P P'\|_2 =\|(I-\Phat{\Phat}')P\|_2$. Similarly, $\|(I-PP')\Phat {\Phat}'\|_2=\|(I-PP')\Phat\|_2$. Let $D_1 = (I-\Phat{\Phat}')P P'$ and let $D_2=(I-PP')\Phat {\Phat}'$. Notice that $\|D_1\|_2 = \sqrt{\lambda_{\max}(D_1'D_1)} = \sqrt{\|D_1'D_1\|_2}$ and $\|D_2\|_2 = \sqrt{\lambda_{\max}(D_2'D_2)} = \sqrt{\|D_2'D_2\|_2}$. So, in order to show $\|D_1\|_2 = \|D_2\|_2$, it suffices to show that $\|D_1' D_1\|_2 = \|D_2'D_2\|_2$.
    Let $P'\Phat\overset{SVD}{=} U\Sigma V'$. Then, $D_1'D_1 = P (I - P'\Phat{\Phat}'P)P' = P U (I-\Sigma^2) U' P' $ and $D_2'D_2 = \Phat ( I - {\Phat}'PP'\Phat){\Phat}' = \Phat V (I-\Sigma^2) V'{\Phat}'$ are the compact SVD's of $D_1' D_1$ and $D_2' D_2$ respectively. Therefore, $\|D_1' D_1\| = \|D_2'D_2\|_2 = \|I-\Sigma^2\|_2$ and hence $\|(I-\Phat{\Phat}')PP'\|_2 =\|(I - P{P}')\Phat{\Phat}'\|_2$.

\item $\|P{P}' -\Phat {\Phat}'\|_2 = \| PP - \Phat{\Phat}'PP' + \Phat{\Phat}'PP'-\Phat {\Phat}'\|_2 \leq \| (I- \Phat{\Phat}')PP'\|_2 +
    \|(I-PP')\Phat {\Phat}'\|_2 = 2 \zeta_*$.

\item Since ${Q}'P = 0$, then $\|{Q}'\Phat\|_2 = \|{Q}'(I-P P')\Phat\|_2 \leq \|(I-P P')\Phat\|_2  = \zeta_*$.

\item 
Let $M = (I-\Phat {\Phat}') Q)$. Then $M'M = Q'(I-\Phat {\Phat}')Q$ and so $\sigma_i ((I-\Phat {\Phat}') Q) = \sqrt{\lambda_i (Q'(I-\Phat {\Phat}')Q)}$. Clearly, $\lambda_{\max} (Q'(I-\Phat {\Phat}')Q) \leq 1$. By Weyl's Theorem, $\lambda_{\min} (Q'(I-\Phat {\Phat}')Q) \geq 1 - \lambda_{\max} (Q'\Phat {\Phat}'Q) = 1- \|{Q}'\Phat\|_2^2 \geq 1-\zeta_*^2$. Therefore, $\sqrt{1-\zeta_*^2} \leq \sigma_{i}((I-\Phat {\Phat}') Q) \leq 1$.

\een
\end{proof}

\subsection{Proof of Lemma \ref{rem_prob}}
\begin{proof}
It is easy to see that $\mathbf{P}(\ecalb, \ecalc) = \E[\mathbb{I}_\calb(X,Y) \mathbb{I}_\calc(X)].$
If $\E[\mathbb{I}_{\calb}(X,Y)|X] \ge p$ for all $X \in \calc$, this means that $\E[\mathbb{I}_{\calb}(X,Y)|X] \mathbb{I}_{\calc}(X) \ge p  \mathbb{I}_{\calc}(X) $. This, in turn, implies that
$$\mathbf{P}(\ecalb, \ecalc) = \E[\mathbb{I}_\calb(X,Y) \mathbb{I}_\calc(X)] = \E[ \E[\mathbb{I}_{\calb}(X,Y)|X] \mathbb{I}_{\calc}(X) ] \ge p \E[\mathbb{I}_{\calc}(X) ].$$
Recall from Definition \ref{probdefs} that $\mathbf{P}(\ecalb|X) = \E[\mathbb{I}_{\calb}(X,Y)|X]$ and $\mathbf{P}(\ecalc)= \E[\mathbb{I}_{\calc}(X) ]$.
Thus, we conclude that if $\mathbf{P}(\ecalb|X) \ge p$ for all $X \in \calc$, then $\mathbf{P}(\ecalb, \ecalc) \ge p \mathbf{P}(\ecalc)$. Using the definition of $\mathbf{P}(\ecalb|\ecalc)$, the claim follows. 
\end{proof}

\subsection{Proof of Theorem \ref{azuma}}
\begin{proof}[Proof of Theorem]
The matrix Laplace transform method, Proposition 3.1, states that
\begin{equation} \label{eqn:azuma-lt}
\Pb\left( \lambda_{\max}\left( \sum_k \mathbf{X}_k \right) \geq t \right)
	\leq \inf_{\theta > 0} \left\{ e^{-\theta t} \cdot
	\E \trace \exp\left( \sum\nolimits_k \theta \mathbf{X}_k \right) \right\}.
\end{equation}
The main difficulty in the proof is to bound the matrix mgf, which we accomplish by an iterative argument that alternates between symmetrization and cumulant bounds.

Let us detail the first step of the iteration.  Denote conditional expectation $\E(XY|g(Y))$ as $\E_{X,Y|g(Y)}(XY)$, then
\begin{align*}
\E \trace \exp\left( \sum\nolimits_k \theta \mathbf{X}_k \right)
	&= \E_{\mathbf{X}_1,\cdots,\mathbf{X}_{n-1} } \E_{\mathbf{X_n}|\mathbf{X}_1,\cdots,\mathbf{X}_{n-1} }\left[ \trace \exp\left( \sum\nolimits_{k=1}^{n-1}
		\theta \mathbf{X}_k + \theta \mathbf{X}_n \right)\right] \\
    &\leq \E_{\mathbf{X}_1,\cdots,\mathbf{X}_{n-1} } \E_{\epsilon, \mathbf{X_n}|\mathbf{X}_1,\cdots,\mathbf{X}_{n-1} }\left[ \trace \exp\left(
	\sum\nolimits_{k=1}^{n-1} \theta\mathbf{X}_k
	+ 2\eps \theta \mathbf{X}_n \right) \right] \\
	&= \E_{\mathbf{X}_1,\cdots,\mathbf{X}_{n} } \E_{\epsilon|\mathbf{X}_1,\cdots,\mathbf{X}_{n} }\left[ \trace \exp\left(
	\sum\nolimits_{k=1}^{n-1} \theta\mathbf{X}_k
	+ 2\eps \theta \mathbf{X}_n \right) \right] \\
	&\leq \E_{\mathbf{X}_1,\cdots,\mathbf{X}_{n} } \trace \exp\left( \sum\nolimits_{k=1}^{n-1} \theta \mathbf{X}_k
	+ \log \E_{\epsilon|\mathbf{X}_1,\cdots,\mathbf{X}_{n} }\big[ e^{2\eps \theta \mathbf{X}_n}\big] \right) \\
	&\leq \E_{\mathbf{X}_1,\cdots,\mathbf{X}_{n} } \trace \exp\left( \sum\nolimits_{k=1}^{n-1} \theta \mathbf{X}_k
	+ 2\theta^2 \mathbf{A}_n^2 \right).
\end{align*}

The first identity is the tower property of conditional expectation.  In the second line, we invoke the symmetrization method, Lemma 7.6, conditional on $\mathbf{X}_1,\cdots,\mathbf{X}_{n-1}$, and then we relax the conditioning on the inner expectation to $\mathbf{X}_1,\cdots,\mathbf{X}_{n}$.  By construction, the Rademacher variable $\eps$ is independent from $\mathbf{X}_1,\cdots,\mathbf{X}_{n}$, so we can apply the concavity result, Corollary 3.3, conditional on $\mathbf{X}_1,\cdots,\mathbf{X}_{n}$.  Finally, we use the fact (2.5) that the trace exponential is monotone to introduce the Azuma cgf bound, Lemma 7.7, in the last inequality.

By iteration, we achieve
\begin{equation} \label{eqn:azuma-mgf-pf}
\E \trace \exp\left( \sum\nolimits_k \theta \mathbf{X}_k \right)
	\leq \trace \exp\left( 2\theta^2 \sum\nolimits_k \mathbf{A}_k^2 \right).
\end{equation}
Note that this procedure relies on the fact that the sequence $\{\mathbf{A}_k\}$ of upper bounds does not depend on the values of the random sequence $\{\mathbf{X}_k\}$.  Substitute the mgf bound~\eqref{eqn:azuma-mgf-pf} into the Laplace transform bound~\eqref{eqn:azuma-lt}, and observe that the infimum is achieved when $\theta = t/4\sigma^2$.
\end{proof}

\subsection{Proof of Corollary \ref{azuma_rec}}
\begin{proof}
\label{proof_azuma_rec}
Define the dilation of an $n_1 \times n_2$ matrix $M$ as $\text{dilation} (M) := \left[\begin{array}{cc}0 & {M}' \\ M & 0 \\\end{array} \right]$. Notice that this is an $(n_1+n_2) \times (n_1 +n_2)$ Hermitian matrix \cite{tropp2012user}. As shown in \cite[equation 2.12]{tropp2012user}, 
\bea
\lambda_{\max}(\text{dilation}(M)) = \|\text{dilation} (M)\|_2 = \|M\|_2
\label{dilM}
\eea
Thus, the corollary assumptions imply that $\mathbf{P}(\|\text{dilation} (Z_t)\|_2 \leq b_1 |X) = 1$ for all $X \in \calc$. 
%
Using (\ref{dilM}), the corollary assumptions also imply that $\E_{t-1}( \text{dilation}(Z_t) |X) = \text{dilation} ( \E_{t-1}(Z_t|X)) =0$ for all $X \in \calc$.
%
%
Thus, applying Corollary \ref{azuma} for the sequence $\{\text{dilation} (Z_t)\}$, we get that, 
$$\mathbf{P} \left(\lambda_{\max}\left(\frac{1}{\alpha}\sum_{t=1}^{\alpha} \text{dilation}(Z_t)\right)\leq \epsilon \big|X\right) \geq 1- (n_1+n_2) \exp\left(-\frac{\alpha \epsilon^2}{32 b_1^2}\right) \ \text{for all} \ X \in \calc$$
Using (\ref{dilM}), $\lambda_{\max}(\frac{1}{\alpha}\sum_{t=1}^{\alpha} \text{dilation}(Z_t)) = \lambda_{\max}(\text{dilation}(\frac{1}{\alpha}\sum_{t=1}^{\alpha} Z_t))  = \|\frac{1}{\alpha}\sum_{t=1}^{\alpha} Z_t\|_2$ and this gives the final result. 
\end{proof}

\subsection{Proof of Corollary \ref{azuma_nonzero}}

\begin{proof}
\ben
\item Since, for any $X \in {\cal C}$, conditioned on $X$, the $Z_t$'s are adapted, the same is also true for $Z_t - g(X, Z_{1:t-1})$ for any function of $X$ and $Z_{1:t-1}$.
Let $Y_t := Z_t - \E_{t-1}(Z_t|X)$. Thus, for any $X \in {\cal C}$, conditioned on $X$, the $Y_t$'s are adapted.
Also, clearly $\E_{t-1}(Y_t|X) = 0$. Since for all $X \in \calc$, $\mathbf{P}(b_1 I \preceq Z_t \preceq b_2 I|X)=1$ and since $\lambda_{\max}(.)$ is a convex function, and $\lambda_{\min}(.)$ is a concave function, of a Hermitian matrix, thus $b_1 I \preceq \E_{t-1}(Z_t|X) \preceq b_2 I$ w.p. one for all $X \in \calc$. Therefore, $\mathbf{P}(Y_t^2 \preceq (b_2 -b_1)^2 I|X) = 1$ for all $X \in \calc$.
Thus, for Theorem \ref{azuma}, $\sigma^2 = \|\sum_{t=1}^{\alpha} (b_2 - b_1)^2I\|_2 = \alpha (b_2-b_1)^2$. For any $X \in \calc$, applying Theorem \ref{azuma} for $\{Y_t\}$'s conditioned on $X$, we get that, for any $\eps > 0$,
    $$\mathbf{P}\left( \lambda_{\max} \left(\frac{1}{\alpha} \sum_{t=1}^{\alpha} Y_t\right) \leq \epsilon\bigg|X\right) > 1- n\exp\left(-\frac{\alpha \epsilon^2}{8  (b_2 -b_1)^2}\right) \ \text{for all} \ X \in \calc$$
    By Weyl's theorem, $\lambda_{\max} (\frac{1}{\alpha} \sum_{t=1}^{\alpha} Y_t) = \lambda_{\max} (\frac{1}{\alpha} \sum_{t=1}^{\alpha} (Z_t - \E_{t-1}(Z_t|X))
    \geq \lambda_{\max} (\frac{1}{\alpha} \sum_{t=1}^{\alpha} Z_t) +  \lambda_{\min} (\frac{1}{\alpha} \sum_{t=1}^{\alpha} -\E_{t-1}(Z_t|X))$.
    Since $\lambda_{\min} (\frac{1}{\alpha} \sum_{t=1}^{\alpha} -\E_{t-1}(Z_t|X)) = - \lambda_{\max} (\frac{1}{\alpha} \sum_{t=1}^{\alpha} \E_{t-1}(Z_t|X))\geq -b_4$, thus $ \lambda_{\max} (\frac{1}{\alpha} \sum_{t=1}^{\alpha} Y_t) \geq \lambda_{\max} (\frac{1}{\alpha} \sum_{t=1}^{\alpha} Z_t) - b_4$.
    Therefore,
    $$\mathbf{P}\left( \lambda_{\max} \left(\frac{1}{\alpha} \sum_{t=1}^{\alpha} Z_t\right) \leq b_4 + \epsilon|X\right) > 1- n\exp\left(-\frac{\alpha \epsilon^2}{8  (b_2 -b_1)^2}\right) \ \text{for all} \ X \in \calc$$

\item Let $Y_t = \E_{t-1}(Z_t|X) - Z_t$. As before, $\E_{t-1}(Y_t|X) = 0$ and conditioned on any $X \in {\cal C}$, the $Y_t$'s are independent and $\mathbf{P}(Y_t^2 \preceq (b_2 -b_1)^2 I|X) = 1$.  As before, applying Theorem \ref{azuma}, we get that for any $\epsilon >0$,
    $$\mathbf{P}\left( \lambda_{\max} \left(\frac{1}{\alpha} \sum_{t=1}^{\alpha} Y_t\right) \leq \epsilon|X\right) > 1- n\exp\left(-\frac{\alpha \epsilon^2}{8  (b_2 -b_1)^2}\right) \ \text{for all} \ X \in \calc $$
    By Weyl's theorem, $\lambda_{\max}(\frac{1}{\alpha}\sum_{t=1}^{\alpha} Y_t) = \lambda_{\max}(\frac{1}{\alpha} \sum_{t=1}^{\alpha}(\E_{t-1}(Z_t|X) - Z_t)) \geq \lambda_{\min} (\frac{1}{\alpha} \sum_{t=1}^{\alpha} \E_{t-1}(Z_t|X)) + \lambda_{\max} (\frac{1}{\alpha} \sum_{t=1}^{\alpha} -Z_t) = \lambda_{\min} (\frac{1}{\alpha} \sum_{t=1}^{\alpha} \E_{t-1}(Z_t|X)) - \lambda_{\min} (\frac{1}{\alpha} \sum_{t=1}^{\alpha} Z_t) \ge b_3 - \lambda_{\min} (\frac{1}{\alpha} \sum_{t=1}^{\alpha} Z_t)$
Therefore, for any $\epsilon >0$,
    $$\mathbf{P} \left(\lambda_{\min}\left(\frac{1}{\alpha}\sum_{t=1}^{\alpha} Z_t\right) \geq b_3 -\epsilon|X\right) \geq  1- n \exp\left(-\frac{\alpha \epsilon^2}{8(b_2-b_1)^2}\right) \ \text{for all} \ X \in \calc$$
\een
\end{proof}

\subsection{Proof of Lemma \ref{ind_expec}}
\begin{proof}
Denote conditional expectation $\E(Xh(Y)|g(Y))$ as $\E_{X,Y|g(Y)}(Xh(Y))$, then
\bea
\E(Xh(Y)|g(Y))=&&\E_{Y|g(Y)}\left(\E_{X|Y,g(Y)}\left(Xh(Y)\right)\right)\nn\\=&&\E_{Y|g(Y)}\left(\E_{X|Y}\left(Xh(Y)\right)\right)\nn\\=&&\E_{Y|g(Y)}\left(\E_{X|Y}\left(X\right)h(Y)\right) \nn\\=&&\E_{Y|g(Y)}\left(\E(X)h(Y)\right) \nn\\=&&\E(X)\E(h(Y)|g(Y)).
\eea
\end{proof}

\subsection{Proof of Lemma \ref{delta_kappa}}
\begin{proof}
Let $A = I - PP'$. By definition, $\delta_s(A) := \max\{ \max_{|T| \leq s}(\lambda_{\max}(A_T'A_T) -1),\max_{|T| \leq s} ( 1 - \lambda_{\min} (A_T' A_T))) \} $.
Notice that $A_T'A_T = I - I_T' PP'I_T$. Since $I_T' PP'I_T$ is p.s.d., by Weyl's theorem, $\lambda_{\max}(A_T'A_T) \leq1$.
Since $\lambda_{\max}(A_T'A_T)- 1\leq 0$ while $1 - \lambda_{\min}(A_T'A_T) \geq 0$, thus, 
\beq
\delta_s(I - PP') = \max_{|T| \leq s}\Big(1 - \lambda_{\min} ( I - I_T' PP'I_T)\Big) \label{defn_kappa_1}
\eeq
By Definition, $\kappa_s(P) = \max_{|T| \leq s} \frac{\|I_T' P\|_2}{\|P\|_2} =\max_{|T| \leq s} \|I_T' P\|_2$.
Notice that $\|I_T' P\|_2^2 = \lambda_{\max} (I_T' PP'I_T)  =  1-\lambda_{\min} (I - I_T'PP'I_T)$ \footnote{This follows because $B=I_T'PP'I_T$ is a Hermitian matrix.  Let $B = U \Sigma U'$ be its EVD. Since $UU'=I$, $\lambda_{\min}(I-B) = \lambda_{\min}(U(I - \Sigma)U') =\lambda_{\min}(I - \Sigma) = 1 - \lambda_{\max}(\Sigma) = 1-\lambda_{\max}(B)$.}, and so
\beq
\kappa_s^2(P) =\max_{|T| \leq s} \Big(1 - \lambda_{\min} (I - I_T'PP'I_T)\Big)\label{defn_kappa_2}
\eeq
From (\ref{defn_kappa_1}) and (\ref{defn_kappa_2}), we get $ \delta_s(I-PP') = \kappa_s^2 (P) $.
\end{proof}

\bibliographystyle{IEEEtran}
\bibliography{tipnewpfmt}

\begin{thebibliography}{10}
\providecommand{\url}[1]{#1}
\csname url@samestyle\endcsname
\providecommand{\newblock}{\relax}
\providecommand{\bibinfo}[2]{#2}
\providecommand{\BIBentrySTDinterwordspacing}{\spaceskip=0pt\relax}
\providecommand{\BIBentryALTinterwordstretchfactor}{4}
\providecommand{\BIBentryALTinterwordspacing}{\spaceskip=\fontdimen2\font plus
\BIBentryALTinterwordstretchfactor\fontdimen3\font minus
  \fontdimen4\font\relax}
\providecommand{\BIBforeignlanguage}[2]{{%
\expandafter\ifx\csname l@#1\endcsname\relax
\typeout{** WARNING: IEEEtran.bst: No hyphenation pattern has been}%
\typeout{** loaded for the language `#1'. Using the pattern for}%
\typeout{** the default language instead.}%
\else
\language=\csname l@#1\endcsname
\fi
#2}}
\providecommand{\BIBdecl}{\relax}
\BIBdecl

\bibitem{ipca_weightedand}
D.~Skocaj and A.~Leonardis, ``Weighted and robust incremental method for
  subspace learning,'' in \emph{IEEE Intl. Conf. on Computer Vision (ICCV)},
  vol.~2, Oct 2003, pp. 1494 --1501.

\bibitem{Li03anintegrated}
Y.~Li, L.~Xu, J.~Morphett, and R.~Jacobs, ``An integrated algorithm of
  incremental and robust pca,'' in \emph{IEEE Intl. Conf. Image Proc. (ICIP)},
  2003, pp. 245--248.

\bibitem{Torre03aframework}
F.~D.~L. Torre and M.~J. Black, ``A framework for robust subspace learning,''
  \emph{International Journal of Computer Vision}, vol.~54, pp. 117--142, 2003.

\bibitem{rpca}
E.~J. Cand{\`e}s, X.~Li, Y.~Ma, and J.~Wright, ``Robust principal component
  analysis?'' \emph{Journal of ACM}, vol.~58, no.~3, 2011.

\bibitem{mateos_anomaly}
M.~Mardani, G.~Mateos, and G.~Giannakis, ``Dynamic anomalography: Tracking
  network anomalies via sparsity and low rank,'' \emph{J. Sel. Topics in Sig.
  Proc.}, Feb 2013.

\bibitem{rpca2}
V.~Chandrasekaran, S.~Sanghavi, P.~A. Parrilo, and A.~S. Willsky,
  ``Rank-sparsity incoherence for matrix decomposition,'' \emph{SIAM Journal on
  Optimization}, vol.~21, 2011.

\bibitem{rrpcp_allerton}
C.~Qiu and N.~Vaswani, ``Real-time robust principal components' pursuit,'' in
  \emph{Allerton Conference on Communication, Control, and Computing}, 2010.

\bibitem{rrpcp_perf}
C.~Qiu, N.~Vaswani, B.~Lois, and L.~Hogben, ``Recursive robust pca or recursive
  sparse recovery in large but structured noise,'' \emph{under revision for
  IEEE Tran. Info. Th., also at arXiv: 1211.3754[cs.IT], shorter versions in
  ICASSP 2013 and ISIT 2013}.

\bibitem{rrpcp_allerton11}
C.~Qiu and N.~Vaswani, ``Recursive sparse recovery in large but correlated
  noise,'' in \emph{48th Allerton Conference on Communication Control and
  Computing}, 2011.

\bibitem{grass_undersampled}
J.~He, L.~Balzano, and A.~Szlam, ``Incremental gradient on the grassmannian for
  online foreground and background separation in subsampled video,'' in
  \emph{IEEE Conf. on Comp. Vis. Pat. Rec. (CVPR)}, 2012.

\bibitem{rpcal0sur}
C.~Hage and M.~Kleinsteuber, ``Robust pca and subspace tracking from incomplete
  observations using l0-surrogates,'' \emph{arXiv:1210.0805 [stat.ML]}, 2013.

\bibitem{frpca}
A.~E. Abdel-Hakim and M.~El-Saban, ``Frpca: Fast robust principal component
  analysis for online observations,'' in \emph{Pattern Recognition (ICPR), 2012
  21st International Conference on}.\hskip 1em plus 0.5em minus 0.4em\relax
  IEEE, 2012, pp. 413--416.

\bibitem{xu_nips2013_1}
J.~Feng, H.~Xu, and S.~Yan, ``Online robust pca via stochastic optimization,''
  in \emph{Adv. Neural Info. Proc. Sys. (NIPS)}, 2013.

\bibitem{xu_nips2013_2}
J.~Feng, H.~Xu, S.~Mannor, and S.~Yan, ``Online pca for contaminated data,'' in
  \emph{Adv. Neural Info. Proc. Sys. (NIPS)}, 2013.

\bibitem{decodinglp}
E.~Candes and T.~Tao, ``Decoding by linear programming,'' \emph{IEEE Trans.
  Info. Th.}, vol. 51(12), pp. 4203 -- 4215, Dec. 2005.

\bibitem{grimmett}
G.~Grimmett and D.~Stirzaker, \emph{Probability and Random Processes}.\hskip
  1em plus 0.5em minus 0.4em\relax Oxford University Press, 2001.

\bibitem{candes_rip}
E.~Candes, ``The restricted isometry property and its implications for
  compressed sensing,'' \emph{Compte Rendus de l'Academie des Sciences, Paris,
  Serie I}, pp. 589--592, 2008.

\bibitem{davis_kahan}
C.~Davis and W.~M. Kahan, ``The rotation of eigenvectors by a perturbation.
  iii,'' \emph{SIAM Journal on Numerical Analysis}, vol.~7, pp. 1--46, Mar.
  1970.

\bibitem{hornjohnson}
R.~Horn and C.~Johnson, \emph{Matrix Analysis}.\hskip 1em plus 0.5em minus
  0.4em\relax Cambridge University Press, 1985.

\bibitem{feng_bresler}
P.~Feng and Y.~Bresler, ``Spectrum-blind minimum-rate sampling and
  reconstruction of multiband signals,'' in \emph{IEEE Intl. Conf. Acoustics,
  Speech, Sig. Proc. (ICASSP)}, vol.~3, 1996, pp. 1688--1691.

\bibitem{gorod_rao}
I.~F. Gorodnitsky and B.~D. Rao, ``Sparse signal reconstruction from limited
  data using focuss: A re-weighted norm minimization algorithm,'' \emph{IEEE
  Trans. Sig. Proc.}, pp. 600 -- 616, March 1997.

\bibitem{bpdn}
S.~S. Chen, D.~L. Donoho, and M.~A. Saunders, ``Atomic decomposition by basis
  pursuit,'' \emph{SIAM Journal on Scientific Computing}, vol.~20, pp. 33--61,
  1998.

\bibitem{candes}
E.~Candes, J.~Romberg, and T.~Tao, ``Robust uncertainty principles: Exact
  signal reconstruction from highly incomplete frequency information,''
  \emph{IEEE Trans. Info. Th.}, vol. 52(2), pp. 489--509, February 2006.

\bibitem{donoho}
D.~Donoho, ``Compressed sensing,'' \emph{IEEE Trans. on Information Theory},
  vol. 52(4), pp. 1289--1306, April 2006.

\bibitem{dantzig}
E.~Candes and T.~Tao, ``The dantzig selector: statistical estimation when p is
  much larger than n,'' \emph{Annals of Statistics}, 2006.

\bibitem{tropp2012user}
J.~A. Tropp, ``User-friendly tail bounds for sums of random matrices,''
  \emph{Foundations of Computational Mathematics}, vol.~12, no.~4, pp.
  389--434, 2012.

\end{thebibliography}
\end{document}